\documentclass[fontsize=11pt, a4paper, oneside, DIV=12, toc=listofnumbered,bibnumbered, numbers = noendperiod]{scrartcl}
\addtokomafont{disposition}{\rmfamily}
\DeclareOldFontCommand{\bf}{\normalfont\bfseries}{\mathbf}

\usepackage[utf8]{inputenc}
\usepackage[T1]{fontenc}
\usepackage{lmodern}

\usepackage{newunicodechar}
\newunicodechar{ρ}{\rho}
\newunicodechar{σ}{\sigma}
\newunicodechar{λ}{\lambda}
\newunicodechar{Λ}{\Lambda}
\newunicodechar{Γ}{\Gamma}
\newunicodechar{μ}{\mu}
\newunicodechar{ν}{\nu}
\newunicodechar{ψ}{\psi}
\newunicodechar{ϕ}{\phi}
\newunicodechar{φ}{\varphi}
\newunicodechar{π}{\pi}
\newunicodechar{α}{\alpha}
\newunicodechar{ϵ}{\epsilon}
\newunicodechar{ε}{\varepsilon}
\newunicodechar{δ}{\delta}
\newunicodechar{ω}{\omega}
\newunicodechar{β}{\beta}
\newunicodechar{ξ}{\xi}
\newunicodechar{∗}{\textasteriskcentered}
\newunicodechar{η}{\eta}
\newunicodechar{ζ}{\zeta}
\newunicodechar{θ}{\theta}
\newunicodechar{τ}{\tau}
\newunicodechar{γ}{\gamma}
\newunicodechar{Φ}{\Phi}
\newunicodechar{κ}{\kappa}

\usepackage{comment}
\usepackage{microtype}

\usepackage[english]{babel}
\usepackage{csquotes}

\usepackage{xparse}
\usepackage{xspace}
\usepackage{makecell}
\usepackage{titling}
\usepackage{appendix}

\usepackage{versions}
\includeversion{DRAFT}
\excludeversion{TRASH}
\excludeversion{EXCLUDED}

\usepackage{amssymb}
\usepackage{amsmath}
\usepackage{mathrsfs}
\usepackage{mathtools}
\usepackage{optidef}
\usepackage[shortlabels]{enumitem}

\usepackage{cool}
\usepackage{xcolor}
\usepackage{amsthm}
\usepackage{hyperref}
\hypersetup{
	colorlinks,
	linkcolor={red!50!black},
	citecolor={red!50!black},
	urlcolor={blue!80!black},
	linktoc=all
}
\usepackage{thmtools}
\usepackage[nameinlink]{cleveref}
\addto\extrasenglish{%
}

\makeatletter
\@ifundefined{newcounteralias}{}{%
    \renewcommand*\thmt@autorefsetup{%
        \expandafter\def\csname\thmt@envname autorefname\expandafter\endcsname
            \expandafter{\thmt@thmname}%
    }%
}
\makeatother

\newtheorem{theorem}{Theorem}[section]
\newtheorem{lemma}[theorem]{Lemma}
\newtheorem{corollary}[theorem]{Corollary}

\newtheorem{proposition}[theorem]{Proposition}
\theoremstyle{definition}
\newtheorem{remark}[theorem]{Remark}
\newtheorem{example}[theorem]{Example}
\newtheorem{definition}[theorem]{Definition}

\newtheorem*{definition*}{Definition}

\makeatletter
\@for\thmname:=theorem,lemma,corollary,conjecture,proposition,remark,example,definition,problem\do{%
    \expandafter\def\csname theH\thmname\endcsname{\theHsection.\arabic{theorem}}%
}
\makeatother

\usepackage{graphicx}     
\setkeys{Gin}{width=\textwidth} %
\usepackage[multidot,space]{grffile} %% Enables \includegraphics to parse files with multiple dots and spaces
\usepackage[compatibility=false]{caption}
\usepackage{subcaption}
\graphicspath{{./}{img/}}

\makeatletter
\let\x@caption\caption% original (caption package) \caption
\newcommand{\x@@caption}[2][\empty]{%
	\ifx\empty#1\relax\x@caption{#2}%
	\else\x@caption[#1]{\textbf{#1.} #2}%
	\fi}% modified caption
\let\caption\x@@caption% new global \caption for other float types
\renewcommand{\captionabove}[2][\empty]{\captionsetup{position=above}%
	\x@@caption[#1]{#2}%
}% new \captionabove
\renewcommand{\captionbelow}[2][\empty]{\captionsetup{position=below}%
	\x@@caption[#1]{#2}%
}% new \captionbelow
\makeatother

\usepackage[style=alphabetic, maxbibnames=99]{biblatex} %minalphanames = 1, maxalphanames = 3

\AtEveryBibitem{
	\clearfield{note}
	\clearfield{issn}
	\clearfield{urlyear}
	\clearfield{urlmonth}
	\iffieldundef{journaltitle}{}{\clearfield{version}}
	\iffieldundef{journaltitle}{\iffieldundef{booktitle}{\iffieldundef{eventtitle}{\iffieldundef{eprint}{}{\clearfield{doi}}}{}}{}}{}
}

\DeclareSourcemap{
    \maps[datatype=bibtex]{
        \map[overwrite]{
            \step[fieldsource=eprint, match=\regexp{\A([0-9.v]+)\s*\[.*?\].*\Z}, replace={$1}]
        }
    }
}

\renewbibmacro*{url}{%
	\iffieldundef{doi}{\iffieldundef{eprint}{%
			\printfield{url}%
		}{}}{}
}

\usepackage{suffix}
\usepackage{xparse}
\usepackage{braket}

\usepackage{physics}

\renewcommand{\tr}{\Tr}

\usepackage{mathtools}
\usepackage{amssymb}
\usepackage{tensor}

\usepackage{ stmaryrd }

\newcommand{\br}[1]{\left(#1\right)}

\NewDocumentCommand{\stared}{m m}{\IfBooleanTF{#2}{#1*}{#1}}
\NewDocumentCommand{\es}{m o}{
	\IfNoValueTF{#2}{
		\mathbb{\uppercase{#1}}^{\lowercase{#1\/}}
	}{
		\mathbb{\uppercase{#1}}^{#2\/}
	}
}

\newcommand{\complex}{\mathbb{C}}
\newcommand{\reals}{\mathbb{R}}

\newcommand{\HS}{\mathcal{H}}

\NewDocumentCommand{\BO}{o}{\mathcal{B}\br{\IfValueTF{#1}{#1}{\HS}}}

\NewDocumentCommand{\DM}{o}{\mathcal{D}\br{\IfValueTF{#1}{#1}{\HS}}}

\NewDocumentCommand{\pos}{o}{\mathscr{P}\!\br{\IfValueTF{#1}{#1}{\HS}}}

\DeclarePairedDelimiterX{\infdivx}[2]{(}{)}{
	#1\;\delimsize|\delimsize|\;#2
}

\DeclareDocumentCommand{\qrd}{o o m m}{\ensuremath{\IfValueTF{#1}{#1}{D}\IfValueTF{#2}{_{#2}}{}\/\infdivx*{#3}{#4}}}
\NewDocumentCommand{\ent}{d()g}{\ensuremath{H\br{\IfValueTF{#1}{#1}{#2}}}}
\NewDocumentCommand{\qent}{d()g}{\ensuremath{H\br{\IfValueTF{#1}{#1}{#2}}}}

\newcommand*\ric[1]{\vphantom{#1}\smash{#1_{}\kern-\scriptspace}}

\renewcommand{\abs}[1]{\left\lvert#1\right\rvert}

\renewcommand{\norm}[1]{\left\lVert #1 \right\rVert}

\newcommand{\renyi}{R\'enyi\xspace}

\newcommand{\Sn}{\mathfrak{S}_n}
\NewDocumentCommand{\End}{o}{\mathrm{End}^{\Sn}(\IfValueTF{#1}{#1}{R^n B^n})}

\newcommand{\n}{^{\otimes n}}

\newcommand{\id}{\mathrm{id}}

\renewcommand{\D}{\mathbf{D}}

\usepackage{bm}

\newcommand{\M}{\ensuremath{\mathcal{M}}}
\newcommand{\Ms}{\ensuremath{\mathcal{M}_{*}}}

\newcommand{\N}{\ensuremath{\mathcal{N}}}
\newcommand{\Ns}{\ensuremath{\mathcal{N}_{*}}}

\let\textcaronaccent\v
\DeclareRobustCommand{\v}[1]{\ifmmode\bm{#1}\else\textcaronaccent{#1}\fi}
\renewcommand{\ip}[2]{\langle {#1}, {#2}\rangle}

\renewcommand{\S}{\mathcal{S}}

\DeclareMathOperator{\Ind}{Ind}

\bibliography{references.bib}
\title{Conditioning on Subalgebras -- Entropy Duality and Generalized Quantum Stein's Lemma in von Neumann Algebras}
\author{Bjarne Bergh\thanks{Department of Applied Mathematics and Theoretical Physics, University of Cambridge, United Kingdom} \and  Li Gao\thanks{School of Mathematics and Statistics, Wuhan University, China} \and Mizanur Rahaman\thanks{Department of Mathematical Sciences, Chalmers University of Technology, Gothenburg, Sweden}}

\begin{document}
    \maketitle
    \begin{abstract}
       We establish duality relations for conditional entropies relative to subalgebras of arbitrary von Neumann algebras, including the Umegaki relative entropy as well as the sandwiched- and Petz-\renyi divergences. Even in finite dimensions, these results extend the familiar dualities for conditional entropies from tensor-product subsystems to general subalgebras, while also lifting the framework to the fully infinite-dimensional setting. As an application, we establish an infinite-dimensional extension of quantum Stein’s lemma when the alternative hypothesis is the state space of a finite-index subalgebra, which gives an operational interpretation of the relative entropy to a subalgebra as the optimal asymptotic error exponent in composite quantum hypothesis testing—the first beyond finite dimensions.

    \end{abstract}
\tableofcontents
\section{Introduction}

\subsection{Entropic duality relations}\label{sec:intro-duality-origin}
In finite-dimensional quantum systems, it is well known that for any tripartite pure state $\rho_{ABC}$, 
\begin{equation}\label{duality-f.d vN cond.}
	H(A|B)_\rho + H(A|C)_\rho = 0,
\end{equation}
where $H(A|B)_\rho$ is the conditional von Neumann entropy of  the system $A$ given system $B$ in the state $\rho_{ABC}$, and similarly $H(A|C)_\rho$ of the system $A$ given system $C$. This is known as the \textit{duality relation for conditional entropy}. Similar duality relations have been discovered for other entropic quantities, for example, sandwiched \renyi conditional entropies \cite{muller-lennert_quantum_2013, beigi_sandwiched_2013, tomamichel_relating_2014}, where it holds that for any tripartite pure  $\rho_{ABC}$
\begin{equation}\label{duality-f.d cond.}
	\widetilde{H}_{\alpha}(A|B)_{\rho}+ \widetilde{H}_{\beta}(A|C)_{\rho}=0, \text{ when } \ \frac{1}{\alpha}+\frac{1}{\beta}=2; \ \alpha, \beta\in [1/2, \infty].
\end{equation}
Here, for any bipartite state $\rho_{AB}$, the sandwiched \renyi conditional entropies $\widetilde{H}_{\alpha}(A|B)_{\rho_{AB}}$ is defined as
\[
	\widetilde{H}_{\alpha}(A|B)_{\rho_{AB}}:=-\inf_{\sigma_B} \widetilde{D}_{\alpha}(\rho_{AB}\|\IdentityMatrix_A\otimes \sigma_B),
\]
and $\widetilde{D}_{\alpha}(\rho\|\sigma)$ is the sandwiched R\'enyi divergence of states $\rho$ and $\sigma$, defined as
\[
	\widetilde{D}_{\alpha}(\rho\|\sigma):= \frac{1}{\alpha-1}\log \Tr\big[(\sigma^{\frac{1-\alpha}{2\alpha}}\rho\, \sigma^{\frac{1-\alpha}{2\alpha}})^\alpha\big],\qquad \alpha \in [1/2, 1)\cup (1, \infty).
\]
Analogous duality relations also hold for the Petz-R\'enyi type divergences \cite{tomamichel_fully_2009}, and both recover the standard relative entropy duality \eqref{duality-f.d vN cond.} in the limit $\alpha \to 1$. Duality relations of this type have proven crucial in many parts of quantum information -- for example in cryptography, where the information an adversarial party, say $C$, has about a quantum system $A$ can be estimated using local state tomography by the two honest parties $A$ and $B$ \cite{tomamichel_tight_2012}.

In the present paper, we extend these duality relations to $(i)$ conditioning on a subalgebra rather than a tensor factor, and $(ii)$ states on
general von Neumann algebras rather than finite-dimensional ones. As an application,
we obtain a natural generalization of the celebrated Generalized Quantum Stein’s Lemma
whenever the alternative hypothesis is the state space of a finite-index subalgebra, in an otherwise
completely general von Neumann algebra.

\subsection{Conditioning to subalgebras}\label{sec:intro-subalgebra}

The entropic duality relations we prove in this paper are motivated by re-stating \eqref{duality-f.d vN cond.}--\eqref{duality-f.d cond.} in a language that no longer refers to a tensor decomposition. For conditional entropies, conditioning is implemented by restricting a state $\rho$ to the subsystem we condition on. In general, such a subsystem is represented as a subalgebra. Operationally, a subsystem is nothing more than the collection of operations available to an agent with only partial access to the full system. Under natural axioms these are exactly the operations whose Kraus operators lie in a von Neumann subalgebra, and an inclusion $\N \subseteq \M$ then describes a pair of agents, one of whom can implement every operation available to the other \cite[Secs.~4.2--4.3]{vanluijk_thesis_2025}. To see how the tensor-product case fits into this picture, let's revisit the relation \eqref{duality-f.d vN cond.}. Here in finite dimensions, we have  $$\N = \IdentityMatrix_A \otimes \BO[\HS_B] \subset \BO[\HS_A \otimes \HS_B]=\M$$
as a subalgebra. $\M$ is further embedded into the algebra of a purifying Hilbert space, $\M \subset \BO := \BO[\HS_A \otimes \HS_B \otimes \HS_C]$. Then the partial trace map  $$E:\M \to \N, \ \  M_{AB} \mapsto \frac{\IdentityMatrix_A}{d_A} \otimes \Tr_A(M_{AB}) $$ is a trace-preserving conditional expectation. The embedding of states on $B$ into states on $AB$ is via the pre-dual map $E_*: \rho_B \mapsto \IdentityMatrix_A/d_A \otimes \rho_B$ (or, $E_*(ρ_B) = ρ_B\circ E$ considering states as functionals on the algebras). 

Additionally, within the total algebra $\BO[\HS_A \otimes \HS_B \otimes \HS_C]$, the subalgebra $\IdentityMatrix_{AB} \otimes \BO[\HS_C]$ is the commutant $\M'$ of $\M = \BO[\HS_{A} \otimes \HS_B] \otimes \IdentityMatrix_C$, and the commutant $\N'$ is isomorphic to $\BO[\HS_A \otimes \HS_C]\otimes \IdentityMatrix_B$. Notice that we also have a natural conditional expectation $E': \N' \to \M'$ which is again given by a partial trace: $$E'(M_{AC}\otimes \IdentityMatrix_B)=\frac{\IdentityMatrix_A}{d_A}\otimes \IdentityMatrix_B\otimes \Tr_A(M_{AC}).$$

Now for any pure state $\rho_{ABC}$ acting on $\HS_A \otimes \HS_B \otimes \HS_C$, we see that
\begin{align}\label{eq:cond-normalization}
	H(A|B)_{\rho}=-\inf_{\sigma_B}D(\rho_{AB}\| \frac{\IdentityMatrix_A}{d_A}\otimes \sigma_B)+\log d_A=-D(\rho|_{\M}\|E_*(\N_*))+ \log d_A.
\end{align}
Here we adopt the notation $\displaystyle D(\rho\|E_*(\N_*)):=\inf_{\sigma\in \S(\N)} D(\rho\|E_*(\sigma))$, where the infimum is over all states of $\N$, and the state $ρ|_{\M}$ is the restriction of $ρ$ (seen as a linear functional) to the observables of $\M$, which corresponds to the reduced density matrix $ρ_{AB}$ of $ρ_{ABC}$. Similarly, we also see that $$H(A|C)_{\rho}=-D(\rho|_{\N'}\|E'_*(\M'_*))+\log d_A,$$ with the analogous notation $\displaystyle D(\rho|_{\N'}\|E'_*(\M'_*)) := \inf_{\sigma \in \S(\M')} D(\rho|_{\N'}\|E'_*(\sigma))$.

Thus, the duality relation in equation \eqref{duality-f.d vN cond.} can be written as
\begin{equation}\label{eq:conditional entropy with dim.}
	D(\rho|_{\M}\|E_*(\N_*)) + D(\rho|_{\N'}\|E'_*(\M'_*)) = 2 \log d_A \, .
\end{equation}
The sandwiched R\'enyi conditional entropies share the normalization \eqref{eq:cond-normalization},
 so the duality relation in \eqref{duality-f.d cond.} reads
\begin{equation}\label{eq:(sandwiched) conditional entropy with dim.}
	\widetilde{D}_\alpha(\rho|_{\M}\|E_*(\N_*)) + \widetilde{D}_\beta(\rho|_{\N'}\|E'_*(\M'_*)) = 2 \log d_A.
\end{equation}
Nothing in \eqref{eq:conditional entropy with dim.}-\eqref{eq:(sandwiched) conditional entropy with dim.} refers to subsystem $B$ or $C$ any more, only to algebraic notions: the pair of algebras $(\M, \N)$, their commutants, and the conditional expectations relating them. The quantity $2\log d_A$ turns out to be the logarithm of the \emph{index} $\Ind E$ of the conditional expectation $E$, as explained below.

This reformulation immediately suggests two independent directions of generalization, which are the main contributions of this paper:
\begin{enumerate}[label=(\roman*), leftmargin=2.2em]
	\item \textbf{Subalgebra Conditioning.} We extend this formalism from subalgebras $\N \subseteq \M$ of tensor-product form to \emph{any} general subalgebra $\N$ of finite index. %Given an inclusion $\N \subseteq \M$ with a faithful normal conditional expectation, the conditional entropy can still be expressed in terms of a relative entropy with respect to the subalgebra: the role of the complementary system is played by the commutant $\N'$ (which now strictly contains $\M'$), and the entropy duality reflects a structural symmetry between a subalgebra and its commutant within a purification. 
    The quantity $2\log d_A$ turns out to be a special case of the logarithm of the \emph{index} $\Ind E$ of a conditional expectation $E: \M \to \N$ (\Autoref{sec:operator_valued_weights}), so that \eqref{eq:conditional entropy with dim.}--\eqref{eq:(sandwiched) conditional entropy with dim.} become the special case of our results in which $\N$ has the tensor-product form discussed above.
	\item \textbf{Beyond finite dimension.} The subalgebra perspective above is inherently infinite-dimensional and identifies entropy duality as a genuinely operator algebraic phenomenon. 
\end{enumerate}
To the best of our knowledge, this provides the first entropic duality relation for the sandwiched- and Petz- R\'enyi relative entropy in von Neumann algebras. As an application, we establish a Generalized Quantum Stein's Lemma (GQSL) for finite-index subalgebras in a general von Neumann algebra. The correct formulation of GQSL for arbitrary convex sets, after a gap was found in the original argument \cite{berta_gap_2023}, has recently been re-established in the finite-dimensional setting by \cite{hayashi_generalized_2025} and, independently, by \cite{lami_solution_2025}. Our result provides a first extension of (the subalgebra version of) the GQSL to infinite dimensions.

\vspace{6pt}

\subsection{Our results}

\paragraph{Finite-dimensional Cases.}\label{sec:intro-finite-dim}
Before stating our results in the full generality of von Neumann algebras, it is worth highlighting what the subalgebra conditioning achieves in finite-dimensional quantum systems. This requires no operator-algebraic vocabulary beyond the notion of \emph{commutant} and \emph{conditional expectation}. The prototypical examples of the latter are the partial trace and the pinching map.

Let $\HS$ be a finite-dimensional Hilbert space, and let $\N\subseteq\M=\BO$ be a unital $*$-subalgebra. As explained more in \autoref{sec:finite-dim-cond-exp} below, every such subalgebra has the direct sum form
\begin{equation}\label{eq:intro-finite-dim-inclusion}
	\HS = \bigoplus_{j \in J} \mathcal{L}_j \otimes \mathcal{V}_j\,, \qquad \N = \bigoplus_{j} \BO[\mathcal{L}_j] \otimes \IdentityMatrix_{\mathcal{V}_j}\,.
\end{equation}
Denote by $\S(\N)$ the states of $\N$; a state $ρ_\N \in \S(\N)$ is given by its blocks, positive operators $ρ_{\mathcal{L}_j}$ on $\mathcal{L}_j$ with $\sum_j \Tr ρ_{\mathcal{L}_j} = 1$. Given a conditional expectation $E: \M \to \N$, the pre-adjoint map $E_*$ acts on density matrices as
\[
E_*(ρ_{\N}) = \bigoplus_{j} ρ_{\mathcal{L}_j} \otimes τ_j \in \S(\M),\ \  ρ_{\N} \in \S(\N)
\]
for some family of density matrices $(τ_j \in \mathcal{B}(\mathcal{V}_j))_{j \in J}$.

Let $\rho\in \S(\M)$ be purified by a vector $\ket{\v{\rho}}\in \HS\otimes\HS_C$ with an auxiliary space $\HS_C$, and let $\M',\N'\subseteq \BO[\HS\otimes\HS_C]$ denote the commutants of $\M$ and $\N$ in $\BO[\HS\otimes\HS_C]$. We find that
\[
 \IdentityMatrix_{\HS} \otimes \BO[\HS_C]=\M' \subseteq \N' = \bigoplus_{j} \IdentityMatrix_{\mathcal{L}_j} \otimes \BO[\mathcal{V}_j \otimes \HS_C]\,.
\]
Given $E$ as above, we can then define the map $E^{-1}:\N'\to{\M'}$ whose pre-adjoint acts on states of $\M'$, i.e.\ density matrices $ρ_C$ on $\HS_C$, as
\begin{equation}
    E^{-1}_*(ρ_{C}) = \bigoplus_{j} τ^{-1}_j \otimes  ρ_{C}, \qquad ρ_{C} \in \S(\M'), \label{eq:E-1}
\end{equation}
a non-normalized positive functional on $\N'$. The motivation behind this construction is explained in \autoref{sec:finite-dim-cond-exp}. With this definition, we get:
\begin{theorem}[Finite-dimensional subalgebra duality]\label{thm:finite_dim_duality}
	In the setting above, 
    \begin{enumerate}
        \item[(a)] for $\alpha,\beta\in[1/2,\infty]$ with $\frac1\alpha+\frac1\beta=2$,
	\begin{equation}
		\widetilde{D}_\alpha\big(\rho|_{\M}\,\|\,E_*(\S(\N))\big)=-\widetilde{D}_\beta\big(\rho|_{\N'}\,\|\,E^{-1}_*(\S(\M'))\big).
	\end{equation}
  \item[(b)] for all $α, β \in [0,2]$ with $β = 2 - α$,
    \begin{equation}
        D_α(ρ|_{\M}\|E_*(ρ|_{\N})) = - D_{β}(ρ|_{\N'}\|E_*^{-1}(ρ|_{\M'}))
    \end{equation}
    \end{enumerate}
In particular, for $α = β = 1$,
    \begin{equation*}
       D(ρ|_{\M}\|E_*(\S(\N)))= D(ρ|_{\M}\|E_*(ρ|_{\N}))  =  - D(ρ|_{\N'}\|E_*^{-1}(ρ|_{\M'}))=- D(ρ|_{\N'}\|E_*^{-1}(\S(\M'))) 
    \end{equation*}
\end{theorem}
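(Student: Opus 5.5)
The plan is to reduce the statement to the known tensor-product dualities \eqref{duality-f.d cond.} and their Petz analogue, applied block by block, and then recombine the blocks. Let $P_j$ be the central projections of $\N$ onto $\mathcal{L}_j\otimes\mathcal{V}_j$. These lie in $\N\cap\N'$, so they commute with everything that appears. I will decompose the purifying vector as $\ket{\v\rho}=\sum_j \ket{\v\rho_j}$, where $\ket{\v\rho_j}=(P_j\otimes\IdentityMatrix_C)\ket{\v\rho}\in\mathcal{L}_j\otimes\mathcal{V}_j\otimes\HS_C$ and $p_j=\norm{\v\rho_j}^2$. The normalized vectors $\v\rho_j/\sqrt{p_j}$ are tripartite pure states on the systems $A=\mathcal{V}_j$, $B=\mathcal{L}_j$, $C=\HS_C$. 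Here $\rho|_\M$ is a full density matrix on $\HS$ and need not be block diagonal, but the second argument $E_*(\sigma)=\bigoplus_j\sigma_j\otimes\tau_j$ is block diagonal. I will therefore first treat the identity in which $\rho|_\M$ is replaced by its pinching $\sum_j P_j\rho P_j$, and then deal with the discrepancy.

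\textbf{Step 1 (absorbing $\tau_j$).} Within one block, I will write $\sigma_j\otimes\tau_j=(\IdentityMatrix\otimes\tau_j^{1/2})(\sigma_j\otimes\IdentityMatrix)(\IdentityMatrix\otimes\tau_j^{1/2})$. I then replace the vector $\v\rho_j$ by $\v\rho_j'=(\IdentityMatrix\otimes\tau_j^{-\gamma}\otimes\IdentityMatrix)\v\rho_j$ for a suitable power $\gamma$. In the sandwiched quantity the exponent $\frac{1-\alpha}{2\alpha}$ then moves $\tau_j$ onto $\rho$. Dually, on the $\N'$ side the weight $\tau_j^{-1}\otimes\rho_C$ does the same with the complementary exponent. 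The relation $\frac1\alpha+\frac1\beta=2$ is exactly what should make the two powers of $\tau_j$ agree. I expect this to reduce everything to standard unnormalized conditional entropies $\widetilde H_\alpha(\mathcal V_j|\mathcal L_j)$ and $\widetilde H_\beta(\mathcal V_j|C)$ of one common (unnormalized) pure vector. To be safe, I will instead use the variational/Hölder proof of \cite{muller-lennert_quantum_2013,beigi_sandwiched_2013}. That proof expresses $\exp\big((\alpha-1)/\alpha\,\widetilde D_\alpha\big)$ as a norm $\norm{\cdot}_{2\alpha}$ of the vector $\v\rho$ weighted by $\sigma^{(1-\alpha)/2\alpha}$ on the $\M$ side. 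It then uses the identity $\norm{X\v\rho}$ relating the Schatten norm of the reduced operator on $\HS$ to the one on the complement. This argument should go through verbatim with block-diagonal weights.

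\textbf{Step 2 (optimizing over blocks).} For the sandwiched case the optimization over $\sigma\in\S(\N)$ splits into the block weights $q_j=\Tr\sigma_j$ and the normalized blocks. I will optimize over $q_j$ explicitly by Hölder's inequality. This yields $\widetilde D_\alpha=\frac{\alpha}{\alpha-1}\log\sum_j\big(\text{block quantity}\big)$. On the dual side the $\M'$ state $\rho_C$ is shared among the blocks, not split, and $\rho|_{\N'}$ is genuinely block diagonal since $P_j\in\N$. This asymmetry is the main obstacle. The plan is to apply the Hölder duality globally, not blockwise: $\sup_{\sigma}\norm{\sigma^{\gamma}\,\v\rho}$ over $\sigma=\bigoplus\sigma_j$ on the $\M$ side, written via the polar decomposition on the purification, should equal an $\inf_{\omega_C}$ over a single $\omega_C$ on the $\N'$ side. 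This would be proved by a minimax argument (Sion) on the convex--concave norm expression, exactly as in Beigi's proof.

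\textbf{Step 3 (Petz and $\alpha=1$).} Part (b) has no optimization. There I will use $Q_\alpha(\rho\|\sigma)=\langle\v\rho,\Delta_{\sigma,\rho}^{\alpha-1}\v\rho\rangle$ together with the fact that the relative modular operator $\Delta(E_*(\rho|_\N)/\rho|_\M)$ on the purification inverts to $\Delta(E^{-1}_*(\rho|_{\M'})/\rho|_{\N'})$. This is the finite-dimensional shadow of Connes' spatial derivative relation $d\psi/d\varphi'$ versus $d\varphi'/d\psi$. I will verify it by direct computation in the block form, where $E^{-1}_*$ is precisely what makes $(\bigoplus\rho_{\mathcal L_j}\otimes\tau_j)\otimes(\bigoplus\tau_j^{-1}\otimes\rho_C)^{-1}$ consistent. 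Then $\alpha-1=1-\beta$ gives $Q_\alpha=Q_\beta$. The sign flip in $D$ follows because the prefactors $\frac1{\alpha-1}$ and $\frac1{\beta-1}$ are negatives of each other. The case $\alpha=1$ follows by taking limits. The equalities with the optimized versions follow because $\sigma=\rho|_\N$ is optimal, by the chain rule $D(\rho\|E_*\sigma)=D(\rho\|E_*\rho|_\N)+D(\rho|_\N\|\sigma)$, and analogously on the dual side.
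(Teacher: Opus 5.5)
Your treatment of part (b) and of $\alpha=1$ is fine and is the paper's argument. The block identity you propose to check is Haagerup's identity $\Delta(E_*(\sigma)/\omega')=\Delta(\sigma/E^{-1}_*(\omega'))$ combined with inversion, and $1-\alpha=\beta-1$ does the rest. Then come the limit $\alpha\to1$ and the Pythagorean identity; on the dual side it is applied to $E^{-1}=(\Ind E)\,E'$, which is harmless here since the index is a scalar. The exponent of $\Delta_{\sigma,\rho}$ in your $Q_\alpha$ should be $1-\alpha$, but only the mutual-inverse property is used, so this does not matter.

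\textbf{The gap is in part (a).} Replacing $\rho|_{\M}$ by its pinching fails, and so does the formula $\widetilde D_\alpha=\frac{\alpha}{\alpha-1}\log\sum_j(\text{block quantity})$ obtained by optimizing the weights $q_j$. The reason is that $\rho|_{\M}$ has off-diagonal blocks $P_j\rho P_k$, so $\Tr(\sigma^{\gamma}\rho\,\sigma^{\gamma})^{\alpha}$ does not split over $j$ even for block-diagonal $\sigma$. These off-diagonal blocks are not a correction to be dealt with later; they carry the whole content. Take the pinching example ($\N$ the diagonal algebra) and the pure state $\rho=|+\rangle\langle+|$. Then $\min_{\sigma\text{ diagonal}}\widetilde D_\alpha(\rho\|\sigma)=\log 2$ for every $\alpha$, while its pinching, the maximally mixed state, gives $0$; no expression in the diagonal blocks can produce both. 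The asymmetry you identify is in fact the other way round. It is the $\N'$ side that splits over blocks for fixed $\omega_C$, since $\rho|_{\N'}$ and $E^{-1}_*(\omega_C)$ are block diagonal, and that is where Hölder over the $q_j$ legitimately enters.

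Your fallback, global Hölder duality plus a minimax, is the right route and is what the paper does. As stated, though, it is missing the identification that turns the swapped expression into the $\N'$-side divergence.
\begin{itemize}
\item Set $\Delta_{\sigma,\omega}\coloneqq\bigoplus_j\hat\sigma_j^{-1}\otimes\tau_j^{-1}\otimes\hat\omega$ on $\bigoplus_j\mathcal L_j\otimes\mathcal V_j\otimes\HS_C$. This single operator is $\Delta(\omega/E_*(\sigma))$ for the pair $(\M,\M')$ and $\Delta(\sigma/E^{-1}_*(\omega))^{-1}$ for the pair $(\N,\N')$.
\item Put $f(\sigma,\omega)\coloneqq\langle\v\rho,\Delta_{\sigma,\omega}^{\alpha'}\v\rho\rangle$ with $\alpha'=\frac{\alpha-1}{\alpha}$. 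The variational (Hölder) formula on $\M$ gives $\widetilde D_\alpha(\rho|_{\M}\|E_*(\sigma))=\sup_{\omega\in\S(\M')}\frac1{\alpha'}\log f(\sigma,\omega)$.
\item The same formula on $\N'$, whose commutant is $\N$, has its variational variable in $\S(\N)$. This is exactly the set over which the left side is minimized. With $\beta'=-\alpha'$ it gives $\widetilde D_\beta(\rho|_{\N'}\|E^{-1}_*(\omega))=-\inf_{\sigma\in\S(\N)}\frac1{\alpha'}\log f(\sigma,\omega)$.
\item Hence the two sides of (a) are $\inf_\sigma\sup_\omega$ and $\sup_\omega\inf_\sigma$ of one function on $\S(\N)\times\S(\M')$. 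A single minimax on these compact convex sets closes the argument.
\end{itemize}
Apply the minimax theorem to $f$ itself, not to $\log f$, and then use monotonicity of $\frac1{\alpha'}\log$. The function $f$ is convex in one variable and concave in the other: $\Delta_{\sigma,\omega}$ is affine in $\omega$, $\Delta_{\sigma,\omega}^{-1}$ is affine in $\sigma$, and $t\mapsto t^{\pm|\alpha'|}$ is operator convex/concave. For non-faithful $\sigma,\omega$, restrict to $s(\omega)\geq s(\rho|_{\M'})$ so that the inversion holds at $\v\rho$. With this, no pinching or blockwise step is needed. Your Step~1 (absorbing $\tau_j^{\gamma}$ into the vector) is correct but becomes unnecessary.
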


\Autoref{thm:finite_dim_duality} is the finite-dimensional case of \Autoref{thm:duality_relation} below, and  it also holds, with the proper definition of $E^{-1}$, when $\M \subsetneq \BO$ itself is a proper subalgebra. Note that the map $E^{-1}$ defined above \eqref{eq:E-1} is a conditional expectation up to the scalar
\begin{equation}\label{eq:intro-finite-dim-index}
    \Ind E\coloneqq E^{-1}(\IdentityMatrix) =\sum_{j \in J} \Tr τ_j^{-1}\,,
\end{equation}
called the \emph{index} of $E$ (\Autoref{sec:operator_valued_weights}). For the trace-preserving conditional expectation, $τ_j = \IdentityMatrix_{\mathcal{V}_j}/\dim \mathcal{V}_j$ and $\Ind E = \sum_j (\dim \mathcal{V}_j)^2$; if $\N$ is a factor, i.e.\ $\M = \BO[\HS_A\otimes\HS_B]$ and $\N=\IdentityMatrix_A\otimes\BO[\HS_B]$, this is the Jones index $[\M:\N] = d_A^2$. We nevertheless write $\Ind E$ instead of $[\M:\N]$ throughout, since in general the index depends on the conditional expectation and not only on the inclusion $\N\subset \M$. For algebras $\M$ with a nontrivial center,  $\Ind E$ is an element of the center of $\M$ which need not be a scalar (see \Autoref{sec:duality-additive-form}).

Dividing by the index brings the duality relations into their familiar additive form.

\begin{corollary}[Finite-dimensional duality, additive form]\label{cor:finite_dim_duality_subfactor}
	In the setting of \Autoref{thm:finite_dim_duality}, let $\Ind E$ be the index of $E$ as in \eqref{eq:intro-finite-dim-index}, and $E^{-1}=(\Ind E)\,E'$ for a conditional expectation $E':\N'\to\M'$. Then
    \begin{enumerate}\item[(a)]
    for $\alpha,\beta\in[1/2,\infty]$ with $\frac1\alpha+\frac1\beta=2$
	\begin{equation}
		\widetilde{D}_\alpha\big(\rho|_{\M}\,\|\,E_*(\S(\N))\big)+\widetilde{D}_\beta\big(\rho|_{\N'}\,\|\,E'_*(\S(\M'))\big)=\log \Ind E\,,
	\end{equation}
  \item[(b)] for $α, β \in [0,2]$ with $β = 2 - α$
    \begin{equation}
    D_α(ρ|_{\M}\|E_*(ρ|_{\N}))  + D_{β}(ρ|_{\N'}\|E'_*(ρ|_{\M'})) = \log \Ind E
    \end{equation}
    \end{enumerate}
    In particular, this recovers
    \begin{equation}
    D(ρ|_{\M}\|E_*(ρ|_{\N})) + D(ρ|_{\N'}\|E'_*(ρ|_{\M'})) = \log \Ind E\,.
    \end{equation}

\end{corollary}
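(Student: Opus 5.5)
The corollary should follow from \Autoref{thm:finite_dim_duality} by a scaling argument. In the present finite-dimensional setting $\Ind E=\sum_j \Tr τ_j^{-1}$ is a positive scalar. The plan is to substitute $E^{-1}_*=(\Ind E)\,E'_*$ into the right-hand sides of the theorem and pull the scalar out of each divergence.

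First I would check that $E':=E^{-1}/\Ind E$ is indeed a conditional expectation $\N'\to\M'$. From \eqref{eq:E-1}, the Choi-type density of $E^{-1}_*(ρ_C)$ is $\bigoplus_j τ_j^{-1}\otimes ρ_C$. Hence $E^{-1}$ is completely positive and $\M'$-bimodular, since it is the adjoint of tensoring $ρ_C$ with a fixed positive block operator. It also satisfies $E^{-1}(\IdentityMatrix)=\sum_j\Tr τ_j^{-1}=\Ind E$. Dividing by this scalar gives a unital, bimodular, positive projection onto $\M'$, so $E'_*$ maps $\S(\M')$ into states.

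Second, I would record the scaling identity for the divergences. For positive functionals and $\lambda>0$,
\begin{equation*}
\widetilde{D}_\beta(\omega\|\lambda\sigma)=\widetilde{D}_\beta(\omega\|\sigma)-\log\lambda ,
\qquad
D_\beta(\omega\|\lambda\sigma)=D_\beta(\omega\|\sigma)-\log\lambda .
\end{equation*}
For the sandwiched divergence this comes from the factor $\lambda^{(1-\beta)/\beta\cdot\beta}=\lambda^{1-\beta}$ inside the trace, which gives $-\log\lambda$ after multiplying by $\tfrac{1}{\beta-1}$. The Petz case is identical, with trace term $\Tr ω^\beta (λσ)^{1-\beta}$. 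The endpoints $\beta=1/2$, $\infty$, $0$, $2$ and the Umegaki case $\beta=1$ follow by the same direct computation or by taking limits. The shift is independent of $\sigma$, so it commutes with the infimum over $\sigma\in\S(\M')$ in part (a).

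Combining, part (a) of the theorem gives
\begin{equation*}
\widetilde{D}_\alpha\big(\rho|_{\M}\|E_*(\S(\N))\big)=-\widetilde{D}_\beta\big(\rho|_{\N'}\|E'_*(\S(\M'))\big)+\log\Ind E .
\end{equation*}
Rearranging yields (a), and (b) follows in the same way with $\sigma=ρ|_{\M'}$. The special case $α=β=1$ is (b) at $α=1$.

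The only delicate point is normalization conventions. The divergences must be defined for the non-normalized second argument $E^{-1}_*(\cdot)$ exactly as in the theorem, with no normalization by the trace of the second argument. With that convention the scaling law is exact, and there is no real obstacle. I would also note in passing that for $\N$ with nontrivial center, $\Ind E$ still lies in $\M\cap\N'$; in the present finite-dimensional setting with $\M=\BO$ it is a scalar, which is what the argument uses.
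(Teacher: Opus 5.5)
Your proposal is correct and follows the paper's argument. The additive form is the finite-dimensional case of \autoref{cor:finite_index_duality_cond_exp}, which the paper proves exactly this way: write $E^{-1}_*(\omega')=(\Ind E)\,E'_*(\omega')$, apply the scaling rule $\D(\rho\|\lambda\sigma)=\D(\rho\|\sigma)-\log\lambda$, take the infimum over $\omega'$ (respectively set $\omega'=\rho|_{\M'}$), and insert into the duality theorem, with the conditional-expectation property of $E'$ checked as in \autoref{prop:finite-dim-inverse-weight}.

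One small correction to your closing aside: $\Ind E$ lies in the center $Z(\M)=\M\cap\M'$, so whether it is a scalar depends on the center of $\M$, not of $\N$. That is why it is a scalar here for $\M=\BO$, however many blocks $\N$ has.
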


Taking $\N=\IdentityMatrix_A\otimes\BO[\HS_B]\subset\M=\BO[\HS_A\otimes\HS_B]$ recovers \eqref{eq:(sandwiched) conditional entropy with dim.}, as $\Ind E=d_A^2$ for the trace-preserving conditional expectation. Moreover, \Autoref{thm:finite_dim_duality} applies to a genuinely \emph{reducible} $\N$ of several blocks with multiplicities, which do not correspond to any tensor product subsystem. Examples include observables invariant under a group symmetry, a coarse-grained algebra generated by a pinching map, or a fixed superselection sector (see the explicit examples at the end of \autoref{sec:finite-dim-cond-exp}). To the best of our knowledge, even within finite dimensions, \Autoref{thm:finite_dim_duality} is new for the R\'enyi divergences: prior duality statements for the sandwiched R\'enyi entropy \cite{muller-lennert_quantum_2013, beigi_sandwiched_2013, tomamichel_relating_2014} are only stated for subalgebras of tensor-product form, whereas for the standard relative entropy ($\alpha = 1$) a finite-dimensional subalgebra duality was obtained in \cite{magan_quantum_2021}, see \Autoref{sec:intro-related-work}.

\paragraph{Beyond finite dimensions: von Neumann algebras.}\label{sec:intro-mainthms}
To go beyond \Autoref{thm:finite_dim_duality}, we consider $\M\subseteq \BO$ to be a von Neumann algebra acting on a (possibly infinite-dimensional) Hilbert space $\HS$.  Let $\N \subset \M$ be a von Neumann subalgebra with a normal faithful conditional expectation $E: \M \to \N$. By a theorem of Haagerup \cite[Thms.~5.9, 6.13]{haagerup_operator_1979a}, every such $E$ has a dual normal faithful semi-finite (n.f.s.)\ operator-valued weight $E^{-1}$ from $\N'$ to $\M'$, which can be thought of as a generalized conditional expectation (see \autoref{sec:operator_valued_weights} below). The main technical engine in our argument is the following spatial derivative identity \cite[Rem.~11.26]{hiai_lectures_2021} $$\Delta(E_*(\rho)/\sigma)=\Delta(\rho/E^{-1}_*(\sigma)).$$
Here $\rho$ is a normal state on $\N$, so that $E_*(\rho)$ is a state on $\M$, and $\sigma$ is a normal state on the commutant $\M'$.  Note that for a general operator-valued weight $E^{-1}$, $E_*^{-1}(σ)$ is in general not a linear functional but only a weight. Nevertheless, this does not happen if $E$ has \emph{finite index}, i.e.\ if the index $\Ind E \coloneqq E^{-1}(\IdentityMatrix)$ is a bounded element in the center $\mathcal{Z}(\M)$ (see \autoref{sec:operator_valued_weights}); then everything stays as linear functionals. Note that finite index is a property of $E$, not only of the inclusion $\N \subset \M$.

\iffalse
For a normal state $\rho$. We then represent the von Neumann algebra $\M\subseteq \BO$ such that the state $\rho$ is represented by a vector $\ket{\v{\rho}} \in \HS$. For example, this can be the case if $\HS$ is the GNS representation space for $\M$ with the state $\rho$, or also if $\M$ is in standard form with $\HS = L_2(\M)$. 
\fi 
Our main result is:
\begin{theorem}[Duality relation for conditional entropies]\label{thm:duality_relation}
	Let $\M \subset \BO$ be a von Neumann algebra and  $\rho \in \Ms$ be a normal state that is represented by the vector $\ket{\v{\rho}} \in \HS$. Let $E:\M \to \N$ be a normal faithful conditional expectation of finite index onto a von Neumann subalgebra $\N$ (see \Autoref{sec:operator_valued_weights}). Let $\N'$ and $\M'$ be the commutants of $\N$ and $\M$ within $\BO$, and let $E^{-1}$ be the canonical n.f.s.\ operator-valued weight from the commutant $\N'$ to $\M'$. Then:
	\begin{enumerate}[label=(\alph*)]
		\item  \label{item:dr-sandwiched} For $\alpha \in [1/2, \infty]$ and $\frac{1}{\alpha} + \frac{1}{\beta} = 2$,
		\begin{equation}
			\widetilde{D}_\alpha(\rho|_{\M}\|E_*(\N_*)) = - \widetilde{D}_\beta(\rho|_{\N'}\|E^{-1}_*(\M'_*)),
		\end{equation}
		with both sides allowed to be $\pm\infty$. 
		\item  \label{item:dr-petz} For $\alpha \in [0, 2]$ and $\beta = 2 - \alpha$,
		\begin{equation}
			D_\alpha(\rho|_{\M}\|E_*(\rho|_{\N})) = - D_\beta(\rho|_{\N'} \|E^{-1}_{*}(\rho|_{\M'})),
		\end{equation}
		where $\rho|_{\M}$ and $\rho|_{\N'}$ are the states on $\M$ and $\N'$ induced by $\ket{\v{\rho}}$.
		\item  \label{item:dr-relative-entropy} for the relative entropy ($α = β = 1$),
		\begin{equation}
			D(\rho|_{\M}\|E_*(\N_*)) = D(\rho|_{\M}\|E_*(\rho|_{\N}))= - D(\rho|_{\N'} \|E^{-1}_{*}(\rho|_{\M'})) = - D(\rho|_{\N'} \|E^{-1}_{*}(\M'_*))\,.
		\end{equation}
	\end{enumerate}
\end{theorem}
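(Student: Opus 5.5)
The plan is to express every divergence through Connes' spatial derivative and then use the identity $\Delta(E_*(\omega)/\sigma)=\Delta(\omega/E^{-1}_*(\sigma))$ for $\omega\in\N_*$ and $\sigma\in\M'_*$, quoted above from \cite[Rem.~11.26]{hiai_lectures_2021}. Because $E$ has finite index, $E^{-1}_*(\sigma)$ is a bounded normal positive functional on $\N'$, so every quantity appearing below is an honest functional.

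\emph{Part (b).} For the Petz divergences I will use Araki's vector form. For $\alpha\in[0,2]$,
\[
Q_\alpha(\rho|_\M\|\varphi)=\langle \v\rho, \Delta(\varphi/\rho|_{\M'})^{1-\alpha}\v\rho\rangle ,
\]
where the spatial derivative is taken with respect to the vector $\v\rho$ and the commutant state $\rho|_{\M'}$. Set $\varphi=E_*(\rho|_\N)$. The spatial identity gives $\Delta(E_*(\rho|_\N)/\rho|_{\M'})=\Delta(\rho|_\N/E^{-1}_*(\rho|_{\M'}))$. Next, $\Delta(\omega/\psi)^{-1}=\Delta(\psi/\omega)$ whenever $\omega,\psi$ are functionals on a pair of commutants; here the pair is $(\N,\N')$. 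Hence
\[
\Delta(\rho|_\N/E^{-1}_*(\rho|_{\M'}))^{1-\alpha}=\Delta(E^{-1}_*(\rho|_{\M'})/\rho|_\N)^{1-\beta}, \qquad \beta=2-\alpha .
\]
Evaluated at $\v\rho$, the right-hand side is $Q_\beta(\rho|_{\N'}\|E^{-1}_*(\rho|_{\M'}))$. The normalizing prefactors are $\frac{1}{\alpha-1}$ and $\frac{1}{\beta-1}=-\frac{1}{\alpha-1}$, so the logarithms come with opposite signs and (b) follows. The one point needing care is support: if $\v\rho$ is not cyclic, negative powers must be read on the support projections. I will check that $s(E_*(\rho|_\N))\ge s(\rho|_\M)$, which holds by faithfulness of $E$ together with $\rho|_\N=\rho|_\M\circ E$ restricted to $\N$. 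The case $\alpha=1$ then follows by differentiating at $\alpha=1$.

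\emph{Part (a).} I will use the Kosaki/Berta–Scholz–Tomamichel variational (interpolation-norm) form of $\widetilde D_\alpha$. This writes $\widetilde Q_\alpha(\rho|_\M\|\varphi)$ as $\sup$ (for $\alpha>1$) or $\inf$ (for $\alpha<1$) over $\sigma'\in\S(\M')$ of $\|\Delta(\varphi/\sigma')^{\frac{1-\alpha}{2\alpha}}\v\rho\|^{2}$, with suitable powers and conjugate exponents $\alpha'=\alpha/(2\alpha-1)=\beta$. Substituting $\varphi=E_*(\omega)$ and applying the spatial identity turns this into $\|\Delta(\omega/E^{-1}_*(\sigma'))^{\frac{1-\alpha}{2\alpha}}\v\rho\|$. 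Inverting the derivative exchanges the roles of $\N$ and $\N'$, and the exponent $\frac{1-\alpha}{2\alpha}$ becomes $\frac{1-\beta}{2\beta}$ up to sign, by $1/\alpha+1/\beta=2$. This sets up a two-variable optimization over $(\omega,\sigma')\in\S(\N)\times\S(\M')$. Read one way, it equals $\widetilde D_\alpha(\rho|_\M\|E_*(\N_*))$; read the other way, it equals $-\widetilde D_\beta(\rho|_{\N'}\|E^{-1}_*(\M'_*))$. This is the von Neumann algebraic version of the finite-dimensional proof via Hölder duality.

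\emph{Main obstacle.} The hard step is justifying that the optimizations can be exchanged: the inf over $\omega$ with the sup/inf over $\sigma'$. In infinite dimensions this needs a minimax theorem. I would apply Sion's theorem, using convexity/concavity of the interpolation norms (Lieb–Kosaki type joint concavity), compactness of the normal state space in a suitable topology (or passage to finite-index approximations), and lower semicontinuity of $\widetilde D_\alpha$. Non-faithful cases and the endpoints $\alpha\in\{1/2,\infty\}$ would then be handled by monotone limits in $\alpha$, which also accounts for the values $\pm\infty$.

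\emph{Part (c).} The outer equalities, $D(\rho|_\M\|E_*(\N_*))=D(\rho|_\M\|E_*(\rho|_\N))$ and its commutant counterpart, come from the chain rule $D(\rho\|E_*\omega)=D(\rho\|E_*\rho|_\N)+D(\rho|_\N\|\omega)$, which holds for conditional expectations (Petz). The middle equality is (b) at $\alpha=1$.
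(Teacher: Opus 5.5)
Your argument for (b) is the paper's: inversion formula, Haagerup's identity and $\beta-1=1-\alpha$. At $\alpha=1$ one lets $\alpha\nearrow1$; the finiteness of $D_{2-\alpha}$ on the commutant side, which the limit from above requires, is supplied by the identity itself. The skeleton of (a) is also the paper's. The gap is the exchange of $\inf_\sigma$ and $\sup_{\omega'}$, which is the real content of (a), and neither route you suggest works.

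\begin{itemize}
\item \emph{No usable compactness.} $\S(\M')$ and $\S(\N)$ are not norm compact. In the weak-$*$ topology they are dense in, not equal to, the compact sets $\S^*(\M')$, $\S^*(\N)$ of all states, which contain singular states once the algebra is infinite-dimensional.
\item \emph{No approximation.} Finite-dimensional approximations are not available for general (e.g.\ non-injective) algebras.
\item \emph{Wrong semicontinuity.} A minimax theorem needs semicontinuity of the two-variable objective in the compact variable, not of $\widetilde D_\alpha$.
\end{itemize}

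The missing idea is to compactify one variable. For $\alpha<1$, let $\omega'$ range over $\S^*(\M')$. Extend $\varphi\mapsto\langle\v{\xi},\Delta(\varphi/\psi)^{\alpha'}\v{\xi}\rangle$, $\alpha'\in[-1,0)$, to non-normal $\varphi$ via the resolvent formula
$\langle\v{\xi},(t+\Delta(\varphi/\psi))^{-1}\v{\xi}\rangle=\sup_{\v{\eta}\in D(\HS,\psi)}[2\Re\langle\v{\xi},\v{\eta}\rangle-t\|\v{\eta}\|^2-\varphi(\theta^\psi(\v{\eta},\v{\eta}))]$,
followed by an integral (or supremum) over $t$. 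This is a supremum of weak-$*$ continuous affine functions of $\varphi$, hence convex and weak-$*$ lower semicontinuous. The singular part of $\varphi$ drops out: cut $\v{\eta}$ down by projections $e_i\nearrow\IdentityMatrix$ that it annihilates. Normalizing the normal part only lowers the value since $\alpha'<0$, so nothing is lost by passing to $\S^*(\M')$. Concavity in the other variable and a Kneser--Fan type minimax theorem, which needs no topology on that variable, then give the exchange.

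For $\alpha>1$ the roles swap: the compactified variable is $\sigma\in\S^*(\N)$, after rewriting the objective as $\langle\v{\rho},\Delta(\sigma/E^{-1}_*(\omega'))^{-\alpha'}\v{\rho}\rangle$. Two further points are needed.
\begin{itemize}
\item The inversion formula fails on $(\IdentityMatrix-s(\sigma))(\IdentityMatrix-s(\omega'))\HS$. So this step, like your own inversion step, requires checking that the optimizations are unaffected, e.g.\ by mixing $\omega'$ with $\rho|_{\M'}$ to restrict to $s(\omega')\ge s(\rho|_{\M'})$.
\item The identity you quote from Hiai is stated for n.f.s.\ weights and must be extended to non-faithful functionals.
\end{itemize}

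Finally, the endpoints $\alpha\in\{1/2,\infty\}$ cannot be reached by monotone limits in $\alpha$. That would require interchanging a monotone limit with the infimum over the second argument, which again needs compactness. Without it you recover only the trivial inequality $\sup\inf\le\inf\sup$. The compactification argument works directly at $|\alpha'|=1$.

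In (c), Petz's chain rule gives $D(\rho|_\M\|E_*(\N_*))=D(\rho|_\M\|E_*(\rho|_\N))$, but not its commutant counterpart. There the map is $E^{-1}$, which is not a conditional expectation: $E^{-1}(\IdentityMatrix)=\Ind E$ is a central element that need not be a scalar when $\M$ is not a factor. (For scalar index, $E^{-1}=(\Ind E)E'$ and the scaling rule suffices.) You need a Pythagorean identity for such operator-valued weights. One way:
\begin{itemize}
\item write $E^{-1}=hE'$ with $h=\Ind E$;
\item prove $h\ge\IdentityMatrix$, so that $E'=h^{-1}E^{-1}$ is a conditional expectation;
\item prove the central scaling rule $D(\omega\|\sigma(h\,\cdot\,))=D(\omega\|\sigma)-\omega(\log h)$.
\end{itemize}
Since (a) at $\alpha=1$ rests on (c), the same argument is needed there.
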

This is precisely \Autoref{thm:finite_dim_duality} in the von Neumann algebraic setting. It is, to our knowledge, the first duality relation of the sandwiched R\'enyi relative entropy proved for genuinely infinite-dimensional von Neumann algebras. 
Part~\ref{item:dr-sandwiched} includes the two extreme cases $(α, β) = (\infty, 1/2)$ and $(α, β) = (1/2, \infty)$, where $\widetilde{D}_\infty = D_{\max}$ is the max-relative entropy. Part~\ref{item:dr-relative-entropy} -- for subfactor inclusions -- recovers a relative entropy duality previously known due to Xu \cite{xu_relative_2019} and Longo--Witten \cite{longo_note_2022}.

As in the finite-dimensional case, $E^{-1}$ is a conditional expectation up to the index: $E' \coloneqq (\Ind E)^{-1} E^{-1}:\N' \to \M'$ is a normal faithful conditional expectation (\Autoref{lem:dual-conditional-expectation}). If $\Ind E$ is a scalar, which is the case whenever $\M$ is a factor, all three parts of \Autoref{thm:duality_relation} take their familiar additive form:

\begin{corollary}[Additive form of the duality relations for scalar index]\label{cor:finite_index_duality_cond_exp}
	In the setting of \Autoref{thm:duality_relation}, suppose in addition that $\Ind E$ is a finite multiple of the identity. Then $E' \coloneqq (\Ind E)^{-1} E^{-1}:\N' \to \M'$ is a normal faithful conditional expectation, and
	\begin{enumerate}[label=(\alph*)]
		\item for $\alpha,\beta\in[1/2,\infty]$ with $\tfrac1\alpha+\tfrac1\beta=2$,
        $$\widetilde{D}_\alpha(\rho|_{\M}\|E_*(\N_*)) + \widetilde{D}_\beta(\rho|_{\N'}\|E'_*(\M'_*)) = \log \Ind E ;$$
		\item for $\alpha\in[0,2]$ and $\beta=2-\alpha$,
        $$D_\alpha(\rho|_{\M}\|E_*(\rho|_{\N})) + D_\beta(\rho|_{\N'}\|E'_*(\rho|_{\M'})) = \log \Ind E;$$ 
		\item for relative entropy,
        $$D(\rho|_{\M}\|E_*(\rho|_{\N})) + D(\rho|_{\N'}\|E'_*(\rho|_{\M'})) = \log \Ind E.$$
	\end{enumerate}
\end{corollary}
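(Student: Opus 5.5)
The corollary should follow from \Autoref{thm:duality_relation} by pulling the scalar $\Ind E$ out of the second argument of each divergence. Write $\lambda \coloneqq \Ind E \in [1,\infty)$ for the scalar. First I would show that $E' = \lambda^{-1}E^{-1}$ is a normal faithful conditional expectation; this is the content of \Autoref{lem:dual-conditional-expectation}, which I take as given. In the scalar case it reduces to a short check. Since $E^{-1}$ is an operator-valued weight from $\N'$ to $\M'$, it is $\M'$-bimodular. Finite index makes it bounded with $E^{-1}(\IdentityMatrix)=\lambda\IdentityMatrix$. Hence $E'$ is a unital, positive, $\M'$-bimodular normal map onto $\M'$, i.e.\ a conditional expectation, and it is faithful because $E^{-1}$ is. Consequently, for every normal positive functional $\sigma$ on $\M'$,
\begin{equation*}
E^{-1}_*(\sigma) = \sigma\circ E^{-1} = \lambda\,\sigma\circ E' = \lambda\,E'_*(\sigma),
\end{equation*}
and in particular $E^{-1}_*(\M'_*)=\lambda E'_*(\M'_*)$ as sets.

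Next I would use the scaling rule in the second argument, valid for each divergence family in both the von Neumann and finite-dimensional settings: $\widetilde D_\beta(\omega\|c\,\varphi)=\widetilde D_\beta(\omega\|\varphi)-\log c$, and likewise for $D_\beta$ and $D$, for every $c>0$. This follows directly from the definitions through (relative) modular operators or spatial derivatives, because $\Delta(\omega/c\varphi)=c^{-1}\Delta(\omega/\varphi)$. Equivalently, in interpolation-space form, each power $\varphi^{(1-\beta)/\beta}$ contributes a factor $c^{(1-\beta)/\beta}$, and after taking the $\beta$-th power and dividing by $\beta-1$ this yields exactly $-\log c$. For the endpoint cases, $D_{\max}(\omega\|c\varphi)=D_{\max}(\omega\|\varphi)-\log c$ is immediate from the definition as $\log\inf\{\mu:\omega\le\mu\varphi\}$. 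At $\beta=0$, where $D_0$ is defined as $-\log\varphi(\mathrm{supp}\,\omega)$, the same shift holds.

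Then I take the infimum over $\sigma\in\S(\M')$. For part (a), the constant shift commutes with the infimum, which gives
\begin{equation*}
\widetilde D_\beta(\rho|_{\N'}\|E^{-1}_*(\M'_*)) = \widetilde D_\beta(\rho|_{\N'}\|E'_*(\M'_*)) - \log\lambda.
\end{equation*}
Substituting this into \Autoref{thm:duality_relation}\ref{item:dr-sandwiched} and rearranging yields (a). I read the infimum over $\M'_*$ as running over states; if it ran over all normal positive functionals, one should normalize first so that the scaling is not absorbed. Parts (b) and (c) follow in the same way from \Autoref{thm:duality_relation}\ref{item:dr-petz} and \ref{item:dr-relative-entropy}, with $\sigma=\rho|_{\M'}$ and no infimum needed.

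The main obstacle is a bookkeeping point about infinite values. When both sides of the theorem are $\pm\infty$, the additive identity must be read in the extended sense: $+\infty + (-\infty)$ cannot occur, because the theorem identifies one side with the negative of the other. Since $\log\lambda$ is finite, the shift preserves this correspondence.
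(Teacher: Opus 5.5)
Your argument is the paper's own proof. You write $E^{-1}_*(\omega')=\lambda\,E'_*(\omega')$ with $\lambda=\Ind E$, shift every divergence by $-\log\lambda$ using the scaling rule, and take the infimum over $\omega'\in\S(\M')$ (or set $\omega'=\rho|_{\M'}$). You then substitute the result into \Autoref{thm:duality_relation}.

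The one thing to fix is your closing paragraph on infinite values, because the justification you give is backwards. Suppose a divergence in \Autoref{thm:duality_relation} were $+\infty$. The identity $A=-B$ would then make the other one $-\infty$, and after the finite shift by $\log\lambda$ the additive form would be exactly $(+\infty)+(-\infty)$. So the relation $A=-B$ is what would produce opposite infinities, not what rules them out.

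What rules them out is the finite index. As in the proof of \Autoref{thm:gqsl_subalgebras}, Haagerup's identity and \Autoref{lem:dmax-variational} give $\rho\le\lambda\,E_*(\rho|_{\N})$. Monotonicity in the order and in the second argument then gives
\[
0\le \widetilde D_\alpha(\rho|_{\M}\|E_*(\N_*))\le \widetilde D_\alpha(\rho|_{\M}\|E_*(\rho|_{\N}))\le D_{\max}(\rho|_{\M}\|E_*(\rho|_{\N}))\le\log\lambda .
\]
Likewise, for $\alpha\in[0,2]$,
\[
0\le D_\alpha(\rho|_{\M}\|E_*(\rho|_{\N}))\le D_2(\rho|_{\M}\|E_*(\rho|_{\N}))\le D_{\max}(\rho|_{\M}\|E_*(\rho|_{\N}))\le\log\lambda .
\]
The lower bounds hold because the second arguments are states, so $F\le 1$ and $D_0=-\log\sigma(s(\rho))\ge 0$. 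Hence the $\M$-side terms are finite, the theorem forces the commutant-side terms to be finite too, and your rearrangement into the additive form is legitimate.
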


If $\N \subset \M$ is an inclusion of type II$_1$ factors and $E$ is the trace-preserving conditional expectation, then $\Ind E$ is the Jones index $[\M:\N]$. If $\Ind E$ is not a scalar, the index correction depends on the state, see \Autoref{rem:non-scalar-index}.

\medskip

{\bf Application: a Generalized Quantum Stein's Lemma for subalgebras. }\label{sec:intro-gqsl} Quantum Stein's Lemma states that the optimal (type-II) error exponent for asymptotically distinguishing i.i.d.\ copies of a state $\rho$ from i.i.d.\ copies of a fixed state $\sigma$ is given by $D(\rho\|\sigma)$. The \emph{generalized} quantum Stein's lemma (GQSL), first proposed by Brand\~ao and Plenio \cite{brandao_generalization_2010}, extends this to the composite case where the alternative hypothesis is not a single state $\sigma$ but a suitably axiomatized family of states (the \enquote{resource-free} states) -- a statement of considerable importance, since it underlies the interpretation of the regularized relative entropy of a resource as an operationally meaningful, reversible rate of conversion. The original argument \cite{brandao_generalization_2010} for the fully general convex-set version of GQSL was found in 2023 to contain a gap \cite{berta_gap_2023}; this was subsequently fixed, in finite dimensions, by Hayashi and Yamasaki \cite{hayashi_generalized_2025} and, independently, by Lami \cite{lami_solution_2025}.

In parallel, Gao and Rahaman showed, by an independent route, that the GQSL holds whenever the alternative hypothesis is the state space of a \emph{subalgebra} $\N$ of the ambient (finite-dimensional) algebra $\M$ \cite{gao_generalized_2024} -- a setting that covers many resource theories of interest (e.g.\ group-symmetric free states) and for which the relevant convexity/closure axioms are automatic. As an application, we use the duality relations from \Autoref{thm:duality_relation} to lift this subalgebra version of the GQSL to arbitrary von Neumann algebras, as long as the conditioning subalgebra has finite index:

\begin{theorem}[GQSL and AEP for subalgebras]\label{thm:gqsl_subalgebras}
	Let $\M$ be a von Neumann algebra and  let $E: \M \to \N$ be a normal faithful conditional expectation of finite index. 
     Then for every normal state $\rho\in\S(\M)$ and $\varepsilon \in (0, 1)$,
	\begin{equation*}
		\lim_{n \to \infty} \frac{1}{n} D_{\max}^\varepsilon(\rho\n\|E_*\n(\N_*\n)) =\lim_{n \to \infty} \frac{1}{n} D_H^\varepsilon(\rho\n\|E_*\n(\N_*\n))=D(\rho\|E_*(\N_*)) = D(\rho\|E_*(\rho|_\N)),
	\end{equation*}
	where $D_{\max}^\varepsilon$ (resp. $D_H^\varepsilon$) is the smoothed max (resp. hypothesis testing) relative entropy.
\end{theorem}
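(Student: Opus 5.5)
Our plan follows the finite-dimensional route of \cite{gao_generalized_2024}, replacing the dimension-dependent ingredients by the index, and using \Autoref{thm:duality_relation} to handle the minimization over $\N_*$. First, we settle the last equality and the single-letter structure. By part~\ref{item:dr-relative-entropy} of \Autoref{thm:duality_relation} (after representing $\M$ standardly so that $\rho$ is a vector state), $D(\rho\|E_*(\N_*)) = D(\rho\|E_*(\rho|_\N))$. We also need the chain rule $D(\rho\|E_*(\sigma)) = D(\rho\|E_*(\rho|_\N)) + D(\rho|_\N\|\sigma)$, which follows from the same spatial-derivative identity. The $n$-fold inclusion $\N^{\bar\otimes n}\subset\M^{\bar\otimes n}$ with $E^{\otimes n}$ is again of finite index, with $\Ind E^{\otimes n} = (\Ind E)^{\otimes n}$ central and $\|\Ind E^{\otimes n}\| \le \|\Ind E\|^n$. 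Additivity then gives $D(\rho\n\|E_*\n(\N_*\n)) = nD(\rho\|E_*(\rho|_\N))$.

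Second, we prove the upper bound for $D^\varepsilon_{\max}$ via the sandwiched \renyi divergences. For $\alpha>1$, the standard bound
\[
D^\varepsilon_{\max}(\rho\n\|\sigma_n) \le \widetilde D_\alpha(\rho\n\|\sigma_n) + \tfrac{1}{\alpha-1}\log\tfrac{1}{\varepsilon}
\]
holds in von Neumann algebras via the Haagerup $L_p$ or interpolation definitions. Choosing $\sigma_n = E_*\n((\rho|_\N)\n)$ and using additivity of $\widetilde D_\alpha$ under tensor products yields the limsup bound $\widetilde D_\alpha(\rho\|E_*(\rho|_\N))$. Letting $\alpha\downarrow 1$ gives $D(\rho\|E_*(\rho|_\N))$, provided continuity in $\alpha$ holds. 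This requires $\widetilde D_\alpha<\infty$ for some $\alpha>1$. Finite index supplies it: $E \ge (\Ind E)^{-1}\id$ by the Pimsner--Popa inequality, hence $\rho \le \|\Ind E\|\,E_*(\rho|_\N)$ and $D_{\max}(\rho\|E_*(\rho|_\N)) \le \log\|\Ind E\|$.

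Third, we prove the lower bound for $D_H^\varepsilon$, which is the main obstacle. The infimum runs over all of $E_*\n(\N_*\n)$, including non-product, non-permutation-invariant states. A single fixed test must therefore have small type-II error against every $E_*\n(\sigma_n)$ simultaneously. Our plan is to use the duality in the form
\[
\inf_{\sigma_n} \widetilde D_\alpha(\rho\n\|E_*\n(\sigma_n)) = -\widetilde D_\beta(\rho\n|_{(\N^{\otimes n})'}\|(E^{-1})\n_*(\cdots)),
\]
via \Autoref{thm:duality_relation}\ref{item:dr-sandwiched} applied to the $n$-fold inclusion. The dual side involves only the product state $\rho\n$ on the commutant and the product weight $(E^{-1}_*)\n$ of $\M'^{\,n}_*$, and is therefore additive. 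This gives, for $\alpha<1$ (so $\beta>1$),
\[
\inf_{\sigma_n}\widetilde D_\alpha(\rho\n\|E_*\n(\sigma_n)) = n\,\widetilde D_\alpha(\rho\|E_*(\N_*)).
\]
The additivity check on the dual side, an infimum of $\widetilde D_\beta$ over product structures, will be done by the same duality run backwards together with the supermultiplicativity of $L_p$ norms. We then combine the converse bound
\[
D_H^\varepsilon(\rho_n\|\sigma) \ge \widetilde D_\alpha(\rho_n\|\sigma) - \tfrac{\alpha}{1-\alpha}\log\tfrac{1}{1-\varepsilon}
\]
with a minimax argument. Here $D_H^\varepsilon$ to a convex set equals $\inf_\sigma$ of $D_H^\varepsilon$, by Sion's theorem applied to tests in the unit ball with weak$^*$ compactness. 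This yields the liminf bound $\widetilde D_\alpha(\rho\|E_*(\N_*))$, and letting $\alpha\uparrow1$ (lower semicontinuity) gives $D(\rho\|E_*(\N_*))$.

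Finally, we close the sandwich using the general relation $D_H^{\varepsilon} \lesssim D_{\max}^{1-\varepsilon}+\log\tfrac{1}{\varepsilon}$ type inequalities: the $D_H$ limsup is bounded by the $D_{\max}$ upper bound, and the $D_{\max}$ liminf is bounded by the $D_H$ lower bound. This gives equality of both limits for every $\varepsilon\in(0,1)$. The delicate points we anticipate are the minimax/compactness step in infinite dimensions and the continuity of $\widetilde D_\alpha$ as $\alpha\uparrow 1$ uniformly enough to pass to the limit. Finite index, through the bound $D_{\max}\le\log\|\Ind E\|$ uniform in the state, is what we expect to make both work.
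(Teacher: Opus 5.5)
Your upper bound and the transfer between $D_H^\varepsilon$ and $D_{\max}^\varepsilon$ are fine in substance, up to constants. The lower bound, however, has a genuine gap at the step ``letting $\alpha\uparrow1$ (lower semicontinuity) gives $D(\rho\|E_*(\N_*))$''. That step needs
\[
\sup_{\alpha<1}\,\inf_{\sigma\in\S(\N)}\widetilde{D}_\alpha(\rho\|E_*(\sigma))\;\ge\;\inf_{\sigma\in\S(\N)}D(\rho\|E_*(\sigma)),
\]
which is an exchange of $\sup_\alpha$ and $\inf_\sigma$. Monotonicity in $\alpha$ and $\widetilde D_\alpha\to D$ for each fixed $\sigma$ give only the reverse inequality. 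The one you need is a minimax statement. In finite dimensions it follows from compactness of the state space, but $\S(\N)$ is not compact here.

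The finite-index bound $\rho\le\|\Ind E\|\,E_*(\rho|_\N)$ does not rescue this. It controls only the single element $\sigma=\rho|_\N$, not the near-minimizers of $\sigma\mapsto\widetilde D_\alpha(\rho\|E_*(\sigma))$, for which $D(\rho\|E_*(\sigma))$ may even be infinite. Relatedly, the dual side is \emph{not} additive ``because it only involves product objects''. The quantity $\widetilde D_\beta\big((\rho|_{\N'})^{\otimes n}\big\|(E^{\otimes n})^{-1}_*((\M')^{\otimes n}_*)\big)$ still contains an infimum over all, generally correlated, states of $(\M')^{\bar\otimes n}$. Additivity does in fact hold: both sides are subadditive by restricting to product arguments, and duality turns one subadditivity into superadditivity of the other. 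But this does not address the $\alpha\uparrow1$ exchange.

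The missing idea is exactly the paper's. After dualizing, do not optimize on the commutant side. Instead, bound the infimum there by its value at the single product functional $(\sigma')^{\otimes n}$, where $\sigma'\coloneqq E^{-1}_*(\rho|_{\M'})$ is normal because $E$ has finite index. For $\alpha\in(2/3,1)$ and $\beta=\alpha/(2\alpha-1)\in(1,2)$ this gives
\[
\inf_{\sigma_n}\widetilde D_\alpha(\rho^{\otimes n}\|E^{\otimes n}_*(\sigma_n))\ge-\widetilde D_\beta((\rho|_{\N'})^{\otimes n}\|(\sigma')^{\otimes n})\ge-n\,D_\beta(\rho|_{\N'}\|\sigma'),
\]
using Araki--Lieb--Thirring and additivity of $D_\beta$ on product pairs. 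No additivity of the composite quantity is needed.

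Finite index enters precisely here. Haagerup's identity and the first-moment identity give $\rho|_{\N'}\le\sigma'$. Hence $D_\beta(\rho|_{\N'}\|\sigma')<\infty$ and $D_\beta\to D$ as $\beta\downarrow1$. The relative entropy duality (Corollary~\ref{cor:relative_entropy_unoptimized_duality}) then identifies $-D(\rho|_{\N'}\|\sigma')=D(\rho\|E_*(\rho|_\N))$.

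Repaired this way, your route is a genuine alternative to the paper's. The paper gets the lower bound for $D_{\max}^\varepsilon$ from the \emph{smoothed} duality $D_{\max}^\varepsilon\ge-D_{\min}^\varepsilon$, which needs Uhlmann attainment in the natural cone. It then uses $D_{\min}^\varepsilon\le D_{\max}^{\varepsilon'}+c$ and the single-state AEP of Fawzi--Gao--Rahaman on the commutant. You use only the unsmoothed duality and let $\varepsilon$ enter through a one-shot bound on the $\M$ side. The price is one-shot R\'enyi bounds in von Neumann algebras, which you assert rather than prove.

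Finally, the constant in your achievability bound is wrong. It should read $D_H^\varepsilon\ge D_\alpha-\frac{\alpha}{1-\alpha}\log\frac1\varepsilon$; this is the Petz version, and it implies the sandwiched one. With $\log\frac{1}{1-\varepsilon}$ the bound fails for small $\varepsilon$, though this is harmless asymptotically.
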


Operationally, this shows that, if $E$ has finite index, the optimal type-II error for testing i.i.d.\ copies of $\rho$ against \emph{all} $E\n$-invariant states decays as $\exp\big(-nD(\rho\|E_*(\N_*)) + o(n)\big)$. In the two-agent picture of \Autoref{sec:intro-subalgebra}, the states $ρ$ and $E_*(ρ|_\N) = ρ \circ E$ agree on $\N$, so an agent restricted to $\N$ cannot tell them apart at all \cite[Secs.~4.2 and~4.7]{vanluijk_thesis_2025}, whereas an agent with access to $\M$ can distinguish $ρ$ from $ρ \circ E$, and indeed from every state of the form $σ \circ E$ with $σ \in \S(\N)$, at the rate $D(ρ\|E_*(ρ|_\N))$. In this sense, \Autoref{thm:gqsl_subalgebras} gives the relative entropy with respect to the subalgebra an operational meaning as the information about $ρ$ that is available to the stronger agent but invisible to the weaker one, cf.\ \cite{naaijkens_subfactors_2018, gao_relative_2020}. \Autoref{thm:gqsl_subalgebras} is, as far as we are aware, the first extension of the Generalized Quantum Stein's Lemma to a fully infinite-dimensional von Neumann algebraic setting,  with Type-I $\mathcal{B}(\mathcal{H})$, Type-II, and Type-III all included. Its proof uses two duality relations of \Autoref{thm:duality_relation}: the extreme case $(α, β) = (\infty, 1/2)$ of part~\ref{item:dr-sandwiched}, in a smoothed form, and the relative entropy duality of part~\ref{item:dr-relative-entropy}, which identifies the resulting exponent. %Together they make the (originally purely finite-dimensional) achievability/converse argument of Gao--Rahaman go through without any dimension bound.

\subsection{Proof idea: spatial derivatives and Haagerup's identity}
%\comment{
While the proofs of our main results in \autoref{sec:duality-relations} in the von Neumann algebra setting have to deal with quite a few technicalities, the idea behind them is actually fairly simple and makes use of only a few key properties, which extend to the general setting. To illustrate this, let us reprove the known finite-dimensional duality relations for tensor factors (i.e.\ $\N = \IdentityMatrix_A \otimes \BO[\HS_B] \subset \M = \BO[\HS_{AB}] \subset \BO[\HS_{ABC}]$), in a language that will make the extension to arbitrary subalgebras quite immediate.

Let $ρ_{AB} = \sum_i λ_i \ketbra{i}{i}$ be purified as $\ket{\rho_{ABC}} = \sum_{i} \sqrt{λ_i} \ket{i}_{AB}\ket{i}_C$. Writing $\Delta(ρ_{C}/σ_{AB}) \coloneqq ρ_{C} \otimes σ_{AB}^{-1}$, explicit computation finds that
\begin{align}
   \Tr(ρ_{AB}^α σ_{AB}^{1 - α}) = \bra{ρ}\Delta(ρ_C/σ_{AB})^{α - 1}\ket{ρ} \\
   \Tr(ρ (\log ρ - \log σ)) = \bra{ρ} \log(\Delta(ρ_C/σ_{AB}))\ket{ρ}
\end{align}
For $τ_A$ being the maximally mixed state on $A$, 
one easily sees that $$E_*(ω_B) \coloneqq τ_A \otimes ω_B,\qquad E_*^{-1}(ω_{C}) \coloneqq τ_A^{-1} \otimes ω_C$$ satisfy the relations $$\Delta(ρ_C/E_*(σ_{B})) = (τ_A^{-1} \otimes σ_B^{-1} \otimes ρ_C) = \Delta(E^{-1}_*(ρ_C)/σ_{B}), \qquad \Delta({ρ_{C}/σ_{AB}}) = (\Delta(σ_{AB}/ρ_{C}))^{-1}.$$
Then, we obtain the relative entropy duality relation in a simple form
\begin{align}
    D(ρ_{AB}\|E_*(ρ_B)) &= \bra{ρ} \log \Delta(ρ_C/E_*(ρ_B))\ket{ρ} \\&= \bra{ρ} \log\Delta(E^{-1}_*(ρ_C)/ρ_B)\ket{ρ} \\&= - \bra{ρ}\log\Delta(ρ_B/E^{-1}_*(ρ_C)) \ket{ρ} = - D(ρ_{AC}\|E^{-1}_*({ρ_C}))\,.
\end{align}
and the Petz-\renyi duality relation follows from a similar argument.
For the sandwiched \renyi duality relation, we use the following optimized characterization \cite{berta_renyi_2018, gao_recoverability_2021}:
\begin{equation}
    \widetilde{D}_α(ρ_{AB}\|σ_{AB}) = {1 \over α - 1} \log(\Tr(σ^{1 - α \over 2 α } ρ σ^{1 - α \over 2 α})^α) = \sup_{ω_C \in \S(\HS_C)} {1 \over α'} \log \bra{ρ}\Delta(ω_C/σ_{AB})^{α'}\ket{ρ}
\end{equation}
where $α' = {α - 1 \over α}$.
The duality relation proof then follows the same idea:
\begin{align}
    \widetilde{D}_α(ρ_{AB}\|E_*(\S(\N))) &= \inf_{σ_B} \widetilde{D}_α(ρ_{AB}\|E_*(σ_B)) \\
    &= \inf_{σ_B}\sup_{ω_C} {1 \over α'} \log \bra{ρ}\Delta(ω_C/E_*(σ_{B}))^{α'}\ket{ρ} \\
    &= \inf_{σ_B}\sup_{ω_C} {1 \over α'} \log \bra{ρ}\Delta(E_*^{-1}(ω_C)/σ_{B})^{α'}\ket{ρ} \\
    &= \inf_{σ_B}\sup_{ω_C} {1 \over α'} \log \bra{ρ}\Delta(σ_B/E^{-1}_*(ω_C))^{-α'}\ket{ρ} \\
    &= - \sup_{σ_B} \inf_{ω_C} {1 \over (-α')} \log \bra{ρ}\Delta(σ_B/E^{-1}_*(ω_C))^{-α'}\ket{ρ} \\
    &= - \inf_{ω_C} \sup_{σ_B}  {1 \over (-α')} \log \bra{ρ}\Delta(σ_B/E^{-1}_*(ω_C))^{-α'}\ket{ρ} \\
    &= - \widetilde{D}_β(ρ_{AC}\|E^{-1}_*(\S(\M')))
\end{align}
where in the second to last step we employ a minimax theorem to swap the inf and sup.

The key insight to extend these arguments is that all the individual properties used in the proof here hold in the most general setting: the operator $\Delta(ρ/σ)$, called the spatial derivative, is the main ingredient in defining quantum divergences in general von Neumann algebras (see \autoref{sec:prelim-spatial}). It satisfies the identity $\Delta(ρ/σ) = \Delta(σ/ρ)^{-1}$ at least when $ρ$ and $σ$ are strictly positive (for details see \autoref{prop:spatial-derivative-properties}). It has been shown by Haagerup \cite{haagerup_operator_1979, haagerup_operator_1979a} (see also \cite{kosaki_extension_1986}) that such an $E^{-1}$ always exists (see also \autoref{lem:haagerup-identity-finite-index} below), and $\Delta(E_*(ρ)/σ) = \Delta(ρ/E^{-1}_*(σ))$ is the defining equation for the dual weight $E^{-1}$ \cite[Rem.~11.26]{hiai_lectures_2021}. The only part where the generalization is non-obvious is the minimax theorem for the sandwiched duality relation, because the continuity and compactness conditions required are much harder to establish in the general setting. Most of \autoref{sec:duality-relation-sandwiched} is devoted to justifying this step.

\subsection{Related works and organization}\label{sec:intro-related-work}

The idea of studying entropy through relative-entropy-type quantities attached to a general subalgebra goes back to Petz and Ohya's operator-algebraic treatment of quantum entropy \cite{ohya_quantum_1993} and has recently been revisited by Xu \cite{xu_relative_2019} and Longo--Witten \cite{longo_note_2022} for the standard relative entropy in the subfactor case; part~\ref{item:dr-relative-entropy} of \Autoref{thm:duality_relation} recovers and generalizes their duality statement. For inclusions $\N \subset \M \subset \BO$ of finite-dimensional algebras, the duality for the relative entropy was obtained by Mag\'an and Pontello \cite{magan_quantum_2021} under the name \emph{entropic certainty principle}: for a conditional expectation $E: \M \to \N$ whose index is a scalar and its dual conditional expectation $E' = (\Ind E)^{-1} E^{-1}$ they show that $$D(\rho|_{\M}\|E_*(\rho|_{\N})) + D(\rho|_{\N'}\|E'_*(\rho|_{\M'})) = \log \Ind E$$ for every pure state of $\BO$, and they pose the extension to general von Neumann algebras as an open problem. For $\M = \BO$ this is the case $\alpha = 1$ of \Autoref{thm:finite_dim_duality}, written in terms of $E'$ instead of $E^{-1}$, and \Autoref{thm:duality_relation} provides the extension to von Neumann algebras of finite index, to conditional expectations whose index is not a scalar, and to the R\'enyi divergences. On the finite-dimensional/tensor-factor side, the sandwiched R\'enyi duality relation \eqref{duality-f.d cond.} is due to \cite{muller-lennert_quantum_2013, beigi_sandwiched_2013, tomamichel_relating_2014}, and the (finite-dimensional) subalgebra GQSL that we extend to von Neumann algebras is due to Gao and Rahaman \cite{gao_generalized_2024}, itself part of the broader recent effort \cite{berta_gap_2023, hayashi_generalized_2025, lami_solution_2025, fang_generalized_2025} to repair and generalize the Generalized Quantum Stein's Lemma. Our contribution is orthogonal to that effort: rather than enlarging the class of alternative-hypothesis sets within finite dimensions, we show that the ``subalgebra'' axis of generalization survives passage to arbitrary von Neumann algebras.

The reading of an inclusion $\N \subset \M$ as a pair of agents with different access to a quantum system, and of the index as a measure of how much more the stronger agent can do, goes back to Naaijkens \cite{naaijkens_subfactors_2018}; Fiedler, Naaijkens and Osborne \cite{fiedler_jones_2017} turn it into a secret sharing scheme in which the index quantifies the classical information that the stronger agent can hide from the weaker one, and van Luijk's thesis \cite[Sec.~4.10]{vanluijk_thesis_2025} surveys these and related operational interpretations of the index and of sufficient subalgebras. For inclusions of finite von Neumann algebras, Gao, Junge and LaRacuente \cite{gao_relative_2020} characterized the Pimsner--Popa index through relative entropies with respect to the subalgebra of the kind appearing in \Autoref{thm:duality_relation}. Our duality relations and the GQSL are formulated for finite-index inclusions of arbitrary von Neumann algebras, including those of type III, where no trace is available.

The rest of the paper is organized as follows. \Autoref{sec:examples} collects examples of finite-index subalgebras, in finite dimensions, in tracial and type-III von Neumann algebras, and in topologically ordered spin systems, illustrating the scope of \Autoref{thm:finite_dim_duality} and \Autoref{cor:finite_index_duality_cond_exp}. \Autoref{sec:mathematical-preliminary} recalls the operator-algebraic background -- relative modular theory, the entropies of \Autoref{sec:intro-mainthms} defined via the spatial derivative, and Haagerup's theory of operator-valued weights and their duals -- that is needed to make \Autoref{thm:duality_relation} precise. \Autoref{sec:duality-relations} contains the proof of \Autoref{thm:duality_relation} and its corollaries. The technical heart of the paper is built around a minimax exchange (\Autoref{prop:sandwiched_minimax}). Finally, \autoref{sec:gqsl} proves \Autoref{thm:gqsl_subalgebras}.

\section{Examples}\label{sec:examples}
In this section, we give various examples of von Neumann subalgebras, where we specify a finite-index normal faithful conditional expectation $E:\M\to\N$. These are thus examples where our duality relations (\Autoref{thm:duality_relation}, and if the index is a scalar also \Autoref{cor:finite_index_duality_cond_exp}), as well as the generalized quantum Stein's lemma (\autoref{thm:gqsl_subalgebras}) apply. For an inclusion of II$_1$ factors with the trace-preserving expectation, the index is equal to the Jones index, usually denoted by $[\M:\N]$.%We also contrast these examples with infinite-index inclusions not covered by the theorem.

\subsection{Finite-dimensional subalgebras}\label{sec:examples-finite-dim}
\begin{example}[Bipartite subsystems]\label{ex:examples-subsystems}
For $\N=\IdentityMatrix_A\otimes\BO[\HS_B]\subset\M=\BO[\HS_A\otimes\HS_B]$, the trace-preserving expectation is
\[
 E(X)=\IdentityMatrix_A\otimes\frac{\Tr_A X}{d_A},\qquad d_A=\dim\HS_A.
\]
It has $\Ind E=d_A^2$, and \Autoref{cor:finite_index_duality_cond_exp} gives the familiar subsystem duality \eqref{eq:(sandwiched) conditional entropy with dim.}.

\end{example}

\begin{example}[Group-symmetric observables]\label{ex:examples-group-invariant}
Let $G$ be a compact group and $U:G \to \mathcal{B}(\HS)$ be a projective unitary representation on a finite-dimensional $\HS$. Write its irreducible decomposition as
\[
 \HS=\bigoplus_{\lambda\in S}\mathcal L_\lambda\otimes\mathcal V_\lambda,\qquad
 U_g=\bigoplus_{\lambda\in S}\IdentityMatrix_{\mathcal L_\lambda}\otimes u_\lambda(g),
\]
where $u_\lambda$ are irreducible representations that occur in $U$, and $d_\lambda=\dim\mathcal V_\lambda$. The fixed subalgebra and its trace-preserving expectation are
\[
 \N=\BO[\HS]^G
   =\bigoplus_{\lambda\in S}\BO[\mathcal L_\lambda]\otimes\IdentityMatrix_{\mathcal V_\lambda},
 \qquad E(X)=\int_G U_g XU_g^*\,dg .
\]
The index is
\[
 \Ind E=\sum_{\lambda\in S}d_\lambda^2 .
\]
It only depends on which irreducible representations occur, not on their multiplicities. If $G$ is finite, $\Ind E\leq |G|$, with equality exactly when all irreducible representations occur (as for the regular representation). With at least two inequivalent irreducibles the fixed subalgebra has a nontrivial center and cannot be a tensor-factor subalgebra of $\BO[\HS]$. This models observables accessible without a shared reference frame \cite{bartlett_reference_2007}; for quantum reference frames in quantum field theory see \cite{fewster_quantum_2025}. An agent lacking the frame can only implement operations that are covariant under $G$, in the sense that they admit a unitary dilation commuting with $U_g \otimes \IdentityMatrix$ for all $g \in G$, and these are precisely the operations whose Kraus operators lie in the fixed-point algebra $\N$ \cite[Lemma~4.3]{vanluijk_thesis_2025}. For $n$ independent copies, the alternative hypotheses in \Autoref{thm:gqsl_subalgebras} are invariant under the product group $G^n$; this differs from the diagonal action $g\mapsto U_g^{\otimes n}$, whose decomposition relies on fusion rules.

\end{example}

\subsection{Tracial von Neumann algebras}\label{sec:examples-type-II}

\begin{example}[Spin chain]\label{ex:examples-spin-sites}
Let $\M=\bigotimes_{j\in \mathbb{Z}} (M_d(\mathbb{C}), \tr/d )$ be the hyperfinite II$_1$ factor obtained from the product trace on an infinite chain of $d$-level spins. Fix a finite set $F$ of $k$ sites and let $\N$ be generated by the remaining sites. Then $\M\cong\N\bar\otimes M_{d^k}(\mathbb C)$, and the trace-preserving expectation $E=\id_\N\bar\otimes\operatorname{tr}_{d^k}$ has
\[
 \Ind E=d^{2k}.
\]
Thus both $\M$ and $\N$ are infinite-dimensional, although only finitely many local degrees of freedom are omitted. By contrast, retaining every other site, or any periodic sublattice, omits infinitely many sites and gives \emph{infinite} index.

\end{example}

\begin{example}[An outer symmetry of the spin chain]\label{ex:examples-outer-symmetry}
For the product-trace infinite qubit chain $R=\bigotimes_{j\in \mathbb{Z}} (M_2(\mathbb{C}), \tr/2 )$, let $\alpha=\bigotimes_{j\in\mathbb Z}\operatorname{Ad}(\sigma_z)$, defined first on local observables and extended to $R$. This order-two automorphism is outer: the Pauli $\sigma_x$ at a site tending to infinity is a central sequence sent to its negative by $\alpha$. Therefore $\N=R^\alpha\subset R=\M$ is an inclusion of II$_1$ factors, and its trace-preserving expectation $E=(\id+\alpha)/2$ has $\Ind E=2$ \cite{jones_index_1983}. More generally, an \emph{outer} finite-group $G$ action on a II$_1$ factor $R$ gives index $|G|$; outerness is needed for this formula.

\end{example}

\begin{example}[Temperley--Lieb--Jones subfactors]\label{ex:examples-tlj}
For each integer $n\geq4$, the Temperley--Lieb--Jones construction produces an inclusion $\N_n\subset R$ of hyperfinite II$_1$ factors with the trace-preserving expectation \cite{jones_index_1983} and
\[
 \Ind E=4\cos^2(\pi/n).
\]
For $n=5$ the index is $(3+\sqrt5)/2$, giving a noninteger correction $\log \Ind E$ in the additive form of the duality relations. 

\end{example}

\begin{example}[Group inclusions]\label{ex:examples-group-subgroup}
Let $G$ be a countably infinite group with infinite conjugacy classes (ICC), i.e., every non-identity conjugacy class is infinite, and let $H\leq G$ be a subgroup of finite index $m$. Then $H$ is ICC as well, and $L(H)\subset L(G)$ is an inclusion of II$_1$ factors. The trace-preserving expectation discards the Fourier coefficients outside $H$:
\[
 E_H\Big(\sum_{g\in G}a_g\lambda_g\Big)=\sum_{h\in H}a_h\lambda_h,
\]
with index $\Ind E_H=m$ \cite{jones_index_1983}. For example, for $G=F_2$ and $H$ the kernel of a surjection $F_2\to\mathbb Z/m\mathbb Z$, we have $\Ind E_H=[F_2 : H]=m$.

\end{example}

\subsection{Type III local algebras}\label{sec:examples-type-III}
\begin{example}[Localized superselection sectors]
\label{ex:examples-dhr}
Let $\mathcal A$ be a local chiral conformal net on $S^1$ in its
vacuum representation: it assigns to each interval $I$ a von Neumann
algebra $\mathcal A(I)$ of observables localized in $I$.
Assume that the local algebras are type III factors.
If $\rho$ is a transportable Doplicher--Haag--Roberts (DHR)
endomorphism localized in $I$ with finite statistical dimension
$d(\rho)$, then localization and Haag duality give
\[
 \rho(\mathcal A(I))\subseteq\mathcal A(I).
\]
The minimal-index normal faithful conditional expectation
$E:\mathcal A(I)\to\rho(\mathcal A(I))$ satisfies
\[
 \Ind E=d(\rho)^2
\]
\cite{longo_index_1989,longo_index_1990}.
Thus both algebras are associated with the same interval.

For properly nested intervals $I_1\Subset I_2$, the ordinary
inclusion $\mathcal A(I_1)\subset\mathcal A(I_2)$ instead has
infinite index. Indeed, any nonempty interval
$K\subset I_2\setminus\overline{I_1}$ satisfies
\[
 \mathcal A(K)\subseteq
 \mathcal A(I_1)'\cap\mathcal A(I_2)
\]
by locality. The relative commutant is therefore
infinite-dimensional, whereas a finite-index inclusion of factors
has finite-dimensional relative commutant.
\end{example}

\begin{example}[Finite-group orbifolds]
\label{ex:examples-orbifold}
Let $\mathcal A$ be a nontrivial irreducible local chiral conformal
net in its vacuum representation. Suppose a finite group $G$
acts faithfully and properly on $\mathcal A$ by internal symmetries,
in the sense of \cite{xu_orbifold_2000}.
Such an action is outer on each local factor. Consequently,
\[
 \mathcal A(I)^G\subseteq\mathcal A(I)
\]
is an irreducible inclusion of type III factors, with unique normal
faithful conditional expectation
\[
 E_I(x)=\frac{1}{|G|}\sum_{g\in G}\alpha_g(x),
 \qquad \Ind E_I=|G|.
\]
In particular, a faithful proper $\mathbb Z_2$ action gives a local
subfactor inclusion of index $2$. This construction yields a
finite-index subnet; an arbitrary coset or conformal subnet
need not have finite index.
\end{example}

\begin{example}[Two separated intervals]
\label{ex:examples-two-intervals}
Let $\mathcal A$ be a nontrivial completely rational local conformal
net in its vacuum representation, and let $I_1,I_2\subset S^1$
have disjoint closures. Put $J=I_1\cup I_2$ and
$J'=\operatorname{int}(S^1\setminus J)$.
Here $\mathcal A(J')$ denotes the von Neumann algebra generated
by the local algebras of the two connected components of $J'$.
Then
\[
 \N=\mathcal A(I_1)\vee\mathcal A(I_2)
 \ \subseteq\ \M=\mathcal A(J')',
\]
with the commutant taken on the vacuum Hilbert space, is an
irreducible finite-index inclusion of type III factors.
Its unique normal faithful conditional expectation
$E:\M\to\N$ is the minimal-index expectation, and
\[
 \Ind E=\mu_{\mathcal A}=\sum_{[\rho]}d(\rho)^2,
\]
where the sum runs over the equivalence classes of irreducible
DHR sectors of $\mathcal A$
\cite{kawahigashi_longo_muger_multiinterval_2001}.
For the chiral Ising net $\operatorname{Vir}_{1/2}$, the three
statistical dimensions are $1,\sqrt2,1$, so $\mu_{\mathcal A}=4$.
This is the index of the two-interval inclusion of one net.
\end{example}

\subsection{Cone algebras in topologically ordered spin systems}\label{sec:examples-topological}
\begin{example}[Two cones in the toric code]
\label{ex:examples-toric-code}
The two-interval inclusion of \Autoref{ex:examples-two-intervals} has a lattice counterpart in topologically ordered quantum spin systems, where the role of the intervals is played by cones, i.e.\ infinite wedge-shaped regions of the lattice.
Consider Kitaev's toric code on the infinite lattice $\mathbb{Z}^2$ in the GNS representation $π_0$ of its unique ground state, and write $R(Λ) = π_0(\mathcal{A}(Λ))''$ for the von Neumann algebra generated by the quasi-local observables $\mathcal{A}(Λ)$ of a region $Λ$.
For cones $Λ$, the representation $π_0$ satisfies Haag duality, $R(Λ) = R(Λ^c)'$, as well as the distal split property \cite{naaijkens_haag_2012}, and the algebras of convex regions such as cones are of type II$_\infty$ \cite{ogata_type_2024}.
Now let $Γ = Λ_1 \cup Λ_2$ be the union of two cones which are sufficiently far apart, in the sense of \cite[Def.~3.1]{naaijkens_kosakilongo_2013}. Haag duality fails for $Γ$, and
\[
 \N = R(Λ_1) \vee R(Λ_2)\ \subsetneq\ \M = R(Γ^c)'
\]
is an irreducible inclusion of factors \cite[Lemma~3.2]{naaijkens_kosakilongo_2013}.
The operators in $\M$ that are missing from $\N$ are the string operators creating a pair of anyonic excitations with one excitation in each cone: they commute with all observables localized in $Γ^c$, but they are not contained in $\N$.
In fact, $\M$ is generated by $\N$ together with two such unitaries, one for each of the two generating anyon types, and $\M$ is isomorphic to the crossed product of $\N$ by the resulting action of $G = \mathbb{Z}_2 \times \mathbb{Z}_2$ \cite[Lemmas~4.1 and~4.7]{naaijkens_kosakilongo_2013}.
Consequently, the unique normal faithful conditional expectation $E: \M \to \N$ has
\[
 \Ind E = \abs{G} = \sum_{[ρ]} d(ρ)^2 = 4\,,
\]
the total quantum dimension of the toric code, whose four superselection sectors all have quantum dimension one \cite[Thm.~4.9]{naaijkens_kosakilongo_2013}.
Fiedler, Naaijkens and Osborne \cite{fiedler_jones_2017} discuss the same structure for abelian quantum double models and interpret $\log \Ind E$ as the amount of classical information that two parties acting in $Λ_1$ and $Λ_2$ can hide from an adversary confined to $Γ^c$.
Note that in this example the commutant $\M' = R(Γ^c)$ is again the algebra of a region, the complement of the two cones, so that the duality relations of \Autoref{thm:duality_relation} relate entropies with respect to the two-cone region to entropies with respect to its complement.
\end{example}

\section{Mathematical Preliminaries}\label{sec:mathematical-preliminary}
This section collects the operator-algebraic background needed to make the statements of \Autoref{thm:duality_relation} and \Autoref{thm:gqsl_subalgebras} precise. \Autoref{sec:prelim-vn} is a short dictionary between finite-dimensional quantum information theory and the language of von Neumann algebras; readers familiar with the latter can skip it, apart from the conventions on unbounded and \enquote{generalized} positive operators at its end, which are used throughout. \Autoref{sec:prelim-cond-exp} defines conditional expectations and the map $E_*$ they induce on states, \Autoref{sec:prelim-spatial} recalls Connes' spatial derivative, the object that takes over the role of the density matrix, and \Autoref{sec:prelim-divergences} contains the definitions of all entropic quantities that appear in this paper. General references for von Neumann algebras are \cite{takesaki_theory_1979, takesaki_theory_2003, bratteli_operator_1987}. For modular theory, standard forms and spatial derivatives we follow Hiai's lecture notes \cite{hiai_lectures_2021} (theorem numbers below refer to the published version; the numbering in the earlier arXiv version \cite{hiai_concise_2020} differs), and for quantum divergences in von Neumann algebras we refer to \cite{ohya_quantum_1993, hiai_quantum_2021}.

\subsection{Von Neumann algebras for the finite-dimensional reader}\label{sec:prelim-vn}

\paragraph{Algebras and commutants.}
Throughout, $\HS$ is a Hilbert space of arbitrary dimension with inner product $\ip{\cdot}{\cdot}$, linear in the second argument, and $\BO$ is the algebra of bounded operators on $\HS$. The \emph{commutant} of a set $S \subset \BO$ is $S' \coloneqq \Set{y \in \BO | xy = yx \text{ for all } x \in S}$. A \emph{von Neumann algebra} is a unital $*$-subalgebra $\M \subset \BO$ with $\M = \M''$; by von Neumann's bicommutant theorem this is the same as a unital $*$-subalgebra that is closed in the weak (equivalently, the strong) operator topology \cite[Thm.~II.3.9]{takesaki_theory_1979}. The commutant $\M'$ of a von Neumann algebra is again a von Neumann algebra, and since $\M'' = \M$ the two algebras $\M$ and $\M'$ play completely symmetric roles. Their intersection $Z(\M) \coloneqq \M \cap \M' = Z(\M')$ is the \emph{center}, and $\M$ is a \emph{factor} if $Z(\M) = \complex \IdentityMatrix$. A von Neumann algebra contains the spectral projections of all its self-adjoint elements, so it is generated by its projections. We write $\M_+$ for the cone of positive elements of $\M$.

In finite dimensions the structure of a von Neumann algebra is elementary (see e.g.\ \cite[Sec.~4]{gao_generalized_2024} for a discussion in the present context): after a unitary change of basis,
\begin{equation}\label{eq:finite-dim-structure}
    \HS = \bigoplus_{k} \HS_k \otimes \mathcal{K}_k, \qquad
    \M = \bigoplus_{k} \BO[\HS_k] \otimes \IdentityMatrix_{\mathcal{K}_k}, \qquad
    \M' = \bigoplus_{k} \IdentityMatrix_{\HS_k} \otimes \BO[\mathcal{K}_k]\,,
\end{equation}
i.e.\ $\M$ consists of block-diagonal operators, each block carrying a multiplicity space $\mathcal{K}_k$ on which $\M$ acts trivially, and $\M'$ is the algebra acting on the multiplicity spaces only. The case of a single block with $\HS = \HS_A \otimes \HS_B$, $\M = \BO[\HS_A] \otimes \IdentityMatrix_B$ and $\M' = \IdentityMatrix_A \otimes \BO[\HS_B]$ is the familiar subsystem picture: the observables of $A$ form a factor, and the commutant consists of the observables of $B$. The case $\HS_k = \complex$ for all $k$ gives the operators that are diagonal in a fixed orthonormal basis, i.e.\ a classical register, for which $\M' = \M$. Between these extremes lie the algebras describing symmetries and superselection rules: the operators commuting with a unitary representation of a group, or the block-diagonal operators with respect to a decomposition into sectors, which are exactly the images of pinching maps. In infinite dimensions, algebras of the form \eqref{eq:finite-dim-structure} (with infinite-dimensional $\HS_k$, $\mathcal{K}_k$ allowed) are the von Neumann algebras of \emph{type~I}. There are others: the algebras of type II and III of the Murray--von Neumann classification (see \cite[Ch.~V]{takesaki_theory_1979} or \cite[Sec.~1.4]{hiai_lectures_2021}) contain no minimal projections, and those of type III, which describe local observables in quantum field theory, do not even admit a trace. For them nothing in the finite-dimensional toolbox survives: there are no density matrices, no pure normal states and no partial traces. What survives is the algebra, its commutant and the normal states, and this is all we shall use.

\paragraph{States.}
A linear functional $ω$ on $\M$ is \emph{normal} if it is continuous with respect to the $σ$-weak topology; for positive $ω$ this is the same as $ω(x_α) \nearrow{ω}(x)$ for every bounded increasing net $x_α \nearrow x$ in $\M_+$ (see \cite[Ch.~III]{takesaki_theory_1979} or \cite[Sec.~1.3]{hiai_lectures_2021}). The normal functionals form the \emph{predual} $\Ms$ of $\M$ (indeed $\M = (\Ms)^*$ as Banach spaces, and this property characterizes von Neumann algebras among $C^*$-algebras \cite{sakai_calgebras_1998}). We write $\M_*^+$ for the positive normal functionals and
\begin{equation}
    \S(\M) \coloneqq \Set{ρ \in \M_*^+ | ρ(\IdentityMatrix) = 1}
\end{equation}
for the \emph{normal states}.\footnote{Some authors write $\S_*(\M)$ to stress normality. In this paper all states are normal unless explicitly said otherwise, so we drop the subscript.} For $\M = \BO$ the predual is the space of trace-class operators, $ρ(x) = \Tr(\hat{ρ} x)$, and $\S(\BO)$ is the set of density operators. In finite dimensions every linear functional is normal, and every state of $\M \subset \BO$ is the restriction to $\M$ of a density matrix on $\HS$ (two density matrices define the same state of $\M$ iff they have the same pinching onto the block structure \eqref{eq:finite-dim-structure}). In infinite dimensions $\M$ also has non-normal states, for instance limits of normal states along a free ultrafilter; they are not countably additive on orthogonal families of projections and have no physical interpretation. They play a technical role in the minimax argument of \Autoref{sec:duality-relations}, but all states appearing in the statements of this paper are normal. For $ρ \in \M_*^+$ and a von Neumann subalgebra $\N \subset \M$ (so that $\M' \subset \N'$) we write $ρ|_{\N} \in \N_*^+$ for the restriction; when $\N = \IdentityMatrix_A \otimes \BO[\HS_B] \subset \BO[\HS_A \otimes \HS_B] = \M$ the density matrix of $ρ|_\N$ is the partial trace of $ρ = ρ_{AB}$ over $A$. The cone $\M_*^+$ defines an order on $\Ms$: $ρ \leq{σ}$ iff $σ - ρ \in \M_*^+$, i.e.\ iff $ρ(x) \leq{σ}(x)$ for all $x \in \M_+$. Note that $\norm{ρ} = ρ(\IdentityMatrix)$ for $ρ \in \M_*^+$.

\paragraph{Supports.}
For $ρ \in \M_*^+$ the \emph{support projection} $s(ρ) \in \M$ is the smallest projection $p \in \M$ with $ρ(p) = ρ(\IdentityMatrix)$; equivalently, $\IdentityMatrix - s(ρ)$ is the largest projection on which $ρ$ vanishes. One has $ρ(x) = ρ(s(ρ)\, x\, s(ρ))$ for all $x \in \M$, $ρ(p) = 0$ iff $p \leq \IdentityMatrix - s(ρ)$ for projections $p \in \M$, and $ρ \leq{σ}$ implies $s(ρ) \leq s(σ)$. We call $ρ$ \emph{faithful} if $s(ρ) = \IdentityMatrix$, i.e.\ if $ρ(x^*x) = 0$ implies $x = 0$. For $\M = \BO$, $s(ρ)$ is the projection onto the range of the density operator, and $s(ρ) \leq s(σ)$ is the familiar support condition for the finiteness of $D(ρ\|σ)$. Note that if $ρ \in \M_*^+$ and $ω' \in (\M')_*^+$, then $s(ρ) \in \M$ and $s(ω') \in \M'$ commute.

\paragraph{Vector states, cyclic and separating vectors, and the standard form.}
Every unit vector $ξ \in \HS$ defines a normal state $ω_ξ(x) \coloneqq \ip{ξ}{x ξ}$ on every von Neumann algebra on $\HS$, in particular on $\M$ and on $\M'$. We say that $ξ$ is a \emph{vector representative} of $ρ \in \S(\M)$ if $ω_ξ|_{\M} = ρ$. We write $[\M{ξ}]$ for the projection onto the closed subspace $\overline{\M{ξ}}$; since $\M{ξ}$ is invariant under $\M$, this projection lies in $\M'$, and it is the support of the vector state on the commutant, $[\M{ξ}] = s(ω_ξ|_{\M'})$. Likewise $[\M' ξ] = s(ω_ξ|_{\M}) \in \M$. The vector $ξ$ is \emph{cyclic} for $\M$ if $[\M{ξ}] = \IdentityMatrix$ and \emph{separating} for $\M$ if $xξ = 0$ implies $x = 0$ for $x \in \M$; $ξ$ is cyclic for $\M$ iff it is separating for $\M'$ iff $ω_ξ|_{\M'}$ is faithful.

Every normal state $ρ$ of $\M$ has a vector representative in \emph{some} representation of $\M$ (for instance in the GNS representation of $ρ$), but a given Hilbert space $\HS$ need not contain vector representatives of all normal states: $\BO$ acting on $\HS$ has only the pure states as vector states. In finite dimensions one enlarges $\HS$ by a purifying system; the von Neumann algebraic counterpart is the \emph{standard form} \cite{araki_properties_1974, connes_caracterisation_1974, haagerup_standard_1975} (see \cite[Ch.~3]{hiai_lectures_2021} or \cite[Ch.~IX]{takesaki_theory_2003}). This is a representation $\M \subset \BO$ together with an antiunitary involution $J$ on $\HS$ and a self-dual closed convex cone $\mathcal{P} \subset \HS$, the \emph{natural cone}, such that $J \M J = \M'$, $JzJ = z^*$ for $z \in Z(\M)$, $Jξ = ξ$ for $ξ \in \mathcal{P}$ and $xJxJ\mathcal{P} \subset \mathcal{P}$ for $x \in \M$ \cite[Def.~3.5]{hiai_lectures_2021}. Its defining feature is that $ξ \mapsto{ω}_ξ|_{\M}$ is a bijection (indeed a homeomorphism) from $\mathcal{P}$ onto $\M_*^+$ \cite[Thm.~3.12]{hiai_lectures_2021}: every $ρ \in \M_*^+$ has a unique vector representative $ξ_ρ \in \mathcal{P}$. The standard form is unique up to unitary equivalence \cite[Thm.~3.13]{hiai_lectures_2021}. Since $J\M J = \M'$, the commutant is in standard form as well, with the same $J$ and $\mathcal{P}$, so that $ρ \mapsto{ω}_{ξ_ρ}|_{\M'}$ is a bijection between the normal states of $\M$ and of $\M'$, with $ω_{ξ_ρ}(JxJ) = ρ(x^*)$ for $x \in \M$.

For $\M = \BO[\HS_A]$ the standard form is the space of Hilbert--Schmidt operators on $\HS_A$, on which $\M$ acts by left multiplication, $J$ is the adjoint, $\mathcal{P}$ consists of the positive Hilbert--Schmidt operators, and $ξ_ρ = \hat{ρ}^{1/2}$ \cite[Ex.~3.6]{hiai_lectures_2021}. Identifying Hilbert--Schmidt operators with vectors in $\HS_A \otimes \overline{\HS_A}$ via $\ketbra{i}{j} \mapsto \ket{i} \otimes \ket{\bar{\jmath}}$, the left multiplications become $\BO[\HS_A] \otimes \IdentityMatrix$, the commutant (the right multiplications) becomes $\IdentityMatrix \otimes \BO[\overline{\HS_A}]$, and $ξ_ρ$ becomes the canonical purification $\sum_i \sqrt{p_i}\, \ket{i} \otimes \ket{\bar{\imath}}$ of $\hat{ρ} = \sum_i p_i \ketbra{i}{i}$. The standard form is thus the von Neumann algebraic version of purification, with the commutant $\M'$ in the role of the purifying system. This is the intuition behind \Autoref{sec:intro-subalgebra}: the \enquote{complementary system} of a subalgebra $\N \subset \M \subset \BO$ is its commutant $\N' \supset \M'$.

\emph{Convention.} In many statements below, given $ρ \in \S(\M)$ we fix a representation $\M \subset \BO$ together with a vector representative $\ket{\v{ρ}} \in \HS$ of $ρ$ (for instance the standard form, or the GNS representation of $ρ$). For an arbitrary von Neumann algebra $\mathcal{A} \subset \BO$ we then write $ρ|_{\mathcal{A}} \coloneqq{ω}_{\v{ρ}}|_{\mathcal{A}}$. For $\mathcal{A} \subset \M$ this is the restriction of $ρ$ and does not depend on the choice of $\ket{\v{ρ}}$; for $\mathcal{A} = \M'$ or $\mathcal{A} = \N'$ it does, and it is understood that $\ket{\v{ρ}}$ has been fixed. We abbreviate $ρ' \coloneqq{ρ}|_{\M'} \in \S(\M')$.

\paragraph{Weights.}
A \emph{weight} on $\M$ is a map $φ: \M_+ \to [0, \infty]$ that is additive and positively homogeneous (with $0 \cdot \infty = 0$), i.e.\ a positive functional that is allowed to take the value $\infty$. It is \emph{normal} if $φ(x_α) \nearrow{φ}(x)$ for all bounded increasing nets $x_α \nearrow x$ in $\M_+$; by Haagerup's theorem \cite{haagerup_normal_1975} (see \cite[Thm.~6.2]{hiai_lectures_2021}) this is equivalent to $φ$ being a pointwise sum of elements of $\M_*^+$, and to $φ(x) = \sup\Set{ω(x) | ω \in \M_*^+,\, ω \leq{φ}}$. It is \emph{faithful} if $φ(x^*x) = 0$ implies $x = 0$, and \emph{semifinite} if $\Set{x \in \M | φ(x^*x) < \infty}$ is $σ$-weakly dense in $\M$; \emph{n.f.s.} abbreviates \enquote{normal, faithful and semifinite}. The normal weights with $φ(\IdentityMatrix) < \infty$ are exactly the elements of $\M_*^+$. The basic example of an unbounded weight is the trace $\Tr$ on $\BO$, which is n.f.s.; a von Neumann algebra admits an n.f.s.\ trace iff it is of type I or II, and type III algebras admit no semifinite normal trace at all. Weights enter this paper in two ways: as the reference weight in the definition of the spatial derivative, and through Haagerup's operator-valued weights in \Autoref{sec:operator_valued_weights}. All entropic quantities we define will have positive normal functionals in both arguments.
\paragraph{Unbounded and generalized positive operators; conventions.}
Let $A$ be a positive self-adjoint operator on $\HS$, possibly unbounded, with spectral resolution $A = \int_{[0,\infty)} λ\, \dd E_A(λ)$. For $ξ \in \HS$ we write $μ^A_ξ$ for the spectral measure $μ^A_ξ(B) = \ip{ξ}{E_A(B) ξ}$, and for a Borel function $f: [0, \infty) \to [0, \infty]$ we define
\begin{equation}\label{eq:form-convention}
    \ip{ξ}{f(A) ξ} \coloneqq \int f(λ) \, \dd{μ}_ξ^A(λ) \in [0, \infty]\,,
\end{equation}
which equals $\norm{f(A)^{1/2} ξ}^2$ if $ξ$ lies in the domain $\mathcal{D}(f(A)^{1/2})$ and $+\infty$ otherwise. In particular $\ip{ξ}{Aξ} = +\infty$ for $ξ \notin \mathcal{D}(A^{1/2})$; the map $ξ \mapsto \ip{ξ}{Aξ}$ is the lower semicontinuous quadratic form of $A$, which determines $A$ \cite[Thm.~A.11]{hiai_lectures_2021}. The \emph{support} of $A$ is the projection $s(A) \coloneqq E_A((0, \infty))$ onto the orthogonal complement of $\ker A$; $A$ is a bijection of $s(A)\mathcal{D}(A)$ onto $s(A)\HS$, and we write $A^{-1}$ for its inverse on $s(A)\HS$.

To keep track of kernels under inversion it is convenient to allow the value $+\infty$ on a subspace. Following Haagerup \cite{haagerup_operator_1979} (see \cite[Sec.~7.1]{hiai_lectures_2021}), a \emph{generalized positive operator} on $\HS$ is a pair $(\mathcal{K}, A_0)$ consisting of a closed subspace $\mathcal{K} \subset \HS$ and a positive self-adjoint operator $A_0$ on $\mathcal{K}$; we think of it as \enquote{$A_0$ on $\mathcal{K}$ and $+\infty$ on $\mathcal{K}^\perp$} and simply write $A$ for it, extending the notation \eqref{eq:form-convention} by
\begin{equation}\label{eq:generalized-form-convention}
    \ip{ξ}{Aξ} \coloneqq \begin{cases} \norm{A_0^{1/2}ξ}^2 & \text{if } ξ \in \mathcal{D}(A_0^{1/2}) \subset \mathcal{K}, \\ +\infty & \text{otherwise.}\end{cases}
\end{equation}
This is a lower semicontinuous positive form on $\HS$, and conversely every lower semicontinuous positive form on $\HS$ arises in this way from a unique generalized positive operator \cite[Thm.~A.16]{hiai_lectures_2021}. We write $\widehat{\BO_+}$ for the set of generalized positive operators; it contains the positive self-adjoint operators ($\mathcal{K} = \HS$) and is ordered by $A \leq B$ iff $\ip{ξ}{Aξ} \leq \ip{ξ}{Bξ}$ for all $ξ$. The spectral measure of $ξ$ with respect to $A = (\mathcal{K}, A_0)$ is the measure on the extended half-line $[0, \infty]$ given by $μ^A_ξ \coloneqq{μ}^{A_0}_{P_{\mathcal{K}}ξ} + \norm{(\IdentityMatrix - P_{\mathcal{K}})ξ}^2 δ_\infty$, where $P_{\mathcal{K}}$ is the projection onto $\mathcal{K}$. For a Borel function $f: [0, \infty] \to [0, \infty]$ we let $f(A) \in \widehat{\BO_+}$ be \enquote{$f(A_0)$ on $\mathcal{K}$ and $f(\infty)$ on $\mathcal{K}^\perp$}, so that again $\ip{ξ}{f(A)ξ} = \int_{[0, \infty]} f \, \dd{μ}^A_ξ$. Two special cases are used constantly:
\begin{itemize}
    \item The \emph{generalized inverse} $A^{-1}$, obtained from $f(λ) = 1/λ$ with $1/0 = \infty$ and $1/\infty = 0$. Thus $A^{-1}$ is the inverse of $A_0$ on $s(A_0)\mathcal{K}$, it is $+\infty$ on $\ker A_0$, and it vanishes on $\mathcal{K}^\perp$; in particular $\ip{ξ}{A^{-1}ξ} = \infty$ whenever $ξ$ has a nonzero component in $\ker A$, and $(A^{-1})^{-1} = A$.
    \item The powers $A^{s}$, $s \in \reals \setminus \set{0}$, with $0^s = \infty$ and $\infty^s = 0$ for $s < 0$, and $0^s = 0$, $\infty^s = \infty$ for $s > 0$; note $A^{-s} = (A^{-1})^{s}$.
\end{itemize}
For real-valued $f$, in particular for $f = \log$ with $\log 0 = -\infty$ and $\log \infty = +\infty$, we define $\ip{ξ}{f(A) ξ} \coloneqq \int_{[0, \infty]} f \, \dd{μ}^A_ξ$ whenever the integral exists in $[-\infty, +\infty]$. Finally, a generalized positive operator $A$ with $\ip{ξ}{Aξ} \leq c \norm{ξ}^2$ for all $ξ$ is a bounded operator with $0 \leq A \leq c \IdentityMatrix$, and an increasing sequence $A_n \nearrow A$ in $\widehat{\BO_+}$ (i.e.\ $\ip{ξ}{A_n ξ} \nearrow \ip{ξ}{A ξ}$ for all $ξ$) always has a limit in $\widehat{\BO_+}$ \cite[Prop.~A.19]{hiai_lectures_2021}.

For a von Neumann algebra $\M \subset \BO$, the \emph{extended positive part} $\widehat{\M_+}$ consists of those $A \in \widehat{\BO_+}$ whose form is invariant under the unitaries of $\M'$, equivalently $\mathcal{K} = P_{\mathcal{K}}\HS$ with $P_{\mathcal{K}} \in \M$ and $A_0$ affiliated with $P_{\mathcal{K}} \M P_{\mathcal{K}}$ \cite[Thm.~7.3, Rem.~7.4]{hiai_lectures_2021}. Equivalently, and this is Haagerup's original definition, an element of $\widehat{\M_+}$ is a lower semicontinuous, additive and positively homogeneous map $m: \M_*^+ \to [0, \infty]$, the correspondence being $m(ω_ξ) = \ip{ξ}{Aξ}$ \cite[Def.~7.1]{hiai_lectures_2021}. Every normal weight $φ$ on $\M$ extends uniquely to an additive, positively homogeneous and normal map $φ: \widehat{\M_+} \to [0, \infty]$ \cite[Prop.~7.5]{hiai_lectures_2021}, and an element $m \in \widehat{\M_+}$ with $m \leq c\IdentityMatrix$ belongs to $\M_+$. The extended positive part is the natural range of operator-valued weights, see \Autoref{sec:operator_valued_weights}.

\subsection{Conditional expectations}\label{sec:prelim-cond-exp}
In the subsystem picture of \Autoref{sec:intro-subalgebra}, conditioning on $B$ involves two maps: the partial trace, which restricts a state of $AB$ to the observables of $B$, and the embedding $σ_B \mapsto \IdentityMatrix_A/d_A \otimes σ_B$ of the states of $B$ into the states of $AB$. For a general subalgebra $\N \subset \M$ the first map is the restriction $ρ \mapsto ρ|_{\N}$, and the second one is induced by a conditional expectation.

\begin{definition}[Conditional expectation and the induced map on states]\label{def:conditional-expectation}
    Let $\N \subset \M$ be a von Neumann subalgebra. A \emph{conditional expectation} from $\M$ onto $\N$ is a linear map $E: \M \to \N$ that is positive, unital and $\N$-bimodular:
    \begin{equation}\label{eq:conditional-expectation}
        E(x^*x) \geq 0\,, \qquad E(\IdentityMatrix) = \IdentityMatrix\,, \qquad E(a x b) = a\, E(x)\, b \qquad \text{for all } x \in \M \text{ and } a, b \in \N\,.
    \end{equation}
    It is \emph{normal} if $E(x_α) \nearrow E(x)$ for every bounded increasing net $x_α \nearrow x$ in $\M_+$, and \emph{faithful} if $E(x^*x) = 0$ implies $x = 0$. If $E$ is normal, the \emph{induced map on states} is
    \begin{equation}\label{eq:induced-map-on-states}
        E_*: \Ns \to \Ms\,, \qquad E_*(σ) \coloneqq σ \circ E\,, \qquad \text{i.e.} \quad E_*(σ)(x) = σ(E(x)) \quad \text{for } x \in \M\,.
    \end{equation}
\end{definition}

Taking $x = \IdentityMatrix$ in \eqref{eq:conditional-expectation} gives $E(a) = a$ for $a \in \N$, so $E$ is a projection of $\M$ onto $\N$, and $\norm{E} = \norm{E(\IdentityMatrix)} = 1$. Positivity and bimodularity together imply that $E$ is completely positive, and conversely every projection of norm one from $\M$ onto $\N$ is a conditional expectation by Tomiyama's theorem \cite{tomiyama_projection_1957} (see \cite[Thm.~5.5]{hiai_lectures_2021}). In the language of quantum information theory, a normal conditional expectation is thus a quantum channel in the Heisenberg picture (a normal unital completely positive map) that maps the observables of $\M$ to observables of $\N$ and leaves the latter unchanged, and its predual map $E_*$ is the same channel in the Schr\"odinger picture.

Normality of $E$ guarantees that $σ \circ E$ is again normal. The map $E_*$ is linear and positive, and since $E$ is unital and acts as the identity on $\N$, we have $E_*(σ)(\IdentityMatrix) = σ(\IdentityMatrix)$ and $E_*(σ)|_{\N} = σ$ for all $σ \in \Ns$. Thus $E_*$ maps $\N_*^+$ into $\M_*^+$ and $\S(\N)$ into $\S(\M)$, and it is a right inverse of the restriction map: among all extensions of a state $σ$ of $\N$ to a state of $\M$, the conditional expectation singles out the extension $E_*(σ)$. A state $ρ \in \S(\M)$ is of this form iff it is $E$-invariant: $ρ = E_*(σ)$ implies $ρ \circ E = σ \circ E \circ E = ρ$, and $ρ \circ E = ρ$ implies $ρ = E_*(ρ|_{\N})$. Hence
\begin{equation}\label{eq:invariant-states}
    E_*(\S(\N)) = \Set{ρ \in \S(\M) | ρ \circ E = ρ}\,.
\end{equation}

While in finite dimensions every conditional expectation is normal, and there is a faithful conditional expectation onto every subalgebra (for instance the trace-preserving one), a normal conditional expectation onto a given von Neumann subalgebra need not exist in general. By Takesaki's theorem \cite{takesaki_conditional_1972} (see \cite[Thm.~5.6]{hiai_lectures_2021}), for a faithful normal state $ω$ of $\M$ there is a normal conditional expectation $E: \M \to \N$ with $ω \circ E = ω$ iff $\N$ is invariant under the modular automorphism group of $ω$. If $\M$ has a faithful normal tracial state, then the trace-preserving conditional expectation onto $\N$ always exists \cite{umegaki_conditional_1954} (see \cite[Thm.~5.4]{hiai_lectures_2021}). In this paper a normal faithful conditional expectation $E: \M \to \N$ is always assumed. Dropping unitality and boundedness leads to the operator-valued weights of \Autoref{sec:operator_valued_weights}.

\subsection{The spatial derivative}\label{sec:prelim-spatial}

In finite dimensions the relative entropy $D(ρ\|σ) = \Tr\big(ρ(\log ρ - \log σ)\big)$ and its \renyi relatives are built from density matrices, i.e.\ from the trace. Since type III algebras have no trace, we need a substitute. The substitute is not the density matrix itself but the \emph{ratio} $ρ/σ$, understood as an operator on the representation space: in the subsystem picture $\M = \BO[\HS_A] \otimes \IdentityMatrix_C$ on $\HS_A \otimes \HS_C$, with commutant $\M' = \IdentityMatrix_A \otimes \BO[\HS_C]$, a state $φ$ of $\M$ is a density matrix $ρ_A$, a state $ψ$ of $\M'$ is a density matrix $σ_C$, and the operator
\begin{equation}\label{eq:spatial-derivative-finite-dim}
    \Delta(φ/ψ) = ρ_A \otimes{σ}_C^{-1}
\end{equation}
carries all information about $φ$ and $ψ$ and still makes sense when the two algebras involved are not tensor factors: this is Connes' \emph{spatial derivative} \cite{connes_spatial_1980}. Note that the \enquote{numerator} $φ$ lives on $\M$ and the \enquote{denominator} $ψ$ on the commutant $\M'$; in the standard form $\Delta(φ/ψ)$ is Araki's relative modular operator. We recall the construction following \cite[Sec.~11.1]{hiai_lectures_2021}, where the second argument is a faithful weight, and \cite[Ch.~4]{ohya_quantum_1993}, where both arguments are positive normal functionals that need not be faithful. The latter is the case that occurs in the duality relations.

\begin{definition}[Spatial derivative]\label{def:spatial-derivative}
    Let $\M \subset \BO$ be a von Neumann algebra, and let $ψ$ be a positive normal functional on $\M'$, which need not be faithful, or an n.f.s.\ weight on $\M'$. Let $e \coloneqq s(ψ) \in \M'$ be the support of $ψ$, let $\HS_ψ$ be the GNS Hilbert space of $ψ$, i.e.\ the completion of $\mathfrak{n}_ψ \coloneqq \Set{y \in \M' | ψ(y^*y) < \infty}$ with respect to $\ip{y}{z}_ψ \coloneqq ψ(y^*z)$, after dividing out the null space $\Set{y \in \M' | ψ(y^*y) = 0} = \M'(\IdentityMatrix - e)$, and let $η_ψ: \mathfrak{n}_ψ \to \HS_ψ$ be the canonical map. A vector $ξ \in \HS$ is called \emph{$ψ$-bounded} if there is a constant $C$ with $\norm{yξ}^2 \leq C\, ψ(y^*y)$ for all $y \in \mathfrak{n}_ψ$. The $ψ$-bounded vectors form an $\M$-invariant subspace $D(\HS, ψ)$ with closure $e\HS$, and for $ξ \in D(\HS, ψ)$ the map $η_ψ(y) \mapsto yξ$ extends to a bounded operator $R^ψ(ξ): \HS_ψ \to \HS$ with $R^ψ(xξ) = xR^ψ(ξ)$ for $x \in \M$, for which
    \begin{equation}
        θ^ψ(ξ, ξ) \coloneqq R^ψ(ξ) R^ψ(ξ)^* \in \M_+\,, \qquad θ^ψ(xξ, xξ) = x\, θ^ψ(ξ, ξ)\, x^* \quad \text{for } x \in \M
    \end{equation}
    \cite[Lemma~11.2]{hiai_lectures_2021}, \cite[Eqs.~(4.1)--(4.3)]{ohya_quantum_1993}. For $φ \in \M_*^+$ (more generally, for any semifinite normal weight $φ$ on $\M$) the quadratic form $q_φ: ξ \mapsto φ(θ^ψ(ξ, ξ))$ on $D(\HS, ψ)$ is lower semicontinuous and closable \cite[Lemma~11.3]{hiai_lectures_2021}, \cite[p.~68]{ohya_quantum_1993}. The \emph{spatial derivative} $\Delta(φ/ψ) \in \widehat{\BO_+}$ (Connes' notation is $\dd{φ}/\dd{ψ}$) is the generalized positive operator that on $e\HS$ is the positive self-adjoint operator associated with the closure $\overline{q_φ}$ of this form, and that is $+\infty$ on $(\IdentityMatrix - e)\HS$. Thus
    \begin{equation}\label{eq:spatial-derivative-form}
        \ip{ξ}{\Delta(φ/ψ)\, ξ} = φ\big(θ^ψ(ξ, ξ)\big)\,, \qquad{ξ} \in D(\HS, ψ)\,,
    \end{equation}
    the form domain $\Set{ξ \in \HS | \ip{ξ}{\Delta(φ/ψ)ξ} < \infty}$ is the domain of $\overline{q_φ}$, and $D(\HS, ψ)$ is dense in it with respect to the form norm $ξ \mapsto \big(\norm{ξ}^2 + \ip{ξ}{\Delta(φ/ψ)ξ}\big)^{1/2}$. If $ψ$ is faithful, then $e = \IdentityMatrix$ and $\Delta(φ/ψ)$ is a positive self-adjoint operator on $\HS$, for which $D(\HS, ψ)$ is a core of $\Delta(φ/ψ)^{1/2}$ \cite[Def.~11.4]{hiai_lectures_2021}.
\end{definition}

Some remarks help to decode this definition. The operator $θ^ψ(ξ,ξ) \in \M_+$ should be thought of as \enquote{$\ketbra{ξ}{ξ}$ divided by $ψ$}: if $\M = \BO$, so that $\M' = \complex\IdentityMatrix$ and $ψ = c \cdot (\text{the trivial state})$ for some $c > 0$, then every vector is $ψ$-bounded and $θ^ψ(ξ, ξ) = c^{-1} \ketbra{ξ}{ξ}$, so that $\ip{ξ}{\Delta(φ/ψ)ξ} = c^{-1}φ(\ketbra{ξ}{ξ})$ and $\Delta(φ/ψ) = c^{-1} \hat{φ}$ is (a multiple of) the density operator of $φ$ \cite[Ex.~11.5]{hiai_lectures_2021}. In the subsystem picture above one obtains \eqref{eq:spatial-derivative-finite-dim} in the same way. Since $\M'' = \M$, \Autoref{def:spatial-derivative} applies verbatim with the roles of $\M$ and $\M'$ interchanged: for $φ' \in (\M')_*^+$ and $ψ \in \M_*^+$ we obtain $\Delta(φ'/ψ)$. We always use the convention that the first argument of $\Delta(\cdot/\cdot)$ lives on one of the two algebras $\M$, $\M'$ and the second argument on the other one; the primes on the arguments indicate which is which. In the definition of the entropic quantities in \Autoref{sec:prelim-divergences} the first argument lives on the commutant.

\paragraph{Non-faithful second argument.}
For a faithful second argument, \Autoref{def:spatial-derivative} is Connes' definition \cite{connes_spatial_1980}, see \cite[Sec.~11.1]{hiai_lectures_2021}; there $ψ$ may be an n.f.s.\ weight, which is the generality needed for Haagerup's identity in \Autoref{sec:operator_valued_weights}. In the duality relations the second argument is a positive normal functional that is not faithful in general. For such $ψ$ the spatial derivative is defined in the same way in \cite[Ch.~4]{ohya_quantum_1993}, see also \cite[Sec.~2]{petz_quasientropies_1985} and \cite[Sec.~2.2]{berta_renyi_2018}, as well as \cite[App.~C]{araki_positive_1982} and \cite[Sec.~10.1]{hiai_lectures_2021} for the relative modular operators of non-faithful functionals in the standard form. There is one difference: in these references the operator is extended by $0$ to $(\IdentityMatrix - s(ψ))\HS$, whereas we extend it by $+\infty$. In the subsystem picture our convention gives \eqref{eq:spatial-derivative-finite-dim} with $σ_C^{-1}$ the generalized inverse, i.e.\ $\Delta(φ/ψ) = +\infty$ on the whole of $\HS_A \otimes \ker{σ}_C$, including $\ker{ρ}_A \otimes \ker{σ}_C$ where one might have expected $0 \cdot \infty$. This is the convention under which the divergences with non-faithful second argument take the values familiar from finite dimensions, see \Autoref{rem:divergence-conventions}. Equivalently, the restriction of $\Delta(φ/ψ)$ to $s(ψ)\HS$ is Connes' spatial derivative with respect to the faithful restriction of $ψ$ to the reduced algebra $s(ψ)\M's(ψ)$; this description, by which statements about non-faithful functionals are reduced to the faithful case, is made precise in \Autoref{lem:bounded-vectors-nonfaithful} in \Cref{app:bounded-vectors-proof}.

\begin{proposition}[Properties of the spatial derivative]\label{prop:spatial-derivative-properties}
    Let $\M \subset \BO$ be a von Neumann algebra, $φ, φ_1, φ_2 \in \M_*^+$, $ψ, ψ_1, ψ_2 \in (\M')_*^+$, $λ, μ > 0$ and $a \in \M$.
    \begin{enumerate}[label=(\alph*)]
        \item\label{item:sd-support} \emph{Supports.} With respect to the orthogonal decomposition
        \begin{equation}\label{eq:support-decomposition}
            \HS = s(φ)s(ψ)\HS \;\oplus\; (\IdentityMatrix - s(φ))s(ψ)\HS \;\oplus\; (\IdentityMatrix - s(ψ))\HS\,,
        \end{equation}
        which reduces $\Delta(φ/ψ)$, the operator $\Delta(φ/ψ)$ is positive with trivial kernel on the first summand, $0$ on the second and $+\infty$ on the third. In particular, $\ip{ξ}{\Delta(φ/ψ)^{s} ξ}$ for $s < 0$ is infinite if $ξ$ has a nonzero component in $(\IdentityMatrix - s(φ))s(ψ)\HS$, and $\ip{ξ}{\Delta(φ/ψ)^{s} ξ}$ for $s > 0$ is infinite if $ξ$ has a nonzero component in $(\IdentityMatrix - s(ψ))\HS$.
        \item\label{item:sd-inverse} \emph{Inversion.} The generalized positive operators $\Delta(φ/ψ)^{-1}$ and $\Delta(ψ/φ)$ (the latter with the roles of $\M$ and $\M'$ interchanged) coincide on the subspace $\big(\IdentityMatrix - (\IdentityMatrix - s(φ))(\IdentityMatrix-s(ψ))\big)\HS$, which reduces both. On its orthogonal complement $(\IdentityMatrix - s(φ))(\IdentityMatrix-s(ψ))\HS$ they differ: there $\Delta(φ/ψ)$ and $\Delta(ψ/φ)$ are both $+\infty$, so that $\Delta(φ/ψ)^{-1}$ vanishes. In particular $\Delta(φ/ψ)^{-1} \leq \Delta(ψ/φ)$, and $μ_ξ^{\Delta(φ/ψ)^{-1}} = μ_ξ^{\Delta(ψ/φ)}$ for every $ξ \in s(φ)\HS \cup s(ψ)\HS$. If $φ$ or $ψ$ is faithful, then $\Delta(φ/ψ)^{-1} = \Delta(ψ/φ)$.
        \item\label{item:sd-linearity} \emph{Linearity and monotonicity.} $\Delta(λφ/μψ) = \frac{λ}{μ}\Delta(φ/ψ)$, $\Delta((φ_1 + φ_2)/ψ) = \Delta(φ_1/ψ) + \Delta(φ_2/ψ)$ (as a sum of forms), $φ_1 \leq{φ}_2$ implies $\Delta(φ_1/ψ) \leq \Delta(φ_2/ψ)$, $φ_i \nearrow{φ}$ implies $\Delta(φ_i/ψ) \nearrow \Delta(φ/ψ)$, and $ψ_1 \leq{ψ}_2$ implies $\Delta(φ/ψ_1) \geq \Delta(φ/ψ_2)$.
        \item\label{item:sd-covariance} \emph{Covariance.} The operators $\Delta(aφa^*/ψ)$ and $a\,\Delta(φ/ψ)\,a^*$ have the same restriction to $s(ψ)\HS$, where $(aφa^*)(x) \coloneqq{φ}(a^*xa)$ and $a\,\Delta(φ/ψ)\,a^*$ is the generalized positive operator with form $ξ \mapsto \ip{a^*ξ}{\Delta(φ/ψ)\, a^*ξ}$. If $ψ$ is faithful, the two operators coincide.
        \item\label{item:sd-standard-form} \emph{Standard form.} If $(\M, \HS, J, \mathcal{P})$ is a standard form, $σ \in \M_*^+$ is faithful and $σ' \coloneqq{ω}_{ξ_σ}|_{\M'}$ is the functional on $\M'$ with the same vector representative, then $\Delta(φ/σ') = \Delta_{φ, σ}$ is Araki's relative modular operator \cite{araki_relative_1977}, i.e.\ $\Delta_{φ,σ}^{1/2}$ is the positive part in the polar decomposition of the closure of $xξ_σ \mapsto x^*ξ_φ$, $x \in \M$ \cite[Prop.~11.6]{hiai_lectures_2021}.
    \end{enumerate}
    If $ψ$ is an n.f.s.\ weight on $\M'$, then \ref{item:sd-support}, \ref{item:sd-linearity} and \ref{item:sd-covariance} hold with $s(ψ) = \IdentityMatrix$, also for semifinite normal weights $φ$ on $\M$, and $\Delta(φ/ψ)^{-1} = \Delta(ψ/φ)$ if $φ$ is faithful.
\end{proposition}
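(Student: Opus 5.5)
The plan is to reduce every item to the faithful case, where it is standard, and then handle the support projections by hand. Throughout, \ref{item:sd-support} is the organising tool.

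\emph{Step 1: the faithful case.} For faithful $ψ$, items \ref{item:sd-linearity}, \ref{item:sd-covariance}, \ref{item:sd-standard-form} and the inversion formula $\Delta(φ/ψ)^{-1}=\Delta(ψ/φ)$ for faithful $φ$ are recorded in \cite[Sec.~11.1]{hiai_lectures_2021}. The sources are Lemma~11.3 and Def.~11.4 there for the form, Prop.~11.6 for the standard form, and the theorem $\Delta(φ/ψ)^{-1}=\Delta(ψ/φ)$ for inversion.

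Linearity, scaling and monotonicity in $φ$ also follow directly from \eqref{eq:spatial-derivative-form}. The form $ξ\mapsto φ(θ^ψ(ξ,ξ))$ is linear and monotone in $φ$ on the common core $D(\HS,ψ)$. Passing to closures of forms and using lower semicontinuity gives additivity, with the sum read as a sum of forms. For increasing nets $φ_i\nearrow φ$ we use normality: $φ_i(θ)\nearrow φ(θ)$, and the closed forms converge by \cite[Prop.~A.19]{hiai_lectures_2021}.

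Antimonotonicity in $ψ$ comes from $ψ_1\le ψ_2$. This gives $D(\HS,ψ_1)\subset D(\HS,ψ_2)$ and $θ^{ψ_2}(ξ,ξ)\le θ^{ψ_1}(ξ,ξ)$, via the estimate $\norm{yξ}^2\le C ψ_1(y^*y)\le Cψ_2(y^*y)$ and the variational characterisation of $θ$. The remaining subtlety is the $+\infty$ part on $(\IdentityMatrix-s(ψ_1))\HS$, which only makes the inequality easier.

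Covariance follows from $θ^ψ(a^*ξ,a^*ξ)=a^*θ^ψ(ξ,ξ)a$, which gives $(aφa^*)(θ^ψ(ξ,ξ))=φ(θ^ψ(a^*ξ,a^*ξ))$ on the core. One then checks that both closed forms agree on $s(ψ)\HS$.

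\emph{Step 2: reduction to faithful second argument.} For non-faithful $ψ$ with support $e=s(ψ)\in\M'$, I would use \Autoref{lem:bounded-vectors-nonfaithful}. By that lemma, $\Delta(φ/ψ)$ restricted to $e\HS$ is Connes' spatial derivative of $φ_e$ relative to the faithful functional $ψ|_{e\M'e}$. Here the pair is $\M_e=\M e$ and $(\M_e)'=e\M'e$ acting on $e\HS$, and $φ_e$ is the induced functional $φ_e(xe)=φ(x)$. On the complement $(\IdentityMatrix-e)\HS$ the operator is $+\infty$ by convention. With this, every item for non-faithful $ψ$ follows from its faithful counterpart on $e\HS$ plus bookkeeping on $(\IdentityMatrix-e)\HS$.

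\emph{Step 3: supports and inversion.} For \ref{item:sd-support}, on $e\HS$ the kernel of $\Delta(φ/ψ)$ is $(\IdentityMatrix-s(φ))e\HS$. Indeed, $\ip{ξ}{\Delta ξ}=φ(θ^ψ(ξ,ξ))=0$ if and only if $s(φ)R^ψ(ξ)=0$, and since $R^ψ(s(φ)ξ)=s(φ)R^ψ(ξ)$, this holds if and only if $s(φ)ξ=0$ by density of the range of $R^ψ(ξ)$. Conversely, $\Delta$ is injective on $s(φ)e\HS$. The cleanest route is to apply the faithful-case inversion inside the doubly reduced algebra $s(φ)\M s(φ)$ acting on $s(φ)e\HS$. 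In that algebra both functionals are faithful, so there $\Delta(φ/ψ)^{-1}=\Delta(ψ/φ)$, which forces trivial kernel.

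Inversion \ref{item:sd-inverse} then follows by comparing the three summands of \eqref{eq:support-decomposition} and their mirror images with $φ$ and $ψ$ swapped:
\begin{itemize}
\item On $s(φ)s(ψ)\HS$ the two operators are mutually inverse.
\item On $(\IdentityMatrix-s(φ))s(ψ)\HS$ we have $\Delta(φ/ψ)=0$, so its inverse is $\infty$, and $\Delta(ψ/φ)=\infty$ because that space lies outside $s(φ)$.
\item The case $s(φ)(\IdentityMatrix-s(ψ))\HS$ is symmetric.
\item On $(\IdentityMatrix-s(φ))(\IdentityMatrix-s(ψ))\HS$ both operators are $\infty$, so $\Delta(φ/ψ)^{-1}=0$.
\end{itemize}
The inequality and the equality of spectral measures for $ξ$ in $s(φ)\HS\cup s(ψ)\HS$ are then immediate.

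\emph{Main obstacle.} I expect the hardest step to be justifying the double reduction rigorously. That means showing that the spatial derivative is compatible with compressing $\M$ by $s(φ)$ and $\M'$ by $s(ψ)$ at the same time, including that these commuting projections reduce the closed form and that the relevant cores survive the compression. I would first try to prove that $D(\HS,ψ)$ is invariant under $s(φ)\in\M$ and that $s(φ)$ commutes with the form. After that, the faithful results of \cite{hiai_lectures_2021} apply on each block. The weight case then follows from Step 1 with $e=\IdentityMatrix$.
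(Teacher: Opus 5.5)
Your overall plan is the paper's: cut down to $s(\psi)\HS$ with the reduction lemma, invoke the faithful results of \cite{hiai_lectures_2021}, and do bookkeeping on the three summands of the decomposition in (a). Two steps, however, do not work as written.

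First, the induced functional is mis-defined. With $e = s(\psi)$, the map $x \mapsto xe|_{e\HS}$ has kernel $\M(\IdentityMatrix - z)$, where $z$ is the central support of $e$. Hence $\varphi_e(xe) \coloneqq \varphi(x)$ is ill-defined whenever $z \neq \IdentityMatrix$: take $\M = \M'$ the diagonal algebra on $\complex^2$, $e = \mathrm{diag}(1,0)$ and $x = \mathrm{diag}(0,1)$. The correct choice, and the one the reduction lemma uses, is $\varphi_e(xe|_{e\HS}) = \varphi(xz)$. This works because the range of $R^\psi(\xi)$ lies in $\overline{\M'\xi} \subset z\HS$, so that $\theta^\psi(\xi,\xi) = \theta^\psi(\xi,\xi)z$.

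With this definition, $s(\varphi_e) = s(\varphi)e|_{e\HS}$ since $z \geq e$. Then (a) follows immediately from the faithful-case support theorem \cite[Thm.~11.7(3), Prop.~11.19]{hiai_lectures_2021} applied on $e\HS$. That theorem gives both the vanishing on $(\IdentityMatrix - s(\varphi))s(\psi)\HS$ and the trivial kernel on $s(\varphi)s(\psi)\HS$. So (a) needs no double reduction at all. Your route, which derives injectivity from inversion in a doubly reduced algebra, makes (a) depend on the one step you have not established.

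Second, that step, the double reduction needed for the first summand in (b), is not resolved by your plan. Invariance of $D(\HS,\psi)$ under $s(\varphi)$ is automatic, since $D(\HS,\psi)$ is $\M$-invariant. Commutation of $s(\varphi)$ with the form only shows that $s(\varphi)$ reduces $\Delta(\varphi/\psi)$ and that it vanishes on $(\IdentityMatrix - s(\varphi))s(\psi)\HS$. It does not identify the blocks of $\Delta(\varphi/\psi)$ and of $\Delta(\psi/\varphi)$ on $s(\varphi)s(\psi)\HS$ with spatial derivatives of faithful functionals for the reduced pair of algebras on that space, and that identification is what the faithful inversion theorem requires.

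The missing ingredient is a reduction to the support of the \emph{first} argument, \cite[Lemma~11.24]{hiai_lectures_2021}. It must be combined with the reduction lemma for the second argument, applied to $\Delta(\psi/\varphi)$ with $\M$ and $\M'$ interchanged, and again with functionals induced via central supports. This is exactly how the paper obtains the inversion on $s(\varphi)s(\psi)\HS$.

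A smaller point concerns the antimonotonicity in $\psi$. The inequality $\theta^{\psi_2}(\xi,\xi) \leq \theta^{\psi_1}(\xi,\xi)$ gives the form inequality only on $D(\HS,\psi_1)$. Extending it to the whole form domain of $\Delta(\varphi/\psi_1)$ requires two facts: $D(\HS,\psi_1)$ is a form core there, and the form of $\Delta(\varphi/\psi_2)$ is lower semicontinuous. Your remark about the $+\infty$ part does not supply this.
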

\begin{proof}
    For a faithful second argument these are results of Connes \cite{connes_spatial_1980}, see \cite[Thms.~11.7 and~11.8, Props.~11.6, 11.17 and~11.19]{hiai_lectures_2021}. For general arguments, \ref{item:sd-support}, \ref{item:sd-inverse} and the additivity in \ref{item:sd-linearity} can be found in \cite[Eqs.~(4.4)--(4.7)]{ohya_quantum_1993}, with the convention that the spatial derivative vanishes off the support of its second argument; the translation into our convention and the reduction to the faithful case are given in \Cref{app:spatial-derivative-general-proof}.
\end{proof}

Item \ref{item:sd-standard-form} shows that in the standard form the spatial derivative reduces to the relative modular operator, so that all entropic quantities defined below through spatial derivatives agree with their definitions in terms of relative modular operators used in \cite{araki_relative_1977, ohya_quantum_1993, hiai_quantum_2021, fawzi_asymptotic_2023}. The advantage of the spatial derivative is that it is defined in every representation of $\M$, in particular in the representation $\M \subset \N' \subset \BO$ in which both $\M$ and the commutant $\N'$ of a subalgebra act on the same Hilbert space; this is essential for the duality relations.

\begin{remark}\label{rem:inversion-degenerate-subspace}
    Part \ref{item:sd-inverse} of \Autoref{prop:spatial-derivative-properties} says that the inversion formula $\Delta(φ/ψ)^{-1} = \Delta(ψ/φ)$ holds in general except on the subspace where both $φ$ and $ψ$ vanish. There the convention \enquote{$\Delta$ is $+\infty$ off the support of the second argument} of \Autoref{def:spatial-derivative} makes both $\Delta(φ/ψ)$ and $\Delta(ψ/φ)$ equal to $+\infty$, instead of assigning them a pair $(0, \infty)$ of mutually inverse values, so that the two sides of the inversion formula are $0$ and $+\infty$, respectively. For the entropic quantities this subspace never matters: they are always evaluated on a vector lying in the support of one of the two arguments (see \Autoref{lem:sandwiched-support-restriction} for the only case where this requires an argument), and then the inversion formula allows us to switch freely between $\Delta(φ/ψ)$ and $\Delta(ψ/φ)^{-1}$.
\end{remark}

\subsection{Entropic quantities}\label{sec:prelim-divergences}

\emph{Standing assumptions.} Throughout this subsection $\M \subset \BO$ is a von Neumann algebra, $ρ \in \S(\M)$ is a normal state with a fixed vector representative $\ket{\v{ρ}} \in \HS$, $ρ' = ρ|_{\M'} = ω_{\v{ρ}}|_{\M'} \in \S(\M')$ is the induced state on the commutant, and $σ \in \M_*^+$ is a positive normal functional on $\M$, which is assumed neither to be normalized nor to be faithful. All spatial derivatives below are of the form $\Delta(ω'/σ)$ with $ω' \in (\M')_*^+$ in the first and $σ$ in the second argument; they are the generalized positive operators of \Autoref{def:spatial-derivative} with the roles of $\M$ and $\M'$ interchanged. By \Autoref{prop:spatial-derivative-properties}\ref{item:sd-support}, $\Delta(ω'/σ)$ is $+\infty$ on $(\IdentityMatrix - s(σ))\HS$, it vanishes on $(\IdentityMatrix - s(ω'))s(σ)\HS$, and it is positive with trivial kernel on $s(ω')s(σ)\HS$. In the subsystem picture $\M = \BO[\HS_A] \otimes \IdentityMatrix_C$, $\M' = \IdentityMatrix_A \otimes \BO[\HS_C]$, with $σ \leftrightarrow{σ}_A$ and $ω' \leftrightarrow{ω}_C$, this operator is $\Delta(ω'/σ) = σ_A^{-1} \otimes{ω}_C$. All expressions of the form $\bra{\v{ρ}} f(\Delta) \ket{\v{ρ}}$ are spectral integrals as in \Autoref{sec:prelim-vn}, i.e.\ $\bra{\v{ρ}} f(\Delta) \ket{\v{ρ}} = \int_{[0,\infty]} f \, \dd{μ}^{\Delta}_{\v{ρ}}$, with the conventions $\log 0 = -\infty$, $\log \infty = +\infty$, $0^{-s} = \infty^{s} = \infty$ and $0^{s} = \infty^{-s} = 0$ for $s > 0$. In particular, the subspace $(\IdentityMatrix - s(σ))\HS$, on which every $\Delta(ω'/σ)$ is $+\infty$, contributes $f(\infty)\, ρ(\IdentityMatrix - s(σ))$ to $\bra{\v{ρ}} f(\Delta(ω'/σ))\ket{\v{ρ}}$, since $\norm{(\IdentityMatrix - s(σ))\v{ρ}}^2 = ρ(\IdentityMatrix - s(σ))$; this contribution vanishes if $s(ρ) \leq s(σ)$, and otherwise it is $f(\infty) \cdot (\text{positive number})$.

\begin{definition}[Relative entropy and \renyi divergences]\label{def:divergences}
    In the setting above we define
    \begin{align}
        D(ρ\|σ) &\coloneqq \bra{\v{ρ}} \log \Delta(ρ'/σ) \ket{\v{ρ}}\,, \label{eq:def-relative-entropy}\\
        D_α(ρ\|σ) &\coloneqq \frac{1}{α - 1} \log \bra{\v{ρ}} \Delta(ρ'/σ)^{α - 1} \ket{\v{ρ}}\,, \qquad{α} \in [0, 1) \cup (1, \infty)\,, \label{eq:def-petz-renyi}\\
        \widetilde{D}_α(ρ\|σ) &\coloneqq \sup_{ω' \in \S(\M')} \frac{1}{α'} \log \bra{\v{ρ}} \Delta(ω'/σ)^{α'} \ket{\v{ρ}}\,, \qquad{α}' \coloneqq \frac{α - 1}{α}\,, \quad{α} \in [\tfrac12, 1) \cup (1, \infty]\,, \label{eq:def-sandwiched-renyi}
    \end{align}
    called the (Umegaki--Araki) \emph{relative entropy}, the \emph{Petz--\renyi divergence} and the \emph{sandwiched \renyi divergence} of order $α$, respectively. Here $α' \coloneqq 1$ for $α = \infty$, so that the two extreme orders $α = \tfrac12$ and $α = \infty$ correspond to $α' = -1$ and $α' = 1$; $\widetilde{D}_\infty$ is the max-relative entropy $D_{\max}$ (\Autoref{lem:dmax-variational}). Since $α' > 0$ for $α > 1$ and $α' < 0$ for $α < 1$, the supremum in \eqref{eq:def-sandwiched-renyi} reads explicitly
    \begin{equation}\label{eq:def-sandwiched-renyi-explicit}
        \widetilde{D}_α(ρ\|σ) = \begin{cases} \dfrac{1}{α'} \log\, \sup_{ω' \in \S(\M')} \bra{\v{ρ}} \Delta(ω'/σ)^{α'} \ket{\v{ρ}} & \text{for } α > 1\,,\\[2ex] \dfrac{1}{α'} \log\, \inf_{ω' \in \S(\M')} \bra{\v{ρ}} \Delta(ω'/σ)^{α'} \ket{\v{ρ}} & \text{for } α < 1\,. \end{cases}
    \end{equation}
    By \Autoref{prop:representation-independence} below, these quantities do not depend on the choice of the representation $\M \subset \BO$ and of the vector representative $\ket{\v{ρ}}$, and \Autoref{lem:relative-entropy-well-defined} shows that $D(ρ\|σ)$ is a well-defined element of $[-\log{σ}(\IdentityMatrix), +\infty]$, while the other two quantities take values in $[-\infty, +\infty]$ by definition. We also set $D_1 \coloneqq \widetilde{D}_1 \coloneqq D$, which is justified by \Autoref{prop:divergence-properties}\ref{item:dp-limits}.
\end{definition}

\begin{remark}[Finite dimensions, and the role of the supports]\label{rem:divergence-conventions}
    Let $\M = \BO[\HS_A] \otimes \IdentityMatrix_C$ act on $\HS_A \otimes \HS_C$, and let $\ket{\v{ρ}}$ be a purification of the density matrix $ρ$, so that $ρ' \leftrightarrow{ρ}_C$ is the reduced state on $C$. Then $\Delta(ρ'/σ)^{s} = σ^{-s} \otimes{ρ}_C^{s}$ and, for the canonical purification $\ket{\v{ρ}} = \sum_i (ρ^{1/2} \otimes \IdentityMatrix)\ket{i}\otimes\ket{\bar\imath}$, one has $\bra{\v{ρ}} X \otimes Y \ket{\v{ρ}} = \Tr(ρ^{1/2} X ρ^{1/2} Y^{T})$. A short computation, using Hölder's inequality for the optimization over $ω_C$ (see \cite[Sec.~3.1]{berta_renyi_2018} and \cite[Lemma~4]{muller-lennert_quantum_2013}), then recovers the familiar expressions
    \begin{align}\label{eq:finite-dim-divergences}
        D(ρ\|σ) &= \Tr\big(ρ(\log ρ - \log σ)\big),\\
        D_α(ρ\|σ) &= \frac{\log \Tr(ρ^α σ^{1-α})}{α - 1} \\
        \widetilde{D}_α(ρ\|σ) &= \frac{\log \Tr \big(σ^{\frac{1-α}{2α}} ρ\, σ^{\frac{1-α}{2α}}\big)^α}{α - 1}
    \end{align}
    of \cite{umegaki_conditional_1962, petz_quasientropies_1986, muller-lennert_quantum_2013, wilde_strong_2014}, including the usual conventions for non-faithful $σ$: negative powers of $σ$ are taken on the support of $σ$ (this is $\infty^{-s} = 0$), and the expressions with a positive power of $σ^{-1}$ are $+\infty$ unless $s(ρ) \leq s(σ)$ (this is $\infty^{s} = \infty$). The general picture is the same: by the standing assumptions,
    \begin{itemize}
        \item $D(ρ\|σ)$, $D_α(ρ\|σ)$ for $α > 1$ and $\widetilde{D}_α(ρ\|σ)$ for $α \in (1, \infty]$ are $+\infty$ unless $s(ρ) \leq s(σ)$ (the $+\infty$ part of $\Delta$ contributes);
        \item $D_α(ρ\|σ)$ for $α \in [0,1)$ and $\widetilde{D}_α(ρ\|σ)$ for $α \in [\tfrac12, 1)$ do not see $(\IdentityMatrix - s(σ))\HS$ at all; they are finite unless $ρ$ and $σ$ are orthogonal, i.e.\ unless $ρ(s(σ)) = 0$, in which case they are $+\infty$ (see \Autoref{lem:first-moment} and \Autoref{lem:sandwiched-support-restriction}).
    \end{itemize}
    For $α < 1$ there is a second support condition, on the optimization variable $ω'$ in \eqref{eq:def-sandwiched-renyi}: since $\Delta(ω'/σ)$ vanishes on $(\IdentityMatrix - s(ω'))s(σ)\HS$ and $0^{α'} = \infty$, the value $\bra{\v{ρ}}\Delta(ω'/σ)^{α'}\ket{\v{ρ}}$ is $+\infty$ unless $s(σ)\v{ρ} \in s(ω')\HS$. Such $ω'$ are irrelevant for the infimum, and in fact one may restrict the infimum to those $ω'$ whose support dominates $s(ρ') = [\M\v{ρ}]$, see \Autoref{lem:sandwiched-support-restriction}. This is the support restriction imposed in the definitions of \cite[Def.~2]{berta_renyi_2018} and \cite[Sec.~2]{jencova_renyi_2021}, where for $α < 1$ the infimum runs only over states $ω'$ with $s(ω') \geq s(ρ')$ (for vector functionals $ω' = ω_{ζ}|_{\M'}$, this is the condition $ω_ζ|_{\M'} \gg{ω}_{\v{ρ}}|_{\M'}$ of \cite{berta_renyi_2018}). With our conventions for the generalized positive operator $\Delta(ω'/σ)$ no such restriction is necessary; for $α > 1$ there was none to begin with, as states $ω'$ with small support simply give small values. Note also that the restriction on the optimization is what makes the two ways of writing the sandwiched divergence, with $\Delta(ω'/σ)^{α'}$ or with $\Delta(σ/ω')^{-α'}$, agree: for $ω'$ with $s(ω') \geq s(ρ')$ the vector $\v{ρ}$ lies in $s(ω')\HS$, so the two spectral measures coincide by \Autoref{prop:spatial-derivative-properties}\ref{item:sd-inverse}.
\end{remark}

\begin{lemma}[Well-definedness of the relative entropy]\label{lem:relative-entropy-well-defined}
    The relative entropy $D(ρ\|σ)$ of \Autoref{def:divergences} is well defined with values in $(-\infty, +\infty]$, and $D(ρ\|σ) \geq -\log{σ}(s(ρ)) \geq -\log{σ}(\IdentityMatrix)$.
\end{lemma}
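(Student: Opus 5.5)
The idea is to split the spectral integral $D(ρ\|σ)=\int_{[0,\infty]}\log λ\,\dd μ(λ)$, with $μ \coloneqq μ^{\Delta}_{\v{ρ}}$ and $\Delta\coloneqq\Delta(ρ'/σ)$, into its positive and negative parts. We then show that the negative part $\int \log^-λ\,\dd μ$ is finite, so the integral exists in $(-\infty,+\infty]$, and bound it via Jensen's inequality applied to the total mass.

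First dispose of the support issues. If $ρ(\IdentityMatrix - s(σ))>0$, then by \Autoref{prop:spatial-derivative-properties}\ref{item:sd-support} the measure $μ$ has an atom at $+\infty$. The negative part is still controlled by the argument below, so $D=+\infty$ and the claimed lower bound is trivial. The degenerate case $σ(s(ρ))=0$ gives right side $+\infty$; here we must check $D=+\infty$. In that case $s(ρ)s(σ)=0$ in the relevant sense, so the vector $\v{ρ}$, which lies in $s(ρ)\HS=[\M'\v{ρ}]$, is orthogonal to $s(σ)\HS$. Hence $μ=δ_\infty$. Also note that $\v{ρ}\in s(ρ')\HS$, since $[\M\v{ρ}]=s(ρ')$. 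So the kernel part $(\IdentityMatrix-s(ρ'))s(σ)\HS$ receives no mass, and $μ(\{0\})=0$. This is what makes $\log 0$ harmless.

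The key computation is the first moment of the inverse: $\int λ^{-1}\,\dd μ(λ)=\bra{\v{ρ}}\Delta(ρ'/σ)^{-1}\ket{\v{ρ}}$. Since $\v{ρ}\in s(ρ')\HS$, \Autoref{prop:spatial-derivative-properties}\ref{item:sd-inverse} allows this to be replaced by $\bra{\v{ρ}}\Delta(σ/ρ')\ket{\v{ρ}}$, where now $σ$ sits on $\M$ and $ρ'$ on $\M'$. By the defining form \eqref{eq:spatial-derivative-form} with $\M,\M'$ interchanged, this equals $σ(θ^{ρ'}(\v{ρ},\v{ρ}))$. One checks that $θ^{ρ'}(\v{ρ},\v{ρ})=s(ρ)=[\M'\v{ρ}]$. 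Indeed, $R^{ρ'}(\v{ρ})$ maps $η_{ρ'}(y)\mapsto y\v{ρ}$ isometrically, because $\norm{y\v{ρ}}^2=ρ'(y^*y)$, and its range is dense in $[\M'\v{ρ}]\HS$. So $R R^*$ is that projection. Hence $\int λ^{-1}\dd μ=σ(s(ρ))$. (This is presumably the content of \Autoref{lem:first-moment}; I would prove it inline if needed.)

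Finally, $μ$ is a probability measure on $(0,\infty]$, because $\norm{\v{ρ}}=1$ and $μ(\{0\})=0$. Since $\log^- λ\le λ^{-1}$, we get $\int\log^-λ\,\dd μ\le σ(s(ρ))<\infty$. Thus $D$ is well defined in $(-\infty,\infty]$. Concavity of $\log$ (Jensen's inequality on $(0,\infty]$, treating the atom at $\infty$ separately, where it only increases $D$) gives
\[
-D=\int\log(λ^{-1})\,\dd μ\le\log\int λ^{-1}\,\dd μ=\log σ(s(ρ)).
\]
Then $σ(s(ρ))\leσ(\IdentityMatrix)$ yields the second inequality. I expect the main obstacle to be the identification $θ^{ρ'}(\v{ρ},\v{ρ})=s(ρ)$, together with justifying that $\v{ρ}$ is $ρ'$-bounded and that the form identity applies without closure issues when $ρ'$ is non-faithful. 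For this I would reduce to the faithful restriction on $s(ρ')\M s(ρ')$, via the reduction described after \Autoref{def:spatial-derivative}.
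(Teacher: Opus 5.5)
Your proof is correct and follows the paper's argument. The paper uses \Autoref{lem:first-moment}: $\mu(\{0\})=0$ because $\v{\rho}\in s(\rho')\HS$, the inversion formula passes to $\Delta(\sigma/\rho')$, and $\theta^{\rho'}(\v{\rho},\v{\rho})=[\M'\v{\rho}]=s(\rho)$ gives the first moment $\sigma(s(\rho))$; then $\log\lambda\le\lambda$ gives finiteness and Jensen gives the bound. The obstacle you anticipate does not arise, since $\|y\v{\rho}\|^2=\rho'(y^*y)$ makes $\v{\rho}$ trivially $\rho'$-bounded and \eqref{eq:spatial-derivative-form} already holds on $D(\HS,\rho')$ for non-faithful $\rho'$; no reduction is needed, and one would in any case go through $s(\rho')\M' s(\rho')$, not $s(\rho')\M s(\rho')$.
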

\begin{proof}
    This is the classical argument showing that Araki's relative entropy is well defined and bounded from below, see e.g.\ \cite{araki_relative_1977} and \cite[Eqs.~(5.1)--(5.2), Prop.~5.1]{ohya_quantum_1993}; how the supports enter under our conventions is explained in \Cref{app:divergence-proofs} (\Autoref{lem:first-moment}).
\end{proof}

\begin{lemma}[Support restriction in the sandwiched divergence]\label{lem:sandwiched-support-restriction}
    Let $α \in [\tfrac12, 1)$. For $ω' \in \S(\M')$ with $s(σ)\v{ρ} \notin s(ω')\HS$ the value $\bra{\v{ρ}}\Delta(ω'/σ)^{α'}\ket{\v{ρ}}$ is $+\infty$, and
    \begin{equation}\label{eq:sandwiched-support-restriction}
        \inf_{ω' \in \S(\M')} \bra{\v{ρ}} \Delta(ω'/σ)^{α'} \ket{\v{ρ}} = \inf_{\substack{ω' \in \S(\M') \\ s(ω') \geq s(ρ')}} \bra{\v{ρ}} \Delta(ω'/σ)^{α'} \ket{\v{ρ}} = \inf_{\substack{ω' \in \S(\M') \\ s(ω') \geq s(ρ')}} \bra{\v{ρ}} \Delta(σ/ω')^{-α'} \ket{\v{ρ}}\,.
    \end{equation}
    Moreover, if $ρ(s(σ)) = 0$ then all three infima vanish, i.e.\ $\widetilde{D}_α(ρ\|σ) = +\infty$.
\end{lemma}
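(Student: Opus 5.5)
I will prove the three claims of \Autoref{lem:sandwiched-support-restriction} in turn, using only the support description of \Autoref{prop:spatial-derivative-properties}\ref{item:sd-support} and \ref{item:sd-inverse}. Throughout, $α' = (α-1)/α \in [-1, 0)$.

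\emph{First claim.} I decompose $\v{ρ}$ along \eqref{eq:support-decomposition} for the pair $(ω', σ)$, with $ω'$ in the first slot on $\M'$ and $σ$ on $\M$. The summand $(\IdentityMatrix - s(σ))\HS$ carries the value $+\infty$, and $\infty^{α'} = 0$, so it contributes nothing. The summand $(\IdentityMatrix - s(ω'))s(σ)\HS$ carries the value $0$, and $0^{α'} = +\infty$. Hence the form is $+\infty$ as soon as $(\IdentityMatrix - s(ω'))s(σ)\v{ρ} \neq 0$. Since $s(ω')$ and $s(σ)$ commute, this condition is exactly $s(σ)\v{ρ} \notin s(ω')\HS$.

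\emph{First equality in \eqref{eq:sandwiched-support-restriction}.} The inequality "$\leq$" is trivial. For "$\geq$" I will perturb an arbitrary $ω'$ with finite value. Let $ω'_ε \coloneqq (1-ε)ω' + ερ'$. Then $s(ω'_ε) = s(ω') \vee s(ρ') \geq s(ρ')$. By \Autoref{prop:spatial-derivative-properties}\ref{item:sd-linearity}, $\Delta(ω'_ε/σ) \geq (1-ε)\Delta(ω'/σ)$ as forms. Since $t \mapsto t^{α'}$ is operator monotone decreasing for $α' \in [-1,0)$, and this extends to generalized positive operators, I get
\[
\bra{\v{ρ}}\Delta(ω'_ε/σ)^{α'}\ket{\v{ρ}} \leq (1-ε)^{α'}\bra{\v{ρ}}\Delta(ω'/σ)^{α'}\ket{\v{ρ}}.
\]
Letting $ε \to 0$ gives the reverse inequality. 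The main technical point is justifying operator monotonicity of $t^{α'}$ for generalized positive operators. I would handle it via the integral representation $t^{-s} = c_s\int_0^\infty λ^{-s}(λ+t)^{-1}\,\dd λ$ together with monotonicity of resolvents of forms (the order on $\widehat{\BO_+}$ is the form order).

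\emph{Second equality.} For $ω'$ with $s(ω') \geq s(ρ') = [\M\v{ρ}]$, the vector $\v{ρ}$ lies in $s(ω')\HS$. By \Autoref{prop:spatial-derivative-properties}\ref{item:sd-inverse}, the spectral measures of $\v{ρ}$ for $\Delta(ω'/σ)^{-1}$ and $\Delta(σ/ω')$ agree. Therefore $\bra{\v{ρ}}\Delta(ω'/σ)^{α'}\ket{\v{ρ}} = \bra{\v{ρ}}\Delta(σ/ω')^{-α'}\ket{\v{ρ}}$ termwise, and the two infima coincide.

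\emph{Final claim.} If $ρ(s(σ)) = 0$, then $\norm{s(σ)\v{ρ}}^2 = 0$, so $\v{ρ} \in (\IdentityMatrix - s(σ))\HS$. On this subspace every $\Delta(ω'/σ)$ equals $+\infty$, so $\Delta(ω'/σ)^{α'} = 0$ there. Thus each value vanishes, including for $ω'$ with $s(ω') \geq s(ρ')$; for instance $ω' = ρ'$ is admissible, so the restricted infima are over nonempty sets. Hence all three infima are $0$, and $\widetilde{D}_α(ρ\|σ) = \frac{1}{α'}\log 0 = +\infty$ since $α' < 0$.
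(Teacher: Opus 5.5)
Your proposal is correct and follows the paper's own proof essentially step by step. You mix $\omega'$ with a small multiple of $\rho'$ and use operator monotonicity of negative powers (the paper's \Autoref{lem:compression-inequality}\ref{item:ci-monotone}) for the first equality, the inversion formula \Autoref{prop:spatial-derivative-properties}\ref{item:sd-inverse} on $s(\omega')\HS \ni \v{\rho}$ for the second, and the vanishing of $\Delta(\omega'/\sigma)^{\alpha'}$ on $(1 - s(\sigma))\HS$ for the last claim. One small adjustment: for $\alpha = 1/2$, i.e.\ $\alpha' = -1$, the integral representation of $t^{-s}$ you cite is not available, so monotonicity of the inverse instead follows from $\ip{\zeta}{C^{-1}\zeta} = \lim_{\lambda \downarrow 0} \ip{\zeta}{(\lambda + C)^{-1}\zeta}$ and the resolvent inequality, as done in that lemma.
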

\begin{proof}
    The first statement was already observed in \Autoref{rem:divergence-conventions}, and the restriction of the infimum is obtained by mixing $ω'$ with a small multiple of $ρ'$; the details are given in \Cref{app:divergence-proofs}.
\end{proof}

\begin{proposition}[Independence of the representation]\label{prop:representation-independence}
    Let $\tilde{\HS}$ be a Hilbert space carrying a normal faithful unital $*$-representation $\tilde{π}: \M \to \BO[\tilde\HS]$, let $\ket{\tilde{\v{ρ}}} \in \tilde\HS$ be a vector representative of $ρ$ in it, i.e.\ $\bra{\tilde{\v{ρ}}} \tilde{π}(x) \ket{\tilde{\v{ρ}}} = ρ(x)$ for $x \in \M$, and write $\tilde{ρ}' \coloneqq{ω}_{\tilde{\v{ρ}}}|_{\tilde{π}(\M)'}$ and $\tilde{σ} \coloneqq{σ} \circ \tilde{π}^{-1} \in \tilde{π}(\M)_*^+$. Then:
    \begin{enumerate}[label=(\roman*)]
        \item\label{item:ri-unoptimized} the spectral measure of $\v{ρ}$ with respect to $\Delta(ρ'/σ)$ coincides with the spectral measure of $\tilde{\v{ρ}}$ with respect to $\Delta(\tilde{ρ}'/\tilde{σ})$;
        \item\label{item:ri-sandwiched} for every $α \in [\tfrac12, 1) \cup (1, \infty]$,
        \begin{equation}
            \sup_{ω' \in \S(\M')} \bra{\v{ρ}} \Delta(ω'/σ)^{α'} \ket{\v{ρ}} = \sup_{\tilde{ω}' \in \S(\tilde{π}(\M)')} \bra{\tilde{\v{ρ}}} \Delta(\tilde{ω}'/\tilde{σ})^{α'} \ket{\tilde{\v{ρ}}}\,,
        \end{equation}
        and the same holds with $\sup$ replaced by $\inf$ on both sides.
    \end{enumerate}
    Consequently $D(ρ\|σ)$, $D_α(ρ\|σ)$ and $\widetilde{D}_α(ρ\|σ)$ depend neither on the representation of $\M$ nor on the vector representative of $ρ$. Moreover, in a representation in which every normal state of $\M'$ is a vector state (for instance the standard form), the optimization in \eqref{eq:def-sandwiched-renyi} can equivalently be performed over unit vectors $ζ \in \HS$, with $ω' = ω_ζ|_{\M'}$; hence $\widetilde{D}_α$ coincides with the Araki--Masuda divergence of \cite[Def.~7]{berta_renyi_2018} (see also \cite{araki_positive_1982}) and with the sandwiched \renyi divergence of \cite{jencova_renyi_2018, jencova_renyi_2021}.
\end{proposition}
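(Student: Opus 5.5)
The plan is to reduce both representations to a common one and to compare the spatial derivatives there. Given $\tilde\pi$ and $\ket{\tilde{\v{\rho}}}$, consider the amplification: any two faithful normal representations of $\M$ are related by an amplification followed by a reduction by a projection in the commutant, i.e.\ there is a Hilbert space $\mathcal{K}$, a projection $p' \in (\M \bar\otimes \IdentityMatrix_{\mathcal{K}})'$ and a unitary $U$ with $\tilde\pi(x) = U (x\otimes \IdentityMatrix) p' U^*$. So it suffices to treat two basic cases:
\begin{enumerate}[label=(\roman*)]
    \item the amplification $x \mapsto x \otimes \IdentityMatrix_{\mathcal{K}}$ on $\HS \otimes \mathcal{K}$, with $\v{\rho}\otimes e$ for a unit vector $e$;
    \item the reduction to $p'\HS$ with $p' \in \M'$, for vectors with $p'\v{\rho} = \v{\rho}$.
\end{enumerate}
Two vector representatives of $\rho$ in the same representation are handled in addition. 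If $\ket{\v{\rho}_1}, \ket{\v{\rho}_2}$ both represent $\rho$, the map $x\v{\rho}_1 \mapsto x\v{\rho}_2$ extends to a partial isometry $v \in \M'$ with $v\v{\rho}_1 = \v{\rho}_2$ and initial and final projections $[\M\v{\rho}_1]$, $[\M\v{\rho}_2]$.

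For part \ref{item:ri-unoptimized}, I will work with the quadratic form \eqref{eq:spatial-derivative-form}. In the reduction case, $\theta^{\psi}$ for $\psi$ on $p'\M'p'$ coincides with $\theta^{\psi\circ(p'\cdot p')}$ on vectors in $p'\HS$. This is a direct computation using $R^\psi(\xi)$, together with the reduction to faithful functionals from \Autoref{lem:bounded-vectors-nonfaithful}. In the amplification case, $(\M\otimes\IdentityMatrix)' = \M'\bar\otimes\BO[\mathcal{K}]$, and $\rho'$ transforms into $\rho'\otimes\ketbra{e}{e}$. Covariance (\Autoref{prop:spatial-derivative-properties}\ref{item:sd-covariance}) applied to the partial isometry $v$, with roles of $\M,\M'$ swapped, shows that $\Delta(\rho_2'/\sigma) = v\Delta(\rho_1'/\sigma)v^*$ on the relevant supports. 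Since $v\v{\rho}_1 = \v{\rho}_2$, the spectral measures agree. Chaining these cases gives \ref{item:ri-unoptimized}, and hence the claimed independence of $D$ and $D_\alpha$.

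For part \ref{item:ri-sandwiched}, I first compare $\sup/\inf$ over $\S(\M')$ in a fixed representation with those over a larger commutant. This requires showing that the value $\bra{\v{\rho}}\Delta(\omega'/\sigma)^{\alpha'}\ket{\v{\rho}}$ transforms correctly for each $\omega'$. Under reduction by $p'$, every state of $p'\M'p'$ extends to one of $\M'$. Conversely, a state $\omega'$ of $\M'$ compresses to $p'\omega'p'$, which is subnormalized. I expect that for $\v{\rho}\in p'\HS$ the value depends only on that compression, and then I rescale. The direction of the inequality from subnormalization needs a separate check for $\alpha'>0$ and for $\alpha'<0$. Monotonicity in \Autoref{prop:spatial-derivative-properties}\ref{item:sd-linearity}, together with homogeneity, should give this. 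Under amplification, product states $\omega'\otimes\ketbra{e}{e}$ reproduce the original values. For general states of $\M'\bar\otimes\BO[\mathcal{K}]$, I will show the value is dominated (for $\sup$), respectively bounded below (for $\inf$), by that of some state of $\M'$. The natural candidate is $\omega'\big(\,\cdot\otimes \ketbra{e}{e}\big)$ suitably normalized, and the comparison again uses covariance with $a = \IdentityMatrix\otimes\ketbra{e}{e}$.

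The main obstacle will be this last comparison, namely controlling non-product states on the enlarged commutant. The cleanest route I see is to avoid it by going through the standard form. Every state of $\M'$ is then a vector state, so the optimization can be rewritten over vectors $\zeta$. The vector-state optimization is then the Araki--Masuda quantity, for which representation independence is known from \cite{berta_renyi_2018}. I would still need to check that our $+\infty$ convention for non-faithful arguments matches theirs; \Autoref{lem:sandwiched-support-restriction} is exactly what reduces the $\alpha<1$ case to states with $s(\omega')\ge s(\rho')$, where the conventions agree. The final claims about the standard form then follow from \ref{item:ri-sandwiched} and the surjectivity of $\zeta\mapsto\omega_\zeta|_{\M'}$ there.
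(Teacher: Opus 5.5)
Your part~(i) is fine. The gap is in part~(ii).

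\textbf{The missing inequality.} In each elementary step (reduction by $p'\in\M'$, amplification, change of vector representative) you must compare the value of a state $\omega'$ of a larger commutant with the value of its compression. You expect that for $\v{\rho}\in p'\HS$ the value $\langle\v{\rho},\Delta(\omega'/\sigma)^{\alpha'}\v{\rho}\rangle$ depends only on $p'\omega'p'$. This is false except for $\alpha'=1$. By covariance (Proposition~\ref{prop:spatial-derivative-properties}\ref{item:sd-covariance} with $a=p'$), $\Delta(p'\omega'p'/\sigma)$ is, on $s(\sigma)\HS$, the compression $A_{p'}$ of $A=\Delta(\omega'/\sigma)$. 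But $\langle\zeta,A_{p'}^{s}\zeta\rangle\neq\langle\zeta,A^{s}\zeta\rangle$ in general. Already for $\M=\complex\IdentityMatrix$, $A$ is a multiple of a density matrix, and its fractional powers see the off-diagonal blocks.

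What is true, and what the paper's proof rests on, is the operator Jensen inequality for compressions, Lemma~\ref{lem:compression-inequality}\ref{item:ci-compression}. For $\zeta\in q\HS$ and $s\in(0,1]$ it gives $\langle\zeta,A_q^{s}\zeta\rangle\geq\langle\zeta,A^{s}\zeta\rangle$ and $\langle\zeta,A_q^{-s}\zeta\rangle\leq\langle\zeta,A^{-s}\zeta\rangle$. This points the right way for the supremum ($\alpha>1$) and for the infimum ($\alpha<1$). Renormalizing the subnormalized compressed state then only helps further, as you anticipated.

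This inequality is exactly what your ``main obstacle'' needs: for non-product states on $\M'\bar\otimes\BO[\mathcal{K}]$ with $q=\IdentityMatrix\otimes\ketbra{e}{e}$, covariance alone only identifies the compressed operator. It is also needed to change the vector representative in part~(ii), which you only address for part~(i). There $\langle v\xi,A^{s}v\xi\rangle$ must be compared with $\langle\xi,(v^*Av)^{s}\xi\rangle$, and for a partial isometry $v$ these differ in general (this is Step~1(b) of the paper).

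\textbf{Why the standard-form fallback fails.} \cite[Lemma~3]{berta_renyi_2018} concerns optimization over \emph{vector} states $\omega_\zeta|_{\M'}$. In a general representation, the definition optimizes over all normal states of $\M'$, and these need not be vector states: for $\M=\complex\IdentityMatrix$, only the pure states of $\M'=\BO$ are. So you obtain only one inequality. For $\alpha>1$, the normal-state supremum in $\HS$ dominates the vector-state supremum, which equals the standard-form value. The reverse inequality is precisely the comparison with non-vector states you wanted to avoid; the proof of \cite[Lemma~3]{berta_renyi_2018} itself relies on an operator Jensen inequality.

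The paper also avoids amplifications altogether. It passes to $\HS\oplus\tilde{\HS}$ with $P=\IdentityMatrix\oplus0\in\hat{\M}'$, so only a change of vector representative and a compression by a commutant projection remain. Each is settled by covariance plus Lemma~\ref{lem:compression-inequality}\ref{item:ci-compression}.
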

\begin{proof}
The proof, which adapts the argument of \cite[Lemma~3]{berta_renyi_2018} to our conventions, is given in \Cref{app:divergence-proofs}.
\end{proof}

We collect the properties of the three divergences that are used in this paper. Since all quantities are independent of the representation, results proved in the literature for the standard form apply verbatim.

\begin{proposition}[Basic properties]\label{prop:divergence-properties}
    Let $ρ \in \S(\M)$ and $σ, σ_1, σ_2 \in \M_*^+$.
    \begin{enumerate}[label=(\roman*)]
        \item\label{item:dp-scaling} \emph{Scaling.} For $λ > 0$ and $\D \in \set{D, D_α, \widetilde{D}_α}$, $\D(ρ\|λσ) = \D(ρ\|σ) - \log{λ}$.
        \item\label{item:dp-monotone} \emph{Anti-monotonicity in the second argument.} If $σ_1 \leq{σ}_2$, then $\D(ρ\|σ_1) \geq \D(ρ\|σ_2)$ for $\D = D$, for $\D = D_α$ with $α \in [0, 2]$, and for $\D = \widetilde{D}_α$ with $α \in [\tfrac12, \infty]$.
        \item\label{item:dp-dpi} \emph{Data processing.} Let $\N$ be a von Neumann algebra and $\Phi: \N \to \M$ a normal unital completely positive map, with predual $\Phi_*: \Ms \to \Ns$, $\Phi_*(ω) = ω \circ \Phi$. Then $\D(\Phi_*(ρ)\|\Phi_*(σ)) \leq \D(ρ\|σ)$ for $\D = D$ \cite{uhlmann_relative_1977} (see \cite[Thm.~5.3]{ohya_quantum_1993}), for $\D = D_α$ with $α \in [0,2]$ \cite{petz_quasientropies_1985, hiai_quantum_2018}, and for $\D = \widetilde{D}_α$ with $α \in [\tfrac12, \infty)$ \cite[Thm.~14]{berta_renyi_2018}, \cite[Thm.~4.1]{jencova_renyi_2021}; for $\widetilde{D}_\infty = D_{\max}$ it is immediate from \Autoref{def:one-shot}\ref{item:os-dmax}, since $ρ \leq{λ}σ$ implies $\Phi_*(ρ) \leq{λ}\Phi_*(σ)$.
        \item\label{item:dp-limits} \emph{Dependence on $α$.} The map $α \mapsto D_α(ρ\|σ)$ is nondecreasing on $[0, 2]$, and $α \mapsto \widetilde{D}_α(ρ\|σ)$ is nondecreasing on $[\tfrac12, 1) \cup (1, \infty]$ \cite[Prop.~11, Lemma~8]{berta_renyi_2018}. One has $\lim_{α \nearrow 1} D_α(ρ\|σ) = \lim_{α \nearrow 1} \widetilde{D}_α(ρ\|σ) = D(ρ\|σ)$. The limits from above require a finiteness condition: $\lim_{α \searrow 1} D_α(ρ\|σ) = \lim_{α \searrow 1} \widetilde{D}_α(ρ\|σ) = D(ρ\|σ)$ provided $D_α(ρ\|σ) < \infty$ for some $α \in (1, 2)$ or, equivalently, $\widetilde{D}_α(ρ\|σ) < \infty$ for some $α > 1$ \cite[Prop.~11, Thm.~13]{berta_renyi_2018}. In infinite dimensions the support condition $s(ρ) \leq s(σ)$ is not sufficient for this, since $D(ρ\|σ)$ can be finite while $D_α(ρ\|σ) = \widetilde{D}_α(ρ\|σ) = \infty$ for all $α > 1$. Moreover, for $α \in [\tfrac12, 1) \cup (1, \infty)$,
        \begin{equation}\label{eq:alt-inequalities}
            D_{2 - 1/α}(ρ\|σ) \leq \widetilde{D}_α(ρ\|σ) \leq D_α(ρ\|σ)\,,
        \end{equation}
        the second inequality being the Araki--Lieb--Thirring inequality \cite[Thm.~12]{berta_renyi_2018} and the first one due to \cite[Thm.~3.3]{jencova_renyi_2021}. In particular, for $α \in [\tfrac12, 1)$, $\widetilde{D}_α(ρ\|σ) < \infty$ iff $D_α(ρ\|σ) < \infty$ iff $ρ(s(σ)) > 0$, by \Autoref{lem:first-moment}.
        \item\label{item:dp-extreme} \emph{Extreme orders.} $D_0(ρ\|σ) = -\log{σ}(s(ρ))$ (\Autoref{lem:first-moment}), $\widetilde{D}_{1/2}(ρ\|σ) = -\log F(ρ, σ)$ with the fidelity $F$ of \Autoref{def:one-shot}\ref{item:os-fidelity}, and $\widetilde{D}_\infty(ρ\|σ) = D_{\max}(ρ\|σ) = \lim_{α \to \infty} \widetilde{D}_α(ρ\|σ)$ with $D_{\max}$ as in \Autoref{def:one-shot}\ref{item:os-dmax} (\Autoref{lem:dmax-variational} and \cite[Lemma~9]{berta_renyi_2018}).
    \end{enumerate}
\end{proposition}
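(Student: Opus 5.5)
Since \Autoref{prop:representation-independence} shows that all three divergences are independent of the representation and of the vector representative, I would reduce each item to a statement already available in the literature for the standard form (Araki--Masuda/Jen\v{c}ov\'a sandwiched divergences, Petz quasi-entropies, Araki's relative entropy). Where the non-faithful conventions of \Autoref{def:spatial-derivative} and \Autoref{rem:divergence-conventions} matter, I would translate explicitly. Finite-dimensional intuition is not used; everything runs through spectral measures of $\v{ρ}$ with respect to $\Delta(ω'/σ)$.

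\emph{Scaling and monotonicity.} By \Autoref{prop:spatial-derivative-properties}\ref{item:sd-linearity}, $\Delta(ω'/λσ)=λ^{-1}\Delta(ω'/σ)$. Hence $\bra{\v{ρ}}\Delta^{s}\ket{\v{ρ}}$ scales by $λ^{-s}$ and $\log\Delta$ shifts by $-\log λ$, which gives \ref{item:dp-scaling} directly for all three quantities, after taking the sup/inf over $ω'$. For \ref{item:dp-monotone}, $σ_1\le σ_2$ gives $\Delta(ω'/σ_1)\ge\Delta(ω'/σ_2)$ in the form sense. Operator monotonicity of $t\mapsto t^{s}$ for $s\in[0,1]$ and of $t\mapsto -t^{-s}$ for $s\in[0,1]$ then handles Petz with $α-1\in[-1,1]$ and sandwiched with $α'\in[-1,1)$, and operator monotonicity of $\log$ handles $D$. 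The only care needed is on the $+\infty$/$0$ parts of the generalized operators, where the monotonicity of supports ($s(σ_1)\le s(σ_2)$) makes the comparison trivial. The optimization over $ω'$ preserves these inequalities.

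\emph{Data processing, $α$-dependence, extreme orders.} For \ref{item:dp-dpi} I would cite Uhlmann for $D$, Petz/Hiai for $D_α$, and Berta--Scholz--Tomamichel or Jen\v{c}ov\'a for $\widetilde D_α$, after using \Autoref{prop:representation-independence} to identify our definitions with theirs. The $D_{\max}$ case is immediate from positivity of $\Phi$. For \ref{item:dp-limits}, monotonicity in $α$ follows from Jensen/Hölder on the spectral measure $μ^{\Delta}_{\v{ρ}}$, since $s\mapsto \frac1s\log\int λ^s\,dμ$ is nondecreasing; this gives the Petz part, and the sandwiched part is done pointwise in $ω'$ before optimizing. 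The limits $α\nearrow1$ follow by monotone convergence, and $α\searrow1$ by dominated convergence once some finite moment of order $α>1$ is given. Inequalities \eqref{eq:alt-inequalities} are quoted from the stated references. For \ref{item:dp-extreme}, $D_0$ is the mass of $μ_{\v{ρ}}$ off $\{0\}$, which equals $σ(s(ρ))$ by the support decomposition of \Autoref{prop:spatial-derivative-properties}\ref{item:sd-support}.

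\textbf{Main obstacle.} The main obstacle is checking that the cited standard-form results, stated with the "extend by $0$" convention and often for faithful $σ$, agree with our "$+\infty$ off the support" convention. This matters for $α<1$ and for the inf over $ω'$. I would handle it via \Autoref{lem:sandwiched-support-restriction}, which restricts to $ω'$ with $s(ω')\ge s(ρ')$, and via \Autoref{lem:first-moment}, reducing to the faithful reduced algebra $s(σ)\M s(σ)$.
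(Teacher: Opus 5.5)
Your treatment of (i)--(iii) matches the paper's. Your power-mean argument for monotonicity in $\alpha$ is also a valid self-contained substitute for the citation: for fixed $\omega'$ the spectral measure is a probability measure on $[0,\infty]$, $\alpha'$ increases with $\alpha$, and a supremum of nondecreasing functions is nondecreasing. Two steps, however, fail as written.

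\emph{The value of $D_0$.} With the definition used here, $D_0(\rho\|\sigma)=-\log\bra{\v{\rho}}\Delta(\rho'/\sigma)^{-1}\ket{\v{\rho}}=-\log\int\lambda^{-1}\,d\mu(\lambda)$, where $\mu$ is the spectral measure of $\v{\rho}$ for $\Delta(\rho'/\sigma)$. This is not the mass of $\mu$ off $\{0\}$. That mass is always $1$, since $\mu(\{0\})=0$, and the mass in $(0,\infty)$ is $\rho(s(\sigma))$. Your description fits the other convention $\langle\xi_\sigma,\Delta_{\rho,\sigma}^{\alpha}\xi_\sigma\rangle$, where $\alpha=0$ produces a support projection. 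No support bookkeeping can produce the number $\sigma(s(\rho))$, which depends on the values of $\sigma$. Instead, use the inversion formula to rewrite $\int\lambda^{-1}d\mu$ as the first moment of the spectral measure for $\Delta(\sigma/\rho')$. Then apply the form identity \eqref{eq:spatial-derivative-form} with $\theta^{\rho'}(\v{\rho},\v{\rho})=[\M'\v{\rho}]=s(\rho)$; this is Lemma~\ref{lem:first-moment}.

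\emph{The limits $\alpha\to1$ of $\widetilde{D}_\alpha$.} Monotone and dominated convergence act on a fixed spectral measure. They give the Petz limits, but not the sandwiched ones, because of the optimization over $\omega'$.
For $\alpha\searrow1$, the limit is an infimum over $\alpha$ of a supremum over $\omega'$, and pointwise convergence only bounds it from below.
For $\alpha\nearrow1$, the two suprema commute, but you only obtain an upper bound by $\sup_{\omega'}\bra{\v{\rho}}\log\Delta(\omega'/\sigma)\ket{\v{\rho}}$. You would still have to show this is at most $D(\rho\|\sigma)$.

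The missing step is the squeeze \eqref{eq:alt-inequalities}, which you quote but never use; the paper argues this way. Its lower bound $D_{2-1/\alpha}\le\widetilde{D}_\alpha$ is simply the choice $\omega'=\rho'$ in \eqref{eq:def-sandwiched-renyi}. Its upper bound is Araki--Lieb--Thirring, proved for states in the references and extended to $\sigma\in\M_*^+$ by (i). Since $2-1/\alpha\to1$, both one-sided sandwiched limits follow from the Petz ones. The same squeeze, combined with monotonicity in $\alpha$ and Lemma~\ref{lem:first-moment}, also gives the equivalence of the two finiteness conditions and the finiteness criterion at the end of (iv). Your proposal addresses neither.

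In (v) you also omit $\widetilde{D}_{1/2}=-\log F$, which is the definition of $F$, and $\widetilde{D}_\infty=D_{\max}$, which is Lemma~\ref{lem:dmax-variational}. Your pointwise argument does give $\lim_{\alpha\to\infty}\widetilde{D}_\alpha=\widetilde{D}_\infty$, because there both optimizations are suprema.
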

\begin{proof}
    Items \ref{item:dp-scaling} and \ref{item:dp-monotone} follow from the corresponding properties of the spatial derivative (\Autoref{prop:spatial-derivative-properties}\ref{item:sd-linearity}), and the remaining items are taken from the references given in the statement; the details are given in \Cref{app:divergence-proofs}.
\end{proof}

Finally we define the one-shot quantities entering \Autoref{thm:gqsl_subalgebras}; in von Neumann algebras these were used e.g.\ in \cite{berta_renyi_2018, fawzi_asymptotic_2023}, to which we refer for further properties. In particular, all quantities in the following definition satisfy the data processing inequality of \Autoref{prop:divergence-properties}\ref{item:dp-dpi}, either directly from their definition or by \cite[Sec.~2.4]{fawzi_asymptotic_2023}.

\begin{definition}[Max-relative entropy, fidelity, hypothesis testing and smoothing]\label{def:one-shot}
    Let $ρ \in \S(\M)$, $σ \in \M_*^+$ and $ε \in [0, 1)$.
    \begin{enumerate}[label=(\alph*)]
        \item\label{item:os-dmax} The \emph{max-relative entropy} \cite{datta_min_2009} is $D_{\max}(ρ\|σ) \coloneqq \log \inf\Set{λ > 0 | ρ \leq{λ} σ}$, with $\inf \emptyset = +\infty$. It coincides with the sandwiched \renyi divergence $\widetilde{D}_\infty$ of order $\infty$ from \Autoref{def:divergences}, see \Autoref{lem:dmax-variational} below.
        \item\label{item:os-fidelity} The \emph{fidelity} (Uhlmann's transition probability \cite{uhlmann_transition_1976}) is
        \begin{equation}
            F(ρ, σ) \coloneqq \exp\big(-\widetilde{D}_{1/2}(ρ\|σ)\big) = \inf_{ω' \in \S(\M')} \bra{\v{ρ}} \Delta(ω'/σ)^{-1} \ket{\v{ρ}} \in [0, σ(\IdentityMatrix)]\,;
        \end{equation}
        by \cite[Lemma~9]{berta_renyi_2018} this agrees with Uhlmann's fidelity in the form of Alberti, $F(ρ,σ) = \sup_U \abs{\ip{\v{ρ}}{U ξ_σ}}^2$. Here $ξ_σ \in \HS'$ is a vector representative of $σ$ in a normal representation $π': \M \to \BO[\HS']$, and the supremum runs over all contractions $U: \HS' \to \HS$ that intertwine the two representations, i.e.\ $Uπ'(x) = xU$ for all $x \in \M$. If $\HS$ itself contains a vector representative $ξ_σ$ of $σ$, these are the contractions in $\M'$, so that $F(ρ,σ) = \sup\Set{\abs{\ip{\v{ρ}}{u' ξ_σ}}^2 | u' \in \M',\ \norm{u'} \leq 1}$. Moreover, $F$ has the variational expression $F(ρ,σ) = \inf\Set{ρ(x)\,σ(x^{-1}) | x \in \M \text{ positive and invertible}}$ \cite{alberti_stochastic_1983}. In finite dimensions $F(ρ,σ) = \norm{ρ^{1/2}σ^{1/2}}_1^2$. For states $ρ, σ \in \S(\M)$ the \emph{purified distance} \cite{gilchrist_distance_2005, tomamichel_quantum_2016} is $P(ρ, σ) \coloneqq \sqrt{1 - F(ρ, σ)}$; it is a metric on $\S(\M)$ (see \cite[Sec.~2.2]{fawzi_asymptotic_2023}).
        \item\label{item:os-hypothesis-testing} The \emph{hypothesis testing relative entropy} \cite{buscemi_quantum_2010, wang_oneshot_2012} is
        \begin{equation}
            D_H^ε(ρ\|σ) \coloneqq - \log \inf\Set{σ(T) | T \in \M,\ 0 \leq T \leq \IdentityMatrix,\ ρ(T) \geq 1 - ε}\,.
        \end{equation}
        \item\label{item:os-smoothing} The \emph{smoothed} max- and min-relative entropies are
        \begin{equation}
            D_{\max}^ε(ρ\|σ) \coloneqq \inf_{\substack{\tilde{ρ} \in \S(\M) \\ P(\tilde{ρ}, ρ) \leq{ε}}} D_{\max}(\tilde{ρ}\|σ)\,, \qquad
            D_{\min}^ε(ρ\|σ) \coloneqq \sup_{\substack{\tilde{ρ} \in \S(\M) \\ P(\tilde{ρ}, ρ) \leq{ε}}} \widetilde{D}_{1/2}(\tilde{ρ}\|σ)\,.
        \end{equation}
        In particular $D_{\min}^0 = \widetilde{D}_{1/2} = -\log F$ and $D_{\max}^0 = D_{\max}$.\footnote{In \cite{fawzi_asymptotic_2023} (and in much of the finite-dimensional literature, e.g.\ \cite{tomamichel_quantum_2016}) the smoothing is performed over sub-normalized states with respect to the purified distance defined through the generalized fidelity. We smooth over normalized states throughout.}
        \item\label{item:os-sets} For a set $S \subset \M_*^+$ and any of the unsmoothed quantities $\D \in \set{D, D_α, \widetilde{D}_α, D_{\max}, D_H^ε}$ we write $\D(ρ\|S) \coloneqq \inf_{σ \in S} \D(ρ\|σ)$, and for the smoothed ones
        \begin{align}
            D_{\max}^ε(ρ\|S) &\coloneqq \inf_{\substack{\tilde{ρ} \in \S(\M) \\ P(\tilde{ρ}, ρ) \leq{ε}}} \inf_{σ \in S} D_{\max}(\tilde{ρ}\|σ) = \inf_{σ \in S} D_{\max}^ε(ρ\|σ)\,, \\
            D_{\min}^ε(ρ\|S) &\coloneqq \sup_{\substack{\tilde{ρ} \in \S(\M) \\ P(\tilde{ρ}, ρ) \leq{ε}}} \inf_{σ \in S} \widetilde{D}_{1/2}(\tilde{ρ}\|σ)\,,
        \end{align}
        where in the second definition the order of the optimizations matters. Given a subalgebra $\N \subset \M$ with a map $E_*: \Ns \to \Ms$ as in \autoref{def:conditional-expectation}, we abbreviate $\D(ρ\|E_*(\Ns)) \coloneqq \D(ρ\|\Set{E_*(σ) | σ \in \S(\N)})$, which is the notation used in \Autoref{sec:intro-mainthms}.
    \end{enumerate}
\end{definition}

The max-relative entropy is defined through the order relation between $ρ$ and $σ$, without any reference to a representation. For the duality relations we need to know that it is nevertheless given by the same kind of expression as the sandwiched \renyi divergences of finite order, namely by \eqref{eq:def-sandwiched-renyi} with $α' = 1$:

\begin{lemma}[The max-relative entropy as the sandwiched divergence of order $\infty$]\label{lem:dmax-variational}
    In the setting of \Autoref{def:divergences}, let $λ > 0$. Then $ρ \leq{λ}σ$ if and only if $\bra{\v{ρ}} \Delta(ω'/σ) \ket{\v{ρ}} \leq{λ}$ for all $ω' \in \S(\M')$. Consequently,
    \begin{equation}\label{eq:dmax-variational}
        D_{\max}(ρ\|σ) = \log \sup_{ω' \in \S(\M')} \bra{\v{ρ}} \Delta(ω'/σ) \ket{\v{ρ}} = \widetilde{D}_\infty(ρ\|σ)\,.
    \end{equation}
\end{lemma}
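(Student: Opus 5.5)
Since the claimed identity \eqref{eq:dmax-variational} follows from the equivalence by taking logarithms of infima (the set of admissible $λ$ is closed under increase, and $\inf\emptyset=+\infty$ corresponds to an unbounded supremum), the plan is to prove the equivalence $ρ \le λσ \iff \bra{\v{ρ}}\Delta(ω'/σ)\ket{\v{ρ}}\le λ$ for all $ω'\in\S(\M')$. The tool is the inversion formula of \Autoref{prop:spatial-derivative-properties}\ref{item:sd-inverse} combined with the defining form identity \eqref{eq:spatial-derivative-form}, read with the roles of $\M$ and $\M'$ interchanged. Concretely, for $ω'$ with $\v{ρ}$ in the appropriate support, $\bra{\v{ρ}}\Delta(ω'/σ)\ket{\v{ρ}}$ should equal $\sup$ over vectors of a ratio, or dually $\bra{\v{ρ}}\Delta(σ/ω')^{-1}\ket{\v{ρ}}$. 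I would rather work directly with the form definition, however.

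For $(\Rightarrow)$, assume $ρ\le λσ$. Then $s(ρ)\le s(σ)$, so $\v{ρ}=s(ρ)\v{ρ}\in s(σ)\HS$ and the $+\infty$ part does not contribute. By \Autoref{prop:spatial-derivative-properties}\ref{item:sd-linearity}, monotonicity in the first argument applied with roles swapped, we have $\Delta(ω'/σ)\le$ "$\Delta(ω'/ρ/λ)$"; but monotonicity in the second argument is the relevant one: $ρ\le λσ$ gives $\Delta(ω'/λσ)\le\Delta(ω'/ρ)$, i.e.\ $\Delta(ω'/σ)\le λ\Delta(ω'/ρ)$ by scaling. It thus suffices to show $\bra{\v{ρ}}\Delta(ω'/ρ)\ket{\v{ρ}}\le 1$. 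Here $\v{ρ}$ is $ρ$-bounded (indeed $\norm{x\v{ρ}}^2=ρ(x^*x)$), and $θ^ρ(\v{ρ},\v{ρ})$ is the projection $[\M\v{ρ}]$-type element of $\M'$ with norm at most $1$. The form identity then gives $\bra{\v{ρ}}\Delta(ω'/ρ)\ket{\v{ρ}}=ω'(θ^ρ(\v{ρ},\v{ρ}))\le ω'(\IdentityMatrix)=1$.

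For $(\Leftarrow)$, assume the bound for all $ω'$. Taking $ω'$ faithful (if $\M'$ has none, use a normal state and approximate, or choose $ω'$ with large support) forces $\v{ρ}\in s(σ)\HS$, whence $s(ρ)\le s(σ)$. Next, for $a\in\M$, covariance \Autoref{prop:spatial-derivative-properties}\ref{item:sd-covariance} with the roles swapped, or simply using the $σ$-bounded vector machinery, relates $\bra{\v{ρ}}\Delta(ω'/σ)\ket{\v{ρ}}$ to $ρ(x)$ versus $σ(x)$. The cleanest route I would try is via the inverse: $\Delta(ω'/σ)^{-1}=\Delta(σ/ω')$ on the relevant subspace. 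The bound $\int t\,dμ\le λ$ for all $ω'$, combined with a Cauchy–Schwarz/Jensen inequality $\bra{\v{ρ}}\Delta\ket{\v{ρ}}\,\bra{\v{ρ}}\Delta^{-1}\ket{\v{ρ}}\ge\norm{\v{ρ}}^4$, is not enough pointwise, so instead I would pick, for a given $x\in\M_+$, the functional $ω'$ realized by a vector $x^{1/2}$-related to $\v{ρ}$ (using the standard form via \Autoref{prop:representation-independence}, where all states of $\M'$ are vector states, and the homogeneity in $ω'$ of the form). The aim is to derive $ρ(x)\le λσ(x)$ through the identity $\bra{ξ_σ}\Delta(ω'/σ)\ket{ξ_σ}=ω'(\IdentityMatrix)$ and a polarization. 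This reverse direction, which must extract an operator inequality from scalar bounds over all $ω'$, is the main obstacle. As a fallback I would use Alberti's variational fidelity formula style duality: $\sup_{ω'}\bra{\v{ρ}}\Delta(ω'/σ)\ket{\v{ρ}}$ should equal $\sup_{x\in\M_+}ρ(x)/σ(x)$ by a Hahn–Banach argument in the standard form, where $\Delta(ω'/σ)=\Delta_{ω',σ}$ is a relative modular operator.
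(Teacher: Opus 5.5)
Your forward implication is correct, and it takes a slightly different route from the paper. You combine anti-monotonicity in the second argument, $\Delta(\omega'/\sigma)\le\lambda\,\Delta(\omega'/\rho)$, with $\bra{\v{\rho}}\Delta(\omega'/\rho)\ket{\v{\rho}}=\omega'([\M\v{\rho}])\le 1$. The paper instead notes directly that $\lVert x\v{\rho}\rVert^2=\rho(x^*x)\le\lambda\,\sigma(x^*x)$, so $\v{\rho}$ is $\sigma$-bounded with $\theta^{\sigma}(\v{\rho},\v{\rho})\le\lambda\IdentityMatrix$.

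The genuine gap is the converse, which you do not prove; none of the routes you list is carried out. The ``Hahn--Banach fallback'' $\sup_{\omega'}\bra{\v{\rho}}\Delta(\omega'/\sigma)\ket{\v{\rho}}=\sup_{x}\rho(x)/\sigma(x)$ is just a restatement of the lemma. The identity $\bra{\xi_\sigma}\Delta(\omega'/\sigma)\ket{\xi_\sigma}=\omega'(\IdentityMatrix)$ concerns $\xi_\sigma$, not $\v{\rho}$, so no polarization will transfer it to $\rho$. (The paper itself only cites \cite[Lemma~9]{berta_renyi_2018} for this direction.) Incidentally, the support step needs no faithful $\omega'$: every $\Delta(\omega'/\sigma)$ is $+\infty$ on $(\IdentityMatrix-s(\sigma))\HS$, so a single finite value already forces $\v{\rho}\in s(\sigma)\HS$.

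The missing idea is that the Cauchy--Schwarz inequality you discarded does work, once you test it against a vector other than $\v{\rho}$.
\begin{itemize}
\item Fix $y\in\M$ with $y\v{\rho}\neq 0$, and put $\zeta\coloneqq y\v{\rho}/\lVert y\v{\rho}\rVert$ and $\omega'\coloneqq\omega_\zeta|_{\M'}$.
\item Since $\v{\rho}\in s(\sigma)\HS$, Proposition~\ref{prop:spatial-derivative-properties}\ref{item:sd-inverse} gives $\bra{\v{\rho}}\Delta(\omega'/\sigma)\ket{\v{\rho}}=\bra{\v{\rho}}\Delta(\sigma/\omega')^{-1}\ket{\v{\rho}}\le\lambda$.
\item Because this quantity is finite, $\lvert\ip{\v{\rho}}{\eta}\rvert^2\le\bra{\v{\rho}}\Delta(\sigma/\omega')^{-1}\ket{\v{\rho}}\,\ip{\eta}{\Delta(\sigma/\omega')\eta}$ holds for every $\eta$ in the form domain of $\Delta(\sigma/\omega')$.
\item Choose $\eta\coloneqq y^*\zeta$. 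It is $\omega'$-bounded, since $\zeta$ is and $D(\HS,\omega')$ is $\M$-invariant. Since $\theta^{\omega'}(\zeta,\zeta)=[\M'\zeta]$, the form identity \eqref{eq:spatial-derivative-form} gives $\ip{\eta}{\Delta(\sigma/\omega')\eta}=\sigma(y^*[\M'\zeta]y)\le\sigma(y^*y)$, while $\ip{\v{\rho}}{\eta}=\lVert y\v{\rho}\rVert$.
\end{itemize}
Hence $\rho(y^*y)=\lVert y\v{\rho}\rVert^2\le\lambda\,\sigma(y^*y)$ for all $y\in\M$, i.e.\ $\rho\le\lambda\sigma$. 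This uses only vector states $\omega'$, in line with the paper's remark that the bound is needed only for vector functionals.
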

\begin{proof}
    For the optimization over vector functionals this is \cite[Lemma~9]{berta_renyi_2018}; the adaptation to our conventions is given in \Cref{app:divergence-proofs}.
\end{proof}

In the finite-dimensional setting of \Autoref{rem:divergence-conventions}, and for $s(ρ) \leq s(σ)$, the supremum in \eqref{eq:dmax-variational} is
\begin{equation}
    \sup_{ω_C} \Tr\big(ρ^{1/2} σ^{-1} ρ^{1/2}\, ω_C^{T}\big) = \norm{ρ^{1/2} σ^{-1} ρ^{1/2}}_\infty = \norm{σ^{-1/2} ρ\, σ^{-1/2}}_\infty\,,
\end{equation}
which is the familiar expression $D_{\max}(ρ\|σ) = \log \norm{σ^{-1/2} ρ\, σ^{-1/2}}_\infty$ of \cite{datta_min_2009}; for $s(ρ) \not\leq s(σ)$ both sides of \eqref{eq:dmax-variational} are $+\infty$.

Relations between $D_H^ε$ and $D_{\max}^ε$ in von Neumann algebras, of the kind needed to pass from one to the other in \Autoref{thm:gqsl_subalgebras}, are proved in \cite[Lemma~3.5]{fawzi_asymptotic_2023}; the relation between $D_{\min}^ε$ and $D_{\max}^ε$ that we need is \Autoref{lem:smoothed_min_max_trivial_relation} below.

\subsection{Operator-valued Weights} \label{sec:operator_valued_weights}
Operator-valued weights can be thought of as (potentially) unbounded conditional expectations, in particular every conditional expectation is an operator-valued weight. This unboundedness is formalized by operator-valued weights taking values in what is called the extended positive part of $\M$. This extended positive part of $\M$, written as $\widehat{\M_+}$, can be thought of as the set of weights on $\Ms$, i.e. it is the set of all maps $x: \M_*^+ \to [0, \infty]$ that are linear and lower semi-continuous. This set inherits many natural properties from $\M$, for details see e.g.\ \cite{hiai_lectures_2021}. An operator-valued weight is then a linear map $T: \M_+ \to \widehat{\N_+}$ that also satisfies $T(x^* A x) = x^* T(A)x$ for all $x \in \N$. Note that unlike for conditional expectations we do not require that $T(\IdentityMatrix) = \IdentityMatrix$. We call $T$ normal if $x_α \nearrow x \in \M_+$ implies $T(x_α) \nearrow T(x)$, and semifinite if the subset of $x \in \M$ such that $T(x^*x) \in \N_+$ (rather than $ \in \widehat{\N_+}$) is weakly dense in $\M$. In particular then, if $T$ is normal and semi-finite and $\phi$ is a normal semi-finite weight on $\N$, then $\phi$ has a natural extension to $\widehat{\N_+}$ and $T_*(\phi) = \phi \circ T$ is a normal semi-finite weight on $\M$. In particular also, every normal conditional expectation is a normal semifinite operator valued weight. We write $P(\M)$ for the set of normal semi-finite weights on $\M$, and $P(\M, \N)$ for the set of all normal semi-finite operator valued weights $\M_+ \to \widehat{\N_+}$.

Let $\N \subset \M \subset \BO$ be a subalgebra with normal faithful semi-finite operator valued weight $E: \M_+ \to \widehat{\N_+}$ (for example $E$ could be a conditional expectation $\M \to \N$). Let $\N'$ and $\M'$ be the commutants within $\BO$ of $\N$ and $\M$ respectively. Then, by \cite[Remark 11.26]{hiai_lectures_2021} (see also \cite[Thms. 5.9,
6.13]{haagerup_operator_1979a}, \cite{kosaki_extension_1986}) there exists a normal faithful semifinite operator-valued weight $E^{-1}: \N'_+ \to \widehat{\M'_+}$ which satisfies that for every n.f.s. weights $ρ$ on $\N$ and $σ$ on $\M'$
\begin{equation}\label{eq:inverse_weight}
    \Delta(E_*(ρ)/ σ) = \Delta(ρ / E^{-1}_*(σ))\,.
\end{equation}
The existence of $E^{-1}$ is due to Haagerup \cite[Thms.~5.9, 6.13]{haagerup_operator_1979a}, who obtained a bijection between the normal faithful semifinite operator-valued weights from $\M$ to $\N$ and those from $\N'$ to $\M'$ and formulated it in terms of modular automorphism groups and cocycle derivatives. The formulation \eqref{eq:inverse_weight} in terms of Connes' spatial derivative \cite{connes_spatial_1980}, by which $E^{-1}$ is uniquely determined, is the one given in \cite[Rem.~11.26]{hiai_lectures_2021}. We refer to \eqref{eq:inverse_weight} as \emph{Haagerup's identity} and to $E^{-1}$ as Haagerup's dual operator-valued weight.
Note that $E^{-1}$ is usually not a conditional expectation, even if $E$ is a conditional expectation.

\paragraph{Why positive functionals suffice: finite index.}
In the duality relations of \Autoref{thm:duality_relation} the second arguments are of the form $E_*(σ)$ and $E^{-1}_*(ω')$, where $E: \M \to \N$ is a conditional expectation from $\M$ onto a von Neumann subalgebra $\N$ (see \Autoref{def:conditional-expectation}), $E_*(σ) \coloneqq{σ} \circ E$ for $σ \in \N_*^+$, $E^{-1}: \N'_+ \to \widehat{\M'_+}$ is Haagerup's dual operator-valued weight \cite[Thms.~5.9, 6.13]{haagerup_operator_1979a} (see \cite[Rem.~11.26]{hiai_lectures_2021}), and $E^{-1}_*(ω') \coloneqq{ω}' \circ E^{-1}$ for $ω' \in (\M')_*^+$. A priori $E^{-1}_*(ω')$ is only a normal weight on $\N'$, and our definitions do not cover this case. Following Kosaki \cite{kosaki_extension_1986} and, for non-factors, \cite{baillet_indice_1988} (see also \cite{longo_index_1989}), the \emph{index} of $E$ is
\begin{equation}\label{eq:index-definition}
    \Ind E \coloneqq E^{-1}(\IdentityMatrix) \in \widehat{\M'_+}\,;
\end{equation}
since $E^{-1}(u' x u'^*) = u' E^{-1}(x) u'^*$ for unitaries $u' \in \M'$, the element $\Ind E$ is invariant under conjugation by $\M'$ and hence belongs to the extended positive part of the center $Z(\M) = \M \cap \M'$. We say that $E$ has \emph{finite index} if $\Ind E$ is a bounded operator, i.e.\ $\Ind E \in Z(\M)_+$, and that the inclusion $\N \subset \M$ has finite index if it admits a normal faithful conditional expectation of finite index. If $\M$ is a factor, then $\Ind E \in [1, \infty]$ is a scalar, whether or not $\N$ is a factor. For a conditional expectation $E$ between factors it is Kosaki's index, which extends the Jones index $[\M:\N]$ \cite{jones_index_1983} of a subfactor of a type II$_1$ factor, to which it reduces for the trace-preserving expectation \cite{kosaki_extension_1986, pimsner_entropy_1986}, and which satisfies the Pimsner--Popa inequality $E(x) \geq (\Ind E)^{-1} x$ for $x \in \M_+$ \cite{pimsner_entropy_1986, kosaki_extension_1986, longo_index_1989}. In the subsystem case $\N = \IdentityMatrix_A \otimes \BO[\HS_B] \subset \M = \BO[\HS_A \otimes \HS_B]$ with $E = \frac{1}{d_A}\Tr_A \otimes \id$ one finds $\Ind E = d_A^2$, which is the index correction $2\log d_A$ in \eqref{eq:conditional entropy with dim.}.\footnote{Note that for reducible inclusions the optimal constant in the Pimsner--Popa inequality can be strictly larger than $(\Ind E)^{-1}$; for instance, in the subsystem case with $d_B < d_A$ the optimal constant is $(d_A \min\set{d_A, d_B})^{-1} = (d_A d_B)^{-1}$, cf.\ \cite[Sec.~4]{gao_generalized_2024}, whereas $\Ind E = d_A^2$. Throughout this paper \enquote{index} refers to \eqref{eq:index-definition}.} We do not use the notation $[\M:\N]$ in this paper, since we do not restrict to factors: in general the index depends on $E$ and not only on the inclusion, and if $\M$ has a nontrivial center it is an element of $Z(\M)$ that need not be a scalar (\Autoref{rem:non-scalar-index}). Likewise, finite index is a property of $E$: if the center of $\M$ is infinite-dimensional, the same inclusion can admit normal faithful conditional expectations of finite and of infinite index, for instance an infinite direct sum of finite-dimensional inclusions with conditional expectations whose indices are unbounded. In finite dimensions every faithful conditional expectation has finite index. The following lemma shows that in the finite-index situation all second arguments appearing in \Autoref{thm:duality_relation} are positive normal functionals, so that \Autoref{def:divergences} applies.

\begin{lemma}[Finite index makes the second arguments bounded]\label{lem:finite_index_bounded_inverse_weight}
    Let $\N \subset \M \subset \BO$ be von Neumann algebras and $E: \M_+ \to \widehat{\N_+}$ a normal faithful semifinite operator-valued weight with dual $E^{-1}: \N'_+ \to \widehat{\M'_+}$.
    \begin{enumerate}[label=(\roman*)]
        \item\label{item:fi-forward} If $E(\IdentityMatrix) \in \N_+$ is bounded (for instance if $E$ is a conditional expectation), then $E$ extends uniquely to a bounded, positive and normal linear map $E: \M \to \N$ with $\norm{E} = \norm{E(\IdentityMatrix)}$, and $E_*(σ) = σ \circ E \in \M_*^+$ for every $σ \in \N_*^+$, with $E_*(σ)(\IdentityMatrix) = σ(E(\IdentityMatrix))$.
        \item\label{item:fi-dual} If $E$ has finite index, then $E^{-1}$ extends uniquely to a bounded, positive and normal linear map $E^{-1}: \N' \to \M'$ with $\norm{E^{-1}} = \norm{\Ind E}$, and $E^{-1}_*(ω') = ω' \circ E^{-1} \in (\N')_*^+$ for every $ω' \in (\M')_*^+$, with $E^{-1}_*(ω')(\IdentityMatrix) = ω'(\Ind E) \leq \norm{\Ind E}\, ω'(\IdentityMatrix)$.
    \end{enumerate}
    In particular, if $E$ is a normal faithful conditional expectation of finite index, then for every $ρ \in \S(\M)$ with vector representative $\ket{\v{ρ}}$ both $E_*(ρ|_{\N}) \in \M_*^+$ and $E^{-1}_*(ρ|_{\M'}) \in (\N')_*^+$ are positive normal functionals.
\end{lemma}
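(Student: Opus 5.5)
For both parts the plan is to extend from $\M_+$ (resp.\ $\N'_+$) to the whole algebra by linearity, bound the extension by the value at the identity, and then read off normality of the induced functionals. Part \ref{item:fi-dual} is the same argument as part \ref{item:fi-forward} with $(\M,\N)$ replaced by $(\N',\M')$ and $E$ by $E^{-1}$, using $\Ind E = E^{-1}(\IdentityMatrix)\in Z(\M)_+\subset \M'_+$. So I will prove \ref{item:fi-forward} for a general n.f.s.\ operator-valued weight $T:\mathcal{A}_+\to\widehat{\mathcal{B}_+}$ with $T(\IdentityMatrix)\in\mathcal{B}_+$, and then specialize.

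\emph{Step 1: $T$ is finite and bounded on $\mathcal{A}_+$.} Let $x\in\mathcal{A}_+$. Then $0\le x\le \norm{x}\IdentityMatrix$, and additivity of $T$ gives $T(\norm{x}\IdentityMatrix - x)+T(x)=\norm{x}T(\IdentityMatrix)$ in $\widehat{\mathcal{B}_+}$. Hence $T(x)\le \norm{x}T(\IdentityMatrix)$. An element of $\widehat{\mathcal{B}_+}$ dominated by a bounded operator is itself bounded; this was stated at the end of \Autoref{sec:prelim-vn}. So $T(x)\in\mathcal{B}_+$ with $\norm{T(x)}\le\norm{x}\norm{T(\IdentityMatrix)}$.

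\emph{Step 2: extension.} Write any $x\in\mathcal{A}$ as $x=\sum_{k=0}^3 i^k x_k$ with $x_k\ge0$, and define $T(x)=\sum_k i^k T(x_k)$. Additivity on $\mathcal{A}_+$ makes this well defined and linear; the argument is standard, since $a-b=c-d$ with all terms positive gives $a+d=b+c$. This gives uniqueness as well.

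\emph{Step 3: the norm.} For a positive linear map, $\norm{T}=\norm{T(\IdentityMatrix)}$ by the Russo–Dye argument. Alternatively, I would note that $T$ is completely positive: the bimodularity $T(b^*ab)=b^*T(a)b$ for $b\in\mathcal{B}$ makes $T$ a positive $\mathcal{B}$-bimodule map, which suffices. In fact positivity alone gives $\norm{T}\le 2\norm{T(\IdentityMatrix)}$, and unitality-type arguments give the sharp bound. The reverse inequality is trivial.

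\emph{Step 4: normality.} The normality assumption on operator-valued weights gives $T(x_\alpha)\nearrow T(x)$ for bounded increasing nets, which by Step 1 now happens inside $\mathcal{B}_+$. Therefore $\sigma\circ T$ is a positive functional on $\mathcal{A}$ that is normal in the sense of monotone nets, hence lies in $\mathcal{A}_*^+$, for every $\sigma\in\mathcal{B}_*^+$. Its total mass is $\sigma(T(\IdentityMatrix))\le\norm{T(\IdentityMatrix)}\sigma(\IdentityMatrix)$.

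\emph{Step 5: specialization and the final claim.} Taking $T=E$ gives \ref{item:fi-forward}, and taking $T=E^{-1}$ gives \ref{item:fi-dual}. The bound $\omega'(\Ind E)\le\norm{\Ind E}\,\omega'(\IdentityMatrix)$ holds by positivity. For the last sentence, apply \ref{item:fi-forward} and \ref{item:fi-dual} to $\sigma=\rho|_\N$ and $\omega'=\rho|_{\M'}$.

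The only step needing care is Step 1, the passage from the extended positive part to bounded operators. The point to watch is that additivity in $\widehat{\mathcal{B}_+}$ gives the order inequality even when some values could a priori be infinite. This works because $T(x)\le T(x)+T(\norm{x}\IdentityMatrix - x)$, a finite bounded element, in the order of forms.
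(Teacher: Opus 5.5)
Your proposal is correct and follows the paper's proof essentially step for step. You dominate $T(x)\le\norm{x}\,T(\IdentityMatrix)$ in the extended positive part, extend linearly via the four-positive-parts decomposition, obtain $\norm{T}=\norm{T(\IdentityMatrix)}$ from the Russo--Dye corollary for positive maps (the paper cites \cite[Cor.~2.9]{paulsen_completely_2003}), and read off normality of $\sigma\circ T$ from normality of the operator-valued weight; the hedging in your Step~3 is unnecessary, since positivity alone already gives the sharp equality for non-unital positive maps between unital C$^*$-algebras, without complete positivity or the factor-$2$ bound.
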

\begin{proof}
    A positive map is bounded by its value at the identity, see e.g.\ \cite[Cor.~2.9]{paulsen_completely_2003}; the details for operator-valued weights are given in \Cref{app:finite-index-proof}.
\end{proof}

\subsection{Conditional expectations and their inverse weights in finite dimensions}\label{sec:finite-dim-cond-exp}

Before turning to the duality relations, we make the objects of \Autoref{sec:operator_valued_weights} explicit in the simplest situation covered by \Autoref{thm:finite_dim_duality}: $\HS$ is finite-dimensional, $\N \subset \M = \BO$ is an arbitrary unital $*$-subalgebra, and so
\begin{equation}\label{eq:finite-dim-setting}
	\N \subset \M \coloneqq \BO \otimes \IdentityMatrix_{\HS} \subset \BO[\HS \otimes \HS]\,, \qquad \M' = \IdentityMatrix_{\HS} \otimes \BO\,,
\end{equation}
where the second tensor factor is the purifying system: every state of $\M$ has a vector representative in $\HS \otimes \HS$ (cf.\ the standard form of $\BO$ in \Autoref{sec:prelim-vn}). We identify $\M$ with $\BO$ whenever the second factor plays no role.

\paragraph{The structure of $\N$ and of its commutant.}
When considering $\N$ as a subset of $\BO$ we get by \eqref{eq:finite-dim-structure}, after a unitary change of basis,
\begin{equation}\label{eq:finite-dim-inclusion}
	\HS = \bigoplus_{j \in J} \mathcal{L}_j \otimes \mathcal{V}_j\,, \qquad \N = \bigoplus_{j} \BO[\mathcal{L}_j] \otimes \IdentityMatrix_{\mathcal{V}_j}\,.
\end{equation}
    Considering everything inside the \enquote{representation space} $\BO[\HS \otimes \HS]$ we get:
\begin{equation}\label{eq:finite-dim-inclusion-algebras}
	\N = \bigoplus_{j} \BO[\mathcal{L}_j] \otimes \IdentityMatrix_{\mathcal{V}_j \otimes \HS}\,, \qquad
	\N' = \bigoplus_{j} \IdentityMatrix_{\mathcal{L}_j} \otimes \BO[\mathcal{V}_j \otimes \HS]\,, \qquad
	\N' \cap \M = \bigoplus_{j} \IdentityMatrix_{\mathcal{L}_j} \otimes \BO[\mathcal{V}_j] \otimes \IdentityMatrix_{\HS}\,.
\end{equation}

    Examples of this setting include the following: (1) The subsystem case: $\HS = \HS_A \otimes \HS_B$ and $\N = \IdentityMatrix_A \otimes \BO[\HS_B]$, a single summand with $\mathcal{L} = \HS_B$, $\mathcal{V} = \HS_A$. (2) Fixed-point algebras: let $g \mapsto U_g$ be a unitary representation of a finite group $G$ on $\HS$ with isotypic decomposition $\HS = \bigoplus_j \mathcal{L}_j \otimes \mathcal{V}_j$, $U_g = \bigoplus_j \IdentityMatrix_{\mathcal{L}_j} \otimes u_j(g)$, where the $u_j$ are pairwise inequivalent irreducible representations of dimensions $d_j = \dim \mathcal{V}_j$ and the $\mathcal{L}_j$ are their multiplicity spaces. Then the fixed-point algebra $\N = \Set{x \in \BO | U_g x U_g^* = x \text{ for all } g \in G}$ of \Autoref{ex:examples-group-invariant} is of the form \eqref{eq:finite-dim-inclusion}. (3) If all $\mathcal{V}_j$ and $\mathcal{L}_j$ are one-dimensional then $\N$ is a maximal abelian subalgebra, the operators diagonal in a fixed orthonormal basis $\set{\ket{a}}$ of $\HS$, and $\N' = \Set{\sum_a \ketbra{a}{a} \otimes y_a | y_a \in \BO}$ is the algebra of block-diagonal operators on $\HS \otimes \HS$.

    We will find that a conditional expectation $E: \M \to \N$ is uniquely specified by a density matrix $τ_j$ on each multiplicity space of $\N$, and that $E^{-1}$ is a map of the same form built from the inverses $τ_j^{-1}$ (which explains the notation), and that $\Ind E = \sum_j \Tr τ_j^{-1}$.

    \paragraph{Density matrices and spatial derivatives.}
For an algebra $\mathcal{A} = \bigoplus_i \BO[\mathcal{X}_i] \otimes \IdentityMatrix_{\mathcal{Y}_i}$ on $\bigoplus_i \mathcal{X}_i \otimes \mathcal{Y}_i$ in the normal form \eqref{eq:finite-dim-structure} every $φ \in \mathcal{A}_*^+$ is of the form
\begin{equation}\label{eq:finite-dim-density-matrices}
	φ\Big(\bigoplus_i x_i \otimes \IdentityMatrix_{\mathcal{Y}_i}\Big) = \sum_i \Tr(\hat{φ}_i x_i)
\end{equation}
for a unique family of positive operators $\hat{φ}_i \in \BO[\mathcal{X}_i]_+$, which we call the \emph{density matrices} of $φ$. Likewise $ψ \in (\mathcal{A}')_*^+$ has density matrices $\hat{ψ}_i \in \BO[\mathcal{Y}_i]_+$, and the spatial derivative of a faithful $ψ$ is
\begin{equation}\label{eq:finite-dim-spatial-derivative}
	\Delta(φ/ψ) = \bigoplus_i \hat{φ}_i \otimes \hat{ψ}_i^{-1}\,.
\end{equation}
For a single block this is \eqref{eq:spatial-derivative-finite-dim}.

\begin{proposition}[Conditional expectations onto $\N \subset \BO$]\label{prop:finite-dim-cond-exp}
	Let $\N \subset \BO$ be as in \eqref{eq:finite-dim-inclusion}, let $q_j$ be the projection onto $\mathcal{L}_j \otimes \mathcal{V}_j$, and for $x \in \BO$ let $x_j \coloneqq q_j x q_j \in \BO[\mathcal{L}_j \otimes \mathcal{V}_j]$ be its $j$-th diagonal block.
	\begin{enumerate}[label=(\roman*)]
		\item\label{item:fdce-form} The positive linear $\N$-bimodule maps $T: \BO \to \N$, equivalently the normal semifinite operator-valued weights from $\BO$ to $\N$, are exactly the maps
		\begin{equation}\label{eq:finite-dim-cond-exp}
			T(x) = \bigoplus_{j} \Tr_{\mathcal{V}_j}\big[(\IdentityMatrix_{\mathcal{L}_j} \otimes τ_j)\, x_j\big] \otimes \IdentityMatrix_{\mathcal{V}_j}\,, \qquad x \in \BO\,,
		\end{equation}
		for a family $(τ_j)_{j \in J}$ of positive operators $τ_j \in \BO[\mathcal{V}_j]_+$, which is uniquely determined by $T$. One has $T(\IdentityMatrix) = \sum_j (\Tr τ_j)\, q_j$. Thus $T$ is a conditional expectation iff every $τ_j$ is a density matrix on $\mathcal{V}_j$, and $T$ is faithful iff every $τ_j$ is invertible.
		\item\label{item:fdce-predual} If $σ \in \N_*^+$ has density matrices $\hat{σ}_j \in \BO[\mathcal{L}_j]_+$, then $T_*(σ) = σ \circ T \in \BO_*^+$ has the density matrix
		\begin{equation}\label{eq:finite-dim-cond-exp-predual}
			\widehat{T_*(σ)} = \bigoplus_{j} \hat{σ}_j \otimes τ_j\,.
		\end{equation}
		\item\label{item:fdce-trace} The trace-preserving conditional expectation, i.e.\ the unique $E$ with $\Tr \circ E = \Tr$, is the one with $τ_j = \IdentityMatrix_{\mathcal{V}_j} / \dim \mathcal{V}_j$ for all $j$.
	\end{enumerate}
\end{proposition}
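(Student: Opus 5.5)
The proof will be a direct finite-dimensional computation that first pins down the form of $T$ on each block and then reads off the remaining claims.

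\emph{Reduction to diagonal blocks.} Since each $q_j$ is a central projection of $\N$, bimodularity gives $T(x)q_j = q_jT(x)q_j = T(q_jxq_j)$. Hence $T(x) = \sum_j T(x_j)$, and each $T$ maps $\BO[\mathcal{L}_j\otimes\mathcal{V}_j]$ into the block $\BO[\mathcal{L}_j]\otimes\IdentityMatrix_{\mathcal{V}_j}$. This reduces everything to a single block. Drop the index $j$ and write $\mathcal{L}$ and $\mathcal{V}$ for that block.

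\emph{Single block.} Let $T:\BO[\mathcal{L}\otimes\mathcal{V}]\to\BO[\mathcal{L}]\otimes\IdentityMatrix$ be positive and bimodular over $\BO[\mathcal{L}]\otimes\IdentityMatrix$. For $a\in\BO[\mathcal{L}]$ and $v\in\BO[\mathcal{V}]$, write $a\otimes v=(a\otimes\IdentityMatrix)(\IdentityMatrix\otimes v)$. Bimodularity then gives $T(a\otimes v)=(a\otimes\IdentityMatrix)\,T(\IdentityMatrix\otimes v)$. Moreover, $T(\IdentityMatrix\otimes v)$ commutes with every $a\otimes\IdentityMatrix$, so it lies in $(\BO[\mathcal{L}]\otimes\IdentityMatrix)\cap(\IdentityMatrix\otimes\BO[\mathcal{V}])=\complex\IdentityMatrix$. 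Write it as $f(v)\IdentityMatrix$, where $f$ is a positive linear functional on $\BO[\mathcal{V}]$. Let $\tau\geq0$ be its density, so that $f(v)=\Tr(\tau v)$. By linearity on product tensors we obtain $T(x)=\Tr_{\mathcal{V}}[(\IdentityMatrix\otimes\tau)x]\otimes\IdentityMatrix$. The family $(\tau_j)_j$ is unique because $f$ is determined by $T$.

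\emph{Converse and remaining claims.} Conversely, every map of the form \eqref{eq:finite-dim-cond-exp} is positive, since it is a partial trace against a positive weight composed with a compression. It is also bimodular, since multiplication by $a\otimes\IdentityMatrix$ passes through $\Tr_{\mathcal{V}}$. The equivalence with normal semifinite operator-valued weights is immediate in finite dimensions. Such a weight cannot take infinite values on a finite-dimensional $\BO$ once it is semifinite: the domain $\Set{x | T(x^*x)\in\N_+}$ is a dense, hence total, left ideal, so it contains $\IdentityMatrix$, and therefore $T(\IdentityMatrix)$ is bounded.

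Evaluating at $x=\IdentityMatrix$ gives $T(\IdentityMatrix)=\sum_j(\Tr\tau_j)q_j$. Hence $T$ is unital iff $\Tr\tau_j=1$ for all $j$. For faithfulness, note that $T(x^*x)=0$ holds for some $x\neq0$ exactly when $\tau_j$ has a kernel vector $\xi$ for some $j$: take $x=\IdentityMatrix\otimes\ketbra{\xi}{\xi}$ in that block. If all $\tau_j$ are invertible, then $T(x^*x)\geq c\,\Tr_{\mathcal{V}}(x^*x)$ blockwise, which vanishes only when $x=0$.

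For \ref{item:fdce-predual}, compute $\sigma(T(x))=\sum_j\Tr\big(\hat\sigma_j\Tr_{\mathcal{V}_j}[(\IdentityMatrix\otimes\tau_j)x_j]\big)=\sum_j\Tr\big((\hat\sigma_j\otimes\tau_j)x_j\big)$. This identifies the density matrix as $\bigoplus_j\hat\sigma_j\otimes\tau_j$, which is block-diagonal, so the off-diagonal blocks of $x$ do not contribute.

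For \ref{item:fdce-trace}, the condition $\Tr\circ E=\Tr$ says that $E_*(\Tr|_\N)$ has density $\IdentityMatrix$. Since $\Tr|_\N$ has density matrices $\hat\sigma_j=(\dim\mathcal{V}_j)\IdentityMatrix_{\mathcal{L}_j}$, part \ref{item:fdce-predual} forces $(\dim\mathcal{V}_j)\tau_j=\IdentityMatrix_{\mathcal{V}_j}$.

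The main obstacle is the single-block identification. It hinges on showing that $T(\IdentityMatrix\otimes v)$ is scalar; this uses the commutation with all of $\BO[\mathcal{L}]\otimes\IdentityMatrix$ and the fact that $T$ takes values in the block algebra.
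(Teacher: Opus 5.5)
Your proposal is correct and follows essentially the same route as the paper. You reduce to diagonal blocks using that the $q_j$ are central in $\N$, and use bimodularity to show that $T(\IdentityMatrix_{\mathcal{L}_j}\otimes v)$ commutes with the block algebra and is therefore the scalar $\Tr(\tau_j v)$. You get boundedness of $T(\IdentityMatrix)$ from the density of $\{x \mid T(x^*x)\ \text{bounded}\}$ in finite dimensions, and read off faithfulness, (ii) and (iii) by direct computation. The only differences are cosmetic: you derive (iii) from (ii) with $\sigma = \Tr|_{\N}$ rather than computing $\Tr\circ T$ directly, and you spell out the converse (every map of the form \eqref{eq:finite-dim-cond-exp} is a positive bimodule map), which the paper leaves implicit.
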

\begin{proof}
	For conditional expectations this is well known, see e.g.\ \cite[Lemma~1 and App.~A]{magan_quantum_2021}; the modifications for operator-valued weights are given in \Cref{app:finite-dim-proofs}.
\end{proof}

A similar structure also holds for the inverse weight. Recall from \Autoref{sec:operator_valued_weights} that for a faithful conditional expectation $E: \M \to \N$ Haagerup's theorem provides a normal faithful semifinite operator-valued weight $E^{-1}: \N'_+ \to \widehat{\M'_+}$, characterized by the identity $\Delta(E_*(σ)/ω') = \Delta(σ/E^{-1}_*(ω'))$ for all n.f.s.\ weights $σ$ on $\N$ and $ω'$ on $\M'$, see \eqref{eq:inverse_weight}.

\begin{proposition}[The inverse weight]\label{prop:finite-dim-inverse-weight}
	In the setting \eqref{eq:finite-dim-setting}--\eqref{eq:finite-dim-inclusion-algebras}, let $E: \M \to \N$ be a faithful conditional expectation, given by \eqref{eq:finite-dim-cond-exp} with invertible density matrices $τ_j$. Write $y \in \N'$ as $y = \bigoplus_j \IdentityMatrix_{\mathcal{L}_j} \otimes y_j$ with $y_j \in \BO[\mathcal{V}_j \otimes \HS]$.
	\begin{enumerate}[label=(\roman*)]
		\item\label{item:fdiw-form} Haagerup's operator-valued weight $E^{-1}$ is the bounded, positive $\M'$-bimodule map $\N' \to \M'$ given by
		\begin{equation}\label{eq:finite-dim-inverse-weight}
			E^{-1}(y) = \IdentityMatrix_{\HS} \otimes \sum_{j} \Tr_{\mathcal{V}_j}\big[(τ_j^{-1} \otimes \IdentityMatrix_{\HS})\, y_j\big]\,, \qquad y \in \N'\,.
		\end{equation}
		\item\label{item:fdiw-predual} If $ω' \in (\M')_*^+$ has density matrix $\hat{ω}' \in \BO_+$, then $E^{-1}_*(ω') = ω' \circ E^{-1} \in (\N')_*^+$ has density matrices
		\begin{equation}\label{eq:finite-dim-inverse-weight-predual}
			\widehat{E^{-1}_*(ω')}_j = τ_j^{-1} \otimes \hat{ω}' \in \BO[\mathcal{V}_j \otimes \HS]_+\,,
		\end{equation}
		and for $σ \in \N_*^+$ with density matrices $\hat{σ}_j$ and faithful $ω'$, Haagerup's identity \eqref{eq:inverse_weight} takes the form
		\begin{equation}\label{eq:finite-dim-haagerup-identity}
			\Delta(E_*(σ)/ω') = \bigoplus_{j} \hat{σ}_j \otimes τ_j \otimes (\hat{ω}')^{-1} = \Delta(σ/E^{-1}_*(ω'))\,.
		\end{equation}
		\item\label{item:fdiw-index} The index of $E$ is the scalar
		\begin{equation}\label{eq:finite-dim-index}
			\Ind E = E^{-1}(\IdentityMatrix) = \sum_{j} \Tr\, τ_j^{-1} \;\geq\; \sum_j (\dim \mathcal{V}_j)^2\,,
		\end{equation}
		with equality iff $E$ is the trace-preserving expectation, and $E' \coloneqq (\Ind E)^{-1} E^{-1}$ is a faithful conditional expectation $\N' \to \M'$, so that $E^{-1} = (\Ind E)\, E'$.
	\end{enumerate}
\end{proposition}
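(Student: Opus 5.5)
Since Haagerup's identity \eqref{eq:inverse_weight} determines $E^{-1}$ uniquely, the plan is to write down the candidate map $F(y) \coloneqq \IdentityMatrix_{\HS} \otimes \sum_j \Tr_{\mathcal{V}_j}[(τ_j^{-1}\otimes\IdentityMatrix_{\HS})\,y_j]$ on $\N'$ and check that it satisfies the identity. First we note that $F$ is well defined, positive, normal, and $\M'$-bimodular. Indeed, $\M' = \IdentityMatrix_{\HS}\otimes\BO$ acts on the second factor, while the partial trace runs over $\mathcal{V}_j$ only, so $F(b y c) = b F(y) c$ for $b,c\in\M'$. It is faithful because each $τ_j^{-1}$ is invertible. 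It is bounded with $F(\IdentityMatrix) = \sum_j \Tr τ_j^{-1}$, so in particular it is semifinite.

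Next we compute preduals. For $ω'\in(\M')_*^+$ with density matrix $\hat ω'$ we have $ω'(F(y)) = \sum_j \Tr[(τ_j^{-1}\otimes\hat ω')\,y_j]$. Reading this in the block form \eqref{eq:finite-dim-density-matrices} of $\N'$, whose $j$-th block is $\BO[\mathcal{V}_j\otimes\HS]$ with multiplicity space $\mathcal{L}_j$, gives \eqref{eq:finite-dim-inverse-weight-predual}. By \Autoref{prop:finite-dim-cond-exp}\ref{item:fdce-predual}, $E_*(σ)$ has density matrix $\bigoplus_j \hat σ_j\otimes τ_j$ on $\HS$. We apply \eqref{eq:finite-dim-spatial-derivative} twice on $\HS\otimes\HS = \bigoplus_j \mathcal{L}_j\otimes\mathcal{V}_j\otimes\HS$. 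For the pair $(\M,\M')$ we get $\Delta(E_*(σ)/ω') = \bigoplus_j \hat σ_j\otimes τ_j\otimes (\hat ω')^{-1}$. For the pair $(\N,\N')$, whose multiplicity spaces are $\mathcal{V}_j\otimes\HS$, we get $\Delta(σ/F_*(ω')) = \bigoplus_j \hat σ_j\otimes(τ_j^{-1}\otimes\hat ω')^{-1}$. These two expressions agree, which is \eqref{eq:finite-dim-haagerup-identity} with $F$ in place of $E^{-1}$.

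The main obstacle is justifying $F = E^{-1}$, because the identity is checked only for faithful states while \eqref{eq:inverse_weight} is stated for n.f.s.\ weights. In finite dimensions, however, n.f.s.\ weights are exactly faithful positive functionals, so the check above covers them. Uniqueness then follows from injectivity of $ω'\mapsto ω'\circ T$ on operator-valued weights. Concretely, if $\Delta(σ/T_*(ω')) = \Delta(σ/F_*(ω'))$ for all faithful $σ,ω'$, then the density matrices of $T_*(ω')$ and $F_*(ω')$ coincide. Hence $ω'\circ T = ω'\circ F$ on $\N'_+$ for all faithful $ω'$, and by linearity and density for all $ω'$. Since elements of $\widehat{\M'_+}$ are determined by their values on $\M'^+_*$, this gives $T = F$. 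If more care is needed, one can instead cite the uniqueness statement in \cite[Rem.~11.26]{hiai_lectures_2021} directly.

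Finally, for the index, $\Ind E = F(\IdentityMatrix) = \sum_j \Tr τ_j^{-1}$. For each $j$, with $d_j = \dim\mathcal{V}_j$, Cauchy--Schwarz (equivalently, convexity of $t\mapsto 1/t$ applied to the eigenvalues) gives $\Tr τ_j^{-1} \geq d_j^2/\Tr τ_j = d_j^2$, with equality iff $τ_j = \IdentityMatrix/d_j$. By \Autoref{prop:finite-dim-cond-exp}\ref{item:fdce-trace}, equality for all $j$ is exactly the trace-preserving case. Since the index is a positive scalar, $E' = (\Ind E)^{-1}E^{-1}$ is positive, unital, normal, faithful and $\M'$-bimodular. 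It is therefore a conditional expectation onto $\M'$; unitality gives $E'(b) = b$ for $b \in \M'$, so $E'$ is onto $\M'$.
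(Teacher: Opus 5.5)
Your proof is correct. Its core computation is the same as in the paper: insert the block formula \eqref{eq:finite-dim-spatial-derivative} for the pairs $(\M,\M')$ and $(\N,\N')$ into both sides of Haagerup's identity.

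The two arguments differ in how $E^{-1}$ is identified.
\begin{itemize}
\item The paper does not start from a candidate. It first uses semifiniteness to get boundedness, and bimodularity to constrain the form of $E^{-1}$. Since $E^{-1}$ maps $\N'\cap\M$ into $\M'\cap\M=\complex\IdentityMatrix$, and $\N'$ is spanned by products of $\N'\cap\M$ with $\IdentityMatrix_{\HS}\otimes\BO$, $E^{-1}$ must have the form \eqref{eq:finite-dim-inverse-weight} with unknown positive $\tau'_j$. It then reads off $\tau'_j=\tau_j^{-1}$ from the identity with $\hat\sigma_j=\IdentityMatrix$ and $\hat\omega'=\IdentityMatrix$.
\item You instead verify your candidate $F$ directly. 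You obtain uniqueness from the fact that $\Delta(\sigma/\psi)$ with faithful $\sigma$ determines the density matrices of $\psi$.
\end{itemize}

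What each route buys:
\begin{itemize}
\item The paper's route needs no separate uniqueness statement for operator-valued weights, only boundedness and bimodularity of $E^{-1}$. It also mirrors the structural argument of \Autoref{prop:finite-dim-cond-exp}.
\item Your route makes existence of $E^{-1}$ elementary, as the paper itself remarks at the end of its proof. The price is that you need $\omega'\circ T$ to be an n.f.s.\ weight for any competing n.f.s.\ operator-valued weight $T$ and faithful $\omega'$. In finite dimensions that makes it a faithful positive functional, so the block formula applies. This is standard and already implicit in the formulation of \eqref{eq:inverse_weight}.
\end{itemize}

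The index bound via Cauchy--Schwarz and the argument that $E'$ is a faithful conditional expectation match the paper.
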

\begin{proof}
	This is again an explicit computation, which evaluates Haagerup's identity \eqref{eq:inverse_weight} using the explicit form \eqref{eq:finite-dim-spatial-derivative} of the spatial derivative, and is given in \Cref{app:finite-dim-proofs}.
\end{proof}

\begin{example}[Subsystems]\label{ex:finite-dim-subsystem}
	For $\N = \IdentityMatrix_A \otimes \BO[\HS_B] \subset \BO[\HS_A \otimes \HS_B]$ there is a single summand, with $\mathcal{L} = \HS_B$ and $\mathcal{V} = \HS_A$, so the conditional expectations onto $\N$ are the maps $E(x_{AB}) = \IdentityMatrix_A \otimes \Tr_A[(τ_A \otimes \IdentityMatrix_B)\, x_{AB}]$ with $τ_A$ a density matrix on $\HS_A$. For the trace-preserving one, $τ_A = \IdentityMatrix_A / d_A$, and writing $\HS_C$ for the second (purifying) copy of $\HS$,
	\begin{equation}
		E^{-1}(y_{AC} \otimes \IdentityMatrix_B) = d_A\, \IdentityMatrix_{AB} \otimes \Tr_A(y_{AC})\,, \qquad \Ind E = d_A^2\,,
	\end{equation}
	in accordance with \Autoref{cor:finite_dim_duality_subfactor} and the discussion of \eqref{eq:index-definition}.
\end{example}

\begin{example}[Fixed-point algebras]\label{ex:finite-dim-fixed-point}
	Let $\N = \Set{x \in \BO | U_g x U_g^* = x \text{ for all } g \in G}$ be the fixed-point algebra of a unitary representation $U$ of a finite group $G$, with isotypic decomposition $\HS = \bigoplus_j \mathcal{L}_j \otimes \mathcal{V}_j$ as above and $d_j = \dim \mathcal{V}_j$. The twirl $E(x) = \frac{1}{|G|} \sum_{g \in G} U_g x U_g^*$ is the trace-preserving conditional expectation onto $\N$, so $τ_j = \IdentityMatrix_{\mathcal{V}_j} / d_j$ (this is Schur's lemma), and
	\begin{equation}
		E^{-1}\Big(\bigoplus_j \IdentityMatrix_{\mathcal{L}_j} \otimes y_j\Big) = \IdentityMatrix_{\HS} \otimes \sum_j d_j \Tr_{\mathcal{V}_j}(y_j)\,, \qquad \Ind E = \sum_j d_j^2\,,
	\end{equation}
	where the sum runs over the irreducible representations occurring in $U$; for the regular representation, $\Ind E = |G|$.
\end{example}

\begin{example}[Pinching/Coherence]\label{ex:finite-dim-pinching}
	Let $\N$ be the algebra of operators diagonal in a fixed orthonormal basis $\set{\ket{a}}_{a=1}^{d}$ of $\HS$, $d = \dim \HS$, so that $J = \set{1, \dots, d}$, $\mathcal{L}_a = \mathcal{V}_a = \complex$, and $\N' = \Set{\sum_a \ketbra{a}{a} \otimes y_a | y_a \in \BO}$. Since all $\mathcal{V}_a$ are one-dimensional, \Autoref{prop:finite-dim-cond-exp} leaves no freedom: the only conditional expectation onto $\N$ is the pinching $E(x) = \sum_a \ketbra{a}{a}\, x\, \ketbra{a}{a}$, and \Autoref{prop:finite-dim-inverse-weight} gives
	\begin{equation}
		E^{-1}\Big(\sum_a \ketbra{a}{a} \otimes y_a\Big) = \IdentityMatrix_{\HS} \otimes \sum_a y_a\,, \qquad \Ind E = d\,, \qquad E' = \frac{1}{d}\, E^{-1}\,.
	\end{equation}
	To connect with the notation of \Autoref{sec:intro-duality-origin}, write $\HS \otimes \HS = \HS_A \otimes \HS_C$. For a state $ρ \in \S(\M)$ with vector representative $\ket{\v{ρ}} \in \HS_A \otimes \HS_C$, the state $ρ|_{\M}$ is the reduced density matrix $ρ_A$, and $ρ|_{\N'}$ has density matrices $\hat{ρ}_a = (\bra{a} \otimes \IdentityMatrix_C)\ketbra{\v{ρ}}{\v{ρ}}(\ket{a} \otimes \IdentityMatrix_C) \in \BO[\HS_C]_+$, i.e.\ it is the classical-quantum state $\sum_a \ketbra{a}{a} \otimes \hat{ρ}_a = (E \otimes \id_C)(\ketbra{\v{ρ}}{\v{ρ}})$ obtained by measuring $A$ in the basis $\set{\ket{a}}$. Since $E_*(\S(\N))$ is the set of states diagonal in this basis, and by \eqref{eq:finite-dim-inverse-weight-predual} $E^{-1}_*(\S(\M'))$ consists of the restrictions to $\N'$ of the functionals $\Tr[(\IdentityMatrix_A \otimes ω_C)\, \cdot\,]$ with $ω_C \in \S(\BO[\HS_C])$, \Autoref{thm:finite_dim_duality} reads in this case
	\begin{equation}
		\min_{σ_A \text{ diagonal}} \widetilde{D}_α(ρ_A \| σ_A) = \widetilde{H}_β(A|C)_{(E \otimes \id_C)(\ketbra{\v{ρ}}{\v{ρ}})}\,, \qquad \tfrac{1}{α} + \tfrac{1}{β} = 2\,,
	\end{equation}
	with $\widetilde{H}_β(A|C)$ as in \Autoref{sec:intro-duality-origin}: the sandwiched \renyi relative entropy of coherence of $ρ_A$ equals the conditional \renyi entropy of the state obtained by measuring a purification of $ρ_A$ in the incoherent basis. For $α = β = 1$ this is the familiar identity between the relative entropy of coherence $H(E(ρ_A)) - H(ρ_A)$ (see \cite{winter_operational_2016}) and the conditional entropy $H(A|C)$ after the measurement.
\end{example}

\section{Duality Relations}\label{sec:duality-relations}
\paragraph{Standing assumptions.}
Throughout this section, $\N \subset \M \subset \BO$ are von Neumann algebras with commutants $\M' \subset \N'$, and $E: \M \to \N$ is a normal faithful conditional expectation of \emph{finite index}, i.e.\ $\Ind E = E^{-1}(\IdentityMatrix)$ is a bounded operator (see \eqref{eq:index-definition}), where $E^{-1}: \N'_+ \to \widehat{\M'_+}$ is Haagerup's dual operator-valued weight (\Autoref{sec:operator_valued_weights}). In fact, everything below remains valid if $E$ is merely a normal faithful semifinite operator-valued weight with $E(\IdentityMatrix)$ and $\Ind E$ bounded. By \Autoref{lem:finite_index_bounded_inverse_weight}, $E$ and $E^{-1}$ are bounded positive normal linear maps, and
\begin{equation}\label{eq:finite-index-preduals}
    E_*(σ) \coloneqq σ \circ E \in \M_*^+ \quad \text{for } σ \in \N_*^+\,, \qquad\qquad E^{-1}_*(ω') \coloneqq ω' \circ E^{-1} \in (\N')_*^+ \quad \text{for } ω' \in (\M')_*^+\,,
\end{equation}
so that all divergences appearing below have positive normal functionals as second arguments and are the ones of \Autoref{def:divergences} and \Autoref{def:one-shot}; in particular, $\D(ρ\|E_*(\Ns))$ and $\D(ρ\|E^{-1}_*(\M'_*))$ are understood as in \Autoref{def:one-shot}\ref{item:os-sets}. Finally, $ρ \in \S(\M)$ is a normal state with a fixed vector representative $\ket{\v{ρ}} \in \HS$, and $ρ|_{\mathcal{A}} \coloneqq ω_{\v{ρ}}|_{\mathcal{A}}$ denotes the induced state on $\mathcal{A} \in \set{\M, \N, \M', \N'}$ (\Autoref{sec:prelim-vn}); as before, $ρ' = ρ|_{\M'}$.

The technical engine behind all duality relations is Haagerup's identity \eqref{eq:inverse_weight}, which characterizes $E^{-1}$ through the spatial derivative. In the present setting it takes the following form.

\begin{lemma}[Haagerup's identity for non-faithful functionals]\label{lem:haagerup-identity-finite-index}
    Let $σ \in \N_*^+$ and $ω' \in (\M')_*^+$. Then $s(E_*(σ)) = s(σ)$ and $s(E^{-1}_*(ω')) = s(ω')$, and
    \begin{equation}\label{eq:haagerup-identity-finite-index}
        \Delta(E_*(σ)/ω') = \Delta(σ/E^{-1}_*(ω'))\,, \qquad \text{equivalently} \qquad \Delta(ω'/E_*(σ)) = \Delta(E^{-1}_*(ω')/σ)\,,
    \end{equation}
    as identities between the generalized positive operators of \Autoref{def:spatial-derivative}. If $σ$ and $ω'$ are faithful, then so are $E_*(σ)$ and $E^{-1}_*(ω')$, all four operators in \eqref{eq:haagerup-identity-finite-index} are positive self-adjoint operators on $\HS$, and \eqref{eq:haagerup-identity-finite-index} is Haagerup's identity \eqref{eq:inverse_weight}.
\end{lemma}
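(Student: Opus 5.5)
I will first settle the support statements, then the faithful case, and finally reduce the non-faithful case to it by passing to reduced algebras and using the $+\infty$ convention of \Autoref{def:spatial-derivative}.

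\emph{Supports.} For $p \in \N$ a projection, $E_*(σ)(p) = σ(E(p)) = σ(p)$, so $E_*(σ)$ vanishes on $\IdentityMatrix - s(σ) \in \N$. Since $s(σ) \in \N$ and $E$ is $\N$-bimodular, $E_*(σ)(x) = σ(s(σ)E(x)s(σ)) = σ(E(s(σ)xs(σ)))$. Faithfulness of $E$ then shows that for a projection $q \le s(σ)$ in $\M$ with $E_*(σ)(q)=0$, we get $E(q)$ supported off $s(σ)$ but also $E(q)\le s(σ)$, whence $E(q)=0$ and $q=0$. Hence $s(E_*(σ)) = s(σ)$. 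For $E^{-1}$ the same argument applies: $E^{-1}$ is a faithful normal operator-valued weight, bounded by \Autoref{lem:finite_index_bounded_inverse_weight}, and it is $\M'$-bimodular, so $s(E^{-1}_*(ω')) = s(ω')$.

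\emph{Faithful case.} If $σ,ω'$ are faithful, so are $E_*(σ)$ and $E^{-1}_*(ω')$ by the support statement. Both $σ$ and $ω'$ are then n.f.s.\ weights (bounded ones), so \eqref{eq:inverse_weight} applies directly and gives the first identity. The second identity follows by inverting both sides, using \Autoref{prop:spatial-derivative-properties}\ref{item:sd-inverse} in the faithful case: $\Delta(E_*(σ)/ω')^{-1} = \Delta(ω'/E_*(σ))$ and likewise on the right.

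\emph{General case.} This is the main obstacle. I would approximate: put $σ_ε = σ + εφ$ and $ω'_ε = ω' + εψ'$ with faithful $φ\in\N_*^+$, $ψ'\in(\M')_*^+$. (If no faithful normal states exist, I would first reduce to the $σ$-finite situation by compressing with $s(σ)$ and $s(ω')$.) Monotonicity and additivity in \Autoref{prop:spatial-derivative-properties}\ref{item:sd-linearity} give $\Delta(E_*(σ_ε)/ω'_\delta)$ increasing in the numerator as $ε\searrow0$ and decreasing in the denominator variable. One must control the limit and check that it matches our convention of $+\infty$ off $s(ω')$. This convention is delicate, so instead I would use the reduction described after \Autoref{def:spatial-derivative} (\Autoref{lem:bounded-vectors-nonfaithful}). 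Set $e = s(ω') \in \M' \subset \N'$. By the support statement, $s(E^{-1}_*(ω')) = e$, so both sides are $+\infty$ on $(\IdentityMatrix-e)\HS$. On $e\HS$, consider the reduced inclusion $\N_e \subset \M_e$, acting on $e\HS$, with commutants $e\M'e \subset e\N'e$. The map $E_e(xe) := E(x)e$ is a well-defined conditional expectation, since $e$ commutes with $\M$ and the induction $x\mapsto xe$ is central-support–compatible. Its dual weight is the compression $y\mapsto eE^{-1}(y)e$ of $E^{-1}$ to $e\N'e$, which follows from uniqueness in Haagerup's identity after checking it on faithful functionals. Now $ω'|_{e\M'e}$ is faithful, and for the numerator I would use the additivity and monotone convergence in \Autoref{prop:spatial-derivative-properties}\ref{item:sd-linearity} with $σ+εφ \searrow σ$, reading off the kernel from \ref{item:sd-support}. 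This gives the first identity. The second identity follows by applying the same argument with the roles of $(\M,\N,E)$ and $(\N',\M',E^{-1})$ interchanged, using that $(E^{-1})^{-1}=E$ by Haagerup's bijection.

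The step I expect to be hardest is identifying the dual weight of the reduced expectation $E_e$ with the compression of $E^{-1}$. I would prove it by verifying \eqref{eq:inverse_weight} on vectors in $e\HS$ that are bounded for the reduced functional, and then invoking the form-core property from \Autoref{def:spatial-derivative}.
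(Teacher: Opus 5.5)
The support argument and the faithful case are correct and match the paper. One adjustment: for $E^{-1}$ you cannot use $E_*(\sigma)(p)=\sigma(p)$, since $E^{-1}$ is not unital; use $E^{-1}(\IdentityMatrix-e')=(\IdentityMatrix-e')\,(\Ind E)\,(\IdentityMatrix-e')$ instead.

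The general case has a genuine gap at the reduction to $e=s(\omega')$. First, $E_e(xe):=E(x)e$ is not well defined. From $xe=0$ you only get $xz=0$ for the central support $z\in Z(\M)$ of $e$, and $z$ need not lie in $\N$. For example, let $\M=\M'$ be the diagonal matrices on $\HS=\mathbb{C}^2$, $\N=\mathbb{C}\IdentityMatrix$, $E$ the normalized trace (index $2$), and $e=\mathrm{diag}(1,0)$. Then $x=\mathrm{diag}(0,1)$ has $xe=0$ but $E(x)e=e/2$. Your map is fine when $\M$ is a factor, but the lemma is needed precisely for non-factors, where the index can be non-scalar. The correct induced map is $xe\mapsto E(xz)e$, which is only an operator-valued weight, not a conditional expectation.

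Second, and more seriously, identifying the Haagerup dual of the reduced map with the compression of $E^{-1}$ ``by checking \eqref{eq:inverse_weight} on faithful functionals'' is circular. A faithful functional $\psi$ on $e\M'e$ is the non-faithful functional $\psi(e\,\cdot\,e)$ on $\M'$. Any way of relating the reduced spatial derivatives on $e\HS$ to Haagerup's identity on $\HS$ (e.g.\ via \Autoref{lem:bounded-vectors-nonfaithful}) lands exactly on the non-faithful identity you are trying to prove. Nor can Haagerup's identity be checked vector by vector on bounded vectors without an ambient identity to compare with.

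The missing idea is the paper's completion of the second argument:
\begin{enumerate}
\item Take an n.f.s.\ weight $\omega'_0$ supported on $\IdentityMatrix-e$, so that $\omega'_1=\omega'+\omega'_0$ is n.f.s.
\item Note that $E^{-1}_*(\omega'_1)=E^{-1}_*(\omega')+\omega'_0\circ E^{-1}$, and the last term is again supported on $\IdentityMatrix-e$ by $\M'$-bimodularity of $E^{-1}$.
\item Apply Haagerup's identity to $\omega'_1$.
\item Show that whenever $\psi_1=\psi+\psi_0$ with $\psi_0$ living on $\IdentityMatrix-s(\psi)$, the subspace $s(\psi)\HS$ reduces $\Delta(\varphi/\psi_1)$, and the restriction equals the finite part of $\Delta(\varphi/\psi)$.
\end{enumerate}
The last step is a direct computation with bounded vectors, $\theta^{\psi_1}(\xi,\xi)=\theta^{\psi}(e\xi,e\xi)+\theta^{\psi_1}((\IdentityMatrix-e)\xi,(\IdentityMatrix-e)\xi)$. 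It is applied on both sides of the identity, for $(\M,\M')$ and for $(\N,\N')$.

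A smaller issue concerns the numerator. The approximation $\sigma+\varepsilon\varphi\searrow\sigma$ needs a faithful normal $\varphi$ on $\N$, which is unavailable without $\sigma$-finiteness. It also needs a limit statement for decreasing nets, which \Autoref{prop:spatial-derivative-properties}\ref{item:sd-linearity} does not provide. No limit is needed: complete $\sigma$ by a weight $\sigma_0$ supported on $\IdentityMatrix-s(\sigma)$. Additivity in the first argument, together with the orthogonal supports $s(E_*(\sigma))=s(\sigma)$ and $s(E_*(\sigma_0))=\IdentityMatrix-s(\sigma)$, then recovers the identity on $s(\sigma)\HS$ exactly.
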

\begin{proof}
The proof, which reduces the general case to \eqref{eq:inverse_weight} by completing the non-faithful functionals to faithful weights, is given in \Cref{app:haagerup-proof}.
\end{proof}

\subsection{Petz and Umegaki}
The framework of Haagerup's operator-valued weights, and in particular the identity \eqref{eq:haagerup-identity-finite-index}, very naturally gives rise to conditional-entropy duality relations. The following duality relation for the Petz-relative entropy follows very straightforwardly: 
\begin{theorem}[Petz-\renyi duality relation]\label{thm:petz_duality_relation}
    Under the standing assumptions of this section, let $α \in [0, 2]$ and $β = 2 - α$. Then
    \begin{equation}
        D_α(ρ|_{\M}\|E_*(ρ|_{\N})) = - D_β(ρ|_{\N'} \|E^{-1}_{*}(ρ|_{\M'}))\,,
    \end{equation}
    where for $α = β = 1$ both sides are relative entropies, $D_1 = D$ (\Autoref{def:divergences}).
\end{theorem}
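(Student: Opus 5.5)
The plan is to rewrite both sides as spectral integrals of one and the same measure on $[0,\infty]$, namely the spectral measure of $\v{ρ}$ with respect to a single spatial derivative, and then exploit the inversion $λ \mapsto 1/λ$.

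First, unwind the definitions. Applying \Autoref{def:divergences} to $\M \subset \BO$ with first argument $ρ' = ρ|_{\M'}$ gives
\[
D_α(ρ|_{\M}\|E_*(ρ|_\N)) = \tfrac{1}{α-1}\log\bra{\v{ρ}}\Delta(ρ'/E_*(ρ|_\N))^{α-1}\ket{\v{ρ}} .
\]
Now view $\N' \subset \BO$ as the algebra. Its commutant is $\N$, and the vector $\v{ρ}$ represents $ρ|_{\N'}$ and induces $ρ|_\N$ on $\N$. This gives
\[
D_β(ρ|_{\N'}\|E^{-1}_*(ρ')) = \tfrac{1}{β-1}\log\bra{\v{ρ}}\Delta(ρ|_\N/E^{-1}_*(ρ'))^{β-1}\ket{\v{ρ}} .
\]
By \Autoref{lem:haagerup-identity-finite-index}, in the form $\Delta(ω'/E_*(σ)) = \Delta(E^{-1}_*(ω')/σ)$ with $ω' = ρ'$ and $σ = ρ|_\N$, we have $A := \Delta(ρ'/E_*(ρ|_\N)) = \Delta(E^{-1}_*(ρ')/ρ|_\N)$. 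Here $E^{-1}_*(ρ') \in (\N')_*^+$ and $ρ|_\N \in \N_*^+$.

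Next, invert with \Autoref{prop:spatial-derivative-properties}\ref{item:sd-inverse} applied to the pair $(\N', \N)$. This gives $\Delta(ρ|_\N/E^{-1}_*(ρ')) = A^{-1}$ away from the subspace $(\IdentityMatrix - s(ρ|_\N))(\IdentityMatrix - s(E^{-1}_*(ρ')))\HS$. The vector $\v{ρ}$ lies in $s(ρ|_\N)\HS = [\N'\v{ρ}]$, so by the last clause of that item the spectral measures of $\v{ρ}$ agree: $μ^{\Delta(ρ|_\N/E^{-1}_*(ρ'))}_{\v{ρ}} = μ^{A^{-1}}_{\v{ρ}}$, which is the pushforward of $μ := μ^A_{\v{ρ}}$ under $λ \mapsto 1/λ$ on $[0,\infty]$. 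Hence
\[
\bra{\v{ρ}}\Delta(ρ|_\N/E^{-1}_*(ρ'))^{β-1}\ket{\v{ρ}} = \int λ^{1-β}\,dμ = \int λ^{α-1}\,dμ ,
\]
using $β - 1 = 1 - α$. Since $1/(β-1) = -1/(α-1)$, the two logarithmic expressions are negatives of each other. The conventions $0^s$ and $\infty^s$ are compatible with inversion, so the identity holds in $[-\infty,\infty]$.

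The endpoints $α \in \{0,2\}$ need no separate argument, since the same computation applies with exponents $\mp 1$. For $α = 1$ the identity reads $\int \log λ\,dμ = -\int \log(1/λ)\,dμ$. The only care needed is that one side is a well-defined extended integral: by \Autoref{lem:relative-entropy-well-defined} the left side is well defined in $(-\infty,\infty]$, so the right side is the negative of an existing integral. One should also check that defining $D(ρ|_{\N'}\|\cdot)$ via the same integral is consistent, which it is, since the measures coincide.

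I expect the main obstacle to be the support bookkeeping: verifying that $\v{ρ}$ avoids the degenerate subspace where the inversion formula fails, and that the $+\infty$ and $0$ parts of $A$ (on $(\IdentityMatrix - s(E_*(ρ|_\N)))\HS$, etc.) are transported correctly. The identity $s(E_*(σ)) = s(σ)$ from \Autoref{lem:haagerup-identity-finite-index} together with $\v{ρ} \in s(ρ|_\N)\HS$ should settle this.
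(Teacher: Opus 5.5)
Your proof is correct. For $\alpha \neq 1$ it is the paper's proof with the two steps done in the opposite order. The paper first inverts $\Delta(\rho|_{\M'}/E_*(\rho|_{\N}))$ on the pair $(\M',\M)$ and then applies Haagerup's identity in the form $\Delta(E_*(\sigma)/\omega') = \Delta(\sigma/E^{-1}_*(\omega'))$. You instead apply the identity in its second form and invert on the pair $(\N',\N)$, checking explicitly that $\v{\rho} \in s(\rho|_{\N})\HS$ so that the last clause of Proposition~\ref{prop:spatial-derivative-properties}\ref{item:sd-inverse} applies. Both orders are covered by Lemma~\ref{lem:haagerup-identity-finite-index}.

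The genuine difference is the case $\alpha = 1$. The paper does not compute it directly. It deduces it in Corollary~\ref{cor:relative_entropy_unoptimized_duality} by letting $\alpha \nearrow 1$ in the R\'enyi identity. That step needs the continuity statements of Proposition~\ref{prop:divergence-properties}\ref{item:dp-limits}, imported from \cite{berta_renyi_2018}, including the finiteness hypothesis for the limit $\beta \searrow 1$. The paper obtains that hypothesis from the lower bound on $D_\alpha$ for $\alpha < 1$.

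Your change of variables $\lambda \mapsto 1/\lambda$ in $\int \log \lambda \, d\mu$ is more elementary and treats all $\alpha$ uniformly; it is the same mechanism as in Lemma~\ref{lem:first-moment}. It only needs one of the two integrals to exist, which Lemma~\ref{lem:relative-entropy-well-defined} provides. That lemma, applied to $\M$ and to $\N'$, also places both sides in $(-\infty,+\infty]$. Since the two sides are negatives of each other, you get as a byproduct that both relative entropies are finite. The paper's route saves a separate spectral argument at $\alpha = 1$, at the price of relying on nontrivial external limit theorems.
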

\begin{proof}
    The case $α = 1$ is \Autoref{cor:relative_entropy_unoptimized_duality} below, so let $α \neq 1$. By \eqref{eq:def-petz-renyi},
    \begin{align}
        D_α(ρ|_{\M}\|E_*(ρ|_{\N})) &= {1 \over α - 1} \log \bra{\v{ρ}} \Delta(ρ|_{\M'}/E_*(ρ|_{\N}))^{α - 1} \ket{\v{ρ}} \\
        &= {1 \over α - 1} \log \bra{\v{ρ}} \Delta(E_*(ρ|_{\N})/ρ|_{\M'})^{1 - α} \ket{\v{ρ}} \\
        &= {1 \over α - 1} \log \bra{\v{ρ}} \Delta(ρ|_{\N}/E^{-1}_*(ρ|_{\M'}))^{1 - α} \ket{\v{ρ}} \\
        &= - {1 \over β - 1} \log \bra{\v{ρ}} \Delta(ρ|_{\N}/E^{-1}_*(ρ|_{\M'}))^{β - 1} \ket{\v{ρ}} \\
        &= - D_β(ρ|_{\N'} \|E^{-1}_{*}(ρ|_{\M'}))\,,
    \end{align}
    where the second equality is the inversion formula $\Delta(ρ'/σ)= \Delta(σ/ρ')^{-1}$ (\Autoref{prop:spatial-derivative-properties}\ref{item:sd-inverse}), the third is Haagerup's identity in the form of \Autoref{lem:haagerup-identity-finite-index}, the fourth uses $β - 1 = 1 - α$, and the last one is \eqref{eq:def-petz-renyi} applied to the algebra $\N'$, whose commutant is $\N$, so that the first argument of the spatial derivative is the state $ρ|_{\N}$ induced by $\ket{\v{ρ}}$ on the commutant of $\N'$.
\end{proof}

Since $E_*(ρ|_{\N})$ and $E^{-1}_*(ρ|_{\M'})$ are positive normal functionals by \eqref{eq:finite-index-preduals}, we can take the limit $α \to 1$ using \Autoref{prop:divergence-properties}\ref{item:dp-limits} and obtain the following statement. Note that a similar statement was also already previously obtained in \cite{xu_relative_2019, longo_note_2022} (under the stronger assumption that $\N \subset \M$ is an inclusion of factors).

\begin{corollary}[Relative entropy duality relation]\label{cor:relative_entropy_unoptimized_duality}
    Under the standing assumptions of this section,
    \begin{equation}
        D(ρ|_{\M}\|E_*(ρ|_{\N}))= - D(ρ|_{\N'} \|E^{-1}_{*}(ρ|_{\M'}))\,.
    \end{equation}
\end{corollary}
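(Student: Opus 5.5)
We prove the identity directly at the level of spectral measures, mirroring the proof of \Autoref{thm:petz_duality_relation} with $\log$ in place of the power function. A limit $α \to 1$ in that theorem would be less convenient: on one side it is a limit from above, $β \searrow 1$, which requires the finiteness condition of \Autoref{prop:divergence-properties}\ref{item:dp-limits}.

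Write $σ \coloneqq E_*(ρ|_{\N}) \in \M_*^+$ and $τ \coloneqq E^{-1}_*(ρ|_{\M'}) \in (\N')_*^+$. Both are positive normal functionals by \Autoref{lem:finite_index_bounded_inverse_weight}. The plan compares three spectral measures of the fixed vector $\v{ρ}$: those with respect to $\Delta(ρ'/σ)$, $\Delta(σ/ρ')$ and $\Delta(ρ|_{\N}/τ)$.

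First, $\v{ρ}$ lies in $s(ρ')\HS$. Indeed $s(ρ') = [\M\v{ρ}] \ni \v{ρ}$, since $\IdentityMatrix \in \M$. Hence \Autoref{prop:spatial-derivative-properties}\ref{item:sd-inverse} gives $μ^{\Delta(ρ'/σ)}_{\v{ρ}} = μ^{\Delta(σ/ρ')^{-1}}_{\v{ρ}}$. This is the pushforward of $μ^{\Delta(σ/ρ')}_{\v{ρ}}$ under the inversion $λ \mapsto 1/λ$ of $[0,\infty]$, with $1/0 = \infty$ and $1/\infty = 0$. Second, \Autoref{lem:haagerup-identity-finite-index} gives the equality of generalized positive operators $\Delta(σ/ρ') = \Delta(ρ|_{\N}/τ)$, so the corresponding spectral measures of $\v{ρ}$ coincide. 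Finally, $ρ|_{\N}$ is exactly the state induced by $\v{ρ}$ on the commutant $\N = (\N')'$. Therefore, by \eqref{eq:def-relative-entropy} applied to the algebra $\N'$,
\[
D(ρ|_{\N'}\|τ) = \int_{[0,\infty]} \log λ \,\dd μ^{\Delta(ρ|_{\N}/τ)}_{\v{ρ}}(λ).
\]
Since $\log(1/λ) = -\log λ$ on $[0,\infty]$ with the conventions $\log 0 = -\infty$ and $\log\infty = +\infty$, the change of variables under the inversion yields formally
\[
D(ρ|_{\M}\|σ) = \int \log λ \,\dd μ^{\Delta(ρ'/σ)}_{\v{ρ}} = -\int \log λ \,\dd μ^{\Delta(ρ|_{\N}/τ)}_{\v{ρ}} = -D(ρ|_{\N'}\|τ).
\]

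The main obstacle, though a mild one, is justifying this manipulation of extended integrals, which might a priori be of the form $\infty - \infty$. Here we invoke \Autoref{lem:relative-entropy-well-defined} on both sides. Each relative entropy is a well-defined element of $(-\infty, +\infty]$, and in particular the positive or negative part of each $\log$ integral is finite. The inversion interchanges the positive and negative parts of the integrand, and it maps the point masses at $0$ and $\infty$ onto each other. Hence the two integrals exist simultaneously and are negatives of each other. It follows that $D(ρ|_{\M}\|σ) \in (-\infty,\infty] \cap [-\infty,\infty)$. Thus both quantities are in fact finite (bounded below by $-\log ρ(\IdentityMatrix) = 0$, resp.\ above by $\log τ(\IdentityMatrix) \le \log\norm{\Ind E}$) and equal up to sign, which proves the corollary.
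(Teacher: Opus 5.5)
Your proof is correct, but it takes a different route from the paper's proof of this corollary.

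\textbf{The paper's route.} The paper obtains the identity by letting $\alpha\nearrow 1$ in the Petz--R\'enyi duality
$D_\alpha(\rho|_{\M}\|E_*(\rho|_{\N})) = -D_{2-\alpha}(\rho|_{\N'}\|E^{-1}_*(\rho|_{\M'}))$, $\alpha\in[0,1)$.
It then invokes the continuity of the Petz divergences at $\alpha=1$ from the literature. The finiteness condition you were worried about is handled by the identity itself. The left-hand side is bounded below by the first-moment lemma, so $D_{2-\alpha}(\rho|_{\N'}\|E^{-1}_*(\rho|_{\M'}))<\infty$ for every $\alpha<1$. This is exactly what the limit $\beta\searrow 1$ requires.

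\textbf{Your route.} You do the logarithmic computation directly: the spectral-measure form of the inversion formula for vectors in $s(\rho')\HS$, Haagerup's identity, and the pushforward of the spectral measure under $\lambda\mapsto 1/\lambda$. This is the heuristic computation of the paper's introduction made rigorous. Your only analytic input is the well-definedness lemma for the relative entropy, applied on both $\M$ and $\N'$, to rule out $\infty-\infty$.

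\textbf{Comparison.} Your argument is more self-contained. It does not depend on the cited $\alpha\to 1$ limit theorems, in particular the limit from above, which is the delicate direction in infinite dimensions. It also shows as a by-product that both sides are finite, with
$0\le D(\rho|_{\M}\|E_*(\rho|_{\N}))\le \log\rho'(\Ind E)\le\log\norm{\Ind E}$.
Strictly, the lower bound from the lemma is $-\log\sigma(\IdentityMatrix)$; since $\sigma(\IdentityMatrix)=\rho(\IdentityMatrix)=1$, your version is fine. The paper's route has the advantage of getting the case $\alpha=1$ uniformly from the R\'enyi family, without separate bookkeeping of extended integrals.
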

\begin{proof}
    For $α \in [0, 1)$, \Autoref{thm:petz_duality_relation} gives $D_α(ρ|_{\M}\|E_*(ρ|_{\N})) = - D_{2 - α}(ρ|_{\N'}\|E^{-1}_*(ρ|_{\M'}))$. As $α \nearrow 1$ the left hand side converges to $D(ρ|_{\M}\|E_*(ρ|_{\N}))$ by \Autoref{prop:divergence-properties}\ref{item:dp-limits}. Since the left hand side is bounded below (\Autoref{lem:first-moment}), the identity shows that $D_{2-α}(ρ|_{\N'}\|E^{-1}_*(ρ|_{\M'}))$ is finite for $α \in [0, 1)$, so that the right hand side converges to $-D(ρ|_{\N'}\|E^{-1}_*(ρ|_{\M'}))$ as $α \nearrow 1$, again by \Autoref{prop:divergence-properties}\ref{item:dp-limits} (limit of $D_β$ as $β \searrow 1$).
\end{proof}

Additionally, one can consider duality relations in the setting where one optimizes over all states in the second argument. In the usual bipartite setting it is well-known that for the relative entropy (i.e. $α = 1$) this is equivalent to just conditioning on the marginal, and this holds true even in more general settings. We need this fact not only for the conditional expectation $E$ but also for its dual $E^{-1}$, which is not a conditional expectation, and therefore formulate it for a general operator-valued weight $T$:
\begin{proposition}[Pythagorean identity for operator-valued weights]\label{prop:pythagorean_identity}
    Let $\N \subset \M \subset \BO$ be von Neumann algebras, and let $T: \M_+ \to \widehat{\N_+}$ be a normal faithful semifinite operator-valued weight, not necessarily a conditional expectation, such that $T(\IdentityMatrix)$ and $\Ind T = T^{-1}(\IdentityMatrix)$ are bounded, so that $T_*(σ) \coloneqq σ \circ T \in \M_*^+$ for $σ \in \N_*^+$ (\Autoref{lem:finite_index_bounded_inverse_weight}). Let $ρ \in \S(\M)$ be a state with a vector representative in $\HS$, and let $σ \in \S(\N)$. Then, as an identity in $(-\infty, +\infty]$,
    \begin{equation}\label{eq:pythagorean-identity}
        D(ρ\|T_*(σ)) = D(ρ\|T_*(ρ|_{\N})) + D(ρ|_{\N}\|σ)\,,
    \end{equation}
    and in particular $\inf_{σ \in \S(\N)} D(ρ\|T_*(σ)) = D(ρ\|T_*(ρ|_{\N}))$.
\end{proposition}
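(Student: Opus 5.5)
The plan is to express all three relative entropies as expectation values of logarithms of spatial derivatives, and then to exploit the fact that the spatial derivative is "multiplicative" along the chain $\N \subset \M$. Following \Autoref{def:divergences}, write
\[
D(ρ\|T_*(σ)) = \bra{\v{ρ}} \log \Delta(ρ'/T_*(σ)) \ket{\v{ρ}} .
\]
By Haagerup's identity in the form of \Autoref{lem:haagerup-identity-finite-index} (its proof uses only boundedness of $T(\IdentityMatrix)$ and $\Ind T$), this equals
\[
\bra{\v{ρ}} \log \Delta(T^{-1}_*(ρ')/σ)\ket{\v{ρ}} .
\]
This is a spatial derivative relative to the pair $(\N', \N)$, with numerator $T^{-1}_*(ρ')$ on $\N'$ and denominator $σ$ on $\N$. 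We then compare it with the same expression for $σ = ρ|_\N$.

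The formal identity we aim for is
\[
\log \Delta(T^{-1}_*(ρ')/σ) = \log \Delta(T^{-1}_*(ρ')/ρ|_\N) + \log \Delta(ρ|_{\N'}/σ) - \log \Delta(ρ|_{\N'}/ρ|_\N).
\]
This is Connes' chain rule written additively. Taking $\v{ρ}$-expectations, the second and third terms give
\[
D(ρ|_\N\|σ) - D(ρ|_\N\|ρ|_\N) = D(ρ|_\N\|σ).
\]
Here we use that $\Delta(ρ|_{\N'}/ρ|_\N)$ acts as the identity on $\v{ρ}$'s spectral measure (the relative modular operator of $\v{ρ}$ with itself fixes $\v{ρ}$), so $\bra{\v{ρ}}\log\Delta(ρ|_{\N'}/ρ|_\N)\ket{\v{ρ}}=0$.

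Operators with different numerators do not commute, so the logarithms cannot be added naively. The rigorous route I would take uses Connes cocycles. Assume first that $σ$, $ρ|_\N$ and $ρ'$ are faithful, so that \Autoref{prop:spatial-derivative-properties}\ref{item:sd-inverse} applies without support caveats. The Connes cocycle identity gives
\[
\Delta(φ/σ)^{it} = \Delta(φ/ρ|_\N)^{it}\,[Dσ:Dρ|_\N]_{-t}^{*}\text{-type factor}.
\]
More conveniently, I would use the known characterization of relative entropy through derivatives of $t\mapsto \bra{\v{ρ}}\Delta(\cdot/\cdot)^{it}\ket{\v{ρ}}$, or Petz's formula
\[
D(ρ\|ω) = i\lim_{t\to0} t^{-1}\big(ρ([Dω:Dρ]_t)-1\big).
\]
With the chain rule $[DT_*σ : DT_*ρ_\N]_t = [Dσ:Dρ_\N]_t$ for operator-valued weights (Haagerup; equivalently obtained from Haagerup's identity), we get
\[
[DT_*σ:Dρ]_t = [DT_*σ:DT_*ρ_\N]_t\,[DT_*ρ_\N:Dρ]_t = [Dσ:Dρ_\N]_t\,[DT_*ρ_\N:Dρ]_t .
\]
Applying Petz's derivative formula to this product splits the relative entropy into the two terms, since the $t$-derivative at $0$ of a product of cocycles is the sum of the derivatives. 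The $\N$-valued factor, evaluated under $ρ$, only sees $ρ|_\N$.

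An alternative cleaner route avoids cocycles by working in finite terms: use the variational or "Kosaki/Petz sufficiency" identity
\[
D(ρ\|ω) - D(ρ\|ω') = ρ(\log\text{-Radon–Nikodym})
\]
via Araki's relative Hamiltonian. Concretely, $T_*(σ)$ is the perturbation of $T_*(ρ|_\N)$ by the relative Hamiltonian $h=\log$-derivative of $σ$ w.r.t.\ $ρ|_\N$, which lies in $\N$ when $D_{\max}$ in both directions is finite. Araki's perturbation formula then gives
\[
D(ρ\|T_*σ) = D(ρ\|T_*ρ_\N) - ρ(h) = D(ρ\|T_*ρ_\N) + D(ρ_\N\|σ).
\]
I would first prove the identity for such "bounded-ratio" $σ$ (e.g.\ $c^{-1}ρ_\N\le σ\le cρ_\N$), then extend by approximation.

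The main obstacle is the extension step and the infinite values. One needs lower semicontinuity/monotone convergence arguments: approximate a general $σ$ by $σ_ε = (1-ε)σ+ερ|_\N$. For this sequence both sides behave well, using joint convexity/lower semicontinuity of $D$ and the fact that $D(ρ\|T_*σ_ε)\to D(ρ\|T_*σ)$ and $D(ρ_\N\|σ_ε)\to D(ρ_\N\|σ)$. Convexity gives the upper bound, lower semicontinuity the lower bound, and finiteness of $D(ρ\|T_*ρ_\N)$ (it is at most $\log\|{\rm Ind}\,T\|$ by a Pimsner–Popa type bound, or otherwise handled as $+\infty$ on both sides) lets us subtract. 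Non-faithful $ρ$ is handled by reducing to $s(ρ)$ as in \Autoref{lem:haagerup-identity-finite-index}. The final statement follows since $D(ρ|_\N\|σ)\ge0$ with equality at $σ=ρ|_\N$.
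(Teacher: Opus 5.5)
The core of your argument works in the faithful case, and it is a genuinely different route from the paper's. You combine Haagerup's cocycle identity $[D(\sigma\circ T):D(\omega\circ T)]_t=[D\sigma:D\omega]_t$ for operator-valued weights with the cocycle chain rule, Petz's derivative formula, and the mixing $\sigma_\varepsilon=(1-\varepsilon)\sigma+\varepsilon\rho|_{\N}$. For that mixing, anti-monotonicity/convexity and lower semicontinuity do give $D(\rho\|T_*\sigma_\varepsilon)\to D(\rho\|T_*\sigma)$ and $D(\rho|_{\N}\|\sigma_\varepsilon)\to D(\rho|_{\N}\|\sigma)$.

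The genuine gap is the passage from faithful to arbitrary functionals, which the statement requires.
\begin{itemize}
\item The cocycles $[DT_*\sigma:D\rho]_t$ and $[D\sigma:D\rho|_{\N}]_t$ are unitary only if $\rho$ is faithful on $\M$ and $\rho|_{\N}$, $\sigma$ are faithful on $\N$. Your assumption is on $\rho'=\rho|_{\M'}$, which is the wrong algebra.
\item Your mixing makes $\sigma_\varepsilon$ faithful only if $\rho|_{\N}$ already is.
\item $\rho$ can fail to be faithful even when $\rho|_{\N}$ is: take a pure $\rho_{AB}$ with full-rank marginal $\rho_B$ in the subsystem case.
\item If $\M$ is not $\sigma$-finite, no normal state is faithful at all, so the faithful case covers nothing there.
\item The fix you indicate, \enquote{reducing to $s(\rho)$}, does not work. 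In general $s(\rho)\in\M$ lies neither in $\N$ nor in $\N'\cap\M$, so compressing by it destroys both the inclusion $\N\subset\M$ and the $\N$-bimodularity of $T$.
\end{itemize}
To run your argument in general you would need partial-isometry cocycles for non-faithful functionals. You would then have to check that Haagerup's cocycle identity, the chain rule and Petz's formula all survive with the correct initial and final projections. None of this is in the proposal.

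Two smaller points. First, the claim that the derivative of $\rho(u_tv_t)$ is the sum of the two derivatives needs the cross term $t^{-1}\langle(u_t^*-\IdentityMatrix)\v{\rho},(v_t-\IdentityMatrix)\v{\rho}\rangle$ to vanish. This holds because $\norm{(w_t-\IdentityMatrix)\v{\rho}}^2\le\int\min(t^2x^2,4)\,\dd\mu(x)=o(t)$ once the corresponding relative entropy is finite. So finiteness of $D(\rho\|T_*(\rho|_{\N}))$, which follows from $\rho\le\norm{\Ind T}\,T_*(\rho|_{\N})$, is already needed at this step. Second, your Araki variant does not fit your approximation. The bound $\sigma_\varepsilon\ge\varepsilon\rho|_{\N}$ gives no upper bound $\sigma_\varepsilon\le c\,\rho|_{\N}$, so the mixing never lands in the bounded-ratio class. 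The existence of a bounded relative Hamiltonian in $\N$ is also asserted without proof.

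The paper avoids all support issues by never using cocycles. It writes $T=hF$ with $h=T(\IdentityMatrix)\in Z(\N)$ and proves $h\ge\norm{\Ind T}^{-1}\IdentityMatrix$ from Haagerup's identity, the first-moment formula and Jensen's inequality. Hence $F=h^{-1}T$ is a normal faithful conditional expectation with $T_*(\sigma)=F_*(\sigma(h\,\cdot\,))$. Petz's theorem for $F$, which holds for arbitrary normal functionals, then gives the identity. The factor $h$ is removed by the central scaling $D(\omega\|\sigma(h\,\cdot\,))=D(\omega\|\sigma)-\omega(\log h)$, proved via functional calculus for the commuting pair of $h$ and the spatial derivative.

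Once completed, your route would handle $T$ directly and would need neither the lower bound on $h$ nor the central scaling. The paper's route gets the non-faithful and non-$\sigma$-finite cases for free from the conditional-expectation theorem.
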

\begin{proof}
    If $T$ is a normal faithful conditional expectation, then \eqref{eq:pythagorean-identity} is a theorem of Petz \cite[Thm.~5.15]{ohya_quantum_1993}. The general case is reduced to this one in \Cref{app:pythagorean-proof}.
\end{proof}

Together with \Autoref{cor:relative_entropy_unoptimized_duality}, this shows that the duality relation for the relative entropy also holds for the optimized quantities,
\begin{equation}\label{eq:optimized-relative-entropy-duality}
    D(ρ|_{\M}\|E_*(\Ns)) = D(ρ|_{\M}\|E_*(ρ|_{\N})) = - D(ρ|_{\N'}\|E^{-1}_*(ρ|_{\M'})) = - D(ρ|_{\N'}\|E^{-1}_*(\M'_*))\,.
\end{equation}

\subsection{Sandwiched \renyi relative entropy}\label{sec:duality-relation-sandwiched}
\newcommand{\wt}{\widetilde}
Proving the sandwiched duality relation is significantly more tricky, since we need to invoke a minimax theorem, which is a lot less straightforward in infinite dimensions compared to finite-dimensional settings. We start by developing all the necessary analytical tools to prove the duality relation (which is \autoref{thm:sandwiched_duality_relation} at the end of this section). The idea is that we want to apply the following minimax theorem:
\begin{lemma}[{Generalized minimax theorem~\cite[Theorem 5.2]{farkas_potential_2006}}]
    \label{lem:gen_minimax}
    Let $X$ be a compact and convex subset of a Hausdorff topological vector space and let $Y$ be a convex subset of a linear space. Let $f : X \times Y \to  \reals \cup \{\infty\}$ be lower semi-continuous on $X$ for fixed $y \in Y$, convex in x and concave in y. Then 
    \begin{equation}
        \sup_{y \in Y} \inf_{x \in X} f(x, y) =  \inf_{x \in X} \sup_{y \in Y}f(x,y)\,.
    \end{equation}
\end{lemma}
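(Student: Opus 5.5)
We prove the nontrivial inequality $\sup_{y}\inf_{x} f \geq \inf_{x}\sup_{y} f$ by combining compactness of $X$ with a finite-dimensional separation argument. The opposite inequality $\sup_{y}\inf_{x} f(x,y) \leq \inf_{x}\sup_{y} f(x,y)$ holds for any function. Fix a real number $c < \inf_{x \in X}\sup_{y \in Y} f(x,y)$; it suffices to find $\bar y \in Y$ with $\inf_x f(x,\bar y) > c$.

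\emph{Reduction to finitely many $y$.} For $y \in Y$ put $K_y \coloneqq \Set{x \in X | f(x,y) \leq c}$. By lower semicontinuity of $f(\cdot,y)$ each $K_y$ is closed in the compact set $X$. The choice of $c$ means that for every $x$ some $y$ satisfies $f(x,y) > c$, so $\bigcap_{y} K_y = \emptyset$. By the finite intersection property there are $y_1,\dots,y_n \in Y$ with $K_{y_1}\cap\dots\cap K_{y_n} = \emptyset$, that is,
\[
 g(x) \coloneqq \max_i f(x,y_i) > c \qquad \text{for all } x \in X.
\]
The function $g$ is lower semicontinuous, so it attains its minimum on the compact set $X$. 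Hence there is $c_1 > c$ (any real $c_1 \in (c, \min g]$, which also covers $\min g = \infty$) with $g \geq c_1$ on $X$.

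\emph{Finite-dimensional separation.} Consider
\[
 A \coloneqq \Set{v \in \reals^n | \exists x \in X:\ v_i \geq f(x,y_i) \text{ for all } i}.
\]
This set is convex: given $x, x'$ and $t \in [0,1]$, convexity of $X$ and of each $f(\cdot,y_i)$ gives $f(tx+(1-t)x', y_i) \leq t f(x,y_i) + (1-t) f(x',y_i)$. Components where $f=\infty$ do not occur for points of $A$, since each $v_i$ is real. The open convex set $B \coloneqq \Set{v \in \reals^n | v_i < c_1 \ \forall i}$ is disjoint from $A$ by the previous step. The separating hyperplane theorem in $\reals^n$ then yields a nonzero $\lambda$ with $\lambda\cdot a \geq \lambda\cdot b$ for $a \in A$, $b \in B$. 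Since $A$ is upward closed, $\lambda_i \geq 0$, and after normalizing $\sum_i \lambda_i = 1$. Letting $b \to (c_1,\dots,c_1)$ gives $\sum_i \lambda_i v_i \geq c_1$ on $A$. For an $x$ with some $f(x,y_i)=\infty$ that $x$ contributes no point to $A$, but then $\sum_i \lambda_i f(x,y_i) = \infty$ as long as $\lambda_i > 0$ for such $i$. To avoid this case distinction cleanly, we first truncate, replacing $f$ by $\min(f, M)$ for a large real $M > c_1$. This preserves lower semicontinuity, convexity in $x$ and the bound $g \geq c_1$, so we may assume $f(x,y_i)$ is finite. In all cases we obtain
\[
 \sum_i \lambda_i f(x,y_i) \geq c_1 \qquad \text{for all } x \in X.
\]

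\emph{Conclusion via concavity.} Let $\bar y \coloneqq \sum_i \lambda_i y_i \in Y$, which lies in $Y$ by convexity. Concavity in $y$ gives $f(x,\bar y) \geq \sum_i \lambda_i f(x,y_i) \geq c_1 > c$ for every $x$. Hence $\sup_y\inf_x f \geq c_1 > c$, and letting $c \uparrow \inf_x\sup_y f$ gives the claim.

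\emph{Main obstacle.} The delicate point is the bookkeeping with the value $+\infty$. Truncation at $M$ keeps convexity in $x$, but it may destroy concavity in $y$. We therefore use the truncated function only to produce the weights $\lambda$, and then apply concavity to the original $f$, noting $f \geq \min(f,M)$. This step needs to be checked carefully, together with the attainment of $\min g$ from lower semicontinuity on the compact Hausdorff space $X$.
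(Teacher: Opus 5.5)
Your overall route is the classical Kneser--Fan argument and is a perfectly good way to prove this lemma: closed sublevel sets plus compactness to pass to finitely many $y_1,\dots,y_n$, a separating hyperplane in $\mathbb{R}^n$ to produce convex weights, and concavity to form $\bar y$. The paper itself gives no proof and only quotes \cite[Theorem~5.2]{farkas_potential_2006}.

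There is, however, a genuine error in how you handle the value $+\infty$. This cannot be ignored, because the lemma is applied in \Autoref{prop:sandwiched_minimax} to a function with values in $[0,\infty]$. The truncation $\min(f,M)$ preserves concavity in $y$, since a minimum of concave functions is concave, but it does \emph{not} preserve convexity in $x$; you have the two reversed. So the set $A$ built from the truncated functions need not be convex, and the separation theorem does not apply.

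Concretely, take $X=[0,1]$, let $f(\cdot,y_1)$ be $0$ at $x=0$ and $+\infty$ on $(0,1]$, let $f(\cdot,y_2)=1-x$, and extend $f(x,\cdot)$ affinely to the segment $Y=[y_1,y_2]$. All hypotheses hold and $\inf_x\sup_y f=1$.
\begin{itemize}
\item $\min(f(\cdot,y_1),M)$ equals $0$ at $0$ and $M$ elsewhere, which is not convex.
\item For $c\in[0,1)$ your compactness step may return exactly $\{y_1,y_2\}$, with $\min g=1$.
\item For $c_1=1$ there is no $\lambda$ in the simplex with $\lambda_1 f(x,y_1)+\lambda_2 f(x,y_2)\ge c_1$ on all of $X$: the point $x=0$ forces $\lambda=(0,1)$, which fails at $x=1$. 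So your claim that in all cases $\sum_i\lambda_i f(x,y_i)\ge c_1$ for all $x$ is false.
\item For the untruncated $A$ the separating hyperplane may also be $\lambda=(0,1)$, so simply dropping the truncation does not close the argument either.
\end{itemize}

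The repair uses the slack between $c$ and $c_1$ instead of truncation.
\begin{enumerate}
\item Keep the untruncated set $A$. It is convex, because only points $x$ at which all $f(x,y_i)$ are finite contribute to it.
\item Let $\lambda$ be the separating weights, so $\sum_i\lambda_i f(x,y_i)\ge c_1$ for every such $x$. If $A=\emptyset$, let $\lambda$ be arbitrary.
\item Let $m$ be a real lower bound of all $f(\cdot,y_i)$ on $X$. It exists because each $f(\cdot,y_i)$ is lower semicontinuous on the compact set $X$.
\item Set $\lambda^\varepsilon=(1-\varepsilon)\lambda+\frac{\varepsilon}{n}(1,\dots,1)$. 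Every weight of $\lambda^\varepsilon$ is strictly positive, so $\sum_i\lambda^\varepsilon_i f(x,y_i)=+\infty$ whenever some $f(x,y_i)=+\infty$. Otherwise $\sum_i\lambda^\varepsilon_i f(x,y_i)\ge(1-\varepsilon)c_1+\varepsilon m$.
\item Choose $\varepsilon>0$ so small that $(1-\varepsilon)c_1+\varepsilon m>c$, and put $\bar y=\sum_i\lambda^\varepsilon_i y_i$. Concavity, applied to the original $f$ with strictly positive weights, gives $\inf_x f(x,\bar y)\ge(1-\varepsilon)c_1+\varepsilon m>c$.
\end{enumerate}
From there your conclusion goes through unchanged.
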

Unfortunately, the space of normal states $\S(\M)$ is not weak-$*$ compact in general, so we will have to extend the state space to $\S^*(\M) = \Set{ρ \in (\M^*)_+ | ρ(1) = 1}$ which also includes non-normal functionals, and which is weak-$*$ compact by the Banach-Alaoglu Theorem.
It is well-known (see e.g. \cite[Theorem III.2.14]{takesaki_theory_1979}) that every functional $\widehat{φ} \in \S^*(\M)$ can be decomposed into a normal and singular part $\widehat{φ} = φ_n + φ_s$, where the singular part satisfies the following properties:
\begin{lemma}[Singular functionals]\label{lem:singular functional properties}
Let $ω$ be a positive (not necessarily normal) linear functional on \M. Then, the following are equivalent:
\begin{enumerate}[(i)]
\item $ω$ is singular, i.e.\ if $μ \in \M_*^+$ is a positive normal linear functional such that $0 \leq μ \leq ω$ then $μ = 0$.
\item For every nonzero projection $p \in \M$ there exists a nonzero projection $q \leq p$ such that $ω(q) = 0$
\item There exists an increasing net of projections $(e_i)_i \subset \M$ such that $e_i \nearrow 1$ strongly and $ω(e_i) = 0$ for all $i$.
\end{enumerate}
\end{lemma}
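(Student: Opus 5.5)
The plan is to prove the cycle (i)$\Rightarrow$(ii)$\Rightarrow$(iii)$\Rightarrow$(i).

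\emph{(i)$\Rightarrow$(ii).} Let $p \neq 0$ be a projection and suppose, for contradiction, that $ω(q) > 0$ for every nonzero projection $q \leq p$. Choose a positive normal functional $μ$ with $μ(p) > 0$, for instance a vector state with vector in $p\HS$. Consider the set of projections $e \leq p$ with $(μ - ω)(e) \geq 0$ that are maximal in a Zorn sense; concretely, take a maximal orthogonal family $(q_i)$ of nonzero subprojections of $p$ with $ω \geq λμ$ on each $q_i \M q_i$, for a suitable $λ>0$. This is the standard comparison argument (cf.\ Takesaki, Lemma~III.2.13/Thm.~III.3.8). One then obtains a nonzero $q \leq p$ with $λ\, qμ q \leq ω$ restricted to $q\M q$. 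The normal functional $x \mapsto λ μ(qxq)$ need not be dominated by $ω$ globally, however, so I would rather follow Takesaki's route: the normal part $φ_n$ of $ω$ is the largest normal functional below $ω$, and $ω$ is singular iff $φ_n = 0$. The key lemma to use is that if $ω(q) > 0$ for all $0 \neq q \leq p$, then $ω$ restricted to $p\M p$ dominates a nonzero normal functional. This follows from the Hahn–Banach/Sakai criterion that a positive functional is normal iff it is completely additive, applied to a maximal orthogonal family on which $ω$ is completely additive.

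\emph{(ii)$\Rightarrow$(iii).} By Zorn's lemma, take a maximal orthogonal family $(q_j)$ of nonzero projections with $ω(q_j) = 0$. Put $e_F = \sum_{j \in F} q_j$ for finite $F$. Then $ω(e_F) = 0$ by additivity, and $e_F$ increases. If $\sum_j q_j \neq \IdentityMatrix$, apply (ii) to $p = \IdentityMatrix - \sum_j q_j$ to contradict maximality. Hence $e_F \nearrow \IdentityMatrix$ strongly.

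\emph{(iii)$\Rightarrow$(i).} Let $0 \leq μ \leq ω$ with $μ$ normal. Then $0 \leq μ(e_i) \leq ω(e_i) = 0$, and normality gives $μ(\IdentityMatrix) = \lim_i μ(e_i) = 0$, so $μ = 0$.

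The main obstacle is (i)$\Rightarrow$(ii), which requires producing a nonzero normal minorant. Since this is classical, I would cite Takesaki, Theorem~III.3.8 and Lemma~III.3.9 (equivalently, the complete additivity characterization of normality) rather than reprove it.
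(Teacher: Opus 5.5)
Your proposal is correct and follows essentially the same route as the paper. You take (i)$\Rightarrow$(ii) from Takesaki's Theorem~III.3.8, prove (ii)$\Rightarrow$(iii) with the same Zorn's lemma argument on a maximal orthogonal family of $\omega$-null projections and the net of finite partial sums, and prove (iii)$\Rightarrow$(i) by the same normality argument. The only differences are minor: you close the cycle rather than citing both directions of (i)$\Leftrightarrow$(ii), and the exploratory remarks in your (i)$\Rightarrow$(ii) paragraph (domination on a corner $q\M q$, which alone does not give a normal minorant of $\omega$ on all of $\M$) can be dropped, since the citation carries that step.
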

\begin{proof}
The equivalence (i)$\Leftrightarrow$(ii) is due to \cite[Theorem III.3.8]{takesaki_theory_1979}. For the proof of (ii)$\Rightarrow$(iii) consider the following collection of all pairwise orthogonal sets of projections $p$ with $ω(p) = 0$.
\begin{equation}
    C \coloneqq \Set{ F \subset \mathrm{Proj}(\M) | \forall p \in F:ω(p) = 0, \forall p, p' \in F \text{ s.t. }p \neq p': pp' = 0 \, }
\end{equation}
The elements of $C$ are ordered by set inclusion, and every chain $(F_λ)_λ \subset C$ has an upper bound, namely the union $F \coloneqq \bigcup_λ F_λ$, which is also in $C$ (since every pair of elements has to lie in some $F_λ$, which is in $C$). Thus, by Zorn's lemma $C$ has a maximal element $F_0 = \Set{ q_j | j \in J} \in C$. Consider the supremum of all the projections $q \coloneqq \bigvee_{j \in J} q_j$ (i.e. the projection onto the union of all the subspaces projected onto). If $q \neq 1$, then we can apply (ii) to $1 - q$ and obtain a nonzero projection $q' \leq 1 - q$ such that $ω(q') = 0$, and $q'$ is orthogonal to $q$ which contradicts the maximality of $F_0$, hence $q = 1$. Now for each finite subset $I \subset J$ define
\begin{equation}
    e_I \coloneqq \sum_{j \in I} q_j\,.
\end{equation}
By ordering the sets $I$ by inclusion, this defines a net $(e_I)_I$, and $q = 1$ implies that $e_I \nearrow 1$ strongly. Additionally, since every $I$ is finite $ω(e_I) = \sum_{j \in I} ω(q_j) = 0$, which proves (iii).

For the proof of (iii)$\Rightarrow$(i) let $μ \in \M_*^+$ be a positive normal linear functional with $0 \leq μ \leq ω$. Then $0 \leq μ(e_i) \leq ω(e_i) = 0$ for all $i$, and since $μ$ is normal and $e_i \nearrow 1$, we get $μ(1) = \lim_i μ(e_i) = 0$. Since $\norm{μ} = μ(1)$ for a positive functional, this means $μ = 0$, i.e.\ $ω$ is singular.
\end{proof}

The point of the next Lemma is to extend the definition of the spatial derivative to the case where the first argument is not normal. 
\renewcommand{\wp}{\widehat{\varphi}}
\newcommand{\f}{f_{\v{\xi}}}
\begin{lemma}[Spatial derivative extension]\label{lemma:spatial derivate non-normal extension}
Let $\M \subset \BO$, let $\wp \in (\M^*)_+$ be a (not necessarily normal) positive linear functional on $\M$, let $ψ \in (\M')_*^+$ be a (not necessarily faithful) positive normal functional on $\M'$, so that $D(\HS, ψ)$, $θ^ψ$ and $\Delta(\cdot/ψ)$ are as in \Autoref{sec:prelim-spatial} (\Autoref{def:spatial-derivative}), let $\v{\xi} \in \HS$ and $α \in [-1, 0)$.
\begin{enumerate}[(i)]
\item The function
\[φ \in \M_*^+ \mapsto \bra{\v{\xi}} \Delta(φ/ψ)^α \ket{\v{ξ}}\] has a convex and weak-$*$ lower semi-continuous extension to $(\M^*)_+$ via
\begin{equation}\label{eq:definition f}
\wp \in (\M^*)_+ \mapsto \f(\wp, \psi) \coloneqq \begin{cases} \displaystyle{\abs{\sin(\pi α)} \over \pi} \int_{0}^{\infty} t^α g^{(t)}_{\v{\xi}}(\wp, \psi)\, dt & \text{for } α \in (-1, 0)\,, \\[2ex] \displaystyle\sup_{t > 0}\, g^{(t)}_{\v{\xi}}(\wp, \psi) & \text{for } α = -1\,, \end{cases}
\end{equation}
where \begin{equation}\label{eq:definition g}
g_{\v{\xi}}^{(t)}(\wp, \psi) \coloneqq \sup_{\v{\eta} \in D(\HS, \psi)} \qty[2 \Re \ip{\v{\xi}}{\v{\eta}} - t \norm{\v{\eta}}^2 - \wp(θ^ψ(\v{\eta}, \v{\eta}))]\,.
\end{equation} Note that weak-$*$ lower semi-continuity also implies lower semi-continuity in any stronger topology such as the strong or norm topology.
\item If $\wp = φ_n + φ_s$ is the decomposition into normal and singular part, then $\f(\wp, ψ) = \f(φ_n, ψ)$.
\item For fixed $\wp$, the function $ψ \mapsto \f(\wp, ψ)$ is concave on $(\M')_*^+$.
\end{enumerate}
\end{lemma}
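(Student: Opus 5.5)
The plan is to first establish a variational formula for normal $φ$, and then read off (i)–(iii) from it. For $ψ$ faithful and $φ \in \M_*^+$, set $A \coloneqq \Delta(φ/ψ)$. The key identity to prove is
\[
\ip{\v{ξ}}{(t + A)^{-1}\v{ξ}} = g^{(t)}_{\v{ξ}}(φ,ψ), \qquad t > 0 .
\]
This is the standard Legendre-type characterization: for a positive self-adjoint $B = t + A \geq t$ one has $\ip{ξ}{B^{-1}ξ} = \sup_{η \in \mathcal{D}(B^{1/2})}\big[2\Re\ip{ξ}{η} - \norm{B^{1/2}η}^2\big]$. The supremum may be restricted to a form core. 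By \Autoref{def:spatial-derivative}, $D(\HS,ψ)$ is a core for the form of $A$, and on it $\norm{B^{1/2}η}^2 = t\norm{η}^2 + φ(θ^ψ(η,η))$.

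Once the identity is in hand, the operator-monotone integral representation
\[
λ^α = \frac{\abs{\sin πα}}{π}\int_0^\infty t^α (t+λ)^{-1}\,dt, \qquad α \in (-1,0),
\]
gives the formula in \eqref{eq:definition f} via the spectral theorem and Tonelli. For $α = -1$ we use instead $λ^{-1} = \sup_t (t+λ)^{-1}$, which holds by monotone convergence as $t \searrow 0$. The conventions at $0$ and $\infty$ must be checked. On the kernel part of $A$, $(t+0)^{-1} = 1/t$ integrates to $\infty$, matching $0^α = \infty$. For non-faithful $ψ$ we use the reduction of \Autoref{lem:bounded-vectors-nonfaithful}: $D(\HS,ψ) \subset s(ψ)\HS$, so components of $\v{ξ}$ in $(\IdentityMatrix - s(ψ))\HS$ only enter through $2\Re\ip{ξ}{η}$ with $η \perp$ them. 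They contribute $0$, matching $\infty^α = 0$. This proves that $\f$ extends the given function.

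\textbf{(i) Convexity and lower semicontinuity.} For fixed $η$ and $t$, the map $\wp \mapsto 2\Re\ip{\v{ξ}}{η} - t\norm{η}^2 - \wp(θ^ψ(η,η))$ is affine and weak-$*$ continuous, since $θ^ψ(η,η) \in \M$. A supremum of such maps is convex and weak-$*$ l.s.c., so $g^{(t)}$ is too. For $α = -1$, $\f$ is a supremum over $t$, hence has the same properties. For $α \in (-1,0)$, the map is an integral of nonnegative convex l.s.c. functions with positive weight; note $g^{(t)} \geq 0$ by taking $η = 0$. Convexity is immediate, and l.s.c. follows from Fatou's lemma applied along nets. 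Fatou fails for nets in general, so this step needs care. The fix is to write the integral as the supremum over finite Riemann-type lower sums of simple minorants. Since $g^{(t)}$ is nonincreasing in $t$, on each interval $[a,b]$ one has $\int_a^b t^α g^{(t)}\,dt \geq g^{(b)}\int_a^b t^α\,dt$, and these lower sums converge up to the integral by monotone convergence as partitions refine. Each lower sum is l.s.c., and a supremum of l.s.c. functions is l.s.c.

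\textbf{(ii) Normal versus singular parts.} I expect this to be the main obstacle. Since $φ_s \geq 0$, we immediately get $g^{(t)}(\wp) \leq g^{(t)}(φ_n)$. For the reverse inequality, take a projection net $e_i \nearrow \IdentityMatrix$ with $φ_s(e_i) = 0$ from \Autoref{lem:singular functional properties}(iii). Then $e_i η \in D(\HS,ψ)$, because the space is $\M$-invariant. By covariance, $θ^ψ(e_iη,e_iη) = e_iθ^ψ(η,η)e_i$, and Cauchy–Schwarz gives $φ_s(e_i X e_i) \leq \norm{X} φ_s(e_i) = 0$. Hence the objective at $e_iη$ for $\wp$ equals the objective for $φ_n$ at $e_iη$. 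Now $e_iη \to η$ in norm, and $φ_n(e_iθe_i) \to φ_n(θ)$ because $φ_n$ is normal and $e_i \to \IdentityMatrix$ strongly. Thus $g^{(t)}(\wp) \geq g^{(t)}(φ_n)$, so $g^{(t)}(\wp) = g^{(t)}(φ_n)$ for every $t$, and integrating or taking the supremum gives $\f(\wp,ψ) = \f(φ_n,ψ)$.

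\textbf{(iii) Concavity in $ψ$.} For normal $φ$, write $\f(φ,ψ) = \ip{\v{ξ}}{\Delta(ψ/φ)^{-α}\v{ξ}}$, using the inversion in \Autoref{prop:spatial-derivative-properties}\ref{item:sd-inverse}. Here $ψ \mapsto \Delta(ψ/φ)$ is additive and positively homogeneous by \ref{item:sd-linearity}, and $x \mapsto x^{-α}$ is operator concave for $-α \in (0,1]$, which yields concavity. The exceptional subspace where both supports vanish needs the same support bookkeeping as in the first step, checked against the conventions. A safer alternative is to reduce directly: by (ii), only $φ_n$ matters, and concavity in $ψ$ of $\ip{\v{ξ}}{(t+\Delta(φ/ψ))^{-1}\v{ξ}}$ can be approximated through faithful $φ_n + εω$, with the limit $ε \to 0$ handled by monotonicity.
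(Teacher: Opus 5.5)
Your parts (i) and (ii) are essentially the paper's proof: the resolvent variational formula restricted to the form core $D(\HS,\psi)$, the integral and monotone-limit representations of $\lambda^\alpha$, lower Riemann sums to get weak-$*$ lower semicontinuity without Fatou, and the projection net $e_i\nearrow\IdentityMatrix$ with $\varphi_s(e_i)=0$.

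The gap is in (iii). Your main route uses the identity $f_\xi(\varphi,\psi)=\langle\xi,\Delta(\psi/\varphi)^{-\alpha}\xi\rangle$, which is false whenever $\xi$ has a component in $(\IdentityMatrix-s(\varphi))(\IdentityMatrix-s(\psi))\HS$. On that subspace $\Delta(\varphi/\psi)^{\alpha}=\infty^{\alpha}=0$, while $\Delta(\psi/\varphi)^{-\alpha}=+\infty$. Deferred ``bookkeeping'' does not absorb this, because the discrepancy is exactly where $f$ depends on $\psi$ through its support. With $e=s(\varphi)$ the correct expression is
\[
f_\xi(\varphi,\psi)=\infty\cdot\norm{(\IdentityMatrix-e)\,s(\psi)\xi}^2+\big\langle e\xi,\Delta(\psi/\varphi)^{-\alpha}e\xi\big\rangle ,
\]
whereas your formula has the $\psi$-independent term $\infty\cdot\norm{(\IdentityMatrix-e)\xi}^2$ in place of the first summand. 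For example, if $0\neq\xi\in(\IdentityMatrix-e)\HS$, your expression is identically $+\infty$, but $f_\xi(\varphi,\psi)$ is $0$ or $+\infty$ according to whether $s(\psi)\xi=0$. So concavity of your expression says nothing about $f_\xi(\varphi,\cdot)$. The missing step is concavity of this $\{0,\infty\}$-valued term. It follows from $s(\mu\psi_1+(1-\mu)\psi_2)=s(\psi_1)\vee s(\psi_2)$ together with the fact that $e\in\M$ commutes with these supports: the term is then $+\infty$ at a mixture as soon as it is $+\infty$ at an endpoint. The second summand works as you intended, since $e\xi$ does not depend on $\psi$.

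Your fallback does not close the gap as written. A faithful normal positive functional on $\M$ exists only if $\M$ is $\sigma$-finite, which is not assumed. You also do not say how concavity in $\psi$ of the resolvent expression would be proved.

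The fallback can be repaired, and it then gives a valid alternative to the explicit decomposition:
\begin{enumerate}
\item Perturb by $\omega\coloneqq\omega_\xi|_{\M}$ instead of a faithful functional. Then $\xi\in s(\varphi+\varepsilon\omega)\HS$.
\item By the inversion property, $f_\xi(\varphi+\varepsilon\omega,\psi)=\langle\xi,\Delta(\psi/(\varphi+\varepsilon\omega))^{-\alpha}\xi\rangle$ holds exactly. This is concave in $\psi$ by additivity of $\psi\mapsto\Delta(\psi/\cdot)$ and concavity of $C\mapsto\langle\xi,C^{-\alpha}\xi\rangle$.
\item Since $\omega(\theta^\psi(\eta,\eta))<\infty$ on the fixed domain $D(\HS,\psi)$, the objective in $g^{(t)}_\xi$ increases pointwise as $\varepsilon\searrow 0$. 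Hence $g^{(t)}_\xi(\varphi+\varepsilon\omega,\psi)\nearrow g^{(t)}_\xi(\varphi,\psi)$ and $f_\xi(\varphi+\varepsilon\omega,\psi)\nearrow f_\xi(\varphi,\psi)$.
\item An increasing pointwise limit of concave functions is concave.
\end{enumerate}
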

\begin{proof}\mbox{}\\
    \begin{enumerate}[(i)]
        \item We start by showing that for any normal positive linear functional $φ \in \M_*^+$ we have $\bra{\v{\xi}} \Delta(φ/ψ)^α \ket{\v{ξ}} = f_{\xi}(φ, \psi)$. For this, we start with the integral representation of $x^α$ for $α \in (-1, 0)$:
        \begin{equation}
        x^α = {\abs{\sin(\pi α)} \over \pi} \int_{0}^\infty t^α (t + x)^{-1} dt\,,
        \end{equation}
        which holds for all $x \in [0, \infty]$ with our conventions. Integrating against the spectral measure $μ^{\Delta(φ/ψ)}_{\v{ξ}}$ on $[0, \infty]$ and using Tonelli's theorem, this implies
        \begin{equation}
        \bra{\v{\xi}} \Delta(φ/ψ)^α \ket{\v{ξ}} = {\abs{\sin(\pi α)} \over \pi}  \int_{0}^\infty t^α \bra{\v{\xi}} (t + \Delta(φ/ψ))^{-1} \ket{\v{ξ}} dt\,,
        \end{equation}
        where $(t + \Delta(φ/ψ))^{-1}$ is a bounded operator that vanishes on $(\IdentityMatrix - s(ψ))\HS$, the $+\infty$ part of $\Delta(φ/ψ)$. For $α = -1$ the integral representation is not available (the integral diverges), and we use instead that $(t + x)^{-1}$ increases to $x^{-1}$ as $t$ decreases to $0$, for all $x \in [0, \infty]$. By monotone convergence this implies
        \begin{equation}
        \bra{\v{\xi}} \Delta(φ/ψ)^{-1} \ket{\v{ξ}} = \sup_{t > 0}\, \bra{\v{\xi}} (t + \Delta(φ/ψ))^{-1} \ket{\v{ξ}}\,.
        \end{equation}
        In both cases it remains to express the resolvent through $g^{(t)}_{\v{\xi}}$.

        By \autoref{lem:resolvent-variational-formula} applied to the finite part of $\Delta(φ/ψ)$ (see \eqref{eq:resolvent-variational-generalized}) we have
        \begin{equation}\label{eq:rel-mod-op-variational-formula}
        \bra{\v{\xi}} (t + \Delta(φ/ψ))^{-1} \ket{\v{ξ}} = \sup_{\v{\eta} \in \mathcal{D}(\overline{q_{φ}})} \qty[2 \Re \ip{\v{\xi}}{\v{\eta}} - t \norm{\v{\eta}}^2 - \overline{q_{φ}}(\v{\eta})]\,,
        \end{equation}
        where, by \Autoref{def:spatial-derivative}, $\overline{q_{φ}}(\v{η}) = \ip{\v{η}}{\Delta(φ/ψ)\v{η}}$ is the closure of the quadratic form
        \begin{equation}
        \v\eta \in D(\HS, ψ) \mapsto q_{φ}(\v{\eta}) = φ(θ^ψ(\v{\eta}, \v{\eta}))
        \end{equation}
        and $\mathcal{D}(\overline{q_{φ}}) = \Set{\v{η} \in \HS | \ip{\v{η}}{\Delta(φ/ψ)\v{η}} < \infty}$ is its domain. Since $D(\HS, ψ)$ is dense in $\mathcal{D}(\overline{q_{φ}})$ with respect to the form norm $(\norm{\v{η}}^2 + \overline{q_φ}(\v{η}))^{1/2}$, and the objective function in \eqref{eq:rel-mod-op-variational-formula} is continuous with respect to this norm, the supremum does not change if it is restricted to $\v\eta \in D(\HS, ψ)$. This gives $\bra{\v{\xi}} (t + \Delta(φ/ψ))^{-1} \ket{\v{ξ}} = g_{\v{\xi}}^{(t)}(φ, ψ)$ for all $t > 0$, and hence $\bra{\v{\xi}} \Delta(φ/ψ)^α \ket{\v{ξ}} = \f(φ, ψ)$ for all $α \in [-1, 0)$.

        It remains to show that $\f$ is weak-$*$ lower semi-continuous and convex in $\wp$. For this note that the objective function in \eqref{eq:definition g} is linear and weak-$*$ continuous in $\wp$, and hence $g_{\v{\xi}}^{(t)}$ is convex and weak-$*$ lower semi-continuous in $\wp$. For $α = -1$, $\f$ is the supremum of the functions $g_{\v{\xi}}^{(t)}$, $t > 0$, and therefore also convex and weak-$*$ lower semi-continuous. For $α \in (-1, 0)$, note that for fixed $\wp$ the integrand $t \mapsto t^α g_{\v{\xi}}^{(t)}(\wp, ψ)$ in \eqref{eq:definition f} is nonincreasing in $t$ and takes values in $[0, t^{α - 1}\norm{\v{ξ}}^2]$ (since $\v\eta = 0$ is always a valid choice in \eqref{eq:definition g}, and since $2 \Re \ip{\v{\xi}}{\v{\eta}} - t \norm{\v{\eta}}^2 \leq \norm{\v{ξ}}^2/t$). Its integral can therefore be approximated from below by Riemann sums, i.e.\ it is the supremum of finite linear combinations, with positive coefficients, of the functions $g_{\v{\xi}}^{(t_i)}(\wp, ψ)$ with $t_i > 0$. Hence also $\f$ is convex and weak-$*$ lower semi-continuous, as a supremum of convex and weak-$*$ lower semi-continuous functions (Fatou's lemma is not available here, since the weak-$*$ topology is not metrizable in general and lower semi-continuity has to be checked on nets).
        \item By \autoref{lem:singular functional properties}~(iii) applied to $φ_s$, there exists an increasing net of projections $(e_i)_i \subset \M$ such that $(e_i)_{i \in I} \nearrow 1$ strongly and $φ_s(e_i) = 0$ for all $i$. Take $\v\eta \in D(\HS, ψ)$ and write $a = θ^ψ(\v\eta, \v\eta)$. Then, also $e_i \v\eta \in D(\HS, ψ)$ and $θ^ψ(e_i \v\eta, e_i \v\eta) = e_i a e_i$, since $R^ψ(x\v{η}) = x R^ψ(\v{η})$ for $x \in \M$ (\Autoref{def:spatial-derivative}). Since operator multiplication is strongly continuous on bounded sets (and both the $e_i$ and $a$ are bounded) we get $e_i a e_i \to a$ strongly. Now by \cite[Theorem II.2.6(ii)]{takesaki_theory_1979} $φ_n$ being normal also implies it being strongly continuous on any bounded set, and so $φ_n(e_i a e_i) \to φ_n(a)$. Additionally, since $0 \leq e_i a e_i \leq \norm{a} e_i$ we still have that $φ_s(e_i a e_i) = 0$ for all $i$. The claim is now that, for every $t > 0$, the optimization in \eqref{eq:definition g} can be restricted to $\Set{e_i \v{\eta} | \v{\eta} \in D(\HS, ψ),\, i \in I}$, which implies $g^{(t)}_{\v{\xi}}(\wp, ψ) = g^{(t)}_{\v{\xi}}(φ_n, ψ)$ and hence the desired statement (in both cases of \eqref{eq:definition f}), since $\wp(θ^ψ(e_i \v\eta, e_i \v\eta)) = φ_n(θ^ψ(e_i \v{\eta}, e_i \v{\eta}))$. To show the claim we only have to show that the value taken at any $\v\eta$ is upper bounded by the limit of the values taken at $e_i\v{\eta}$. Crucially, $\norm{e_i \v\eta}^2 \to \norm{\eta}^2$ and $\ip{\v\xi}{e_i \v{\eta}} \to \ip{\v\xi}{\v\eta}$, and so the only discontinuity can come from the $\wp$ term. We have $\wp(θ^ψ(e_i\v{\eta}, e_i\v{\eta})) = φ_n(e_i a e_i) + φ_s(e_i a e_i) = φ_n(e_i a e_i) \to φ_n(a)$, as well as $\wp(θ^ψ(\v{\eta}, \v{\eta})) = φ_n(a) + φ_s(a) \geq φ_n(a)$ which gives the desired inequality since $\wp(θ^ψ(\v\eta, \v\eta))$ appears with a minus sign in the objective function.
        \item By part (ii) we may assume that $\wp = φ \in \M_*^+$ is normal, so that $\f(φ, ψ) = \bra{\v{\xi}} \Delta(φ/ψ)^α \ket{\v{ξ}}$ with $\Delta(φ/ψ)$ as in \Autoref{def:spatial-derivative}. Let $e \coloneqq s(φ) \in \M$ and $e' \coloneqq s(ψ) \in \M'$, two commuting projections. By \Autoref{prop:spatial-derivative-properties}\ref{item:sd-support}, $\Delta(φ/ψ)$ is $+\infty$ on $(\IdentityMatrix - e')\HS$, $0$ on $(\IdentityMatrix - e)e'\HS$ and an injective positive operator $A$ on $ee'\HS$, and by \Autoref{prop:spatial-derivative-properties}\ref{item:sd-inverse} the generalized positive operator $\Delta(ψ/φ)$ (\Autoref{def:spatial-derivative} with the roles of $\M$ and $\M'$ interchanged) is $+\infty$ on $(\IdentityMatrix - e)\HS$, $0$ on $(\IdentityMatrix - e')e\HS$ and equal to $A^{-1}$ on $ee'\HS$. Since $α < 0$ and $-α \in (0, 1]$, we obtain
        \begin{equation}\label{eq:concavity-decomposition}
            \f(φ, ψ) = \infty \cdot \norm{(\IdentityMatrix - e)e'\v{\xi}}^2 + \bra{ee'\v{\xi}} A^{α} \ket{ee'\v{\xi}} = \infty \cdot \norm{(\IdentityMatrix - e)s(ψ)\v{\xi}}^2 + \bra{e\v{\xi}} \Delta(ψ/φ)^{-α} \ket{e\v{\xi}}\,,
        \end{equation}
        with $\infty \cdot 0 = 0$: the $+\infty$ part of $\Delta(φ/ψ)$ contributes $\infty^{α} = 0$, its kernel contributes $0^{α} = \infty$, and $\Delta(ψ/φ)^{-α}$ vanishes on $(\IdentityMatrix - e')e\HS$ and equals $A^{α}$ on $ee'\HS$, while $e\v{\xi}$ has no component in $(\IdentityMatrix - e)\HS$. Both terms on the right hand side of \eqref{eq:concavity-decomposition} are concave functions of $ψ \in (\M')_*^+$ with values in $[0, \infty]$. For the first term let $ψ = μψ_1 + (1 - μ)ψ_2$ with $μ \in (0,1)$; then $s(ψ) = s(ψ_1) \vee s(ψ_2)$, so $(\IdentityMatrix - e)s(ψ) \geq (\IdentityMatrix - e)s(ψ_i)$ as commuting projections, and the first term is $+\infty$ at $ψ$ as soon as it is $+\infty$ at $ψ_1$ or at $ψ_2$, which is the required concavity inequality. For the second term note that the vector $e\v{\xi}$ does not depend on $ψ$, that $ψ \mapsto \Delta(ψ/φ)$ is affine by \Autoref{prop:spatial-derivative-properties}\ref{item:sd-linearity} (additivity and homogeneity in the first argument), and that $C \mapsto \ip{ζ}{C^{-α}ζ}$ is concave on $\widehat{\BO_+}$ for $-α \in (0, 1]$ by \Autoref{lem:compression-inequality}\ref{item:ci-concavity} (for $-α = 1$ it is affine). Since a sum of concave $[0, \infty]$-valued functions is concave, (iii) follows.
    \end{enumerate}
\end{proof}

Similar to the previous sections, the proof of the duality relation will rely crucially on the inversion formula \Autoref{prop:spatial-derivative-properties}\ref{item:sd-inverse}, which however does not hold outside of the support of both arguments. The following lemma shows that this does not matter for the optimizations we are interested in.

\begin{lemma}[Supports in the optimizations]\label{lem:support-bookkeeping}
    Under the standing assumptions of this section, let $α \in [1/2, 1) \cup (1, \infty]$, $α' = (α - 1)/α$ (with $α' = 1$ for $α = \infty$, as in \Autoref{def:divergences}) and $p \coloneqq s(ρ') = [\M\v{ρ}] \in \M'$. For $ω' \in \S(\M')$ and $σ \in \S(\N)$ put
    \begin{align}\label{eq:XY}
        F_α(ω', σ) &\coloneqq \bra{\v{ρ}} \Delta(ω'/E_*(σ))^{α'} \ket{\v{ρ}} = \bra{\v{ρ}} \Delta(E^{-1}_*(ω')/σ)^{α'} \ket{\v{ρ}}\,\\
        G_α(ω', σ) &\coloneqq \bra{\v{ρ}} \Delta(σ/E^{-1}_*(ω'))^{-α'} \ket{\v{ρ}}\,,
    \end{align}
    where the two expressions for $F_α$ agree by \Autoref{lem:haagerup-identity-finite-index}. Then:
    \begin{enumerate}[label=(\alph*)]
        \item\label{item:sb-pointwise} $F_α(ω', σ) = G_α(ω', σ)$ whenever $(\IdentityMatrix - s(σ))(\IdentityMatrix - s(ω'))\v{ρ} = 0$, in particular whenever $s(ω') \geq p$. In general $F_α \geq G_α$ if $α > 1$ and $F_α \leq G_α$ if $α < 1$.
        \item\label{item:sb-optimizations} Let $ω'$ range over $\S(\M')$ and $σ$ over $\S(\N)$. If $α > 1$, then $\sup_{ω'} F_α(ω', σ) = \sup_{ω'} G_α(ω', σ)$ for every $σ$, and $\sup_{ω'} \inf_{σ} F_α = \sup_{ω'} \inf_{σ} G_α$. If $α < 1$, then $\inf_{ω'} F_α(ω', σ) = \inf_{ω'} G_α(ω', σ)$ for every $σ$, and $\inf_{ω'} \sup_{σ} F_α = \inf_{ω'} \sup_{σ} G_α$. In all four identities the optimization over $ω'$ may equivalently be restricted to the states with $s(ω') \geq p$.
    \end{enumerate}
\end{lemma}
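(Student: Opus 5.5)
Part~\ref{item:sb-pointwise} follows from the inversion formula of \Autoref{prop:spatial-derivative-properties}\ref{item:sd-inverse}, applied in the pair $(\N', \N)$ with $φ = σ \in \N_*^+$ and $ψ = E^{-1}_*(ω') \in (\N')_*^+$. By \Autoref{lem:haagerup-identity-finite-index} the supports are $s(σ)$ and $s(ω')$, so $\Delta(E^{-1}_*(ω')/σ)^{-1}$ and $\Delta(σ/E^{-1}_*(ω'))$ agree on the complement of $Q \coloneqq (\IdentityMatrix - s(σ))(\IdentityMatrix - s(ω'))\HS$, and both are reduced by $Q$. The spectral measures of $\v{ρ}$ therefore agree when $Q\v{ρ} = 0$, which gives $F_α = G_α$. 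If $s(ω') \geq p$, then $\v{ρ} \in p\HS \subset s(ω')\HS$, and the hypothesis holds.

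In general, split $\v{ρ} = Q\v{ρ} + (\IdentityMatrix - Q)\v{ρ}$; both integrals decompose accordingly, since $Q$ reduces both operators. On $Q\HS$ the operator $\Delta(E^{-1}_*(ω')/σ)$ equals $+\infty$, so it contributes $\infty^{α'}\norm{Q\v{ρ}}^2$ to $F_α$. On the other hand $\Delta(σ/E^{-1}_*(ω'))^{-α'}$ equals $\infty^{-α'}$ there. For $α > 1$ we have $α' > 0$, so $F_α$ gets $\infty$ and $G_α$ gets $0$, giving $F_α \geq G_α$. For $α < 1$ the inequality is reversed.

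For part~\ref{item:sb-optimizations} the plan is a perturbation argument. Take any $ω'$ and set $ω'_ε \coloneqq (1-ε)ω' + ερ'$, so that $s(ω'_ε) = s(ω') \vee p \geq p$. Then $F_α(ω'_ε,σ) = G_α(ω'_ε,σ)$ for every $σ$, by part~\ref{item:sb-pointwise}. We want to compare this value with $G_α(ω',σ)$ and $F_α(ω',σ)$.

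Consider first $α > 1$. By \Autoref{prop:spatial-derivative-properties}\ref{item:sd-linearity}, $\Delta(σ/E^{-1}_*(ω'_ε)) \leq (1-ε)^{-1}\Delta(σ/E^{-1}_*(ω'))$ in the order of forms. Since $x \mapsto x^{-α'}$ with $-α' \in [-1,0)$ is operator monotone decreasing (equivalently, the concavity used in \Autoref{lemma:spatial derivate non-normal extension}(iii)), we get
\[
G_α(ω'_ε,σ) \geq (1-ε)^{α'}G_α(ω',σ).
\]
Hence $\sup_{ω'}G_α$ is attained, up to a factor tending to $1$, on states with $s(ω') \geq p$, where $G_α = F_α$. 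Combined with $F_α \geq G_α$, this yields $\sup_{ω'}F_α = \sup_{ω'}G_α$, with both suprema restrictable to $s(ω')\geq p$. The estimate is uniform in $σ$, so the same reasoning gives
\[
\inf_σ G_α(ω'_ε,\cdot) \geq (1-ε)^{α'}\inf_σ G_α(ω',\cdot),
\]
and hence $\sup_{ω'}\inf_σ F_α = \sup_{ω'}\inf_σ G_α$.

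For $α < 1$ the argument is symmetric: now $F_α \leq G_α$, and we need
\[
F_α(ω'_ε,σ) \leq (1-ε)^{α'}F_α(ω',σ).
\]
This follows from $\Delta(ω'_ε/E_*(σ)) \geq (1-ε)\Delta(ω'/E_*(σ))$ together with the fact that $x^{α'}$ is decreasing for $α' < 0$. Again the bound is uniform in $σ$.

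The main obstacle is the monotonicity step for generalized positive operators, i.e.\ operators that may take the value $+\infty$. I would justify it through the resolvent and integral representation of \Autoref{lemma:spatial derivate non-normal extension}(i): each resolvent $(t+A)^{-1}$ is antitone in $A$ for forms, and the order is preserved under the integral and supremum there. One also has to check that the infinite parts, where $0^{α'}=\infty$, behave consistently under this order; this is where the conventions of \Autoref{rem:divergence-conventions} are needed.
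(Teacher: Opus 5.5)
Your part~(a) is correct and matches the paper. On $(\IdentityMatrix - s(\sigma))(\IdentityMatrix - s(\omega'))\HS$ both spatial derivatives are $+\infty$, so $F_\alpha$ picks up $\infty^{\alpha'}$ and $G_\alpha$ picks up $\infty^{-\alpha'}$ there, and the two agree elsewhere. Your treatment of $\alpha<1$ in part~(b) is also fine, and it works precisely because there you perturb $F_\alpha$.

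The gap is in part~(b) for $\alpha>1$, in both identities. You perturb $G_\alpha$ and obtain $\sup_{\omega'} G_\alpha = \sup_{s(\omega')\geq p} G_\alpha$. Since $F_\alpha = G_\alpha$ on $s(\omega')\geq p$, this gives $\sup_{\omega'} G_\alpha = \sup_{s(\omega')\geq p} F_\alpha \leq \sup_{\omega'} F_\alpha$. That is the same inequality that $F_\alpha \geq G_\alpha$ already gives.

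The reverse inequality $\sup_{\omega'} F_\alpha \leq \sup_{\omega'} G_\alpha$, and its $\sup_{\omega'}\inf_\sigma$ version, is never established, and it is the substantive direction. $F_\alpha(\omega',\sigma)=+\infty$ can occur merely because $(\IdentityMatrix - s(\sigma))(\IdentityMatrix - s(\omega'))\v{\rho}\neq 0$, while $G_\alpha(\omega',\sigma)<\infty$. A lower bound of $G_\alpha(\omega'_\varepsilon,\sigma)$ by $G_\alpha(\omega',\sigma)$ cannot detect these spurious infinities.

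What you need is that the optimizations of the pointwise larger function $F_\alpha$ can be restricted to $s(\omega')\geq p$. So perturb $F_\alpha$, exactly as you did for $\alpha<1$:
\begin{itemize}
\item $\Delta(\omega'_\varepsilon/E_*(\sigma)) \geq (1-\varepsilon)\Delta(\omega'/E_*(\sigma))$, and $x\mapsto x^{\alpha'}$ is monotone increasing for $\alpha'\in(0,1]$ (\Autoref{lem:compression-inequality}\ref{item:ci-monotone}).
\item Together these give $F_\alpha(\omega'_\varepsilon,\sigma)\geq(1-\varepsilon)^{\alpha'}F_\alpha(\omega',\sigma)$, uniformly in $\sigma$.
\item Hence $\sup_{\omega'}\inf_\sigma F_\alpha = \sup_{s(\omega')\geq p}\inf_\sigma F_\alpha = \sup_{s(\omega')\geq p}\inf_\sigma G_\alpha \leq \sup_{\omega'}\inf_\sigma G_\alpha$, and likewise for fixed $\sigma$.
\end{itemize}
This is the paper's argument.

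A minor point: the order reversal you use for $x\mapsto x^{-\alpha'}$ is also \Autoref{lem:compression-inequality}\ref{item:ci-monotone}. The concavity statement you cite is not the relevant tool.
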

\begin{proof}
    Fix $ω'$ and $σ$, and let $P \coloneqq s(ω') \in \M'$ and $Q \coloneqq s(σ) \in \N \subset \M$, two commuting projections. By \Autoref{lem:haagerup-identity-finite-index}, $s(E_*(σ)) = Q$ and $s(E^{-1}_*(ω')) = P$, so \Autoref{prop:spatial-derivative-properties}\ref{item:sd-support} shows that $\Delta(ω'/E_*(σ))$ is $+\infty$ on $(\IdentityMatrix - Q)\HS$, $0$ on $(\IdentityMatrix - P)Q\HS$ and an injective positive operator $A$ on $PQ\HS$, whereas $\Delta(σ/E^{-1}_*(ω'))$ is $+\infty$ on $(\IdentityMatrix - P)\HS$, $0$ on $(\IdentityMatrix - Q)P\HS$ and an injective positive operator $B$ on $PQ\HS$. Moreover $B = A^{-1}$: by \Autoref{lem:haagerup-identity-finite-index} and \Autoref{prop:spatial-derivative-properties}\ref{item:sd-inverse}, applied to the pair $(σ, E^{-1}_*(ω'))$, the generalized positive operators $\Delta(σ/E^{-1}_*(ω'))^{-1}$ and $\Delta(E^{-1}_*(ω')/σ) = \Delta(ω'/E_*(σ))$ coincide on $PQ\HS$. With $Φ(ω', σ) \coloneqq \bra{PQ\v{ρ}} A^{α'} \ket{PQ\v{ρ}} \in [0, \infty]$ and the conventions $0^{t} = \infty^{-t} = 0$, $0^{-t} = \infty^{t} = \infty$ for $t > 0$ and $\infty \cdot 0 = 0$, we therefore obtain
    \begin{align}
        α > 1: \qquad F_α(ω', σ) &= \infty \cdot \norm{(\IdentityMatrix - Q)\v{ρ}}^2 + Φ(ω', σ)\,, & G_α(ω', σ) &= \infty \cdot \norm{(\IdentityMatrix - Q)P\v{ρ}}^2 + Φ(ω', σ)\,, \label{eq:XY-decomposition-large}\\
        α < 1: \qquad F_α(ω', σ) &= \infty \cdot \norm{(\IdentityMatrix - P)Q\v{ρ}}^2 + Φ(ω', σ)\,, & G_α(ω', σ) &= \infty \cdot \norm{(\IdentityMatrix - P)\v{ρ}}^2 + Φ(ω', σ)\,. \label{eq:XY-decomposition-small}
    \end{align}
    Since $\norm{(\IdentityMatrix - Q)\v{ρ}}^2 = \norm{(\IdentityMatrix - Q)P\v{ρ}}^2 + \norm{(\IdentityMatrix - Q)(\IdentityMatrix - P)\v{ρ}}^2$ and $\norm{(\IdentityMatrix - P)\v{ρ}}^2 = \norm{(\IdentityMatrix - P)Q\v{ρ}}^2 + \norm{(\IdentityMatrix - P)(\IdentityMatrix - Q)\v{ρ}}^2$, this proves \ref{item:sb-pointwise}; note that $s(ω') \geq p$ means $P\v{ρ} = \v{ρ}$.

    \ref{item:sb-optimizations} We first show that all four optimizations over $ω'$ can be restricted to the states with $s(ω') \geq p$. For $ω' \in \S(\M')$ and $ε > 0$ put $ω'_ε \coloneqq (ω' + ερ')/(1+ε) \in \S(\M')$. Then $s(ω'_ε) = s(ω') \vee p \geq p$, and by \Autoref{prop:spatial-derivative-properties}\ref{item:sd-linearity} (additivity and homogeneity in the first argument, applied with the roles of $\M$ and $\M'$ interchanged)
    \begin{equation}
        \Delta(ω'_ε/E_*(σ)) = \frac{1}{1+ε}\Big(\Delta(ω'/E_*(σ)) + ε\,\Delta(ρ'/E_*(σ))\Big) \geq \frac{1}{1+ε}\,\Delta(ω'/E_*(σ))
    \end{equation}
    in $\widehat{\BO_+}$, so that \Autoref{lem:compression-inequality}\ref{item:ci-monotone} with $s = \abs{α'} \in (0, 1]$ and the homogeneity of the powers give, for all $σ \in \S(\N)$,
    \begin{equation}\label{eq:sb-mixing}
        F_α(ω'_ε, σ) \geq (1+ε)^{-α'} F_α(ω', σ) \quad \text{if } α > 1\,, \qquad\qquad F_α(ω'_ε, σ) \leq (1+ε)^{-α'} F_α(ω', σ) \quad \text{if } α < 1\,.
    \end{equation}
    Letting $ε \to 0$ in \eqref{eq:sb-mixing} shows that for $α > 1$ the suprema over $ω'$ of $F_α(\cdot, σ)$ and of $\inf_σ F_α(\cdot, σ)$, and for $α < 1$ the infima over $ω'$ of $F_α(\cdot, σ)$ and of $\sup_σ F_α(\cdot, σ)$, do not change when $ω'$ is restricted to the states with $s(ω') \geq p$.

    Now let $α > 1$. For fixed $σ$, either $(\IdentityMatrix - Q)\v{ρ} \neq 0$, in which case $F_α(ω', σ) = \infty$ for all $ω'$ by \eqref{eq:XY-decomposition-large} and also $G_α(ρ', σ) = \infty$ because $(\IdentityMatrix - Q)p\v{ρ} = (\IdentityMatrix - Q)\v{ρ}$; or $(\IdentityMatrix - Q)\v{ρ} = 0$, in which case $F_α(\cdot, σ) = G_α(\cdot, σ)$. Hence $\sup_{ω'} F_α(ω', σ) = \sup_{ω'} G_α(ω', σ)$. Next, $F_α \geq G_α$ gives $\sup_{ω'} \inf_σ F_α \geq \sup_{ω'} \inf_σ G_α$, while by the restriction just proved and \ref{item:sb-pointwise}, $\sup_{ω'} \inf_σ F_α = \sup_{s(ω') \geq p} \inf_σ F_α = \sup_{s(ω') \geq p} \inf_σ G_α \leq \sup_{ω'} \inf_σ G_α$. This proves the claims for $α > 1$; since $F_α = G_α$ on the states with $s(ω') \geq p$, the identities also hold with the optimizations over $ω'$ restricted to these states. For $α < 1$ the argument is the same with all inequalities reversed: $F_α \leq G_α$, both the infimum of $F_α(\cdot, σ)$ and the infimum of $\sup_σ F_α(\cdot, σ)$ over $ω'$ can be restricted to $s(ω') \geq p$, where $F_α = G_α$, and the restricted infima of $G_α$ dominate the unrestricted ones, which in turn dominate those of $F_α$.
\end{proof}

\begin{proposition}[Minimax for sandwiched relative modular operator]\label{prop:sandwiched_minimax}
    Under the standing assumptions of this section, let $α \in [1/2, 1) \cup (1, \infty]$ and $α' = {α - 1 \over α}$ (with $α' = 1$ for $α = \infty$). Then,
    \begin{equation}
        \sup_{ω' \in \S(\M')} \inf_{σ \in \S(\N)} {1 \over α'} \log \bra{\v{ρ}} \Delta(ω'/E_*(σ))^{α'} \ket{\v{ρ}} = 
        \inf_{σ \in \S(\N)} \sup_{ω' \in \S(\M')} {1 \over α'} \log \bra{\v{ρ}} \Delta(ω'/E_*(σ))^{α'} \ket{\v{ρ}}
    \end{equation}
\end{proposition}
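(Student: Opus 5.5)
We will apply the generalized minimax theorem \Autoref{lem:gen_minimax}. Since $t \mapsto \frac{1}{α'}\log t$ is monotone (increasing for $α > 1$, decreasing for $α < 1$), it commutes with the optimizations, so it suffices to prove the minimax identity for the inner functions without the logarithm. We use Haagerup's identity (\Autoref{lem:haagerup-identity-finite-index}) together with \Autoref{lem:support-bookkeeping} to pass from $F_α$ to
\[
G_α(ω',σ)=\bra{\v{ρ}}\Delta(σ/E^{-1}_*(ω'))^{-α'}\ket{\v{ρ}} .
\]
In $G_α$ the variable $σ$ sits in the first (numerator) slot of a spatial derivative relative to the algebra $\N'$, whose commutant is $\N$. \Autoref{lem:support-bookkeeping}\ref{item:sb-optimizations} guarantees that $\sup_{ω'}\inf_σ$ and $\inf_σ\sup_{ω'}$ of $F_α$ and $G_α$ agree (for $α<1$ with the roles of $\inf/\sup$ over $ω'$ as stated there). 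Hence we may work with $G_α$ throughout.

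\textbf{Case $α < 1$.} Here $-α' = (1-α)/α \in (0,1]$, and we must show $\inf_{ω'}\sup_σ G_α = \sup_σ\inf_{ω'} G_α$. The exponent has the wrong sign for \Autoref{lemma:spatial derivate non-normal extension} as stated, so we first invert. On the restricted set $s(ω')\geq p$ (allowed by \Autoref{lem:support-bookkeeping}), $G_α = F_α = \bra{\v{ρ}}\Delta(E^{-1}_*(ω')/σ)^{α'}\ket{\v{ρ}}$ with $α'\in[-1,0)$.

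We take $X$ to be the $σ$-variable compactified, and want to minimize over it. Swapping signs turns this into the statement $\sup_{ω'}\inf_σ(-G)$, and $X$ must carry the convex, lower semicontinuous side. The right choice is to compactify $ω'$: $ω'\mapsto E^{-1}_*(ω')$ is the numerator on the commutant side, and \Autoref{lemma:spatial derivate non-normal extension}(i) (with $\M$ replaced by $\N'$, numerator $E^{-1}_*(ω')$, denominator $σ$, exponent $α'\in[-1,0)$) gives a convex, weak-$*$ lower semicontinuous extension to $\S^*(\M')$ via $\widehat ω'\mapsto E^{-1}_*\widehat ω'$ (the adjoint of a bounded map is weak-$*$ continuous). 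By part (iii) of that lemma the extension is concave in $σ$, and it is affine-compatible since $E_*$ is linear. So \Autoref{lem:gen_minimax} applies with $X=\S^*(\M')$, which is weak-$*$ compact by Banach--Alaoglu, and $Y=\S(\N)$. This yields $\inf_{\widehat ω'}\sup_σ = \sup_σ\inf_{\widehat ω'}$.

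It remains to remove the hats. By part (ii) the value at $\widehat ω'=ω_n+ω_s$ equals the value at $ω_n$, and $E^{-1}_*$ maps singular parts to something harmless. We then rescale $ω_n$ to a state using the homogeneity $\Delta(λφ/ψ)^{α'}=λ^{α'}\Delta(φ/ψ)^{α'}$. Since $\norm{ω_n}\le 1$ and $α'<0$, this can only decrease the value, so the infimum over $\S^*$ equals the infimum over $\S(\M')$ (the case $ω_n=0$ gives $+\infty$ and is irrelevant). One point needs checking here: $E^{-1}_*$ of a singular functional has zero normal part, or at least the decomposition commutes with $E^{-1}_*$. We will argue this via \Autoref{lem:singular functional properties}(iii) and normality of $E^{-1}$, or else apply part (ii) directly on $\M'$ after composing.

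\textbf{Case $α>1$ (including $α=\infty$).} Here $α'\in(0,1]$, and we must show $\sup_{ω'}\inf_σ F_α = \inf_σ\sup_{ω'}F_α$. We now put $σ$ in the numerator by inversion: $F_α = G_α = \bra{\v{ρ}}\Delta(σ/E^{-1}_*(ω'))^{-α'}\ket{\v{ρ}}$ with $-α'\in[-1,0)$, the algebra being $\N'$ and its commutant $\N$. \Autoref{lemma:spatial derivate non-normal extension} then gives a convex, weak-$*$ lower semicontinuous extension in $σ\in\S^*(\N)$, which is compact, and concavity in the denominator $E^{-1}_*(ω')$, hence concave in $ω'$. \Autoref{lem:gen_minimax} with $X=\S^*(\N)$ gives the exchange. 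Part (ii) removes the singular part, and the rescaling $σ_n/\norm{σ_n}$ decreases the value because $-α'<0$ and $\norm{σ_n}\le1$, so the infimum over $\S^*(\N)$ equals the infimum over $\S(\N)$. Unlike the case $α<1$, no normality issue with $E^{-1}_*$ arises here.

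\textbf{Main obstacle.} The delicate points are the support bookkeeping and the removal of non-normal functionals. On the support side, \Autoref{lem:support-bookkeeping} must be applied carefully so that inversion is legitimate, in particular on the compactified set, where it must be verified that extended functionals do not leave the regime in which $F_α=G_α$. On the normality side, part (ii) must be combined with rescaling by $\norm{φ_n}$. For $α<1$ we expect the additional care about $E^{-1}_*$ on singular functionals to be the main technical step.
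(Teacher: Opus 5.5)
Your proof is correct. For $\alpha>1$ it is the paper's argument: you pass from $F_\alpha$ to $G_\alpha$ via Lemma~\ref{lem:support-bookkeeping}\ref{item:sb-optimizations}, apply Lemma~\ref{lem:gen_minimax} with $X=\S^*(\N)$ and the $\omega'$-variable entering through $E^{-1}_*$, then discard singular parts and rescale.

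For $\alpha<1$ your route differs, and the difference is what creates the technical point you single out. The paper never uses $G_\alpha$ or Haagerup's identity in this case. It applies Lemma~\ref{lemma:spatial derivate non-normal extension}, with the roles of $\M$ and $\M'$ interchanged, directly to $\bra{\v{\rho}}\Delta(\omega'/E_*(\sigma))^{\alpha'}\ket{\v{\rho}}$, taking $X=\S^*(\M')$ and $Y=E_*(\S(\N))$. The singular part of $\widehat{\omega}'$ is then removed by part (ii) on $\M'$ itself.

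Your version instead extends $\omega'\mapsto\bra{\v{\rho}}\Delta(E^{-1}_*(\omega')/\sigma)^{\alpha'}\ket{\v{\rho}}$ through the weak-$*$ continuous affine map $\widehat{\omega}'\mapsto\widehat{\omega}'\circ E^{-1}$. This also works, but it needs the extra fact you leave open, which is true. Take projections $e_i\nearrow\IdentityMatrix$ in $\M'\subset\N'$ with $\omega'_s(e_i)=0$. The $\M'$-bimodule property gives $E^{-1}(e_i)=e_i(\Ind E)e_i\le\norm{\Ind E}\,e_i$. Hence $\omega'_s\circ E^{-1}$ vanishes on the $e_i$ and is singular on $\N'$ by Lemma~\ref{lem:singular functional properties}, so the normal part of $\widehat{\omega}'\circ E^{-1}$ is $\omega'_n\circ E^{-1}$. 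Normality of $E^{-1}$ alone only handles the normal part.

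Two of your worries are unnecessary:
\begin{itemize}
\item The Haagerup form equals $F_\alpha$ for every $\omega'$, not only when $s(\omega')\ge p$, by Lemma~\ref{lem:haagerup-identity-finite-index}. So for $\alpha<1$ the round trip $F_\alpha\to G_\alpha\to F_\alpha$ is superfluous.
\item The minimax theorem is always applied to a single function on the whole compactified set. $F_\alpha$ and $G_\alpha$ only need to be compared after optimizing over normal states, which is exactly what Lemma~\ref{lem:support-bookkeeping}\ref{item:sb-optimizations} provides. Nothing has to be checked on $\S^*$.
\end{itemize}

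One small inaccuracy: a purely singular $\widehat{\omega}'$ does not always give $+\infty$; its value is $0$ when $\rho(s(\sigma))=0$. The correct reason it is irrelevant is monotonicity. Since $\Delta(0/\cdot)\le\Delta(x/\cdot)$ and $\alpha'<0$, every normal state $x$ satisfies $f(x,\cdot)\le f(0,\cdot)$, uniformly in the other variable. The same remark covers the case $\sigma_n=0$ when $\alpha>1$.
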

\begin{proof}
We first treat the case $α \in [1/2, 1)$, so that $α' \in [-1, 0)$ and the prefactor $1/α'$ is negative; we thus have to prove that
\begin{equation}
    \inf_{ω' \in \S(\M')} \sup_{σ \in \S(\N)} \log \bra{\v{ρ}} \Delta(ω'/E_*(σ))^{α'} \ket{\v{ρ}} = \sup_{σ \in \S(\N)}\inf_{ω' \in \S(\M')}  \log \bra{\v{ρ}} \Delta(ω'/E_*(σ))^{α'} \ket{\v{ρ}}
\end{equation}

Let $X = \S(\M') \subset \S^*(\M') \coloneqq \Set{ω \in (\M'^{*})_+ | ω(\IdentityMatrix) = 1}$, the set of all (not necessarily normal) states of $\M'$, let $Y = \Set{E_*(σ) | σ \in \S(\N)}$, a convex subset of $\M_*^+$ by \eqref{eq:finite-index-preduals}, and let $f: X \times Y \to [0, \infty]$ be defined as
\begin{equation}
    f(x, y) = f_{\v{ρ}}(x, y) = \bra{\v{\rho}} \Delta(x/y)^{α'} \ket{\v{ρ}}
\end{equation}
We want to show that 
\begin{equation}
    \sup_{y \in Y} \inf_{x \in X} f(x,y) = \inf_{x \in X} \sup_{y \in Y} f(x,y)\,.
\end{equation}

By \autoref{lemma:spatial derivate non-normal extension}, $f$ can be extended to a weak-$*$ lower semi-continuous and convex function on $\overline{X} = \S^*(\M')$. Additionally, for every non-normal state $\wp \in \S^*(\M')$ the function value $f(\wp, y)$ is identical to $f(φ_n, y)$ where $φ_n$ is the normal part of $\wp$, which satisfies $φ_n(1) \leq 1$. We claim that there always is a normal state $x \in X$ such that $f(x, y) \leq f(\wp, y)$ for all $y \in Y$. If $φ_n \neq 0$, then $c φ_n \in \S(\M')$ for $c \coloneqq φ_n(1)^{-1} \geq 1$. Since $\Delta(x/y)$ is linear in the first argument and $α' \in [-1, 0)$ we have that
\begin{equation} \bra{\v{\rho}} \Delta(c φ_n/y)^{α'} \ket{\v{ρ}} = c^{α'} f(φ_n, y) = c^{α'} f(\wp, y) \leq f(\wp, y)\,,  \end{equation}
so $x = c φ_n$ has the claimed property. If $φ_n = 0$, i.e.\ if $\wp$ is singular, then every $x \in X$ has it: $0 \leq x$ implies $\Delta(0/y) \leq \Delta(x/y)$ by \Autoref{prop:spatial-derivative-properties}\ref{item:sd-linearity} (with the roles of $\M$ and $\M'$ interchanged), and since $α' < 0$, \Autoref{lem:compression-inequality}\ref{item:ci-monotone} gives $f(x, y) \leq f(0, y) = f(\wp, y)$. (The operator $\Delta(0/y)$ vanishes on $s(y)\HS$ and is $+\infty$ on its orthogonal complement, so that $f(0, y) = +\infty$ unless $s(y)\v{ρ} = 0$.) Hence the optimizations over $x \in X$ or $x \in \overline{X}$ are equivalent (also if there is another optimization over $y \in Y$ inside).

Crucially, $\overline{X}$ is weak-$*$ compact in $(\M')^*$, and 
by \autoref{lemma:spatial derivate non-normal extension} $f$ is weak-$*$ lower semi-continuous in $x$, convex in $x$ and concave in $y$. Thus, we get that
\begin{equation}
      \sup_{y \in Y} \inf_{x \in X} f(x,y) = \sup_{y \in Y} \inf_{x \in \overline{X}} f(x, y) = \inf_{x \in \overline{X}} \sup_{y \in Y} f(x, y) =  \inf_{x \in X} \sup_{y \in Y} f(x, y)
\end{equation}
where the middle equality is \autoref{lem:gen_minimax}.

Finally let $α \in (1, \infty]$, so that $α' \in (0, 1]$ and the claim reads $\sup_{ω'} \inf_σ F_α(ω', σ) = \inf_σ \sup_{ω'} F_α(ω', σ)$ in the notation \eqref{eq:XY}. By \Autoref{lem:support-bookkeeping}\ref{item:sb-optimizations}, $F_α$ may be replaced on both sides by $G_α(ω', σ) = \bra{\v{ρ}} \Delta(σ/E^{-1}_*(ω'))^{-α'} \ket{\v{ρ}}$. The argument above therefore applies identically with the roles of $(\M, \M', ω', E_*(σ))$ and $(\N, \N', σ, E^{-1}_*(ω'))$ interchanged, i.e.\ with $X = \S(\N) \subset \overline{X} = \S^*(\N)$ and $Y = E^{-1}_*(\S(\M'))$ in \autoref{lem:gen_minimax}, and yields $\sup_{ω'} \inf_σ G_α(ω', σ) = \inf_σ \sup_{ω'} G_α(ω', σ)$.
\end{proof}

\begin{theorem}[Sandwiched duality relation]\label{thm:sandwiched_duality_relation}
    Under the standing assumptions of this section, let $α \in [1/2, \infty]$ and ${1 \over α} + {1 \over β} = 2$. Then
    \begin{equation}\label{eq:sandwiched-duality-relation}
        \widetilde{D}_α(ρ|_{\M}\|E_*(\N_*)) = - \widetilde{D}_β(ρ|_{\N'}\|E^{-1}_*(\M'_*))\,.
    \end{equation}
    For $α = β = 1$, where $\widetilde{D}_1 = D$ is the relative entropy (\Autoref{def:divergences}), this is \eqref{eq:optimized-relative-entropy-duality}. Moreover, since $\widetilde{D}_\infty = D_{\max}$ is the max-relative entropy (\Autoref{lem:dmax-variational}) and $\widetilde{D}_{1/2} = -\log F$ is given by the fidelity (\Autoref{def:one-shot}\ref{item:os-fidelity}), the two extreme cases of \eqref{eq:sandwiched-duality-relation} read
    \begin{equation}\label{eq:max-min-duality}
        \begin{aligned}
            D_{\max}(ρ|_{\M}\|E_*(\N_*)) &= - \widetilde{D}_{1/2}(ρ|_{\N'}\|E^{-1}_*(\M'_*))\,, \\
            \widetilde{D}_{1/2}(ρ|_{\M}\|E_*(\N_*)) &= - D_{\max}(ρ|_{\N'}\|E^{-1}_*(\M'_*))\,.
        \end{aligned}
    \end{equation}
\end{theorem}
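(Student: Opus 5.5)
The plan is to follow the finite-dimensional computation sketched in the introduction, with the analytic steps justified by \Autoref{lem:support-bookkeeping} and \Autoref{prop:sandwiched_minimax}. The case $α = 1$ is \eqref{eq:optimized-relative-entropy-duality}, so assume $α \neq 1$. Note that $\frac1α + \frac1β = 2$ is equivalent to $β' = \frac{β-1}{β} = 1 - \frac1β = \frac1α - 1 = -α'$. This also holds at the endpoints: $α = \infty$ gives $α' = 1$ and $β = \tfrac12$ gives $β' = -1$, and conversely.

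First I write the left side using \eqref{eq:def-sandwiched-renyi} and \Autoref{def:one-shot}\ref{item:os-sets}:
\[
\widetilde{D}_α(ρ|_{\M}\|E_*(\N_*)) = \inf_{σ \in \S(\N)} \sup_{ω' \in \S(\M')} \tfrac{1}{α'} \log F_α(ω', σ)
\]
in the notation \eqref{eq:XY}. Here I use that $\frac{1}{α'}\log$ combined with the sup/inf as in \eqref{eq:def-sandwiched-renyi-explicit} is exactly the $\sup_{ω'}$ of $\frac{1}{α'}\log$, since the map is monotone in the right direction. By \Autoref{prop:sandwiched_minimax} I swap the order to obtain $\sup_{ω'}\inf_{σ} \frac{1}{α'}\log F_α$.

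Next I replace $F_α$ by $G_α(ω',σ) = \bra{\v{ρ}}\Delta(σ/E^{-1}_*(ω'))^{-α'}\ket{\v{ρ}} = \bra{\v{ρ}}\Delta(σ/E^{-1}_*(ω'))^{β'}\ket{\v{ρ}}$. This is legitimate by \Autoref{lem:support-bookkeeping}\ref{item:sb-optimizations}, after translating the statements about $\sup_{ω'}\inf_σ F_α$ (for $α>1$) and $\inf_{ω'}\sup_σ F_α$ (for $α<1$) through the monotone map $t \mapsto \frac{1}{α'}\log t$. Pulling out the sign gives
\[
\sup_{ω'}\inf_σ \tfrac{1}{α'}\log G_α = -\inf_{ω'}\sup_σ \tfrac{1}{β'}\log \bra{\v{ρ}}\Delta(σ/E^{-1}_*(ω'))^{β'}\ket{\v{ρ}} .
\]
Now I regard $\N'$ as the ambient algebra, with commutant $\N$, state $ρ|_{\N'}$ represented by the same vector $\v{ρ}$, and second argument $E^{-1}_*(ω') \in (\N')^+_*$, which is a positive normal functional by \Autoref{lem:finite_index_bounded_inverse_weight}. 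The inner $\sup_{σ \in \S(\N)}$ is then precisely $\widetilde{D}_β(ρ|_{\N'}\|E^{-1}_*(ω'))$ by \eqref{eq:def-sandwiched-renyi}. The outer infimum over $ω' \in \S(\M')$ gives $\widetilde{D}_β(ρ|_{\N'}\|E^{-1}_*(\M'_*))$, which is \eqref{eq:sandwiched-duality-relation}.

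The extreme cases \eqref{eq:max-min-duality} are the instances $(α,β) = (\infty,\tfrac12)$ and $(\tfrac12,\infty)$, combined with \Autoref{lem:dmax-variational} and the definition of $F$.

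The main obstacle is already handled by \Autoref{prop:sandwiched_minimax}. What remains delicate is bookkeeping: values $\pm\infty$ and $\log 0$ must be treated consistently so that the identities hold in $[-\infty,\infty]$. Concretely, I must check that the monotone reparametrization $t\mapsto \frac{1}{α'}\log t$ commutes with the sup/inf including infinite values. I also must check that passing from $F_α$ to $G_α$ is done on the correct nested optimization in each of the two regimes. For $α>1$ this is the $\sup_{ω'}\inf_σ$ form. For $α<1$, after the sign $1/α'<0$, the relevant form is $\inf_{ω'}\sup_σ$ of the raw quantities, which is exactly what \Autoref{lem:support-bookkeeping} supplies.
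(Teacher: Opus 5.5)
Your proposal is correct and is essentially the paper's own proof. You write the two sides as $\inf_{\sigma}\sup_{\omega'}\frac{1}{\alpha'}\log F_\alpha$ and $\inf_{\omega'}\sup_{\sigma}\frac{1}{\beta'}\log G_\alpha$, swap the optimizations by \Autoref{prop:sandwiched_minimax}, replace $F_\alpha$ by $G_\alpha$ via \Autoref{lem:support-bookkeeping}\ref{item:sb-optimizations} using the monotonicity of $t\mapsto\frac{1}{\alpha'}\log t$, and conclude with $\beta'=-\alpha'$. As in the paper, the extreme orders need no continuity argument, since all the cited results already cover $\abs{\alpha'}=1$.
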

\begin{proof}
For $α = β = 1$ there is nothing to prove, since \eqref{eq:optimized-relative-entropy-duality} was already established in the previous subsection. So let $α \neq 1$, and let $α' \coloneqq {α - 1 \over α} = {1 - β \over β} \eqqcolon - β'$ (with $α' = 1$, $β' = -1$ for $α = \infty$, and $α' = -1$, $β' = 1$ for $β = \infty$), and let $F_α(ω', σ)$ and $G_α(ω', σ)$ be as in \eqref{eq:XY}, where $ω'$ ranges over $\S(\M')$ and $σ$ over $\S(\N)$. By \eqref{eq:def-sandwiched-renyi} and \Autoref{def:one-shot}\ref{item:os-sets}, $\widetilde{D}_α(ρ|_{\M}\|E_*(\N_*)) = \inf_{σ} \sup_{ω'} {1 \over α'} \log F_α(ω', σ)$, and likewise, applying \eqref{eq:def-sandwiched-renyi} to the algebra $\N'$ (whose commutant is $\N$) with the parameter $β' = -α'$, $\widetilde{D}_β(ρ|_{\N'}\|E^{-1}_*(\M'_*)) = \inf_{ω'} \sup_{σ} {1 \over β'} \log G_α(ω', σ)$. Hence
\begin{align}
\widetilde{D}_α(ρ|_{\M}\|E_*(\N_*)) &= \inf_{σ}\sup_{ω'} {1 \over α'} \log F_α(ω', σ) = \sup_{ω'} \inf_{σ} {1 \over α'} \log F_α(ω', σ) \\
    &= \sup_{ω'} \inf_{σ} {1 \over α'} \log G_α(ω', σ) = - \inf_{ω'} \sup_{σ} {1 \over β'} \log G_α(ω', σ) \\
    &= - \widetilde{D}_{β}(ρ|_{\N'}\|E^{-1}_*(\M'_*))\,,
\end{align}
where the second equality is \Autoref{prop:sandwiched_minimax}, and the third one is \Autoref{lem:support-bookkeeping}\ref{item:sb-optimizations}: for $α > 1$ the function ${1 \over α'} \log$ is increasing, so that $\sup_{ω'} \inf_σ {1 \over α'} \log F_α = {1 \over α'} \log \sup_{ω'} \inf_σ F_α$ and likewise for $G_α$, whereas for $α < 1$ it is decreasing, so that $\sup_{ω'} \inf_σ {1 \over α'} \log F_α = {1 \over α'} \log \inf_{ω'} \sup_σ F_α$ and likewise for $G_α$. The remaining equalities are the identities recalled at the beginning of the proof and $β' = -α'$. Note that the identification of the two expressions for $F_α$ in \eqref{eq:XY} is Haagerup's identity, \Autoref{lem:haagerup-identity-finite-index}. The extreme orders require no separate argument, and in particular no continuity in $α$: all steps above hold for $\abs{α'} = 1$, and \eqref{eq:max-min-duality} is \eqref{eq:sandwiched-duality-relation} for $(α, β) = (\infty, \tfrac12)$ and $(α, β) = (\tfrac12, \infty)$, with $\widetilde{D}_\infty = D_{\max}$ on the algebras $\M$ and $\N'$ by \Autoref{lem:dmax-variational}.
\end{proof}

\subsection{The dual conditional expectation and the additive form}\label{sec:duality-additive-form}

In the duality relations above, the second arguments on the side of the commutants are built from the operator-valued weight $E^{-1}$, which is not a conditional expectation. It is one up to the index, and if the index is a scalar this brings the duality relations into the additive form of \Autoref{cor:finite_index_duality_cond_exp}. For this it is irrelevant whether $\N$ is a factor; what matters is the center of $\M$, in which the index takes its values.

\begin{lemma}[Dual conditional expectation]\label{lem:dual-conditional-expectation}
    Let $E: \M \to \N$ be a normal faithful conditional expectation of finite index. Then $\Ind E \in Z(\M)$ satisfies $\Ind E \geq \IdentityMatrix$, and
    \begin{equation}
        E' \coloneqq (\Ind E)^{-1} E^{-1}: \N' \to \M'
    \end{equation}
    is a normal faithful conditional expectation. If $\M$ is a factor, then $\Ind E$ is a scalar and $E^{-1}_*(ω') = (\Ind E)\, E'_*(ω')$ for all $ω' \in (\M')_*^+$.
\end{lemma}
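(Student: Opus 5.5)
Our plan is to prove the three assertions in turn: the lower bound $\Ind E \geq \IdentityMatrix$, the conditional-expectation properties of $E'$, and the factor case.

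\textbf{Step 1: $\Ind E \geq \IdentityMatrix$.} We use Haagerup's identity together with positivity of $E$ and $E^{-1}$. Fix a faithful $σ \in \N_*^+$ and a faithful $ω' \in (\M')_*^+$. We would compare $E^{-1}_*(ω')$ on $\N'$ with the functional $ω' \circ \mathcal{E}'$ for some conditional expectation. The more direct route is the Pimsner--Popa type argument: $E^{-1}$ dominates a conditional expectation. Concretely, by Haagerup's bijection the dual of the identity map on $\M$ (viewed as the trivial conditional expectation $\M \to \M$) is the identity on $\M'$. Monotonicity of the bijection $E \mapsto E^{-1}$ is order reversing in a suitable sense, and $E_*(σ) \leq$ can be used as follows. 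For the spatial derivatives we have
\[
\Delta(ω'/E_*(σ)) = \Delta(E^{-1}_*(ω')/σ).
\]
Restricting $E^{-1}_*(ω')$ to $\M' \subset \N'$ and using that restriction of the numerator corresponds to the operator-valued weight $E^{-1}$ compared against $\id_{\M'}$, we want $E^{-1}(y) \geq y$ for $y \in \M'_+$, i.e.\ $E^{-1}(\IdentityMatrix)\geq \IdentityMatrix$ by $\M'$-bimodularity. Rather than fight this abstractly, we would cite Kosaki/Baillet--Denizeau--Havet, where $\Ind E \geq \IdentityMatrix$ follows from the Pimsner--Popa inequality, applied via the basic construction. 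Alternatively we would prove it by a central decomposition plus the factor case $\Ind E\in[1,\infty]$ stated earlier.

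\textbf{Step 2: $E'$ is a normal faithful conditional expectation.} By Lemma~\ref{lem:finite_index_bounded_inverse_weight}\ref{item:fi-dual}, $E^{-1}$ is a bounded positive normal map $\N'\to\M'$, and it is $\M'$-bimodular as an operator-valued weight. Since $\Ind E \in Z(\M)_+$ with $\Ind E\geq \IdentityMatrix$, it is invertible with $(\Ind E)^{-1}\in Z(\M)\subset \M'$, central in $\M'$. Hence $E'=(\Ind E)^{-1}E^{-1}$ is positive, since the product of a positive central element with a positive element is positive. It is normal, since multiplication by a fixed bounded operator is normal. It is $\M'$-bimodular, because $(\Ind E)^{-1}$ commutes with $\M'$. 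Finally $E'(\IdentityMatrix)=(\Ind E)^{-1}\Ind E=\IdentityMatrix$. Faithfulness follows from faithfulness of $E^{-1}$ (Haagerup), because $(\Ind E)^{-1}$ is invertible: $E'(y^*y)=0$ forces $E^{-1}(y^*y)=0$, hence $y=0$. Tomiyama is not even needed, since unital positive bimodular maps are exactly conditional expectations per Definition~\ref{def:conditional-expectation}.

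\textbf{Step 3: the factor case.} If $\M$ is a factor, then $Z(\M)=\complex\IdentityMatrix$, so $\Ind E$ is a scalar $c\geq1$. Then $E^{-1}=cE'$, and dualizing gives $E^{-1}_*(ω')=ω'\circ E^{-1}=c\,ω'\circ E'=c\,E'_*(ω')$.

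We expect Step 1 to be the main obstacle. The remaining steps are formal once $\Ind E$ is known to be a bounded invertible central element.
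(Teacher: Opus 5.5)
Your Steps~2 and~3 are correct and match the paper's proof. The only detail to add is that the operator-valued weight property $E^{-1}(a^*ya)=a^*E^{-1}(y)a$ gives two-sided $\M'$-bimodularity by polarization.

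The gap is Step~1. It is the only non-formal part of the lemma, and Step~2 depends on it, since without it $(\Ind E)^{-1}$ need not exist. You do not give an argument for it:
\begin{itemize}
\item Saying that $E^{-1}$ dominates a conditional expectation, or that $E^{-1}(y)\geq y$ on $\M'_+$, merely restates $\Ind E\geq\IdentityMatrix$.
\item The order-reversing property of $E\mapsto E^{-1}$ yields no comparison, because $\id_{\M}$ belongs to a different inclusion.
\item The Pimsner--Popa route helps only in the central form $E(x)\geq(\Ind E)^{-1}x$. At $x=\IdentityMatrix$ this is exactly the claim, and it is no easier to prove. With a scalar constant such as $\norm{\Ind E}^{-1}$ it gives only $\norm{\Ind E}\geq 1$, which does not make $\Ind E$ invertible.
\item The central decomposition fails for a structural reason, not only because of separability. $E$ is merely $\N$-bimodular, and $Z(\M)\not\subset\N$ in general, so $E$ does not disintegrate over $Z(\M)$ and fibrewise factor indices do not compute $\Ind E$. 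For example, take $\complex\IdentityMatrix\subset\M=\complex^2$ with $E(x)=(\tau_1x_1+\tau_2x_2)\IdentityMatrix$. All fibres are the trivial inclusion $\complex\subset\complex$ of index $1$, while $\Ind E=(\tau_1^{-1},\tau_2^{-1})$.
\end{itemize}

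The missing idea, used in the paper, is to run Step~1 of the proof of Proposition~\ref{prop:pythagorean_identity} for $T=E^{-1}$, whose dual is $E$ with $E(\IdentityMatrix)=\IdentityMatrix$. Fix a unit vector $\xi$ and set $s\coloneqq s(\omega_\xi|_{\N'})$. Let $\mu$ be the spectral measure of $\xi$ with respect to $\Delta(\omega_\xi|_{\M'}/E_*(\omega_\xi|_{\N}))$. By Haagerup's identity (Lemma~\ref{lem:haagerup-identity-finite-index}) this operator equals $\Delta(E^{-1}_*(\omega_\xi|_{\M'})/\omega_\xi|_{\N})$. Lemma~\ref{lem:first-moment}, applied on $\N'$ and on $\M$ respectively, gives $\int\lambda\,d\mu=\langle\xi,E^{-1}(s)\xi\rangle\leq\langle\xi,\Ind E\,\xi\rangle$ and $\int\lambda^{-1}\,d\mu=E_*(\omega_\xi|_{\N})(s(\omega_\xi|_{\M}))\leq\omega_\xi(E(\IdentityMatrix))=1$. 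Jensen's inequality then yields $\langle\xi,\Ind E\,\xi\rangle\geq 1$ for every unit vector, i.e.\ $\Ind E\geq\IdentityMatrix$.

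Alternatively, your basic-construction idea can be made precise through the identity $E^{-1}(e_N)=\IdentityMatrix$ for the Jones projection $e_N\in\N'$ in standard form. This gives $\Ind E\geq E^{-1}(e_N)=\IdentityMatrix$ at once. However, that identity has to be stated and justified, not only alluded to, and so does the representation independence of $\Ind E$ needed to use it in an arbitrary representation.
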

\begin{proof}
    The index is central by the discussion following \eqref{eq:index-definition}, and $\Ind E \geq \IdentityMatrix$ is the lower bound $T(\IdentityMatrix) \geq \norm{\Ind T}^{-1}\IdentityMatrix$ established in Step~1 of the proof of \Autoref{prop:pythagorean_identity} in \Cref{app:pythagorean-proof}, applied to $T = E^{-1}$, for which $T(\IdentityMatrix) = \Ind E$ and $\Ind T = E(\IdentityMatrix) = \IdentityMatrix$. Hence $h \coloneqq \Ind E$ is invertible and $h^{-1} \in Z(\M) = Z(\M')$. By \Autoref{lem:finite_index_bounded_inverse_weight}, $E^{-1}$ is a bounded positive normal linear map from $\N'$ to $\M'$, and polarizing $E^{-1}(a^* y a) = a^* E^{-1}(y) a$ in $a \in \M'$ shows that it is an $\M'$-bimodule map. For $y \in \N'_+$ the positive operators $h^{-1}$ and $E^{-1}(y)$ commute, so that $E'(y) = h^{-1}E^{-1}(y) \geq 0$; moreover $E'(\IdentityMatrix) = h^{-1}h = \IdentityMatrix$, and $E'(ayb) = h^{-1} a E^{-1}(y) b = a E'(y) b$ for $a, b \in \M'$ because $h^{-1}$ is central in $\M'$. Thus $E'$ is a conditional expectation (\Autoref{def:conditional-expectation}). It is normal because $E^{-1}$ is, and faithful because $E'(y^*y) = 0$ implies $E^{-1}(y^*y) = h E'(y^*y) = 0$ and hence $y = 0$. If $\M$ is a factor, then $Z(\M) = \complex\IdentityMatrix$, so $h$ is a scalar and $ω' \circ E^{-1} = h\, ω' \circ E'$.
\end{proof}

\begin{proof}[Proof of Corollary~\ref{cor:finite_index_duality_cond_exp}]
    Let $λ \coloneqq \Ind E \in [1, \infty)$. As in the proof of \Autoref{lem:dual-conditional-expectation}, $E^{-1}_*(ω') = λ\, E'_*(ω')$ for $ω' \in \S(\M')$, so that the scaling property of \Autoref{prop:divergence-properties}\ref{item:dp-scaling} gives $\D(ρ|_{\N'}\|E^{-1}_*(ω')) = \D(ρ|_{\N'}\|E'_*(ω')) - \log λ$ for $\D \in \set{D, D_α, \widetilde{D}_α}$, including the extreme orders $\widetilde{D}_{1/2}$ and $\widetilde{D}_\infty = D_{\max}$. Taking the infimum over $ω' \in \S(\M')$, respectively choosing $ω' = ρ|_{\M'}$, and inserting the result into \Autoref{thm:sandwiched_duality_relation} and \eqref{eq:optimized-relative-entropy-duality} (for $α = β = 1$), into \Autoref{thm:petz_duality_relation} and into \Autoref{cor:relative_entropy_unoptimized_duality} gives the three parts of the corollary. For type II$_1$ factors and the trace-preserving conditional expectation, $\Ind E$ is the Jones index by \cite{kosaki_extension_1986}.
\end{proof}

\begin{remark}[Non-scalar index]\label{rem:non-scalar-index}
    If $\M$ has a nontrivial center, $\Ind E$ need not be a scalar; the simplest example is a direct sum $\N_1 \oplus \N_2 \subset \M_1 \oplus \M_2$ of two inclusions with conditional expectations of different indices. The duality relations of \Autoref{sec:duality-relations}, which are formulated in terms of $E^{-1}$, are not affected by this, but in their additive form the index correction becomes state dependent. Indeed, with the central element $h \coloneqq \Ind E$ we have $E^{-1}_*(ω') = ω'\big(h E'(\,\cdot\,)\big)$ and therefore $\norm{h^{-1}}^{-1} E'_*(ω') \leq E^{-1}_*(ω') \leq \norm{h}\, E'_*(ω')$ for $ω' \in (\M')_*^+$, so that \Autoref{prop:divergence-properties}\ref{item:dp-scaling} and \ref{item:dp-monotone} give, for $α, β$ as in \Autoref{cor:finite_index_duality_cond_exp},
    \begin{equation}
        \log \norm{(\Ind E)^{-1}}^{-1} \leq \widetilde{D}_α(ρ|_{\M}\|E_*(\Ns)) + \widetilde{D}_β(ρ|_{\N'}\|E'_*(\M'_*)) \leq \log \norm{\Ind E}\,,
    \end{equation}
    and likewise for the unoptimized Petz--\renyi divergences and the relative entropy. For the relative entropy the correction can be computed exactly: the last step of the proof of \Autoref{prop:pythagorean_identity} in \Cref{app:pythagorean-proof}, applied to $T = E^{-1} = hE'$, shows that $D(ρ|_{\N'}\|E^{-1}_*(ρ|_{\M'})) = D(ρ|_{\N'}\|E'_*(ρ|_{\M'})) - ρ(\log \Ind E)$, and hence, by \Autoref{cor:relative_entropy_unoptimized_duality},
    \begin{equation}
        D(ρ|_{\M}\|E_*(ρ|_{\N})) + D(ρ|_{\N'}\|E'_*(ρ|_{\M'})) = ρ(\log \Ind E)\,.
    \end{equation}
\end{remark}

\subsection{Smoothed Entropies}
\newcommand{\rt}{\tilde{ρ}}
We use the smoothed quantities $D_{\max}^ε(ρ\|S)$ and $D_{\min}^ε(ρ\|S)$ of \Autoref{def:one-shot}\ref{item:os-sets}, for $ρ \in \S(\M)$ and $S \subset \M_*^+$, the purified distance $P$ of \Autoref{def:one-shot}\ref{item:os-fidelity}, and the abbreviation $D_{\min} \coloneqq D_{\min}^0 = \widetilde{D}_{1/2}$.

The duality relation for the smoothed quantities is obtained by applying \Autoref{thm:sandwiched_duality_relation} to the purifications of all states $\tilde{ρ}$ in a ball around $ρ$. For this we need that the supremum in Uhlmann's theorem is attained for a fixed purification of $ρ$, which is the case for the vector representative in the natural cone of the standard form (\Autoref{sec:prelim-vn}).

\begin{lemma}[{Uhlmann's theorem with attainment \cite{araki_bures_1972, alberti_bures_2000}}]\label{lem:uhlmann-attainment}
    Let $(\M, \HS, J, \mathcal{P})$ be a standard form, let $ρ \in \S(\M)$, and let $ξ_ρ \in \mathcal{P}$ be the vector representative of $ρ$ in the natural cone. Then every $\tilde{ρ} \in \S(\M)$ has a vector representative $ξ \in \HS$ such that
    \begin{equation}\label{eq:uhlmann-attainment}
        \abs{\ip{ξ_ρ}{ξ}}^2 = F(\tilde{ρ}, ρ)\,.
    \end{equation}
    Consequently, for every $ε \in [0, 1)$,
    \begin{equation}\label{eq:purified-distance-balls}
        \Set{\tilde{ρ} \in \S(\M) | P(\tilde{ρ}, ρ) \leq ε} = \Set{ω_ξ\vert_{\M} | ξ \in \HS,\ \norm{ξ} = 1,\ P(ω_ξ, ω_{ξ_ρ}) \leq ε}\,,
    \end{equation}
    where $ω_ξ$ and $ω_{ξ_ρ}$ are the pure vector states on $\BO$, for which $P(ω_ξ, ω_{ξ_ρ})^2 = 1 - \abs{\ip{ξ_ρ}{ξ}}^2$.
\end{lemma}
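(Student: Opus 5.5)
The plan is to use the description of the fidelity through contractions in the commutant together with the fact that the purification in the natural cone is an optimal reference vector. Fix $\tilde{ρ} \in \S(\M)$ and let $ξ_{\tilde ρ} \in \mathcal{P}$ be its vector representative in the natural cone. Since $\HS$ contains vector representatives of both states, \Autoref{def:one-shot}\ref{item:os-fidelity} gives
\[
F(\tilde{ρ}, ρ) = \sup\Set{\abs{\ip{ξ_ρ}{u' ξ_{\tilde ρ}}}^2 | u' \in \M',\ \norm{u'} \leq 1}\,.
\]
The unit ball of $\M'$ is weak-operator compact, and $u' \mapsto \ip{ξ_ρ}{u'ξ_{\tilde ρ}}$ is weak-operator continuous. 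Hence the supremum is attained at some contraction $u'_0 \in \M'$. The difficulty is that $u'_0 ξ_{\tilde ρ}$ need not represent $\tilde ρ$: its state on $\M$ is $x \mapsto \ip{ξ_{\tilde ρ}}{u_0'^* u'_0 x\, ξ_{\tilde ρ}}$, which is only dominated by $\tilde ρ$.

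To overcome this, I would show that the maximizer can be chosen to be a partial isometry $v' \in \M'$ with $v'^*v' \geq s(\tilde ρ|_{\M'}) = [\M ξ_{\tilde ρ}]$. Then $v'^*v'$ acts as the identity on $\overline{\M ξ_{\tilde ρ}}$, and for $x \in \M$ we get $\ip{v'ξ_{\tilde ρ}}{x v' ξ_{\tilde ρ}} = \ip{ξ_{\tilde ρ}}{v'^*v' x ξ_{\tilde ρ}} = \tilde ρ(x)$. So $ξ \coloneqq v' ξ_{\tilde ρ}$ is a vector representative of $\tilde ρ$ with $\abs{\ip{ξ_ρ}{ξ}}^2 = F(\tilde ρ, ρ)$.

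To construct $v'$, consider the bounded functional $y' \mapsto \ip{ξ_ρ}{y' ξ_{\tilde ρ}}$ on $\M'$, which is normal. Take its polar decomposition $\ip{ξ_ρ}{\,\cdot\,ξ_{\tilde ρ}} = \abs{\cdot}(\,\cdot\, v'^*)$ in the sense of normal functionals (\cite[Thm.~III.4.2]{takesaki_theory_1979}). Its norm equals $\sup_{\norm{u'}\le1}\abs{\ip{ξ_ρ}{u'ξ_{\tilde ρ}}} = F^{1/2}$ and is attained at a partial isometry $v'$. Alternatively, and probably more cleanly, I would cite the Araki/Alberti statement \cite{araki_bures_1972, alberti_bures_2000} that in the standard form
\[
F(\tilde ρ,ρ)^{1/2} = \sup\Set{\abs{\ip{ξ_ρ}{ξ}} | ω_ξ|_\M = \tilde ρ}
\]
and that the supremum is attained: there is a vector representative $ξ$ of $\tilde ρ$, of the form $ξ = v' ξ_{\tilde ρ}$ with $v'$ a partial isometry in $\M'$, such that $\abs{\ip{ξ_ρ}{ξ}}$ equals the norm of the above functional. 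If $v'^*v'$ fails to dominate $[\M ξ_{\tilde ρ}]$, I would extend $v'$ on the complement by a partial isometry mapping into $(\IdentityMatrix - v'v'^*)$, which does not change $\ip{ξ_ρ}{v'ξ_{\tilde ρ}}$. I expect the existence of such an extension (comparison of projections in $\M'$) to be the main technical obstacle. This is where I would lean on the cited references rather than reprove it; a standard trick is to pass to $\M\otimes M_2$ to make room.

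For \eqref{eq:purified-distance-balls}, the pure-state fidelity identity $F(ω_ξ, ω_{ξ_ρ}) = \abs{\ip{ξ_ρ}{ξ}}^2$ for states on $\BO$ is immediate from the contraction formula with $\M' = \complex\IdentityMatrix$. For inclusion "$\supseteq$": data processing for $F$ under the restriction $\BO \to \M$, i.e.\ \Autoref{prop:divergence-properties}\ref{item:dp-dpi} for $\widetilde D_{1/2}$, gives $F(ω_ξ|_\M, ρ) \ge \abs{\ip{ξ_ρ}{ξ}}^2$, hence $P(ω_ξ|_\M, ρ) \le P(ω_ξ, ω_{ξ_ρ}) \le ε$. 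For "$\subseteq$": given $\tilde ρ$ with $P(\tilde ρ, ρ) \le ε$, the vector $ξ$ from \eqref{eq:uhlmann-attainment} satisfies $\norm{ξ}=1$, $ω_ξ|_\M = \tilde ρ$, and $P(ω_ξ,ω_{ξ_ρ}) = P(\tilde ρ,ρ) \le ε$.
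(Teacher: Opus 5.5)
\textbf{Verdict: there is a genuine gap, at the extension step, which carries the whole content of the lemma.}

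Most of your argument is fine: the maximizer over the unit ball of $\M'$, the maximizing partial isometry $v'$ from the polar decomposition, and the deduction of \eqref{eq:purified-distance-balls} from attainment plus data processing (this last step is exactly how the paper concludes). Your claim that an extension $v'+w'$ with $w'^*w'\perp v'^*v'$ and $w'w'^*\perp v'v'^*$ leaves $\ip{\xi_\rho}{v'\xi_{\tilde\rho}}$ unchanged is also true, by maximality applied to the contractions $v'+tw'$, $|t|\le 1$.

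The extension step is where it breaks. You need a partial isometry in $\M'$, on the given space $\HS$, from $[\M\xi_{\tilde\rho}]-v'^*v'$ into $\IdentityMatrix-v'v'^*$, and such a comparison can fail. Your argument uses nothing about $\xi_\rho$ beyond its being some vector representative of $\rho$. For general representatives the lemma is false (Remark~\ref{rem:natural-cone-needed}). For instance, take a cyclic vector representative of a suitable non-faithful $\rho$ on $\BO[\HS_A]$ with $\dim\HS_A=\infty$, and a suitable faithful $\tilde\rho$. Then the maximizer is unique, its final projection is $\IdentityMatrix$, and its initial projection is strictly smaller than $[\M\xi_{\tilde\rho}]$. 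So there is no room at all, and any correct proof must use $\xi_\rho\in\mathcal{P}$ precisely at this step.

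Neither of your fallbacks closes the gap. Passing to $\M\otimes M_2$, or to any larger representation, produces a representative outside $\HS$. The lemma asserts $\xi\in\HS$, and Corollary~\ref{cor:smoothed_duality_relation} needs this, because it applies the duality theorem to $\xi$ with the commutant $\N'$ of the fixed $\HS$.

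The references also do not give attainment for the natural-cone vector as a black box. \cite[Thm.~2.7]{alberti_bures_2000} treats cyclic and separating vectors, i.e.\ faithful $\rho$. \cite[Thm.~3(1)]{araki_bures_1972} gives only a criterion: attainment holds iff the largest $\tilde\rho_2\le\tilde\rho$ with $s(\tilde\rho_2)\perp s(\rho)$ has a vector representative in $(\IdentityMatrix-E_1)\HS$, for a certain projection $E_1\le[\M\xi_\rho]$.

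The missing idea is to verify this criterion using $J$:
\begin{itemize}
\item For $\xi\in\mathcal{P}$ one has $J\xi=\xi$, hence $[\M\xi]=J[\M'\xi]J=J\,s(\omega_\xi|_{\M})\,J$. In particular $[\M\xi_\rho]=J s(\rho) J$.
\item The natural-cone representative of $\tilde\rho_2$ therefore lies in $J s(\tilde\rho_2)J\HS\subset J(\IdentityMatrix-s(\rho))J\HS=(\IdentityMatrix-[\M\xi_\rho])\HS\subset(\IdentityMatrix-E_1)\HS$.
\end{itemize}
In your language, $J(\IdentityMatrix-s(\rho))J$ is the room that the natural cone guarantees and that a cyclic representative lacks.
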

\begin{proof}
    The statement is usually formulated for the Bures distance $d(ρ, \tilde{ρ})$, the infimum of $\norm{ξ' - ξ}$ over all vector representatives $ξ'$ of $ρ$ and $ξ$ of $\tilde{ρ}$ in all representations of $\M$, which satisfies $d(ρ, \tilde{ρ})^2 = 2 - 2 F(\tilde{ρ}, ρ)^{1/2}$ \cite{uhlmann_transition_1976}; thus a vector representative $ξ$ of $\tilde{ρ}$ satisfies \eqref{eq:uhlmann-attainment} iff $\norm{ξ_ρ - e^{iθ}ξ} = d(ρ, \tilde{ρ})$ for some phase $θ$. For standard forms with a cyclic and separating vector this is \cite[Thm.~2.7]{alberti_bures_2000}, see also \cite[Lemma~2.15]{alberti_bures_2020}. In general, by \cite[Thm.~3(1)]{araki_bures_1972} a vector representative $ξ$ of $\tilde{ρ}$ with $\norm{ξ_ρ - ξ} = d(ρ, \tilde{ρ})$ exists iff the largest $\tilde{ρ}_2 \in \M_*^+$ with $\tilde{ρ}_2 \leq \tilde{ρ}$ and $s(\tilde{ρ}_2) \perp s(ρ)$ has a vector representative in $(\IdentityMatrix - E_1)\HS$, for a certain projection $E_1 \leq [\M ξ_ρ]$. Since $Jξ = ξ$ and hence $[\Mξ] = J[\M'ξ]J = J s(ω_ξ|_{\M}) J$ for all $ξ \in \mathcal{P}$, the vector representative of $\tilde{ρ}_2$ in $\mathcal{P}$ lies in $J s(\tilde{ρ}_2) J \HS \subset (\IdentityMatrix - [\Mξ_ρ])\HS$. Finally, \eqref{eq:uhlmann-attainment} gives the inclusion \enquote{$\subset$} in \eqref{eq:purified-distance-balls}, and the converse inclusion is the data processing inequality for the fidelity (\Autoref{prop:divergence-properties}\ref{item:dp-dpi}) under the restriction from $\BO$ to $\M$.
\end{proof}

\begin{corollary}[Duality for smoothed entropies]\label{cor:smoothed_duality_relation}
    Under the standing assumptions of this section, let additionally $(\M, \HS, J, \mathcal{P})$ be a standard form and let $\ket{\v{ρ}} = ξ_ρ \in \mathcal{P}$ be the vector representative of $ρ$ in the natural cone. Let $α \in [1/2, 1) \cup (1, \infty]$ and ${1 \over α} + {1 \over β} = 2$. Then, for every $ε \in [0, 1)$,
    \begin{equation}\label{eq:smoothed-duality}
        \inf_{\substack{\tilde{\rho} \in \S(\M) \\ P(\tilde{ρ}, ρ) \leq ε}} \widetilde{D}_α(\tilde{ρ}\|E_*(\N_*)) \geq - \sup_{\substack{\tilde{\rho}' \in \S(\N') \\ P(\tilde{ρ}', ρ|_{\N'}) \leq ε}} \widetilde{D}_β(\tilde{ρ}'\|E^{-1}_*(\M'_*))\,.
    \end{equation}
    In particular, for $α = \infty$ and $β = 1/2$,
    \begin{equation}\label{eq:smoothed-max-min-duality}
        D_{\max}^ε(ρ\|E_*(\N_*)) \geq - D_{\min}^ε(ρ|_{\N'}\|E^{-1}_*(\M'_*))\,.
    \end{equation}
\end{corollary}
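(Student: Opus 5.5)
The plan is to fix an arbitrary $\tilde ρ \in \S(\M)$ with $P(\tilde ρ, ρ) \leq ε$ and to exhibit a matching $\tilde ρ' \in \S(\N')$ in the ε-ball around $ρ|_{\N'}$. By \Autoref{lem:uhlmann-attainment}, $\tilde ρ$ has a vector representative $ξ \in \HS$ with $\abs{\ip{ξ_ρ}{ξ}}^2 = F(\tilde ρ, ρ)$. Let $ω_ξ$ and $ω_{ξ_ρ}$ be the corresponding pure vector states on $\BO$; then $P(ω_ξ, ω_{ξ_ρ}) = P(\tilde ρ, ρ) \leq ε$. We then set
\[
\tilde ρ' \coloneqq ω_ξ|_{\N'}.
\]
By the data processing inequality for the fidelity (\Autoref{prop:divergence-properties}\ref{item:dp-dpi}) under restriction from $\BO$ to $\N'$, the fidelity can only increase, so $P(\tilde ρ', ρ|_{\N'}) \leq P(ω_ξ, ω_{ξ_ρ}) \leq ε$. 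Thus $\tilde ρ'$ is admissible in the supremum on the right-hand side of \eqref{eq:smoothed-duality}.

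Next we apply \Autoref{thm:sandwiched_duality_relation}, with the state $\tilde ρ$ and its vector representative $\ket{ξ}$ in place of $ρ$ and $\ket{\v{ρ}}$. The standing assumptions only require some fixed vector representative in $\HS$, not necessarily one in the natural cone, so the theorem applies. It gives
\[
\widetilde D_α(\tilde ρ\|E_*(\N_*)) = -\widetilde D_β(ω_ξ|_{\N'}\|E^{-1}_*(\M'_*)) = -\widetilde D_β(\tilde ρ'\|E^{-1}_*(\M'_*)).
\]
Since $\tilde ρ'$ lies in the ball, the right-hand side is at least $-\sup_{\tilde ρ'} \widetilde D_β(\tilde ρ'\|E^{-1}_*(\M'_*))$. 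Taking the infimum over $\tilde ρ$ yields \eqref{eq:smoothed-duality}.

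For the special case $(α, β) = (\infty, 1/2)$, we use $\widetilde D_\infty = D_{\max}$ (\Autoref{lem:dmax-variational}) and $\widetilde D_{1/2} = D_{\min}$. With these, the left-hand side of \eqref{eq:smoothed-duality} is exactly $D_{\max}^ε(ρ\|E_*(\N_*))$ and the right-hand side is $-D_{\min}^ε(ρ|_{\N'}\|E^{-1}_*(\M'_*))$, as in \Autoref{def:one-shot}\ref{item:os-sets}. This gives \eqref{eq:smoothed-max-min-duality}.

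The main subtlety is the attainment step, i.e.\ that one purification $ξ$ of $\tilde ρ$ realizes the fidelity with the fixed $ξ_ρ$. Without it, the restriction to $\N'$ would only be close to the restriction of some other purification of $ρ$, not to $ρ|_{\N'}$ itself. This is exactly why the corollary assumes the standard form with $ξ_ρ$ in the natural cone, and \Autoref{lem:uhlmann-attainment} supplies the attainment. The remaining point to check is that $E^{-1}_*(\M'_*)$ does not depend on the vector, which holds because it is defined purely algebraically.
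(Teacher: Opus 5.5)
Your proposal is correct and follows essentially the same route as the paper. Both arguments use Uhlmann attainment (Lemma~\ref{lem:uhlmann-attainment}) to represent each $\tilde\rho$ in the $\varepsilon$-ball by a vector in the $\varepsilon$-ball around $\xi_\rho$, then apply Theorem~\ref{thm:sandwiched_duality_relation} to that vector, and then use data processing of the fidelity under restriction to $\N'$. The only difference is that you use just the inclusion ``$\subset$'' of \eqref{eq:purified-distance-balls}, which is all the inequality needs, whereas the paper writes that first step as an equality of infima.
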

\begin{proof}
    For a unit vector $\v{\tilde{ρ}} \in \HS$ write $\tilde{ρ}|_{\mathcal{A}} \coloneqq ω_{\v{\tilde{ρ}}}|_{\mathcal{A}}$ for the induced state on a von Neumann algebra $\mathcal{A} \subset \BO$ (\Autoref{sec:prelim-vn}), and let $ω_{\v{\tilde{ρ}}}, ω_{\v{ρ}}$ denote the (pure) vector states on $\BO$. Then
    \begin{align}
        \inf_{\substack{\tilde{\rho} \in \S(\M) \\ P(\tilde{ρ}, ρ) \leq ε}} \widetilde{D}_α(\tilde{ρ}\|E_*(\N_*)) &= \inf_{\substack{\v{\tilde{\rho}} \in \HS,\ \norm{\v{\tilde{ρ}}} = 1 \\ P(ω_{\v{\tilde{ρ}}}, ω_{\v{ρ}}) \leq ε}} \widetilde{D}_α(\tilde{ρ}|_{\M}\|E_*(\N_*)) \\
        &= \inf_{\substack{\v{\tilde{\rho}} \in \HS,\ \norm{\v{\tilde{ρ}}} = 1 \\ P(ω_{\v{\tilde{ρ}}}, ω_{\v{ρ}}) \leq ε}}  -\widetilde{D}_β(\tilde{ρ}|_{\N'}\|E^{-1}_*(\M'_*)) \\
        &\geq \inf_{\substack{\tilde{\rho}' \in \S(\N') \\ P(\tilde{ρ}', ρ|_{\N'}) \leq ε}} -\widetilde{D}_β(\tilde{ρ}'\|E^{-1}_*(\M'_*))\,.
    \end{align}
    The first equality is \eqref{eq:purified-distance-balls}: by \Autoref{lem:uhlmann-attainment} the states in the $ε$-ball around $ρ$ are exactly the states induced on $\M$ by the unit vectors in the $ε$-ball around $\ket{\v{ρ}} = ξ_ρ$, and the divergence of such a state does not depend on the vector representing it (\Autoref{prop:representation-independence}). The second equality is \autoref{thm:sandwiched_duality_relation}, applied to each of the vectors $\ket{\v{\tilde{ρ}}}$ in place of $\ket{\v{ρ}}$. For the last step note that $P(\tilde{ρ}|_{\N'}, ρ|_{\N'}) \leq P(ω_{\v{\tilde{ρ}}}, ω_{\v{ρ}}) \leq ε$ by the data processing inequality for the fidelity (\Autoref{prop:divergence-properties}\ref{item:dp-dpi} for $\widetilde{D}_{1/2} = -\log F$) under the restriction from $\BO$ to $\N'$, so that the infimum in the last line runs over a larger set of states. This proves \eqref{eq:smoothed-duality}, and \eqref{eq:smoothed-max-min-duality} is the case $α = \infty$, $β = 1/2$, in which the two sides of \eqref{eq:smoothed-duality} are the smoothed quantities of \Autoref{def:one-shot}\ref{item:os-sets}, because $\widetilde{D}_\infty = D_{\max}$ (\Autoref{lem:dmax-variational}).
\end{proof}

\begin{remark}[On the choice of the vector representative]\label{rem:natural-cone-needed}
    The right hand sides of \eqref{eq:smoothed-duality} and \eqref{eq:smoothed-max-min-duality} depend on the vector representative $\ket{\v{ρ}}$ through the induced state $ρ|_{\N'}$. In finite dimensions, and for finite von Neumann algebras in standard form, \Autoref{lem:uhlmann-attainment} holds for every vector representative of $ρ$ in place of $ξ_ρ$ \cite[Rem.~2.2]{alberti_bures_2000}. In general it does not: if $s(ρ) \leq s(\tilde{ρ})$ and $\tilde{ρ}$ has a nonzero part orthogonal to $ρ$, the supremum in Uhlmann's theorem is not attained for any vector representative of $ρ$ that is cyclic for $\M$ \cite[Lemma~2.22]{alberti_bures_2000}, and such examples exist in every von Neumann algebra that is not finite \cite{alberti_bures_2000}.
\end{remark}

For the next section we will also need the following relation between smoothed min and max relative entropy, which is analogous to \cite[Proposition 6.3]{tomamichel_quantum_2016}:
\begin{lemma}\label{lem:smoothed_min_max_trivial_relation}
Let $ρ \in \S(\M)$, let $S \subset \M_*^+$ be any subset, and let $ε, ε' \geq 0$ such that $ε + ε' < 1$. Then,
    \begin{equation}
        D_{\min}^{ε'}(ρ\|S) \leq D_{\max}^ε(ρ\|S) + \log(1 \over 1 - (ε + ε')^2)
    \end{equation}
\end{lemma}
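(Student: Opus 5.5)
We follow the finite-dimensional argument of \cite[Prop.~6.3]{tomamichel_quantum_2016}. The plan is to reduce the statement to a bound between two fixed states and then take the optimizations in the right order.

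\textbf{Reduction.} By \Autoref{def:one-shot}\ref{item:os-sets}, the left-hand side is $\sup_{\tilde{ρ}'} \inf_{σ \in S} \widetilde{D}_{1/2}(\tilde{ρ}'\|σ)$, with $\tilde{ρ}'$ ranging over the $ε'$-ball around $ρ$. The right-hand side is $\inf_{\tilde{ρ}} \inf_{σ \in S} D_{\max}(\tilde{ρ}\|σ)$, with $\tilde{ρ}$ ranging over the $ε$-ball. Since $\inf_σ$ on the left can be bounded by any particular $σ$, it suffices to prove the following pointwise claim: for any $\tilde{ρ}'$ with $P(\tilde{ρ}', ρ) \leq ε'$, any $\tilde{ρ}$ with $P(\tilde{ρ}, ρ) \leq ε$, and any $σ \in S$,
\[
\widetilde{D}_{1/2}(\tilde{ρ}'\|σ) \leq D_{\max}(\tilde{ρ}\|σ) + \log\frac{1}{1-(ε+ε')^2}.
\]
Taking first the infimum over $σ$ and $\tilde{ρ}$ on the right, then the infimum over $σ$ and the supremum over $\tilde{ρ}'$ on the left, gives the lemma. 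This order is legitimate because the pointwise bound holds for every $σ$ simultaneously. If $D_{\max}(\tilde{ρ}\|σ) = \infty$ there is nothing to show.

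\textbf{Key step.} Assume $\tilde{ρ} \leq λσ$. The fidelity is monotone in its second argument, which follows from \Autoref{prop:divergence-properties}\ref{item:dp-monotone} for $\widetilde{D}_{1/2}$. Together with the scaling property \Autoref{prop:divergence-properties}\ref{item:dp-scaling}, this gives
\[
F(\tilde{ρ}', σ) \geq λ^{-1} F(\tilde{ρ}', \tilde{ρ}).
\]
Hence $\widetilde{D}_{1/2}(\tilde{ρ}'\|σ) \leq \log λ - \log F(\tilde{ρ}', \tilde{ρ})$. Taking the infimum over admissible $λ$ yields $\widetilde{D}_{1/2}(\tilde{ρ}'\|σ) \leq D_{\max}(\tilde{ρ}\|σ) - \log F(\tilde{ρ}', \tilde{ρ})$.

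\textbf{Fidelity bound.} It remains to show $F(\tilde{ρ}', \tilde{ρ}) \geq 1 - (ε+ε')^2$, or equivalently $P(\tilde{ρ}', \tilde{ρ}) \leq ε + ε'$. This is the triangle inequality for the purified distance, which the paper cites as a metric on $\S(\M)$ (\cite[Sec.~2.2]{fawzi_asymptotic_2023}). Then $P(\tilde{ρ}', \tilde{ρ}) \leq P(\tilde{ρ}', ρ) + P(ρ, \tilde{ρ}) \leq ε' + ε < 1$, so $F(\tilde{ρ}', \tilde{ρ}) > 0$ and the logarithm is finite.

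\textbf{Main obstacle.} The expected difficulty is the fidelity estimate $F(\tilde{ρ}', σ) \geq λ^{-1} F(\tilde{ρ}', \tilde{ρ})$ in full von Neumann generality, including the case of non-normalized $σ$. If one prefers to avoid \Autoref{prop:divergence-properties}\ref{item:dp-monotone}, Alberti's variational formula gives it directly: $F(\tilde{ρ}', σ) = \inf_x \tilde{ρ}'(x)\,σ(x^{-1})$, and $σ(x^{-1}) \geq λ^{-1}\tilde{ρ}(x^{-1})$ for positive invertible $x$. The triangle inequality is taken from the cited reference rather than reproved.
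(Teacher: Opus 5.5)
Your proposal is correct and matches the paper's proof: fix $(\lambda,\sigma,\tilde{\rho})$ with $\tilde{\rho}\le\lambda\sigma$, use anti-monotonicity and scaling of $\widetilde{D}_{1/2}$ to get $\widetilde{D}_{1/2}(\tilde{\rho}'\|\sigma)\le\log\lambda-\log F(\tilde{\rho}',\tilde{\rho})$, bound $F(\tilde{\rho}',\tilde{\rho})\ge 1-(\varepsilon+\varepsilon')^2$ via the triangle inequality for the purified distance, and then take the supremum over $\tilde{\rho}'$ on the left and the infimum over the feasible triples on the right. Your alternative derivation of the fidelity estimate from Alberti's variational formula is also valid, but it is not needed, since the paper's divergence properties already give it.
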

\begin{proof}
    This proof follows the lines of the proof of \cite[Proposition 6.3]{tomamichel_quantum_2016}.
    If $D_{\max}^{ε}(ρ\|S) = +\infty$ there is nothing to prove. Otherwise let $λ > 0$, $σ \in S$ and $\tilde{ρ} \in \S(\M)$ with $P(ρ, \tilde{ρ}) \leq ε$ be such that
    \begin{equation}
        \rt \leq λ\, σ\,.
    \end{equation}
    By \Autoref{def:one-shot}\ref{item:os-dmax} and \ref{item:os-sets}, $D_{\max}^{ε}(ρ\|S)$ is the infimum of $\log λ$ over all such triples $(λ, σ, \rt)$. Let furthermore $\widehat{ρ} \in \S(\M)$ with $P(\widehat{ρ}, ρ) \leq ε'$.
    Since $D_{\min} = \widetilde{D}_{1/2}$ is anti-monotone in its second argument and satisfies $D_{\min}(\widehat{ρ}\|λ^{-1}\rt) = D_{\min}(\widehat{ρ}\|\rt) + \log λ$ (\Autoref{prop:divergence-properties}\ref{item:dp-monotone} and \ref{item:dp-scaling}), we get from $σ \geq λ^{-1}\rt$ that \[\inf_{σ_0 \in S} D_{\min}(\widehat{ρ}\|σ_0) \leq D_{\min}(\widehat{ρ}\|σ) \leq D_{\min}(\widehat{ρ} \| λ^{-1}\rt) = \log λ + D_{\min}(\widehat{ρ}\|\rt)\,.\]
    Now, \[D_{\min}(\widehat{ρ}\|\rt) = -\log F(\widehat{ρ}, \rt) = -\log(1 - P(\widehat{ρ}, \rt)^2) \leq - \log(1 - (ε + ε')^2)\]
    by \Autoref{def:one-shot}\ref{item:os-fidelity} and the triangle inequality for the purified distance, $P(\widehat{ρ}, \rt) \leq P(\widehat{ρ}, ρ) + P(ρ, \rt) \leq ε + ε' < 1$. Hence $\inf_{σ_0 \in S} D_{\min}(\widehat{ρ}\|σ_0) \leq \log λ - \log(1 - (ε + ε')^2)$. Taking the supremum over $\widehat{ρ}$ on the left hand side, which gives $D_{\min}^{ε'}(ρ\|S)$, and then the infimum over $(λ, σ, \rt)$ on the right hand side proves the statement.
\end{proof}

\section{Generalized Quantum Stein's Lemma for Finite-Index Subalgebras}\label{sec:gqsl}

Let $\mathcal{M}$ be a von Neumann algebra and let $\rho,\sigma\in\mathcal{S}(\mathcal{M})$. In i.i.d. quantum hypothesis testing, one is given $n$ copies of an unknown state, which is either the \emph{null hypothesis} $\rho^{\otimes n}$ or the \emph{alternative hypothesis} $\sigma^{\otimes n}$, viewed as normal states on the von Neumann tensor product $\mathcal{M}^{\otimes n}$. A \emph{test} is an element $T\in\mathcal{M}^{\otimes n}$ with $0\le T\le\IdentityMatrix$, i.e.\ a two-outcome measurement $\{T,\IdentityMatrix-T\}$, where the outcome $T$ is interpreted as accepting the null hypothesis. The test can fail in two ways:
\begin{align}
  \alpha_n(T) &:= \rho^{\otimes n}(\IdentityMatrix-T) && \text{(type-I error: rejecting $\rho$ although $\rho$ is true),} \\
  \beta_n(T)  &:= \sigma^{\otimes n}(T)            && \text{(type-II error: accepting $\rho$ although $\sigma$ is true).}
\end{align}
In general the two errors cannot be made small simultaneously at arbitrary rates, and in the \emph{asymmetric} setting one minimizes the type-II error subject to a fixed constraint $\varepsilon\in(0,1)$ on the type-I error,
\begin{equation}\label{eq:beta-simple}
  \beta_n(\varepsilon) := \inf\bigl\{\, \sigma^{\otimes n}(T) \;\big|\; T\in\mathcal{M}^{\otimes n},\ 0\le T\le \IdentityMatrix,\ \rho^{\otimes n}(T)\ge 1-\varepsilon \,\bigr\}
  = \exp\bigl(-D^{\varepsilon}_{H}(\rho^{\otimes n}\,\|\,\sigma^{\otimes n})\bigr),
\end{equation}
where $D^{\varepsilon}_{H}$ is the hypothesis-testing relative entropy of \Autoref{def:one-shot}\ref{item:os-hypothesis-testing}. Quantum Stein's lemma \cite{hiai_proper_1991, ogawa_strong_2000} identifies the optimal exponential decay rate of the type-II error as the Umegaki relative entropy:
\begin{equation}\label{eq:stein}
  \lim_{n\to\infty} -\frac{1}{n}\log\beta_n(\varepsilon)
  = \lim_{n\to\infty}\frac{1}{n} D^{\varepsilon}_{H}(\rho^{\otimes n}\,\|\,\sigma^{\otimes n})
  = D(\rho\|\sigma) \qquad \text{for every } \varepsilon\in(0,1).
\end{equation}
That is, $\beta_n(\varepsilon)$ decays as $\exp(-nD(\rho\|\sigma)+o(n))$, and the asymptotic decay rate $D(\rho\|\sigma)$ does not depend on $\varepsilon$. This gives the relative entropy its most prominent operational interpretation. In finite dimensions, \eqref{eq:stein} is due to Hiai--Petz
(achievability) and Ogawa--Nagaoka (strong converse)
\cite{hiai_proper_1991,ogawa_strong_2000}.
For von Neumann algebraic versions of Stein's lemma, see
\cite[Section~6.5]{jaksic_quantum_2012} and
\cite[Section~4]{pautrat_ke_2023}.
Related asymptotic equipartition results for smooth max-relative
entropy are established in \cite{fawzi_asymptotic_2023}.
For broader background on relative entropy and quantum information
in the von Neumann algebraic setting, see the monograph of
Hollands--Sanders \cite{hollands_sanders_entanglement_2018}
and van Luijk's thesis \cite[Section~4.10]{vanluijk_thesis_2025}.

\subsection{Composite alternatives and the Generalized Quantum Stein's Lemma}

In many applications the alternative hypothesis is not a single state but a whole family of states. For each $n$ let $\mathcal{F}_n\subseteq\mathcal{S}(\mathcal{M}^{\otimes n})$ be a set of states, and consider the the type-II error against \emph{every} state of $\mathcal{F}_n$:
\begin{equation}\label{eq:beta-composite}
  \beta_n(\varepsilon;\mathcal{F}_n) := \inf_{\substack{0\le T\le \IdentityMatrix\\ \rho^{\otimes n}(T)\ge 1-\varepsilon}}\ \sup_{\sigma_n\in\mathcal{F}_n} \sigma_n(T).
\end{equation}
If $\mathcal{F}_n$ is convex, the order of the optimizations in \eqref{eq:beta-composite} can be exchanged: the set of tests is convex and weak-$*$ compact in $\mathcal{M}^{\otimes n}$, and $(T,\sigma_n)\mapsto\sigma_n(T)$ is affine in both arguments and weak-$*$ continuous in $T$ for normal $\sigma_n$. Sion's minimax theorem therefore gives
\begin{equation}\label{eq:beta-composite-DH}
  -\log\beta_n(\varepsilon;\mathcal{F}_n) = \inf_{\sigma_n\in\mathcal{F}_n} D^{\varepsilon}_{H}(\rho^{\otimes n}\,\|\,\sigma_n) = D^{\varepsilon}_{H}(\rho^{\otimes n}\,\|\,\mathcal{F}_n),
\end{equation}
in the notation of \Autoref{def:one-shot}\ref{item:os-sets}.

The prime example comes from quantum resource theories, where $\mathcal{F}_n$ is the set of \emph{free states} on $n$ systems, for example separable states in entanglement theory, incoherent states in coherence theory, and symmetric states in the theory of asymmetry. Brandão and Plenio \cite{brandao_generalization_2010} formulated a set of natural axioms for such families: each $\mathcal{F}_n$ is convex and closed, contains a full-rank state, and the family is closed under partial traces, tensor products and permutations of the subsystems. They stated that under these axioms
\begin{equation}\label{eq:gqsl}
  \lim_{n\to\infty} -\frac{1}{n}\log\beta_n(\varepsilon;\mathcal{F}_n) = D^{\infty}(\rho\,\|\,\mathcal{F}) := \lim_{n\to\infty}\frac{1}{n} D(\rho^{\otimes n}\,\|\,\mathcal{F}_n) \qquad \text{for every } \varepsilon\in(0,1),
\end{equation}
i.e.\ that the optimal Stein exponent is the \emph{regularized relative entropy of resource}. This statement, the \emph{Generalized Quantum Stein's Lemma} (GQSL), is central to resource theories: it implies that the regularized relative entropy of resource governs the asymptotic rates of state conversion under asymptotically resource-non-generating operations, so that these theories are reversible, in analogy with the second law of thermodynamics. A gap in the original proof was identified in \cite{berta_gap_2023}, and \eqref{eq:gqsl} was subsequently established in finite dimensions by Hayashi and Yamasaki \cite{hayashi_generalized_2025} and, independently, by Lami \cite{lami_solution_2025}. All of these works rely in an essential way on finite dimensionality, for example through the method of types, the polynomial number of permutation-invariant states, or dimension-dependent continuity bounds.

\subsection{Subalgebras as alternative hypotheses}

A structurally simpler, but still rich, class of composite alternatives is obtained from a subalgebra. Let $\mathcal{N}\subseteq\mathcal{M}$ be a von Neumann subalgebra with a normal faithful conditional expectation $E\colon\mathcal{M}\to\mathcal{N}$. Then $E^{\otimes n}\colon\mathcal{M}^{\otimes n}\to\mathcal{N}^{\otimes n}$ is again a normal faithful conditional expectation, and we take as free states
\begin{equation}\label{eq:free-subalgebra}
  \mathcal{F}_n := E^{\otimes n}_*\bigl(\mathcal{S}(\mathcal{N}^{\otimes n})\bigr) = \bigl\{\, \sigma\circ E^{\otimes n} \;\big|\; \sigma\in\mathcal{S}(\mathcal{N}^{\otimes n}) \,\bigr\},
\end{equation}
which we abbreviate as $E^{\otimes n}_*(\mathcal{N}^{\otimes n}_*)$ as in \Autoref{def:one-shot}\ref{item:os-sets}. Note that $\sigma$ ranges over \emph{all} normal states of $\mathcal{N}^{\otimes n}$, including correlated ones, and not only over product states; in particular $\mathcal{F}_n$ is convex, so \eqref{eq:beta-composite-DH} applies. In finite dimensions this framework covers several resource theories of interest:
\begin{itemize}
  \item for the fixed-point algebra $\mathcal{N}$ of a group action and the twirl $E$, $\mathcal{F}_1$ consists of the symmetric states (resource theory of asymmetry, Example~\ref{ex:finite-dim-fixed-point});
  \item for the diagonal subalgebra $\mathcal{N}$ and the pinching map $E$, $\mathcal{F}_1$ consists of the incoherent states (resource theory of coherence, Example~\ref{ex:finite-dim-pinching});
  \item for $\mathcal{N}=\IdentityMatrix_A\otimes\mathcal{B}(\mathcal{H}_B)$ and $E=\frac{\IdentityMatrix_A}{d_A}\otimes\operatorname{Tr}_A$, $\mathcal{F}_1=\{\IdentityMatrix_A/d_A\otimes\sigma_B\}$, and the exponent is $D(\rho_{AB}\|E_*(\mathcal{N}_*)) = \log d_A - H(A|B)_\rho$. This gives the conditional entropy an operational meaning as a Stein exponent.
\end{itemize}
For such alternatives, Gao and Rahaman \cite{gao_generalized_2024} proved the GQSL in finite dimensions by a route independent of \cite{hayashi_generalized_2025, lami_solution_2025}. They also showed that the regularization in \eqref{eq:gqsl} collapses: the Stein exponent is the single-letter quantity $D(\rho\|E_*(\mathcal{N}_*)) = D(\rho\|E_*(\rho|_{\mathcal{N}}))$, where the second equality is the Pythagorean identity (Proposition~\ref{prop:pythagorean_identity}).

In this section we lift this result to arbitrary von Neumann subalgebras, including those of type II and type III, where none of the finite-dimensional tools mentioned above are available. The only structural assumption is that the subalgebra has finite index. The strategy is as follows. We prove an asymptotic equipartition property for the smoothed max-relative entropy with respect to the subalgebra, using the duality relations and the single-state asymptotic equipartition property of \cite{fawzi_asymptotic_2023}; the tight relations between $D^{\varepsilon}_{H}$ and $D_{\max}^\varepsilon$ then imply the GQSL for subalgebras. The following is the proof of \Autoref{thm:gqsl_subalgebras} from the introduction.

%The upper bound (achievability) follows from the single-state case \eqref{eq:stein}, applied to $\sigma=E_*(\rho|_{\mathcal{N}})\in\mathcal{F}_1$. For the lower bound (optimality), we use the duality relations of Section~\ref{sec:duality}, in the smoothed form of Corollary~\ref{cor:smoothed-duality}. They convert a lower bound on the smoothed max-relative entropy with respect to the composite set $E^{\otimes n}_*(\mathcal{N}^{\otimes n}_*)$ into an upper bound on a smoothed min-relative entropy on the commutant $(\mathcal{N}')^{\otimes n}$, with respect to the \emph{single} functional $(E^{-1}_*(\rho|_{\mathcal{M}'}))^{\otimes n}$. The finite-index assumption guarantees that $E^{-1}$ is bounded (Lemma~\ref{lem:finite-index-bounded}), so this functional is a positive normal functional. The single-state asymptotic equipartition property of \cite{FGR23} then applies on the commutant, and the relative-entropy duality of Corollary~\ref{cor:relent-duality} shows that the resulting upper and lower bounds coincide. Our main result is the following.
\begin{proof}[Proof of Theorem~\ref{thm:gqsl_subalgebras}]
    Neither the quantities in the statement nor the index of $E$ depend on the representation of $\M$, so we may assume that $\M \subset \BO$ is in standard form $(\M, \HS, J, \mathcal{P})$ (\Autoref{sec:prelim-vn}). From the definition it is clear that
    \begin{equation}
        D_{\max}^ε(ρ\n\|E\n_*(\N\n_*)) \leq D_{\max}^ε(ρ\n\|E\n_*(ρ\n|_{\N\n}))\,.
    \end{equation}
    Let $σ \coloneqq E_*(ρ|_{\N}) \in \S(\M)$ and $σ' \coloneqq E^{-1}_*(ρ|_{\M'}) \in (\N')_*^+$ (\Autoref{lem:finite_index_bounded_inverse_weight}). The asymptotic equipartition property of \cite[Thm.~3.1]{fawzi_asymptotic_2023} requires the relative entropy and the second and third moments of $\log λ$ with respect to the spectral measure $μ$ of \Autoref{lem:first-moment} to be finite. This is true because we assumed finite index: by \Autoref{lem:haagerup-identity-finite-index} and \eqref{eq:first-moment}, $$\bra{\v{ρ}} \Delta(ω'/σ) \ket{\v{ρ}} = \bra{\v{ρ}} \Delta(E^{-1}_*(ω')/ρ|_{\N}) \ket{\v{ρ}} = E^{-1}_*(ω')(s(ρ|_{\N'})) \leq ω'(\Ind E) \leq \norm{\Ind E}$$ for every $ω' \in \S(\M')$, so that $ρ \leq \norm{\Ind E}\, σ$ by \Autoref{lem:dmax-variational}, and likewise $\bra{\v{ρ}} \Delta(τ/σ') \ket{\v{ρ}} = E_*(τ)(s(ρ)) \leq 1$ for $τ \in \S(\N)$ gives $ρ|_{\N'} \leq σ'$. Hence $D_{\max}(ρ\|σ) \leq \log\norm{\Ind E}$ and $D_{\max}(ρ|_{\N'}\|σ') \leq 0$, which by \eqref{eq:alt-inequalities} bounds $\int λ^{β - 1}\,\dd μ = e^{(β - 1) D_β}$ for $β \in (1, 2)$, while $\int λ^{-1}\,\dd μ < \infty$ by \eqref{eq:first-moment}; thus $\log λ$ has finite exponential moments of both signs, and all its moments are finite. Moreover, \cite{fawzi_asymptotic_2023} smooth over sub-normalized states, whereas we smooth over states. This does not affect the limits: a sub-normalized $\tilde{ρ}$ with $P(\tilde{ρ}, ρ) \leq ε$ has $\tilde{ρ}(\IdentityMatrix) \geq F(\tilde{ρ}, ρ) \geq 1 - ε^2$, so that $\tilde{ρ}/\tilde{ρ}(\IdentityMatrix)$ is again $ε$-close to $ρ$ and its $D_{\max}$ is larger by at most $\log \frac{1}{1 - ε^2}$; hence the two smoothed max-relative entropies differ by a constant. Therefore
    \begin{equation}
        \lim_{n \to \infty} {1 \over n} D_{\max}^ε(ρ\n\|σ\n) = D(ρ\|σ) = D(ρ\|E_*(ρ|_{\N}))\,.
    \end{equation}
    By \autoref{prop:pythagorean_identity} we also have that $D(ρ\|E_*(ρ|_{\N})) = D(ρ\|E_*(\N_*))$, so that
    \begin{equation}
        \limsup_{n \to \infty} {1 \over n} D_{\max}^ε(ρ\n\|E\n_*(\N\n_*)) \leq D(ρ\|E_*(\N_*)) = D(ρ\|E_*(ρ|_{\N}))\,.
    \end{equation}

    For the lower bound we pass to the commutants, and for this we fix the vector representative $\ket{\v{ρ}} \coloneqq ξ_ρ \in \mathcal{P}$ of $ρ$ in the natural cone. The algebras $\N\n \subset \M\n$ act on $\HS\n$, the vector $\ket{\v{ρ}}\n$ is a vector representative of $ρ\n$, and by the commutation theorem for tensor products \cite[Sec.~1.3]{hiai_lectures_2021} the commutants are $(\M\n)' = (\M')\n$ and $(\N\n)' = (\N')\n$. Hence the states induced by $\ket{\v{ρ}}\n$ on the two commutants are $(ρ|_{\M'})\n$ and $(ρ|_{\N'})\n$. We use two more facts about tensor powers. First, $\M\n \subset \BO[\HS\n]$ is again in standard form, with the antiunitary involution $J\n$ and a natural cone that contains the tensor products of vectors in $\mathcal{P}$; in particular, $\ket{\v{ρ}}\n = ξ_ρ\n$ is the vector representative of $ρ\n$ in the natural cone of this standard form, see e.g.\ \cite[Thm.~3.9]{okayasu_haagerup_2015}. Secondly, $E\n$ has finite index, with dual operator-valued weight $(E\n)^{-1} = (E^{-1})\n$ and index $\Ind E\n = (\Ind E)\n$, see e.g.\ \cite[Prop.~1.7]{hiai_minimum_1990}. Therefore the inclusion $\N\n \subset \M\n$, the conditional expectation $E\n$ and the state $ρ\n$ with its vector representative $\ket{\v{ρ}}\n$ satisfy the assumptions of \autoref{cor:smoothed_duality_relation}. We abbreviate
    \begin{equation}
        σ' \coloneqq E^{-1}_*(ρ|_{\M'}) \in (\N')_*^+\,,
    \end{equation}
    which is a positive normal functional by \autoref{lem:finite_index_bounded_inverse_weight}, because $E$ has finite index. Then $(σ')\n = (E\n)^{-1}_*\big((ρ|_{\M'})\n\big)$ is one of the functionals in the set $(E\n)^{-1}_*\big((\M')\n_*\big)$ to which the duality relation for $E\n$ refers. Now let $ε' > 0$ be such that $ε + ε' < 1$. Then
    \begin{align}
         D_{\max}^ε(ρ\n\|E\n_*(\N\n_*)) &\geq - D_{\min}^ε\Big((ρ|_{\N'})\n\Big\|(E\n)^{-1}_*\big((\M')\n_*\big)\Big) \\
         &\geq - D_{\min}^ε\big((ρ|_{\N'})\n\big\|(σ')\n\big)\\
         &\geq - D_{\max}^{ε'}\big((ρ|_{\N'})\n\big\|(σ')\n\big) - \log{1 \over 1 - (ε + ε')^2}\,.
    \end{align}
    Here the first inequality is \eqref{eq:smoothed-max-min-duality} of \autoref{cor:smoothed_duality_relation}, applied to the inclusion $\N\n \subset \M\n$, the conditional expectation $E\n$ and the state $ρ\n$. The second one holds because the infimum over the second argument in the definition of $D_{\min}^ε$ (\Autoref{def:one-shot}\ref{item:os-sets}) can only increase if the set is replaced by one of its elements. The last one is \autoref{lem:smoothed_min_max_trivial_relation} for the algebra $(\N')\n$ and the set $S = \set{(σ')\n}$, with the roles of $ε$ and $ε'$ interchanged. Note that the smoothing parameter of the max-relative entropy on the side of the commutants is $ε'$ and not $ε$.

    Since the asymptotic equipartition property of \cite{fawzi_asymptotic_2023} holds for every smoothing parameter in $(0, 1)$, we can use it again, now on $\N'$ and with the parameter $ε'$:
    \begin{equation}
        \lim_{n \to \infty} {1 \over n} D_{\max}^{ε'}\big((ρ|_{\N'})\n\big\|(σ')\n\big) = D(ρ|_{\N'}\|σ')\,.
    \end{equation}
    Dividing the chain of inequalities above by $n$ and letting $n \to \infty$, where the correction term $\log{1 \over 1 - (ε + ε')^2}$ does not depend on $n$, we obtain
    \begin{equation}
        \begin{aligned}
            \liminf_{n \to \infty} {1 \over n} D_{\max}^ε(ρ\n\|E\n_*(\N\n_*)) &\geq - D(ρ|_{\N'}\|σ') = - D(ρ|_{\N'}\|E^{-1}_*(ρ|_{\M'})) \\
            &= D(ρ\|E_*(ρ|_{\N}))\,,
        \end{aligned}
    \end{equation}
    where the last equality is \autoref{cor:relative_entropy_unoptimized_duality}. Together with the upper bound this proves the first claim.

    For $D_H^\varepsilon$ we use the relation \cite[Lemma~3.5]{fawzi_asymptotic_2023} (with $\varepsilon$ replaced by $1 - \varepsilon$) between $D_H^\varepsilon$ and $D_{\max}^\varepsilon$: for a state $\rho$, a positive normal functional $\sigma$ and $\delta \in (0, 1 - \varepsilon)$ with $D_{\max}^{\sqrt{1-\varepsilon-\delta}}(\rho\|\sigma) < \infty$,
    \begin{align}
 D_{\max}^{\sqrt{1-\varepsilon}}(\rho\|\sigma )
 -\log\frac{1}{\varepsilon}
 \le D_H^\varepsilon(\rho\|\sigma )\le
 D_{\max}^{\sqrt{1-\varepsilon-\delta}}(\rho\|\sigma )
 +\log\frac{4(\varepsilon+\delta)}{\delta^2}\,.
 \label{eq:aep-dh-upper-pair}
\end{align}
    Both inequalities pass to the infimum over $σ_n \in E\n_*(\N\n_*)$. For the first one, a $σ_n$ violating the finiteness condition may be replaced by $(1-t)σ_n + tσ\n$ with $t \in (0, 1)$, which lies in the same convex set, satisfies $D_{\max}\big(ρ\n\big\|(1-t)σ_n + tσ\n\big) \leq n\log\norm{\Ind E} - \log t < \infty$ because $ρ\n \leq \norm{\Ind E}^n σ\n$, and for which $D_H^ε(ρ\n\|\,\cdot\,)$ is larger than for $σ_n$ by at most $-\log(1-t)$; letting $t \to 0$ gives the first inequality for the infima. Since the limit of $\frac{1}{n} D_{\max}^ε$ does not depend on $ε$, we obtain
    \begin{equation*}
         \lim_{n \to \infty} {1 \over n} D_{\max}^ε(ρ\n\|E\n_*(\N\n_*)) = \lim_{n \to \infty} {1 \over n} D_{H}^ε(ρ\n\|E\n_*(\N\n_*))  = D(ρ\|E_*(ρ|_{\N}))=D(ρ\|E_*(\N_*))\,.
    \end{equation*}
\end{proof}
\noindent {\bf Acknowledgments. }
The authors thank Anna Jenčová for making them aware of \autoref{lem:resolvent-variational-formula} and Lauritz van Luijk as well as Fumio Hiai for helpful discussions. BB is supported by the Engineering and Physical Sciences Research Council [Grant Ref: EP/Y028732/1]. LG is partially supported by the National Natural Science Foundation of China (grant No.~12401163) and the Department of Science and Technology of Hubei Province (Projects No.2025EHA041, No.2025AFA044).  MR was partially supported by
the Wallenberg Centre for Quantum Technology (WACQT) funded by the Knut and Alice
Wallenberg Foundation (KAW).

\noindent  \textbf{Statement on AI usage.} 
OpenAI GPT 5.6 Sol has provided ideas for the proofs of \autoref{lem:singular functional properties}, \autoref{lemma:spatial derivate non-normal extension}. Additionally, Claude Fable 5.1 was used to add explanations, improve the exposition of the paper, and write out some of the standard arguments used in the Appendix. All arguments have been verified by the authors.
\appendix
\section{Proofs and Lemmas}\label{app:prelim-proofs}

This appendix collects auxiliary lemmas and the proofs deferred from the main text; we use the notation and conventions of \Autoref{sec:prelim-vn}. Most of this material is standard, or a routine adaptation of known results to the conventions of this paper; where a statement can be found in the literature, we give the reference and only indicate the necessary modifications. \Cref{app:generalized-positive-operators} contains two standard facts about positive operators, which are used in \Autoref{sec:prelim-divergences} and \Autoref{sec:duality-relations}. \Cref{app:nonfaithful-proofs} collects the proofs that do little more than extend known results about spatial derivatives from faithful to non-faithful functionals: the reduction to the support of the second argument (\Autoref{lem:bounded-vectors-nonfaithful}), the proof of \Autoref{prop:spatial-derivative-properties}, and the proof of Haagerup's identity for non-faithful functionals (\Autoref{lem:haagerup-identity-finite-index}). \Cref{app:divergence-proofs} contains the proofs of the statements about the entropic quantities in \Autoref{sec:prelim-divergences} (\Autoref{lem:relative-entropy-well-defined}, \Autoref{lem:sandwiched-support-restriction}, \Autoref{prop:representation-independence}, \Autoref{prop:divergence-properties} and \Autoref{lem:dmax-variational}), together with a lemma on the spectral measures entering their definition (\Autoref{lem:first-moment}), \Cref{app:finite-index-proof} the proof of \Autoref{lem:finite_index_bounded_inverse_weight}, \Cref{app:finite-dim-proofs} the finite-dimensional computations behind \Autoref{prop:finite-dim-cond-exp} and \Autoref{prop:finite-dim-inverse-weight}, and \Cref{app:pythagorean-proof} the proof of the Pythagorean identity for operator-valued weights (\Autoref{prop:pythagorean_identity}), by reduction to the known case of a conditional expectation.

\subsection{Standard inequalities for generalized positive operators}\label[subappendix]{app:generalized-positive-operators}

The two lemmas of this subsection are the workhorse for everything concerning the optimization in the sandwiched divergence: a variational expression for the resolvent of a positive operator (\Autoref{lem:resolvent-variational-formula}) and its consequences for operator monotonicity, compressions and concavity (\Autoref{lem:compression-inequality}). Both are well known. For the first one we refer to the literature. For the second one we give a complete proof, because we need it as a statement about quadratic forms with values in $[0, \infty]$, for unbounded and for generalized positive operators, for negative powers and without any assumptions on domains, whereas the versions in the literature are mostly formulated for bounded operators.

The first lemma expresses the resolvent as the Legendre transform of the quadratic form of $t + A$. Substituting $h = tw$, $g = v - h$ and using $\ip{v}{A(t+A)^{-1}v} = \norm{v}^2 - t\ip{v}{(t+A)^{-1}v}$, one sees that it is equivalent to the identity
\begin{equation}\label{eq:resolvent-infimum-formula}
    \ip{v}{A(t + A)^{-1} v} = \inf\Set{\norm{g}^2 + t^{-1}\norm{A^{1/2}h}^2 | g + h = v,\ h \in \mathcal{D}(A^{1/2})}\,,
\end{equation}
which is the starting point of Kosaki's variational expression for the relative entropy \cite{kosaki_relative_1986}; see \cite[Lemma~5.8]{ohya_quantum_1993}, where \eqref{eq:resolvent-infimum-formula} is proved for $v$, $g$ and $h$ in a core of $A^{1/2}$, and, for generalized positive operators and $t = 1$, \cite[p.~400]{kosaki_parallel_2017} and \cite[p.~228]{hiai_lectures_2021}.

\begin{lemma}[Resolvent variational formula]\label{lem:resolvent-variational-formula}
    Let $\HS$ be a Hilbert space with vector $v \in \HS$. For any (potentially unbounded) positive self-adjoint operator $A \geq 0$ on $\HS$, and any $t > 0$, the following identity holds
    \begin{equation}
        \langle{v, (t + A)^{-1} v\rangle} = \sup_{w \in \mathcal{D}(A^{1/2})} 2 \Re\langle w, v\rangle - t \norm{w}^2 - \|A^{1/2} w\|^2
    \end{equation}
    where $\mathcal{D}(A^{1/2})$ is the domain of $A^{1/2}$.
\end{lemma}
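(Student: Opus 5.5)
The plan is to prove the identity by completing the square in the spectral representation of $A$. First I will fix $t > 0$ and put $R \coloneqq (t+A)^{-1}$. This is a bounded positive operator with $\norm{R} \leq 1/t$. Its range is $\mathcal{D}(A)$, which lies inside $\mathcal{D}(A^{1/2})$. The natural candidate maximizer is $w_0 \coloneqq R v \in \mathcal{D}(A) \subset \mathcal{D}(A^{1/2})$. On $\mathcal{D}(A^{1/2})$ I will work with the closed form $q(w) \coloneqq t\norm{w}^2 + \norm{A^{1/2}w}^2 = \norm{(t+A)^{1/2} w}^2$. Note that $\mathcal{D}((t+A)^{1/2}) = \mathcal{D}(A^{1/2})$, since $t>0$, so the objective is $2\Re\ip{w}{v} - \norm{(t+A)^{1/2}w}^2$.

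The key step is the completed-square identity, valid for every $w \in \mathcal{D}(A^{1/2})$:
\begin{equation*}
    2\Re\ip{w}{v} - \norm{(t+A)^{1/2}w}^2 = \ip{v}{Rv} - \norm{(t+A)^{1/2}w - R^{1/2}v}^2 .
\end{equation*}
To check it, I write $v = (t+A)^{1/2} R^{1/2} v$. This is legitimate because $R^{1/2} = (t+A)^{-1/2}$ maps $\HS$ onto $\mathcal{D}((t+A)^{1/2})$ and is inverted there by $(t+A)^{1/2}$. Then $\ip{w}{v} = \ip{(t+A)^{1/2}w}{R^{1/2}v}$ by self-adjointness of $(t+A)^{1/2}$ on its domain. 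Expanding the square gives $\norm{(t+A)^{1/2}w}^2 - 2\Re\ip{(t+A)^{1/2}w}{R^{1/2}v} + \norm{R^{1/2}v}^2$, and $\norm{R^{1/2}v}^2 = \ip{v}{Rv}$. Rearranging yields the displayed identity.

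From the identity, the objective is at most $\ip{v}{Rv}$ for every admissible $w$. Equality holds at $w_0 = Rv$, because $(t+A)^{1/2} R v = R^{1/2} v$. Hence the supremum equals $\ip{v}{(t+A)^{-1}v}$ and is attained. Equivalently, one can verify everything through the spectral measure $\mu^A_v$ as scalar completion of squares $2\Re(\bar w v) - (t+\lambda)\abs{w}^2 \leq \abs{v}^2/(t+\lambda)$ fibrewise. I would mention this as a check, but base the proof on the operator identity, since that avoids measurability issues in the direct-integral picture.

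The only delicate point is domain bookkeeping. One must make sure that $\ip{w}{v}$ can be moved across $(t+A)^{1/2}$, which requires $w \in \mathcal{D}((t+A)^{1/2})$, and that $R^{1/2}v$ lies in that domain so the square is defined. Both follow from functional calculus for the positive self-adjoint operator $t+A \geq t$, so I expect no real obstacle. For later use in \eqref{eq:resolvent-variational-generalized} I would add a remark. For a generalized positive operator $(\mathcal{K}, A_0)$, the same argument applied to $A_0$ on $\mathcal{K}$ gives $\ip{v}{(t+A)^{-1}v} = \ip{P_{\mathcal{K}}v}{(t+A_0)^{-1}P_{\mathcal{K}}v}$. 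In the variational formula the supremum then runs over $w \in \mathcal{D}(A_0^{1/2}) \subset \mathcal{K}$, and there $\ip{w}{v} = \ip{w}{P_{\mathcal{K}}v}$.
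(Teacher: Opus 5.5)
Your proof is correct and is essentially the paper's argument: the paper also completes the square, writing the objective as $\ip{v}{(t+A)^{-1}v} - \norm{(t+A)^{1/2}(w-w_0)}^2$ with $w_0 = (t+A)^{-1}v$. This is your identity, since $R^{1/2}v = (t+A)^{1/2}w_0$; your domain bookkeeping and your remark on generalized positive operators (matching \eqref{eq:resolvent-variational-generalized}) are also fine.
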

\begin{proof}
    This is a standard fact about closed quadratic forms, see e.g.\ \cite[Thm.~10.9]{schmudgen_unbounded_2012}, applied to the form $w \mapsto \norm{A^{1/2}w}^2$ on $\mathcal{D}(A^{1/2})$. Explicitly, with $w_0 \coloneqq (t + A)^{-1}v \in \mathcal{D}(A)$ the expression in the supremum equals $\ip{v}{(t + A)^{-1}v} - \norm{(t + A)^{1/2}(w - w_0)}^2$, so that the supremum is attained exactly at $w = w_0$.
\end{proof}

For bounded operators the three parts of the next lemma are classical results from the theory of operator monotone and operator convex functions: part~\ref{item:ci-monotone} is the Löwner--Heinz inequality, together with the fact that inversion reverses the order, part~\ref{item:ci-compression} is the operator Jensen inequality for compressions \cite{hansen_jensens_1982}, and part~\ref{item:ci-concavity} is the operator concavity of $t \mapsto t^{s}$ and the operator convexity of $t \mapsto t^{-s}$ for $s \in (0, 1]$; see \cite[Ch.~V]{bhatia_matrix_1997} or \cite{carlen_trace_2010, simon_loewners_2019a}. For unbounded positive self-adjoint operators, the equivalence between the order of the quadratic forms and the reversed order of the resolvents, on which part~\ref{item:ci-monotone} rests, can be found in \cite[Thm.~VI.2.21]{kato_perturbation_1995}, \cite[Thm.~7.5.14]{simon_operator_2015} and, for generalized positive operators, in \cite[Lemma~A.1, Prop.~A.18]{hiai_lectures_2021}; a version of the operator Jensen inequality for unbounded operators is \cite[Lemma~20]{berta_renyi_2018}.

\begin{lemma}[Operator monotonicity, compressions and concavity of powers]\label{lem:compression-inequality}
    Let $A, B \in \widehat{\BO_+}$, let $q$ be a projection on $\HS$, and let $A_q \in \widehat{\BO[q\HS]_+}$ be the compression of $A$ to $q\HS$, i.e.\ the generalized positive operator on $q\HS$ whose form is $ζ \mapsto \ip{ζ}{Aζ}$, $ζ \in q\HS$. Then for all $s \in (0, 1]$:
    \begin{enumerate}[label=(\roman*)]
        \item\label{item:ci-monotone} If $A \leq B$, then $\ip{ζ}{A^{s}ζ} \leq \ip{ζ}{B^{s}ζ}$ and $\ip{ζ}{A^{-s}ζ} \geq \ip{ζ}{B^{-s}ζ}$ for all $ζ \in \HS$.
        \item\label{item:ci-compression} $\ip{ζ}{A_q^{s}ζ} \geq \ip{ζ}{A^{s}ζ}$ and $\ip{ζ}{A_q^{-s}ζ} \leq \ip{ζ}{A^{-s}ζ}$ for all $ζ \in q\HS$.
        \item\label{item:ci-concavity} For $μ \in [0, 1]$ let $μA + (1 - μ)B \in \widehat{\BO_+}$ be the generalized positive operator with form $ζ \mapsto μ\ip{ζ}{Aζ} + (1-μ)\ip{ζ}{Bζ}$. Then, for all $ζ \in \HS$,
        \begin{align}
            \ip{ζ}{(μA + (1-μ)B)^{s}ζ} &\geq μ\ip{ζ}{A^{s}ζ} + (1-μ)\ip{ζ}{B^{s}ζ}\, \\ 
            \ip{ζ}{(μA + (1-μ)B)^{-s}ζ} &\leq μ\ip{ζ}{A^{-s}ζ} + (1-μ)\ip{ζ}{B^{-s}ζ}\,,
        \end{align}
        i.e.\ $C \mapsto \ip{ζ}{C^{s}ζ}$ is concave and $C \mapsto \ip{ζ}{C^{-s}ζ}$ is convex on $\widehat{\BO_+}$.
    \end{enumerate}
\end{lemma}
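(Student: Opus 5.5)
The plan is to reduce every claim to a statement about scalar Borel functions on $[0,\infty]$ by passing through resolvents. For $t>0$, write $R_t(C)\coloneqq (t+C)^{-1}$ for the bounded resolvent of a generalized positive operator; it vanishes on the $+\infty$ part. By \Autoref{lem:resolvent-variational-formula}, applied to the finite part,
\begin{equation*}
\ip{ζ}{R_t(C)ζ}=\sup_{w}\big[2\Re\ip{w}{ζ}-t\norm{w}^2-\ip{w}{Cw}\big],
\end{equation*}
where the supremum runs over $w$ in the form domain of $C$. This formula only involves the quadratic form of $C$, so the form order, compressions and convex combinations of forms all become transparent. Three consequences follow directly. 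First, $A\le B$ implies $\ip{ζ}{R_t(A)ζ}\ge\ip{ζ}{R_t(B)ζ}$. Second, restricting $w$ to $q\HS$ shows $\ip{ζ}{R_t(A_q)ζ}\le \ip{ζ}{R_t(A)ζ}$ for $ζ\in q\HS$. Third, the map $C\mapsto \ip{ζ}{R_t(C)ζ}$ is a supremum of functions affine in the form of $C$, hence convex under the convex combination of forms defined in \ref{item:ci-concavity}.

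Next I will transfer these resolvent statements to powers using the integral representations already used in \Autoref{lemma:spatial derivate non-normal extension}. For $s\in(0,1)$ I will use, for $x\in[0,\infty]$,
\begin{equation*}
x^{-s}=\frac{\sin(\pi s)}{\pi}\int_0^\infty t^{-s}(t+x)^{-1}\,dt,
\qquad
x^{s}=\frac{\sin(\pi s)}{\pi}\int_0^\infty t^{s-1}\big(1-t(t+x)^{-1}\big)\,dt,
\end{equation*}
together with the conventions $0^{-s}=\infty$, $\infty^{-s}=0$ and $\infty^{s}=\infty$. Integrating against the spectral measure $μ^C_ζ$ on $[0,\infty]$ and applying Tonelli gives
\begin{equation*}
\ip{ζ}{C^{-s}ζ}=c_s\int_0^\infty t^{-s}\ip{ζ}{R_t(C)ζ}\,dt,
\qquad
\ip{ζ}{C^{s}ζ}=c_s\int_0^\infty t^{s-1}\big(\norm{ζ}^2-t\ip{ζ}{R_t(C)ζ}\big)\,dt,
\end{equation*}
with $c_s=\sin(\pi s)/\pi$. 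Both integrands are nonnegative, and the $+\infty$ part of $C$ contributes correctly because $R_t$ vanishes there. The negative powers are increasing functionals of the resolvent values, while the positive powers are decreasing ones. Integrating the three resolvent facts therefore yields \ref{item:ci-monotone}, \ref{item:ci-compression} and \ref{item:ci-concavity} for $s\in(0,1)$, with the correct directions. Convexity and concavity survive integration because they hold pointwise in $t$ and the weights are positive, and Tonelli handles the $+\infty$ values.

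The endpoint $s=1$ is the main obstacle, since the integral representations break down there. For $-s=-1$ I will use monotone convergence: $\ip{ζ}{C^{-1}ζ}=\sup_{t>0}\ip{ζ}{R_t(C)ζ}$, exactly as in \Autoref{lemma:spatial derivate non-normal extension}. A supremum of functionals that are decreasing in $C$, decreasing under compression and convex in $C$ inherits all three properties. For $s=1$ the claims are simpler and should follow directly from the definitions. Monotonicity is the definition of the order. Concavity holds with equality, since the form of $μA+(1-μ)B$ is by definition the convex combination. The compression statement is an identity, because $\ip{ζ}{A_qζ}=\ip{ζ}{Aζ}$ for $ζ\in q\HS$.

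The one point I must check carefully is the sign in the compression inequality for positive $s<1$. The claimed inequality $\ip{ζ}{A_q^sζ}\ge\ip{ζ}{A^sζ}$ is the operator Jensen inequality, and it follows from $R_t(A_q)\le qR_t(A)q$ through the decreasing dependence of $x^s$ on the resolvent, so the signs are consistent. A second point is that kernel vectors of $A_q$ may get value $\infty$ under negative powers. The variational formula already encodes this once the supremum is allowed to be $+\infty$, since $\ip{ζ}{R_t(C)ζ}\le\norm{ζ}^2/t$ and blow-up occurs only in the $t\to0$ integral, which Tonelli handles.
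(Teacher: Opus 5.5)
Your proposal is correct and follows essentially the same route as the paper. The resolvent variational formula gives the order reversal, the compression inequality and the convexity of $C \mapsto \ip{\zeta}{(t+C)^{-1}\zeta}$, and these are transferred to the powers with $s \in (0,1)$ through the same two integral representations and Tonelli's theorem; the endpoint $s=1$ is handled as in the paper, by monotone convergence for $C^{-1}$ and directly from the definitions of the form order, the compression and the convex combination for $C$ itself.
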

\begin{proof}
    We follow the standard route (see e.g.\ \cite[Ch.~V]{bhatia_matrix_1997}): the integral representations \eqref{eq:power-integral-representations} of the power functions reduce all three statements to the corresponding statements about resolvents, and these can be read off from the variational formula of \Autoref{lem:resolvent-variational-formula}. For $C \in \widehat{\BO_+}$ and $λ > 0$, the generalized positive operator $λ + C$ has the same $+\infty$ part as $C$, so $(λ + C)^{-1}$ is a bounded operator, and \Autoref{lem:resolvent-variational-formula} (applied to the finite part of $C$) gives
    \begin{equation}\label{eq:resolvent-variational-generalized}
        \ip{ζ}{(λ + C)^{-1} ζ} = \sup\Set{ 2\Re\ip{w}{ζ} - λ\norm{w}^2 - \ip{w}{Cw} | w \in \HS,\ \ip{w}{Cw} < \infty}\,, \qquad{ζ} \in \HS\,.
    \end{equation}
    If $A \leq B$, the supremum for $B$ runs over a smaller set and has a smaller objective than the one for $A$, so $\ip{ζ}{(λ + B)^{-1}ζ} \leq \ip{ζ}{(λ + A)^{-1}ζ}$. Likewise, for $ζ \in q\HS$ the supremum for $A_q$ runs over $\Set{w \in q\HS | \ip{w}{Aw} < \infty}$ with the same objective as the one for $A$, so $\ip{ζ}{(λ + A_q)^{-1}ζ} \leq \ip{ζ}{(λ + A)^{-1}ζ}$. Now recall that for $s \in (0, 1)$ and $t \in [0, \infty]$, with our conventions,
    \begin{equation}\label{eq:power-integral-representations}
        t^{-s} = \frac{\sin(π s)}{π} \int_0^\infty{λ}^{-s} (λ + t)^{-1} \, \dd{λ}\,, \qquad t^{s} = \frac{\sin(π s)}{π} \int_0^\infty{λ}^{s - 1} \big(1 - λ(λ + t)^{-1}\big) \, \dd{λ}\,.
    \end{equation}
    Integrating against the spectral measure $μ^C_ζ$ and using Tonelli's theorem (all integrands are nonnegative) yields
    \begin{align}
        \ip{ζ}{C^{-s}ζ} &= \frac{\sin(π s)}{π} \int_0^\infty{λ}^{-s} \ip{ζ}{(λ+C)^{-1}ζ} \, \dd{λ}\, \\
        \ip{ζ}{C^{s}ζ} &= \frac{\sin(π s)}{π} \int_0^\infty{λ}^{s-1} \big(\norm{ζ}^2 - λ\ip{ζ}{(λ+C)^{-1}ζ}\big) \, \dd{λ}
    \end{align}
    for every $C \in \widehat{\BO_+}$, and both claims follow from the resolvent inequalities for $s \in (0,1)$. For $s = 1$, $\ip{ζ}{C^{-1}ζ} = \lim_{λ \downarrow 0} \ip{ζ}{(λ + C)^{-1}ζ}$ by monotone convergence, which gives the statements about $A^{-1}$, while $\ip{ζ}{A_qζ} = \ip{ζ}{Aζ}$ for $ζ \in q\HS$ by definition, and $A \leq B$ is the assumption.

    For \ref{item:ci-concavity} note first that $μ\ip{\,\cdot\,}{A\,\cdot\,} + (1-μ)\ip{\,\cdot\,}{B\,\cdot\,}$ is a lower semicontinuous positive form and hence defines an element of $\widehat{\BO_+}$ (\Autoref{sec:prelim-vn}). For $λ > 0$ and $μ \in (0,1)$, the objective function in \eqref{eq:resolvent-variational-generalized} for $C = μA + (1-μ)B$ is the convex combination, with weights $μ$ and $1 - μ$, of the objective functions for $A$ and for $B$, and its supremum runs over the vectors $w$ with $\ip{w}{Aw} < \infty$ and $\ip{w}{Bw} < \infty$, a subset of the sets for $A$ and for $B$; hence $\ip{ζ}{(λ + C)^{-1}ζ} \leq μ\ip{ζ}{(λ + A)^{-1}ζ} + (1-μ)\ip{ζ}{(λ + B)^{-1}ζ}$, i.e.\ $C \mapsto \ip{ζ}{(λ + C)^{-1}ζ}$ is convex on $\widehat{\BO_+}$. By \eqref{eq:power-integral-representations} and Tonelli's theorem, $C \mapsto \ip{ζ}{C^{-s}ζ}$ is then convex and $C \mapsto \ip{ζ}{C^{s}ζ}$ is concave for $s \in (0, 1)$; for $s = 1$, $\ip{ζ}{Cζ}$ is affine by definition and $\ip{ζ}{C^{-1}ζ} = \lim_{λ \downarrow 0} \ip{ζ}{(λ + C)^{-1}ζ}$ is a pointwise limit of convex functions.
\end{proof}

\subsection{From faithful to non-faithful functionals}\label[subappendix]{app:nonfaithful-proofs}

Connes' spatial derivative and the results about it that we use, namely its basic properties (\Autoref{prop:spatial-derivative-properties}) and Haagerup's identity \eqref{eq:inverse_weight}, are proved in \cite{connes_spatial_1980, haagerup_operator_1979a, hiai_lectures_2021} for a second argument $ψ$ that is a \emph{faithful} normal semifinite weight; only the first argument is allowed to be non-faithful \cite[Lemma~11.3, Prop.~11.19]{hiai_lectures_2021}. For positive normal functionals that are not faithful, the spatial derivative is introduced in \cite[Ch.~4]{ohya_quantum_1993} and its basic properties are stated in \cite[Eqs.~(4.4)--(4.7)]{ohya_quantum_1993}, without proofs and with the value $0$ instead of $+\infty$ on the orthogonal complement of the support of the second argument. The purpose of this subsection is to show how these statements, in the convention of \Autoref{def:spatial-derivative}, follow from the faithful case. The arguments contain no new ideas and use two standard devices.
\begin{itemize}
    \item \emph{Reduction to the support.} One cuts down by the support projection $e = s(ψ) \in \M'$, i.e.\ one passes to the reduced algebra $e\M'e$ and the induced algebra $\M e$ on $e\HS$, where $ψ$ is faithful, and applies the known results there. This is the mechanism of Hiai's reduction lemma \cite[Lemma~11.24]{hiai_lectures_2021}, which treats the support of the \emph{first} argument in this way, and of the support conventions for the relative modular operators of non-faithful functionals in the standard form \cite{araki_relative_1977}, \cite[App.~C]{araki_positive_1982}, \cite[Sec.~10.1]{hiai_lectures_2021}. \Autoref{lem:bounded-vectors-nonfaithful} in \Cref{app:bounded-vectors-proof} and the proof of \Autoref{prop:spatial-derivative-properties} in \Cref{app:spatial-derivative-general-proof} are of this kind.
    \item \emph{Completion.} One makes a non-faithful functional faithful by adding a weight that is supported on the complement of its support, applies the known result, and removes the additional term again by restricting to the original support; see \cite[Eq.~(4.6)]{ohya_quantum_1993} and the proofs of \cite[Thm.~5.7]{ohya_quantum_1993} and \cite[Thm.~11.7(4)]{hiai_lectures_2021}. The proof of Haagerup's identity for non-faithful functionals in \Cref{app:haagerup-proof} is of this kind.
\end{itemize}
In finite dimensions, where $\Delta(φ/ψ) = \bigoplus_i \hat{φ}_i \otimes \hat{ψ}_i^{-1}$ by \eqref{eq:finite-dim-spatial-derivative}, all statements of this subsection are obvious if $\hat{ψ}_i^{-1}$ is read as the generalized inverse, with the value $+\infty$ on the kernel of $\hat{ψ}_i$.

\subsubsection{Reduction to the support of the second argument}\label[subsubappendix]{app:bounded-vectors-proof}

Let $ψ \in (\M')_*^+$ and put $e \coloneqq s(ψ) \in \M'$, $\mathcal{K} \coloneqq e\HS$. The reduced algebra $e\M'e$, restricted to $\mathcal{K}$, is a von Neumann algebra on $\mathcal{K}$ on which $ψ_e \coloneqq{ψ}|_{e\M'e}$ is faithful, and its commutant on $\mathcal{K}$ is the induced algebra $\M e|_{\mathcal{K}} = \{xe|_{\mathcal{K}}\, :\, x \in \M\}$ (see \cite[Lemma~11.24]{hiai_lectures_2021} and \cite[Ch.~II]{takesaki_theory_1979}). The map $x \mapsto xe|_{\mathcal{K}}$ is a normal $*$-homomorphism of $\M$ onto $\M e|_{\mathcal{K}}$ whose kernel is $\M(\IdentityMatrix - z)$, where $z \in Z(\M) = Z(\M')$ is the central support of $e$, i.e.\ the smallest central projection dominating $e$. Consequently, every $φ \in \M_*^+$ induces a well-defined element of $(\M e|_{\mathcal{K}})_*^+$ by
\begin{equation}\label{eq:induced-functional}
    φ_e(xe|_{\mathcal{K}}) \coloneqq{φ}(xz)\,, \qquad x \in \M\,.
\end{equation}
Note that $φ_e$ is \emph{not} the compression $x \mapsto{φ}(exe)$, which is not even well defined on $\M e|_{\mathcal{K}}$ in general; when $\M$ is a factor, $z = \IdentityMatrix$ and $φ_e$ is simply $φ$ viewed as a functional on $\M e \cong \M$. The following lemma shows that the spatial derivative with respect to $ψ$ is, on $\mathcal{K}$, Connes' spatial derivative with respect to the faithful functional $ψ_e$. We write $\Delta_{\mathcal{K}}(\,\cdot\,/\,\cdot\,)$ for spatial derivatives on the Hilbert space $\mathcal{K}$.

\begin{lemma}[Reduction to the support]\label{lem:bounded-vectors-nonfaithful}
    Let $ψ \in (\M')_*^+$, let $e = s(ψ)$, $\mathcal{K} = e\HS$, $z$ and $ψ_e$ be as above, and let $D(\mathcal{K}, ψ_e)$, $R^{ψ_e}$ and $θ^{ψ_e}$ be the objects of \Autoref{def:spatial-derivative} on $\mathcal{K}$ for the algebra $\M e|_{\mathcal{K}}$, its commutant $e\M'e|_{\mathcal{K}}$ and the faithful functional $ψ_e$.
    \begin{enumerate}[label=(\alph*)]
        \item\label{item:bv-domain} $D(\HS, ψ) = D(\mathcal{K}, ψ_e)$.
        \item\label{item:bv-theta} For $ξ \in D(\HS, ψ)$ one has $θ^ψ(ξ, ξ) = θ^ψ(ξ, ξ)\, z$ and $θ^ψ(ξ, ξ)\, e|_{\mathcal{K}} = θ^{ψ_e}(ξ, ξ)$. Consequently,
        \begin{equation}\label{eq:theta-reduction}
            \widehat{\varphi}\big(θ^ψ(ξ, ξ)\big) = \widehat{\varphi}_e\big(θ^{ψ_e}(ξ, ξ)\big)\,, \qquad{ξ} \in D(\HS, ψ)\,,
        \end{equation}
        for every positive linear functional $\widehat{\varphi}$ on $\M$, normal or not, where $\widehat{\varphi}_e(xe|_{\mathcal{K}}) \coloneqq \widehat{\varphi}(xz)$ is defined as in \eqref{eq:induced-functional}.
        \item\label{item:bv-form} For $φ \in \M_*^+$ the form $q_φ$ of \Autoref{def:spatial-derivative} is the form $ξ \mapsto φ_e(θ^{ψ_e}(ξ, ξ))$ on $D(\mathcal{K}, ψ_e)$. Hence the restriction of $\Delta(φ/ψ)$ to $\mathcal{K}$ is $\Delta_{\mathcal{K}}(φ_e/ψ_e)$, the spatial derivative of $φ_e$ with respect to the faithful functional $ψ_e$.
    \end{enumerate}
\end{lemma}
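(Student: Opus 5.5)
The plan is to work directly from the definition of $ψ$-bounded vectors and of $R^ψ$, using that $ψ$ vanishes on $\M'(\IdentityMatrix - e)$. Two facts will drive everything. The GNS space $\HS_ψ$ is canonically the GNS space of $ψ_e$ on $e\M'e$. The operators $R^ψ(ξ)$ automatically kill everything coming from $\IdentityMatrix - e$.

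For \ref{item:bv-domain}, take $ξ \in D(\HS,ψ)$ and $y \in \M'$. Then $ψ(y^*y) = ψ(e y^* y e)$, so applying the bound to $y(\IdentityMatrix-e)$ gives $\norm{y(\IdentityMatrix-e)ξ}^2 \le C\,ψ(0)=0$. With $y=\IdentityMatrix$ this yields $ξ = eξ \in \mathcal{K}$, and in general $yξ = yeξ$. Restricting to $y = eye$ gives the $ψ_e$-boundedness of $ξ$ in $\mathcal{K}$ with the same constant. Conversely, let $ξ \in D(\mathcal{K},ψ_e)$ and $y \in \M'$. Write $yξ = yeξ$ and use $\norm{ye ξ}^2 = \ip{ξ}{e y^*y e\, ξ}$. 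Since $ey^*ye \in e\M'e$ is a positive element, its square root $(ey^*ye)^{1/2}$ lies in $e\M'e$, and $ψ_e$ of its square is $ψ(y^*y)$ (again because $ψ = ψ(e\cdot e)$). Hence $\norm{yξ}^2 \le C\,ψ(y^*y)$, and the two domains coincide.

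For \ref{item:bv-theta}, I will identify $\HS_ψ$ with $\HS_{ψ_e}$ via $η_ψ(y) \mapsto η_{ψ_e}(eye)$, which is isometric and onto since $η_ψ(y) = η_ψ(ye)$. I will then check that $eR^ψ(ξ) = R^{ψ_e}(ξ)$ under this identification, both mapping $η(y)$ to $yξ = eyeξ$ after compressing. Since $R^ψ(ξ)$ takes values in $\overline{\M'ξ}$ and $ξ \in \mathcal{K}$, the range of $θ^ψ(ξ,ξ)$ lies in $[\M'\mathcal{K}]$, whose projection is the central support $z$. Because $θ^ψ(ξ,ξ) \in \M$ commutes with $z$, this gives $θ^ψ(ξ,ξ) = θ^ψ(ξ,ξ)z$. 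Compressing to $\mathcal{K}$ gives $eR^ψ R^{ψ*}e = R^{ψ_e}R^{ψ_e *}$, so $θ^ψ(ξ,ξ)e|_{\mathcal{K}} = θ^{ψ_e}(ξ,ξ)$. Then \eqref{eq:theta-reduction} follows from the definition of $\widehat\varphi_e$ with $x = θ^ψ(ξ,ξ)$, since $\widehat\varphi_e(xe|_{\mathcal{K}}) = \widehat\varphi(xz) = \widehat\varphi(x)$. Well-definedness of $\widehat\varphi_e$ uses that the kernel of $x\mapsto xe|_{\mathcal K}$ is $\M(\IdentityMatrix-z)$, so nothing beyond linearity and positivity of $\widehat\varphi$ is needed, and normality is irrelevant.

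For \ref{item:bv-form}, parts \ref{item:bv-domain} and \ref{item:bv-theta} show that $q_φ$ and the form defining $\Delta_{\mathcal K}(φ_e/ψ_e)$ have the same domain and agree on it, hence have the same closure. By \Autoref{def:spatial-derivative}, $\Delta(φ/ψ)$ restricted to $\mathcal{K}=e\HS$ is the operator associated to that closure, which is $\Delta_{\mathcal K}(φ_e/ψ_e)$ by Connes' definition for the faithful $ψ_e$. One also needs that $φ_e$ is normal on $\M e|_{\mathcal K}$, which holds since $x\mapsto xe|_{\mathcal K}$ restricted to $\M z$ is a normal isomorphism.

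I expect the main obstacle to be the step $θ^ψ(ξ,ξ)=θ^ψ(ξ,ξ)z$, that is, pinning the range of $R^ψ(ξ)$ inside $z\HS$. The point is to show that $[\M'e\HS]=z$, which is the standard characterization of the central support, and to remember that it is $z$ rather than $e$ because $θ$ lies in $\M$, not in $\M'$.
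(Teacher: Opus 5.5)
Your proof follows the paper's route in all three parts. Part (a) uses the square-root trick $b=(ey^*ye)^{1/2}\in e\M'e$. Part (b) uses that the range of $R^{\psi}(\xi)$ lies in $\overline{\M'\xi}\subset z\HS$, followed by compression with $e$. Part (c) compares the two forms. You also correctly note that $\varphi_e$ is normal.

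However, one claim in (b) is false as stated. The map $V:\eta_\psi(y)\mapsto\eta_{\psi_e}(eye)$ is not isometric. Indeed, $\norm{\eta_\psi(y)}^2=\psi(y^*y)=\psi(ey^*ye)$, whereas $\norm{\eta_{\psi_e}(eye)}^2=\psi(ey^*eye)$. The two differ by $\psi\big(ey^*(\IdentityMatrix-e)ye\big)$, which need not vanish. The relation $\eta_\psi(y)=\eta_\psi(ye)$ only lets you cut by $e$ on the right, not on both sides. For example, take $\M'=M_2(\complex)$ acting on $\complex^2$ and $\psi$ a vector state: then $\HS_\psi$ is two-dimensional while $\HS_{\psi_e}$ is one-dimensional.

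What is true is that $\eta_{\psi_e}(eye)\mapsto\eta_\psi(eye)$ embeds $\HS_{\psi_e}$ isometrically onto the subspace $\pi_\psi(e)\HS_\psi$. Your $V$ is the adjoint of this embedding, so it is a coisometry with $VV^*=\IdentityMatrix$. The relation $eR^\psi(\xi)=R^{\psi_e}(\xi)V$ does hold. Because $VV^*=\IdentityMatrix$, you still obtain $eR^\psi(\xi)R^\psi(\xi)^*e=R^{\psi_e}(\xi)R^{\psi_e}(\xi)^*$, so your conclusion survives. This is exactly the paper's formulation: $R^{\psi_e}(\xi)$ is the restriction of $R^\psi(\xi)$ to $\pi_\psi(e)\HS_\psi$, and $R^\psi(\xi)\pi_\psi(e)=eR^\psi(\xi)$. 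If $V$ were merely a contraction you would only get an inequality, so the coisometry property is the ingredient you actually need.

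Relatedly, the step you flag as the main obstacle is the easy one. You only need $[\M'\mathcal{K}]\le z$, which is immediate because $z\ge e$ is central, so $z\HS$ is $\M'$-invariant and contains $\xi$. The delicate bookkeeping is the GNS comparison above.
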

\begin{proof}
    \ref{item:bv-domain} If $ξ \in D(\HS, ψ)$, then $(\IdentityMatrix - e)ξ = 0$ (take $y = \IdentityMatrix - e$), and restricting the defining inequality to $y \in e\M'e$ shows $ξ \in D(\mathcal{K}, ψ_e)$. Conversely, let $ξ \in D(\mathcal{K}, ψ_e)$ with constant $C$ and let $y \in \M'$. Then $b \coloneqq (ey^*ye)^{1/2} \in e\M'e$ and $\norm{yξ}^2 = \norm{yeξ}^2 = \norm{bξ}^2 \leq C\, ψ(b^2) = C\, ψ(y^*y)$, so $ξ \in D(\HS, ψ)$.

    \ref{item:bv-theta} Let $ξ \in D(\HS, ψ)$ and $R \coloneqq R^ψ(ξ)$. Since $ψ(y^*y) = ψ(ey^*ye)$, we have $η_ψ(y) = η_ψ(ye)$ for all $y \in \M'$, so that the GNS space of $ψ_e$ is the subspace $π_ψ(e)\HS_ψ = \overline{η_ψ(e\M'e)}$ of $\HS_ψ$, and $R^{ψ_e}(ξ)$ is the restriction of $R$ to this subspace. As $Rπ_ψ(e) = eR$, this gives $θ^{ψ_e}(ξ, ξ) = Rπ_ψ(e)R^*|_{\mathcal{K}} = θ^ψ(ξ, ξ)\, e|_{\mathcal{K}}$. Moreover, the range of $R$ is contained in $\overline{\M'ξ} \subset z\HS$, because $ξ \in e\HS \subset z\HS$ and $z$ is central; hence $θ^ψ(ξ, ξ) = θ^ψ(ξ, ξ)\, z$. Finally, $\widehat{\varphi}_e$ is well defined for every linear functional $\widehat{\varphi}$ on $\M$, since the kernel of $x \mapsto xe|_{\mathcal{K}}$ is $\M(\IdentityMatrix - z)$, and the two identities give \eqref{eq:theta-reduction}: $\widehat{\varphi}(θ^ψ(ξ,ξ)) = \widehat{\varphi}(θ^ψ(ξ,ξ)z) = \widehat{\varphi}_e(θ^ψ(ξ,ξ)e|_{\mathcal{K}}) = \widehat{\varphi}_e(θ^{ψ_e}(ξ,ξ))$.

    \ref{item:bv-form} The first statement follows from \ref{item:bv-domain} and \eqref{eq:theta-reduction}. By \cite[Lemma~11.3, Thms.~A.11 and~A.12]{hiai_lectures_2021}, applied on $\mathcal{K}$, this form is closable and its closure is the form of $\Delta_{\mathcal{K}}(φ_e/ψ_e)$, which by \Autoref{def:spatial-derivative} is the restriction of $\Delta(φ/ψ)$ to $\mathcal{K}$.
\end{proof}

In particular, the statements of \Autoref{def:spatial-derivative} for a non-faithful functional $ψ$, namely that $D(\HS, ψ)$ is a subspace with closure $e\HS$ and that $q_φ$ is closable, follow from the corresponding statements for the faithful functional $ψ_e$.

\subsubsection{Properties of the spatial derivative}\label[subsubappendix]{app:spatial-derivative-general-proof}

\begin{proof}[Proof of Proposition~\ref{prop:spatial-derivative-properties}]
    If $ψ$ is faithful, or an n.f.s.\ weight, then \ref{item:sd-support} is \cite[Thm.~11.7(3), Prop.~11.19]{hiai_lectures_2021}, the inversion formula is \cite[Thm.~11.8]{hiai_lectures_2021}, the statements of \ref{item:sd-linearity} about the first argument and \ref{item:sd-covariance} are \cite[Prop.~11.17]{hiai_lectures_2021}, and \ref{item:sd-standard-form} is \cite[Prop.~11.6]{hiai_lectures_2021}. Now let $ψ \in (\M')_*^+$ be arbitrary, and let $e$, $\mathcal{K}$, $ψ_e$ and $φ_e$ be as in \Cref{app:bounded-vectors-proof}. By \Autoref{lem:bounded-vectors-nonfaithful}\ref{item:bv-form}, on $\mathcal{K} = s(ψ)\HS$ the operator $\Delta(φ/ψ)$ is $\Delta_{\mathcal{K}}(φ_e/ψ_e)$, to which the results for a faithful second argument apply, and it is also the spatial derivative of \cite[Ch.~4]{ohya_quantum_1993}; on $(\IdentityMatrix - s(ψ))\HS$ it is $+\infty$ by our convention and $0$ by the convention used there.

    \ref{item:sd-support} Since $z \geq e$, the induced functional $φ_e$ has the support $s(φ)e|_{\mathcal{K}}$ in $\M e|_{\mathcal{K}}$, so the claim follows from the faithful case, applied on $\mathcal{K}$; it is \cite[Eq.~(4.4)]{ohya_quantum_1993}, by which the operator defined there has the support projection $s(φ)s(ψ)$.

    \ref{item:sd-inverse} On $s(φ)s(ψ)\HS$ the identity $\Delta(φ/ψ)^{-1} = \Delta(ψ/φ)$ is the inversion formula \cite[Eq.~(4.7)]{ohya_quantum_1993}; it follows from the faithful case by cutting down to the supports of $ψ$ (\Autoref{lem:bounded-vectors-nonfaithful}) and of $φ$ \cite[Lemma~11.24]{hiai_lectures_2021}. On the three other subspaces $(\IdentityMatrix - s(φ))s(ψ)\HS$, $s(φ)(\IdentityMatrix - s(ψ))\HS$ and $(\IdentityMatrix - s(φ))(\IdentityMatrix - s(ψ))\HS$, which together with the first one reduce both operators, the values are given by \ref{item:sd-support}: $\Delta(φ/ψ)^{-1}$ takes the values $0^{-1} = \infty$, $\infty^{-1} = 0$ and $\infty^{-1} = 0$, and $\Delta(ψ/φ)$ the values $\infty$, $0$ and $\infty$. Thus the two operators coincide except on the last subspace, where the first one vanishes; this also gives the inequality, and the equality of the spectral measures for vectors in $s(φ)\HS \cup s(ψ)\HS$, which have no component in the last subspace.

    \ref{item:sd-linearity} The statements about the first argument follow from the faithful case, applied on $\mathcal{K}$, since $φ \mapsto φ_e$ is additive, positively homogeneous, monotone and normal (see also \cite[Eq.~(4.5)]{ohya_quantum_1993}). Now let $ψ_1 \leq ψ_2$. Then $D(\HS, ψ_1) \subset D(\HS, ψ_2)$ and $R^{ψ_2}(ξ) = R^{ψ_1}(ξ)V$ with the contraction $V: η_{ψ_2}(y) \mapsto η_{ψ_1}(y)$, so that $θ^{ψ_2}(ξ, ξ) \leq θ^{ψ_1}(ξ, ξ)$ for $ξ \in D(\HS, ψ_1)$ (cf.\ \cite[Lemma~4.1]{ohya_quantum_1993}) and hence $\ip{ξ}{\Delta(φ/ψ_2)ξ} \leq \ip{ξ}{\Delta(φ/ψ_1)ξ}$ for these vectors. Since $D(\HS, ψ_1)$ is dense in the form domain of $\Delta(φ/ψ_1)$ with respect to the form norm (\Autoref{def:spatial-derivative}) and the form of $\Delta(φ/ψ_2)$ is lower semicontinuous, the inequality extends to the form domain of $\Delta(φ/ψ_1)$, and outside of it the right hand side is $+\infty$.

    \ref{item:sd-covariance} Since $(aφa^*)_e(xe) = φ(a^*xaz) = φ_e(a_e^*\, xe\, a_e)$, we have $(aφa^*)_e = a_e φ_e a_e^*$, and with $a_e \coloneqq ae|_{\mathcal{K}}$ the faithful case, applied on $\mathcal{K}$, gives $\Delta_{\mathcal{K}}((aφa^*)_e/ψ_e) = a_e\, \Delta_{\mathcal{K}}(φ_e/ψ_e)\, a_e^*$, which is the claim.

    \ref{item:sd-standard-form} Here the second argument $σ'$ is faithful, because $ξ_σ$ is cyclic for $\M$, so this is the faithful case.
\end{proof}

\subsubsection{Haagerup's identity for non-faithful functionals}\label[subsubappendix]{app:haagerup-proof}

\begin{proof}[Proof of Lemma~\ref{lem:haagerup-identity-finite-index}]
    For n.f.s.\ weights, in particular for faithful $σ$ and $ω'$, the identity \eqref{eq:haagerup-identity-finite-index} is Haagerup's identity \eqref{eq:inverse_weight}, i.e.\ Haagerup's theorem \cite[Thms.~5.9, 6.13]{haagerup_operator_1979a} in the spatial-derivative form of \cite[Rem.~11.26]{hiai_lectures_2021} (Step~2 below). The other steps extend it to non-faithful functionals: Step~1 determines the supports, Step~3 is bookkeeping, and Steps~4 and~5 remove the faithfulness of $σ$ and of $ω'$ by the completion device. For the first argument this is done exactly as in the proof of \cite[Thm.~11.7(4)]{hiai_lectures_2021} (see also \cite[Eq.~(4.6)]{ohya_quantum_1993} and the end of the proof of \cite[Cor.~11.25]{hiai_lectures_2021}, where it is remarked that identities of this type extend to non-faithful first arguments in this way); for the second argument we did not find the corresponding statement in the literature and include the short argument.

    \emph{Step 1: the support equalities.} Let $f \coloneqq s(σ) \in \N$. By the bimodule property of $E$, $E(\IdentityMatrix - f) = (\IdentityMatrix - f)E(\IdentityMatrix)(\IdentityMatrix - f)$, so $E_*(σ)(\IdentityMatrix - f) = σ\big((\IdentityMatrix - f)E(\IdentityMatrix)(\IdentityMatrix - f)\big) = 0$ and hence $s(E_*(σ)) \leq f$. Conversely, let $g \coloneqq \IdentityMatrix - s(E_*(σ)) \in \M$, so that $σ(E(g)) = 0$. Since $σ(y) = 0$ implies $yf = 0$ for $y \in \N_+$, we get $E(g)f = 0$, hence $E(fgf) = fE(g)f = 0$, and faithfulness of $E$ gives $fgf = 0$, i.e.\ $gf = 0$ and $g \leq \IdentityMatrix - f$. Thus $s(E_*(σ)) = f$. The same argument applies to $E^{-1}$, which is a faithful $\M'$-bimodule map from $\N'_+$ to $\M'_+$ by \Autoref{lem:finite_index_bounded_inverse_weight}.

    \emph{Step 2: faithful $σ$ and $ω'$.} If $σ$ is faithful, then $E_*(σ)(x^*x) = σ(E(x^*x)) = 0$ implies $E(x^*x) = 0$ and hence $x = 0$, so $E_*(σ)$ is a faithful positive normal functional, i.e.\ an n.f.s.\ weight on $\M$; likewise $E^{-1}_*(ω')$ is an n.f.s.\ weight on $\N'$ if $ω'$ is faithful. Hence all four functionals in \eqref{eq:haagerup-identity-finite-index} are n.f.s.\ weights, and the first identity is Haagerup's identity \eqref{eq:inverse_weight} \cite[Rem.~11.26]{hiai_lectures_2021}. The second identity follows from the first by \Autoref{prop:spatial-derivative-properties}\ref{item:sd-inverse}. This proves the last assertion of the lemma.

    \emph{Step 3: reduction to the first identity.} Now let $σ$ and $ω'$ be arbitrary, and write $f \coloneqq s(σ) \in \N$, $e' \coloneqq s(ω') \in \M'$ and $D \coloneqq (\IdentityMatrix - f)(\IdentityMatrix - e')\HS$. It suffices to prove the first identity in \eqref{eq:haagerup-identity-finite-index}. Indeed, by Step 1 and \Autoref{prop:spatial-derivative-properties}\ref{item:sd-support} all four operators in \eqref{eq:haagerup-identity-finite-index} are $+\infty$ on $D$, and by \Autoref{prop:spatial-derivative-properties}\ref{item:sd-inverse}, applied to the pairs $(E_*(σ), ω')$ and $(σ, E^{-1}_*(ω'))$, each side of the second identity coincides on $D^\perp$ with the generalized inverse of the corresponding side of the first identity.

    \emph{Step 4: the second argument is an n.f.s.\ weight.} We first show that $\Delta(E_*(σ)/ω') = \Delta(σ/E^{-1}_*(ω'))$ for every $σ \in \N_*^+$ whenever $ω'$ is an n.f.s.\ weight on $\M'$, all spatial derivatives being those of \Autoref{def:spatial-derivative}. Choose a semifinite normal weight $σ_0$ on $\N$ with $s(σ_0) = \IdentityMatrix - f$ (for instance $σ_0(y) \coloneqq τ_0\big((\IdentityMatrix - f)y(\IdentityMatrix - f)\big)$ for an n.f.s.\ weight $τ_0$ on $(\IdentityMatrix - f)\N(\IdentityMatrix - f)$) and put $σ_1 \coloneqq σ + σ_0$, an n.f.s.\ weight on $\N$. Then $E_*(σ_1) = σ_1 \circ E$ is an n.f.s.\ weight on $\M$: it is normal and faithful, and it is semifinite because $E\big((xu)^*(xu)\big) = u^*E(x^*x)u \leq \norm{x}^2 \norm{E(\IdentityMatrix)}\, u^*u$ for $x \in \M$ and $u \in \N$ shows that $\mathfrak{n}_{E_*(σ_1)} \supset \M\,\mathfrak{n}_{σ_1}$, which is $σ$-weakly dense in $\M$. Likewise $E^{-1}_*(ω') = ω' \circ E^{-1}$ is an n.f.s.\ weight on $\N'$, since $\mathfrak{n}_{E^{-1}_*(ω')} \supset \N'\,\mathfrak{n}_{ω'}$ by the same estimate with $\Ind E$ in place of $E(\IdentityMatrix)$. Haagerup's identity \eqref{eq:inverse_weight} therefore gives $\Delta(E_*(σ_1)/ω') = \Delta(σ_1/E^{-1}_*(ω'))$. By additivity in the first argument, which holds for normal weights \cite[Prop.~11.17(1)]{hiai_lectures_2021}, $\Delta(E_*(σ_1)/ω')$ is the form sum of $\Delta(E_*(σ)/ω')$ and $\Delta(E_*(σ_0)/ω')$, and the supports of these two operators are $s(E_*(σ)) = f$ and $s(E_*(σ_0)) = \IdentityMatrix - f$ by Step 1 (whose argument applies verbatim to the weight $σ_0$) and \cite[Prop.~11.19]{hiai_lectures_2021}. Hence $\Delta(E_*(σ)/ω')$ is the restriction of $\Delta(E_*(σ_1)/ω')$ to $f\HS$, extended by $0$. In the same way, since $s(σ) = f$ and $s(σ_0) = \IdentityMatrix - f$, $\Delta(σ/E^{-1}_*(ω'))$ is the restriction of $\Delta(σ_1/E^{-1}_*(ω'))$ to $f\HS$, extended by $0$. This proves the claim of Step 4.

    \emph{Step 5: completion of the second argument.} We first show the counterpart of \cite[Eq.~(4.6)]{ohya_quantum_1993} for the second argument. Let $φ \in \M_*^+$ and $ψ \in (\M')_*^+$ with support $e \coloneqq s(ψ)$, and let $ψ_0$ be a normal weight on $\M'$ with $ψ_0(y) = ψ_0\big((\IdentityMatrix - e)y(\IdentityMatrix - e)\big)$ for all $y \in \M'_+$ such that $ψ_1 \coloneqq ψ + ψ_0$ is an n.f.s.\ weight. We claim that $e\HS$ reduces the spatial derivative $\Delta(φ/ψ_1)$ of \Autoref{def:spatial-derivative} and that the restriction of $\Delta(φ/ψ_1)$ to $e\HS$ is the finite part of $\Delta(φ/ψ)$; in other words, $\Delta(φ/ψ)$ is obtained from $\Delta(φ/ψ_1)$ by replacing it with $+\infty$ on $(\IdentityMatrix - e)\HS$. The proof is similar to the one of \cite[Lemma~11.24(b)]{hiai_lectures_2021}. For $y \in \mathfrak{n}_{ψ_1}$ we have $ψ(y^*y) \leq ψ_1(y^*y)$, and for every $y \in \M'$ we have $ye \in \mathfrak{n}_{ψ_1}$ with $ψ_1\big((ye)^*(ye)\big) = ψ(y^*y)$. Hence $D(\HS, ψ) \subset D(\HS, ψ_1)$, and $ξ \in D(\HS, ψ_1)$ implies $eξ \in D(\HS, ψ)$, so that $D(\HS, ψ) = eD(\HS, ψ_1)$. Moreover, $P: η_{ψ_1}(y) \mapsto η_{ψ_1}(ye)$ is a contractive idempotent and thus an orthogonal projection on $\HS_{ψ_1}$, and $W: η_{ψ_1}(ye) \mapsto η_ψ(y)$ is a unitary from $P\HS_{ψ_1}$ onto $\HS_ψ$. For $ξ \in D(\HS, ψ_1)$ one checks on the vectors $η_{ψ_1}(y)$, $y \in \mathfrak{n}_{ψ_1}$, that $R^{ψ_1}(eξ) = R^{ψ_1}(ξ)P = R^{ψ}(eξ)WP$ and $R^{ψ_1}\big((\IdentityMatrix - e)ξ\big) = R^{ψ_1}(ξ)(\IdentityMatrix - P)$, and therefore
    \begin{equation}
        θ^{ψ_1}(ξ, ξ) = R^{ψ_1}(ξ)\big(P + (\IdentityMatrix - P)\big)R^{ψ_1}(ξ)^* = θ^{ψ}(eξ, eξ) + θ^{ψ_1}\big((\IdentityMatrix - e)ξ, (\IdentityMatrix - e)ξ\big)\,.
    \end{equation}
    Applying $φ$ shows that the form $ξ \mapsto φ(θ^{ψ_1}(ξ, ξ))$ on $D(\HS, ψ_1)$ is the orthogonal sum of the form $q_φ$ of \Autoref{def:spatial-derivative} for $ψ$ on $D(\HS, ψ) \subset e\HS$ and of a form on $(\IdentityMatrix - e)D(\HS, ψ_1) \subset (\IdentityMatrix - e)\HS$. The same is then true for the closures, which is the claim, by \Autoref{def:spatial-derivative}.

    Now let $ω' \in (\M')_*^+$ be arbitrary, with support $e' = s(ω')$ as in Step~3. Choose an n.f.s.\ weight $τ'_0$ on $(\IdentityMatrix - e')\M'(\IdentityMatrix - e')$ and put $ω'_0(y) \coloneqq τ'_0\big((\IdentityMatrix - e')y(\IdentityMatrix - e')\big)$ and $ω'_1 \coloneqq ω' + ω'_0$, an n.f.s.\ weight on $\M'$. Then $E^{-1}_*(ω'_1) = E^{-1}_*(ω') + κ'_0$ with $κ'_0 \coloneqq ω'_0 \circ E^{-1}$ is an n.f.s.\ weight on $\N'$ by the argument of Step~4, the functional $E^{-1}_*(ω')$ has the support $e'$ by Step~1, and $κ'_0(y) = κ'_0\big((\IdentityMatrix - e')y(\IdentityMatrix - e')\big)$ for $y \in \N'_+$ by the bimodule property of $E^{-1}$, since $e' \in \M'$. By Step~4,
    \begin{equation}
        \Delta(E_*(σ)/ω'_1) = \Delta(σ/E^{-1}_*(ω'_1))\,.
    \end{equation}
    We apply the claim to both sides, with
    \begin{equation*}
        (\M, \M', E_*(σ), ω', ω'_0) \quad \text{on the left} \qquad \text{and} \qquad (\N, \N', σ, E^{-1}_*(ω'), κ'_0) \quad \text{on the right}
    \end{equation*}
    in the roles of $(\M, \M', φ, ψ, ψ_0)$. It shows that both $\Delta(E_*(σ)/ω')$ and $\Delta(σ/E^{-1}_*(ω'))$ are obtained from this operator by replacing it with $+\infty$ on $(\IdentityMatrix - e')\HS$; hence they coincide. This proves the first identity in \eqref{eq:haagerup-identity-finite-index} and completes the proof.
\end{proof}

\subsection{Entropic quantities}\label[subappendix]{app:divergence-proofs}

The well-definedness of the relative entropy and the finiteness of the Petz--\renyi divergences of order $α < 1$ rest on the following identities for the spectral measures entering \Autoref{def:divergences}. They are the ones behind the classical argument that Araki's relative entropy is well defined, see \cite{araki_relative_1977} and \cite[Eqs.~(5.1)--(5.2), Prop.~5.1]{ohya_quantum_1993}, where the relative entropy is defined through the operator $\Delta(σ/ρ')$.

\begin{lemma}[Spectral measures and first moments]\label{lem:first-moment}
    In the setting of \Autoref{def:divergences}, let $μ$ be the spectral measure of $\v{ρ}$ with respect to $\Delta(ρ'/σ)$, a probability measure on $[0, \infty]$, and $ν$ the spectral measure of $\v{ρ}$ with respect to $\Delta(σ/ρ')$. Then $μ(\set{0}) = 0$, $ν$ is the image of $μ$ under $λ \mapsto 1/λ$, and
    \begin{equation}\label{eq:first-moment}
        μ(\set{\infty}) = ν(\set{0}) = ρ(\IdentityMatrix - s(σ)), \qquad \int_{[0, \infty)} λ \, \dd{ν}(λ) = \bra{\v{ρ}} \Delta(σ/ρ') \ket{\v{ρ}} = σ(s(ρ))\,.
    \end{equation}
    In particular, $D(ρ\|σ) = \int \log{λ} \, \dd{μ}(λ) = - \int \log{λ} \, \dd{ν}(λ)$. Moreover, for $α \in [0, 1)$,
    \begin{equation}\label{eq:petz-bound}
        \bra{\v{ρ}} \Delta(ρ'/σ)^{α - 1} \ket{\v{ρ}} = \int{λ}^{1 - α} \, \dd{ν}(λ) \leq{σ}(s(ρ))^{1-α}\,, \qquad \text{so that } D_α(ρ\|σ) \geq - \log{σ}(s(ρ))\,,
    \end{equation}
    with equality in the first relation for $α = 0$, i.e.\ $D_0(ρ\|σ) = -\log{σ}(s(ρ))$, and $D_α(ρ\|σ) = +\infty$ if and only if $ρ(s(σ)) = 0$.
\end{lemma}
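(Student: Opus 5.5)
The plan is to work entirely with the spectral decomposition of $\Delta(ρ'/σ)$ from \Autoref{prop:spatial-derivative-properties}. By part \ref{item:sd-support}, applied with $φ = ρ'$ on $\M'$ and $σ$ on $\M$, the space splits into three reducing subspaces:
\begin{itemize}
\item the subspace $s(ρ')s(σ)\HS$, on which $\Delta(ρ'/σ)$ is injective positive;
\item the subspace $(\IdentityMatrix - s(ρ'))s(σ)\HS$, on which it vanishes;
\item the subspace $(\IdentityMatrix - s(σ))\HS$, on which it is $+\infty$.
\end{itemize}
Since $s(ρ') = [\M\v{ρ}]$ contains $\v{ρ}$, the vector $\v{ρ}$ has no component in the middle subspace, so $μ(\set{0}) = 0$. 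For the same reason $\v{ρ} \in s(ρ')\HS$, so the inversion formula \ref{item:sd-inverse} gives $μ_{\v{ρ}}^{\Delta(ρ'/σ)^{-1}} = ν$. Hence $ν$ is the pushforward of $μ$ under $λ \mapsto 1/λ$, with $\infty \mapsto 0$. The atom at $\infty$ of $μ$ is $\norm{(\IdentityMatrix - s(σ))\v{ρ}}^2 = ρ(\IdentityMatrix - s(σ))$, which is therefore also $ν(\set{0})$.

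The main step is the first-moment identity $\bra{\v{ρ}}\Delta(σ/ρ')\ket{\v{ρ}} = σ(s(ρ))$. Here I would invoke \Autoref{lem:bounded-vectors-nonfaithful} (with the roles of $\M,\M'$ swapped) to reduce to the case of a faithful second argument on $\mathcal{K} = s(ρ)\HS$, where $\v{ρ}$ lies and is cyclic for the induced algebra. On $\mathcal{K}$, $\v{ρ}$ is $ρ$-bounded: the bound $\norm{x\v{ρ}}^2 = ρ(x^*x)$ holds with constant $1$, and $R^{ρ}(\v{ρ})$ is the isometry $η_ρ(x) \mapsto x\v{ρ}$. Therefore $θ^{ρ}(\v{ρ},\v{ρ})$ is the projection onto $\overline{\M\v{ρ}}$ in the reduced picture, which corresponds to $s(ρ)$ after identifying via the induced functional. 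Then \eqref{eq:spatial-derivative-form} gives $\bra{\v{ρ}}\Delta(σ/ρ')\ket{\v{ρ}} = σ_e(θ) = σ(s(ρ))$. Care is needed with the central-support bookkeeping $z$ in \eqref{eq:induced-functional}; this bookkeeping is the main obstacle. Alternatively, I would cite \cite[Eqs.~(5.1)--(5.2)]{ohya_quantum_1993} for the faithful case and transport it through the reduction lemma. Since $\Delta(σ/ρ')$ is finite on $s(ρ')\HS \ni \v{ρ}$, this expectation equals $\int_{[0,\infty)} λ\,\dd ν$.

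The remaining consequences follow by Jensen. Since $\int λ\,\dd ν = σ(s(ρ)) < \infty$, the positive part of $\log λ$ is $ν$-integrable, so $-\int \log λ\,\dd ν$ is well defined in $(-\infty,\infty]$; by the pushforward it equals $\int \log λ\,\dd μ = D(ρ\|σ)$. For $α \in [0,1)$ we have $\int λ^{α-1}\dd μ = \int λ^{1-α}\dd ν$, with the atom at $0$ contributing $0$. By concavity of $λ^{1-α}$ and Jensen for the sub-probability measure $ν|_{(0,\infty)}$, this is at most $σ(s(ρ))^{1-α}$; more crudely, Hölder with total mass at most $1$ gives the same bound. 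Taking logarithms and dividing by $α - 1 < 0$ gives $D_α(ρ\|σ) \ge -\log σ(s(ρ))$. For $α = 0$ the integral is $\int λ^{1}\,\dd ν$ itself, which gives equality. Finally, $D_α = +\infty$ iff $\int λ^{1-α}\dd ν = 0$, iff $ν$ is concentrated at $0$, iff $ν(\set{0}) = ρ(\IdentityMatrix - s(σ)) = 1$, i.e.\ iff $ρ(s(σ)) = 0$.
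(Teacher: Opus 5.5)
Your treatment of the atoms, the push-forward relation between $\mu$ and $\nu$, and the Jensen/H\"older consequences (including $D_0$ and the criterion for $D_\alpha=+\infty$) is correct and matches the paper. There is a genuine gap in the central step, the identity $\langle \v{\rho}, \Delta(\sigma/\rho')\v{\rho}\rangle = \sigma(s(\rho))$: you have swapped the roles of the two functionals. In $\Delta(\sigma/\rho')$ the reference (second) argument is $\rho'$ on $\M'$, not $\rho$ on $\M$. What you actually compute is $\|x\v{\rho}\|^2=\rho(x^*x)$ for $x\in\M$, then $R^{\rho}(\v{\rho})\colon \eta_\rho(x)\mapsto x\v{\rho}$, and finally $\theta^{\rho}(\v{\rho},\v{\rho})$ as the projection onto $\overline{\M\v{\rho}}$. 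That is the $\theta$ belonging to a spatial derivative $\Delta(\,\cdot\,/\rho)$ whose first argument lives on $\M'$. The projection you obtain is $[\M\v{\rho}]=s(\rho')\in\M'$. The functional $\sigma$ cannot be evaluated on it, and no ``identification via the induced functional'' turns it into $s(\rho)$. For the same reason $s(\rho)\HS$ is not the support of the second argument, so it is not the subspace to which \Autoref{lem:bounded-vectors-nonfaithful} reduces; that subspace would be $s(\rho')\HS=\overline{\M\v{\rho}}$. The unresolved ``central-support bookkeeping'' you flag as the main obstacle is a symptom of this mix-up.

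The repair is short and needs no reduction at all, because \eqref{eq:spatial-derivative-form} is already stated in \Autoref{def:spatial-derivative} for a non-faithful $\psi$. For $y\in\M'$ one has $\|y\v{\rho}\|^2=\rho'(y^*y)$, so $\v{\rho}\in D(\HS,\rho')$ with constant $1$. Hence $R^{\rho'}(\v{\rho})\colon\eta_{\rho'}(y)\mapsto y\v{\rho}$ is an isometry of $\HS_{\rho'}$ onto $\overline{\M'\v{\rho}}$, and $\theta^{\rho'}(\v{\rho},\v{\rho})=[\M'\v{\rho}]=s(\rho)\in\M$. Then \eqref{eq:spatial-derivative-form} gives $\langle\v{\rho},\Delta(\sigma/\rho')\v{\rho}\rangle=\sigma(s(\rho))$ directly. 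This is exactly the paper's argument, and with it the rest of your proposal goes through as written.
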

\begin{proof}
    Since $\v{ρ} \in s(ρ')\HS$ and $\norm{(\IdentityMatrix - s(σ))\v{ρ}}^2 = ρ(\IdentityMatrix - s(σ))$, the statements about $μ(\set{0})$, about $ν$ and about the atoms in \eqref{eq:first-moment} follow from \Autoref{prop:spatial-derivative-properties}\ref{item:sd-support} and~\ref{item:sd-inverse}. The vector $\v{ρ}$ is $ρ'$-bounded, and since $η_{ρ'}(y) \mapsto y\v{ρ}$ is an isometry onto $\overline{\M'\v{ρ}}$, we have $θ^{ρ'}(\v{ρ}, \v{ρ}) = [\M'\v{ρ}] = s(ρ)$ \cite[p.~68]{ohya_quantum_1993}. Hence $\bra{\v{ρ}} \Delta(σ/ρ') \ket{\v{ρ}} = σ(s(ρ))$ by \eqref{eq:spatial-derivative-form}, applied with the roles of $\M$ and $\M'$ interchanged. Finally, $ν$ is a probability measure on $[0, \infty)$, so \eqref{eq:petz-bound} is Jensen's inequality for the concave function $λ \mapsto λ^{1-α}$ (with $0^{1-α} = 0$, consistent with $\infty^{α - 1} = 0$), and $\int λ^{1-α} \, \dd{ν}$ vanishes if and only if $ν$ is concentrated at $0$, i.e.\ if and only if $ρ(\IdentityMatrix - s(σ)) = 1$.
\end{proof}

\begin{proof}[Proof of Lemma~\ref{lem:relative-entropy-well-defined}]
    By \Autoref{lem:first-moment}, $D(ρ\|σ) = -\int \log λ \, \dd{ν}(λ)$ for a probability measure $ν$ on $[0, \infty)$ with $\int λ \, \dd{ν} = σ(s(ρ))$. Since $\log λ \leq λ$, the positive part of $\int \log λ \, \dd{ν}$ is finite, so that $D(ρ\|σ)$ is well defined with values in $(-\infty, +\infty]$, and Jensen's inequality for the concave function $\log$ gives $\int \log λ \, \dd{ν} \leq \log σ(s(ρ))$. Note that $D(ρ\|σ) = +\infty$ if $ν(\set{0}) > 0$, i.e.\ if $s(ρ) \not\leq s(σ)$.
\end{proof}

\begin{proof}[Proof of Lemma~\ref{lem:sandwiched-support-restriction}]
    In \cite[Def.~2]{berta_renyi_2018} and \cite[Sec.~2]{jencova_renyi_2021} the restriction to states $ω'$ with $s(ω') \geq s(ρ')$ is part of the definition of the sandwiched divergence for $α < 1$ (see \Autoref{rem:divergence-conventions}); the lemma shows that under our conventions it does not change the value of the infimum. The proof is the usual perturbation argument, in which $ω'$ is mixed with a small multiple of $ρ'$ in order to enlarge its support.

    The first statement was already observed in \Autoref{rem:divergence-conventions}. Let $ω' \in \S(\M')$ have a finite value and let $ε > 0$. The state $ω'_ε \coloneqq (ω' + ερ')/(1 + ε)$ has support $s(ω') \vee s(ρ') \geq s(ρ')$, and $\Delta(ω'_ε/σ) \geq (1+ε)^{-1}\Delta(ω'/σ)$ by \Autoref{prop:spatial-derivative-properties}\ref{item:sd-linearity}. Since $α' < 0$, \Autoref{lem:compression-inequality}\ref{item:ci-monotone} gives $\bra{\v{ρ}} \Delta(ω'_ε/σ)^{α'}\ket{\v{ρ}} \leq (1+ε)^{-α'} \bra{\v{ρ}} \Delta(ω'/σ)^{α'}\ket{\v{ρ}}$, and letting $ε \to 0$ proves the first equality in \eqref{eq:sandwiched-support-restriction}. The second one is \Autoref{prop:spatial-derivative-properties}\ref{item:sd-inverse}, since $\v{ρ} \in s(ρ')\HS \subset s(ω')\HS$ for the states in question. Finally, if $ρ(s(σ)) = 0$ then $\v{ρ} \in (\IdentityMatrix - s(σ))\HS$, where every $\Delta(ω'/σ)^{α'}$ vanishes.
\end{proof}

\begin{proof}[Proof of Proposition~\ref{prop:representation-independence}]
    For the divergences defined through an optimization over vector states, the independence of the representation and of the vector representative is \cite[Lemma~3]{berta_renyi_2018}. We follow the argument given there, which transports the spatial derivatives along the partial isometry intertwining two vector representatives and then applies the operator Jensen inequality (here in the form of \Autoref{lem:compression-inequality}\ref{item:ci-compression}); the modifications account for the optimization over all normal states of $\M'$ instead of vector states, and for our support conventions. For clarity we separate the change of the vector representative within one representation (Step~1) from the change of the representation, which is reduced to Step~1 by passing to the direct sum of the two representations (Step~2).

    We fix $s \coloneqq \abs{α'} \in (0, 1]$ and consider the functions $f_+(t) = t^{s}$ and $f_-(t) = t^{-s}$ on $[0, \infty]$.

    \emph{Step 1: two vectors in the same representation.} Let $ξ, η \in \HS$ both represent $ρ$. Define $u'(xξ) \coloneqq xη$ for $x \in \M$ and $u' \coloneqq 0$ on $(\M{ξ})^\perp$. Since $\norm{xη}^2 = ρ(x^*x) = \norm{xξ}^2$, this is a well-defined partial isometry with $u'^*u' = p \coloneqq [\M{ξ}]$ and $u'u'^* = q \coloneqq [\Mη]$; it commutes with $\M$ on $\overline{\M{ξ}}$ and on its ($\M$-invariant) complement, so $u' \in \M'$, and $u'ξ = η$, $u'^*η = ξ$. Let $e \coloneqq s(σ) \in \M$, $\mathcal{K} \coloneqq e\HS$, $z$ the central support of $e$, and for $ω' \in (\M')_*^+$ let $ω'_e(ye|_{\mathcal{K}}) \coloneqq{ω}'(yz)$ be the induced functional on $\M'e|_{\mathcal{K}}$, as in \eqref{eq:induced-functional} with the roles of $\M$ and $\M'$ interchanged, and write $\Delta_{\mathcal{K}}(ω') \coloneqq \Delta_{\mathcal{K}}(ω'_e/σ_e)$ for the finite part of $\Delta(ω'/σ)$. Note that $e$ commutes with $p, q, u' \in \M'$, and that $u'_e \coloneqq u'e|_{\mathcal{K}}$ is a partial isometry in $\M'e|_{\mathcal{K}}$ from $pe\mathcal{K}$ onto $qe\mathcal{K}$. Since the $+\infty$ part of every $\Delta(ω'/σ)$ contributes $f_\pm(\infty)\, \norm{(\IdentityMatrix - e)ξ}^2 = f_\pm(\infty)\, ρ(\IdentityMatrix - e) = f_\pm(\infty)\, \norm{(\IdentityMatrix - e)η}^2$, the same for $ξ$ and $η$, it suffices to compare the contributions of the finite parts, i.e.\ the quantities $\ip{eξ}{f(\Delta_{\mathcal{K}}(ω'))\, eξ}$ and $\ip{eη}{f(\Delta_{\mathcal{K}}(ω'))\, eη}$.

    (a) For the vector functionals we have $ω_η = u' ω_ξ u'^*$ on $\M'$, i.e.\ $ω_η(y) = ω_ξ(u'^* y u')$, hence $(ω_η)_e = u'_e (ω_ξ)_e u'^*_e$, and \Autoref{prop:spatial-derivative-properties}\ref{item:sd-covariance} gives $\Delta_{\mathcal{K}}(ω_η) = u'_e \Delta_{\mathcal{K}}(ω_ξ) u'^*_e$. By \Autoref{prop:spatial-derivative-properties}\ref{item:sd-support} the operator $\Delta_{\mathcal{K}}(ω_ξ)$ is supported in $s(ω_ξ|_{\M'})e\mathcal{K} = pe\mathcal{K}$, so $\Delta_{\mathcal{K}}(ω_η)$ is its transport to $qe\mathcal{K}$ by $u'_e$, extended by $0$. Since $eη = u'_e\, eξ$, the spectral measures of $eη$ with respect to $\Delta_{\mathcal{K}}(ω_η)$ and of $eξ$ with respect to $\Delta_{\mathcal{K}}(ω_ξ)$ coincide. Together with the common atom at $\infty$ this shows that $ξ$ and $η$ have the same spectral measure with respect to $\Delta(ω_ξ/σ)$ and $\Delta(ω_η/σ)$, respectively.

    (b) Let $ω' \in \S(\M')$ and put $\check{ω}' \coloneqq u'^* ω' u'$, i.e.\ $\check{ω}'(y) = ω'(u' y u'^*)$, a positive normal functional with $\check{ω}'(\IdentityMatrix) = ω'(q) \leq 1$. By \Autoref{prop:spatial-derivative-properties}\ref{item:sd-covariance}, $\Delta_{\mathcal{K}}(\check{ω}') = u'^*_e \Delta_{\mathcal{K}}(ω') u'_e$, which is the transport by $u'^*_e$ of the compression of $\Delta_{\mathcal{K}}(ω')$ to $qe\mathcal{K}$ (\Autoref{lem:compression-inequality}), extended by $0$ on $(\IdentityMatrix - pe)\mathcal{K}$. Since $eξ = u'^*_e\, eη \in pe\mathcal{K}$, \Autoref{lem:compression-inequality}\ref{item:ci-compression} yields
    \begin{equation}\label{eq:ri-step1}
        \ip{eξ}{f_+(\Delta_{\mathcal{K}}(\check{ω}'))\, eξ} \geq \ip{eη}{f_+(\Delta_{\mathcal{K}}(ω'))\, eη}\,, \qquad
        \ip{eξ}{f_-(\Delta_{\mathcal{K}}(\check{ω}'))\, eξ} \leq \ip{eη}{f_-(\Delta_{\mathcal{K}}(ω'))\, eη}\,.
    \end{equation}
    If $\check{ω}' \neq 0$, then $c \coloneqq \check{ω}'(\IdentityMatrix)^{-1} \geq 1$ and $c\check{ω}' \in \S(\M')$; by \Autoref{prop:spatial-derivative-properties}\ref{item:sd-linearity}, replacing $\check{ω}'$ by $c\check{ω}'$ multiplies the left hand sides of \eqref{eq:ri-step1} by $c^{s} \geq 1$ and $c^{-s} \leq 1$, respectively, which only improves the inequalities. If $\check{ω}' = 0$, then $s(ω') \perp q$, so $\Delta_{\mathcal{K}}(ω')$ vanishes on $qe\mathcal{K} \ni eη$; thus the right hand side of the first inequality in \eqref{eq:ri-step1} is $0$, and the right hand side of the second is $+\infty$ unless $eη = 0$, in which case $ρ(e) = 0$ and all quantities involved vanish. In all cases we conclude, adding the common contribution of the $+\infty$ parts, that for every $ω' \in \S(\M')$ there is an $ω'' \in \S(\M')$ with $\ip{ξ}{f_+(\Delta(ω''/σ))ξ} \geq \ip{η}{f_+(\Delta(ω'/σ))η}$ and $\ip{ξ}{f_-(\Delta(ω''/σ))ξ} \leq \ip{η}{f_-(\Delta(ω'/σ))η}$. Hence the supremum of $\ip{\,\cdot\,}{f_+(\Delta(ω'/σ))\,\cdot\,}$ over $ω' \in \S(\M')$ evaluated at $ξ$ dominates the one evaluated at $η$, and the infimum of $\ip{\,\cdot\,}{f_-(\Delta(ω'/σ))\,\cdot\,}$ evaluated at $ξ$ is dominated by the one evaluated at $η$. Exchanging the roles of $ξ$ and $η$ gives equality.

    \emph{Step 2: enlarging the representation.} Let $\hat\HS \coloneqq \HS \oplus \tilde\HS$ with the faithful normal representation $\hat{π}(x) \coloneqq x \oplus \tilde{π}(x)$, and let $\hat\M \coloneqq \hat{π}(\M)$, $\hat{σ} \coloneqq{σ} \circ \hat{π}^{-1}$. The vectors $\v{ρ} \oplus 0$ and $0 \oplus \tilde{\v{ρ}}$ both represent $ρ$ in this representation, so by Step 1 it suffices to show that the quantities in \ref{item:ri-unoptimized} and \ref{item:ri-sandwiched} computed for $(\hat\HS, \v{ρ} \oplus 0)$ agree with those computed for $(\HS, \v{ρ})$ (the same argument then applies to $(\hat\HS, 0 \oplus \tilde{\v{ρ}})$ and $(\tilde\HS, \tilde{\v{ρ}})$). Let $P \coloneqq \IdentityMatrix_{\HS} \oplus 0$. Then $P \in \hat\M'$, the elements of $\hat\M'$ are the block operators $Y = (Y_{ij})_{i,j=1,2}$ with $Y_{11} \in \M'$, and $P\hat\M'P = \M' \oplus 0$. Let $\hat{e} \coloneqq \hat{π}(e) = e \oplus \tilde{π}(e)$, $\hat{\mathcal{K}} \coloneqq \hat{e}\hat\HS = \mathcal{K} \oplus \tilde{\mathcal{K}}$, and $\hat{z} = \hat{π}(z)$ the central support of $\hat{e}$ in $\hat\M$. We claim that for every $\hat{ω}' \in (\hat\M')_*^+$ with $s(\hat{ω}') \leq P$, writing $ω'(y) \coloneqq \hat{ω}'(y \oplus 0)$ for the corresponding element of $(\M')_*^+$,
    \begin{equation}\label{eq:ri-direct-sum}
        \Delta_{\hat{\mathcal{K}}}(\hat{ω}'_{\hat e}/\hat{σ}_{\hat e}) = \Delta_{\mathcal{K}}(ω'_e/σ_e) \oplus 0_{\tilde{\mathcal{K}}}\,.
    \end{equation}
    Indeed, the GNS space of $\hat{σ}_{\hat e}$ is that of $σ_e$, a vector $ζ = ζ_1 \oplus{ζ}_2 \in \hat{\mathcal{K}}$ is $\hat{σ}_{\hat e}$-bounded iff $ζ_1$ is $σ_e$-bounded and $ζ_2$ is $\tilde{σ}_{\tilde e}$-bounded, and then $R^{\hat{σ}_{\hat e}}(ζ) = R^{σ_e}(ζ_1) \oplus R^{\tilde{σ}_{\tilde e}}(ζ_2)$, so that $θ^{\hat{σ}_{\hat e}}(ζ, ζ)$ is the block operator with entries $R(ζ_i)R(ζ_j)^*$. As $\hat{ω}'$ is supported in $P$, which commutes with $\hat{z}$, only the $(1,1)$ entry contributes: $\hat{ω}'_{\hat e}(θ^{\hat{σ}_{\hat e}}(ζ,ζ)) = \hat{ω}'(P θ^{\hat{σ}_{\hat e}}(ζ,ζ)\hat{z} P) = ω'\big(R^{σ_e}(ζ_1)R^{σ_e}(ζ_1)^* z\big) = ω'_e(θ^{σ_e}(ζ_1, ζ_1))$. Since the $σ$-bounded vectors form a core, \eqref{eq:ri-direct-sum} follows from \eqref{eq:spatial-derivative-form}. Now \ref{item:ri-unoptimized} follows by applying \eqref{eq:ri-direct-sum} to $\hat{ω}' = ω_{\v{ρ} \oplus 0}|_{\hat\M'}$, whose support $[\hat\M(\v{ρ} \oplus 0)] = [\M\v{ρ}] \oplus 0$ lies below $P$ and which corresponds to $ω' = ρ'$; the $+\infty$ parts contribute the same atom $ρ(\IdentityMatrix - e)$ on both sides. For \ref{item:ri-sandwiched}, every $ω' \in \S(\M')$ defines the state $\hat{ω}'(Y) \coloneqq{ω}'(Y_{11})$ of $\hat\M'$, which is supported in $P$ and by \eqref{eq:ri-direct-sum} has the same value at $\v{ρ} \oplus 0$ as $ω'$ has at $\v{ρ}$; this shows that the supremum (infimum) over $\S(\hat\M')$ is at least (at most) the one over $\S(\M')$. Conversely, for $\hat{ω}' \in \S(\hat\M')$ the compression $P\hat{ω}'P$ is supported in $P$, and exactly as in Step 1(b) (with $P$ in place of $q$, using \Autoref{prop:spatial-derivative-properties}\ref{item:sd-covariance} for $a = P$, \eqref{eq:ri-direct-sum} and \Autoref{lem:compression-inequality}\ref{item:ci-compression}) its value at $\v{ρ} \oplus 0$, after normalization, is at least (at most) that of $\hat{ω}'$ for $f_+$ ($f_-$). This proves \ref{item:ri-sandwiched}.

    The final statements follow: if every normal state of $\M'$ is a vector state, then the two optimizations in \eqref{eq:def-sandwiched-renyi} run over the same set; the vector-state version is representation independent by \cite[Lemma~3]{berta_renyi_2018}, and ours by what we just proved, so the two agree in every representation, and \cite[Prop.~2.4]{jencova_renyi_2021} identifies the result with the definition of \cite{jencova_renyi_2018, jencova_renyi_2021}.
\end{proof}

\begin{proof}[Proof of Proposition~\ref{prop:divergence-properties}]
    \ref{item:dp-scaling} follows from $\Delta(ω'/λσ) = λ^{-1}\Delta(ω'/σ)$ (\Autoref{prop:spatial-derivative-properties}\ref{item:sd-linearity}; note that the $+\infty$ parts agree), which shifts $\log \Delta$ by $-\log{λ}$ and multiplies $\Delta^{α - 1}$ and $\Delta^{α'}$ by $λ^{1-α}$ and $λ^{-α'}$, respectively. For \ref{item:dp-monotone}, \Autoref{prop:spatial-derivative-properties}\ref{item:sd-linearity} gives $\Delta(ω'/σ_1) \geq \Delta(ω'/σ_2)$ for all $ω' \in \S(\M')$, and \Autoref{lem:compression-inequality}\ref{item:ci-monotone}, applied with $s = \abs{α - 1} \in (0,1]$ resp.\ $s = \abs{α'} \in (0,1]$, shows that $\bra{\v{ρ}}\Delta^{α-1}\ket{\v{ρ}}$ and $\bra{\v{ρ}}\Delta^{α'}\ket{\v{ρ}}$ change in the direction that, after taking $\frac{1}{α-1}\log$ resp.\ $\sup_{ω'}\frac{1}{α'}\log$, makes the divergence decrease. For $D$ one uses $\log t = \int_0^\infty \big((1 + λ)^{-1} - (λ + t)^{-1}\big)\dd{λ}$ and the resolvent inequality from the proof of \Autoref{lem:compression-inequality}, both integrals being well defined by \Autoref{lem:relative-entropy-well-defined}. The remaining items are references; the cited works use the standard form or the vector-state form of the definitions, which is legitimate by \Autoref{prop:representation-independence}, and the inequalities \eqref{eq:alt-inequalities} are stated in \cite{jencova_renyi_2021} for states $σ$ only, but extend to $σ \in \M_*^+$ by \ref{item:dp-scaling}. The two finiteness conditions in \ref{item:dp-limits} are equivalent, and the limit of $\widetilde{D}_α$ as $α \searrow 1$ follows from the one of $D_α$, by \eqref{eq:alt-inequalities}: for $α > 1$ we have $2 - 1/α \in (1, 2)$ and $D_{2 - 1/α}(ρ\|σ) \leq \widetilde{D}_α(ρ\|σ) \leq D_α(ρ\|σ)$, where $2 - 1/α \searrow 1$ as $α \searrow 1$ (for $α = \infty$ use that $\widetilde{D}_α$ is nondecreasing in $α$).
\end{proof}

\begin{proof}[Proof of Lemma~\ref{lem:dmax-variational}]
    If $ρ \leq λσ$, then $\norm{x\v{ρ}}^2 = ρ(x^*x) \leq λ\, σ(x^*x)$ for all $x \in \M$. Hence $\v{ρ}$ is $σ$-bounded with $\norm{R^σ(\v{ρ})}^2 \leq λ$, so that $θ^σ(\v{ρ}, \v{ρ}) \leq λ\, \IdentityMatrix$, and \eqref{eq:spatial-derivative-form}, with the roles of $\M$ and $\M'$ interchanged, gives $\bra{\v{ρ}} \Delta(ω'/σ) \ket{\v{ρ}} = ω'\big(θ^σ(\v{ρ}, \v{ρ})\big) \leq λ$ for every $ω' \in \S(\M')$. For the converse it suffices to assume the bound for the vector functionals $ω' = ω_ζ|_{\M'}$ with unit vectors $ζ \in \HS$, see e.g.\ \cite[Lemma~9]{berta_renyi_2018}. Consequently, the numbers $λ > 0$ with $ρ \leq λσ$ are exactly the upper bounds of the supremum in \eqref{eq:dmax-variational}, which is strictly positive because $ρ \neq 0$, and taking the infimum over $λ$ gives \eqref{eq:dmax-variational}.
\end{proof}

\subsection{Operator-valued weights of finite index}\label[subappendix]{app:finite-index-proof}

\begin{proof}[Proof of Lemma~\ref{lem:finite_index_bounded_inverse_weight}]
    The proof is the standard argument that a positive map is bounded by its value at the identity, written out for operator-valued weights.

    We prove \ref{item:fi-dual}; the proof of \ref{item:fi-forward} is identical with $E$ in place of $E^{-1}$ and $E(\IdentityMatrix)$ in place of $\Ind E$. Write $T \coloneqq E^{-1}$. For $x' \in \N'_+$ we have $x' \leq \norm{x'}\IdentityMatrix$, so that $T(\norm{x'}\IdentityMatrix) = T(x') + T(\norm{x'}\IdentityMatrix - x') \geq T(x')$ by additivity and positivity of $T$, i.e.\
    \begin{equation}
        T(x') \leq \norm{x'}\, T(\IdentityMatrix) = \norm{x'}\, \Ind E \leq \norm{x'}\, \norm{\Ind E}\, \IdentityMatrix \qquad \text{in } \widehat{\M'_+}\,.
    \end{equation}
    An element of the extended positive part that is dominated by a multiple of the identity is a bounded operator (\Autoref{sec:prelim-vn}), so $T$ maps $\N'_+$ into $\M'_+$ with $\norm{T(x')} \leq \norm{\Ind E}\norm{x'}$. Since $T$ is additive and positively homogeneous on $\N'_+$ and every element of $\N'$ is a linear combination of four positive elements, $T$ extends uniquely to a linear map $\N' \to \M'$, which is positive by construction and therefore bounded with $\norm{T} = \norm{T(\IdentityMatrix)} = \norm{\Ind E}$ \cite[Cor.~2.9]{paulsen_completely_2003}. Normality of $T$ as an operator-valued weight means that $x'_α \nearrow x'$ in $\N'_+$ implies $T(x'_α) \nearrow T(x')$, i.e.\ $ω'(T(x'_α)) \nearrow{ω}'(T(x'))$ for every $ω' \in (\M')_*^+$; thus $ω' \circ T$ is a positive functional on $\N'$ that is normal in the sense of \Autoref{sec:prelim-vn}, i.e.\ $ω' \circ T \in (\N')_*^+$, and its norm is $ω'(T(\IdentityMatrix)) = ω'(\Ind E)$. The last statement follows by applying \ref{item:fi-forward} to $σ = ρ|_{\N}$ and \ref{item:fi-dual} to $ω' = ρ|_{\M'}$.
\end{proof}

\subsection{Conditional expectations and inverse weights in finite dimensions}\label[subappendix]{app:finite-dim-proofs}

\begin{proof}[Proof of Proposition~\ref{prop:finite-dim-cond-exp}]
	For conditional expectations, \ref{item:fdce-form} is the well-known parametrization of the conditional expectations onto a finite-dimensional subalgebra by states on the blocks of the relative commutant, see e.g.\ \cite[Lemma~1 and App.~A]{magan_quantum_2021} and the references given there. The argument applies verbatim to positive $\N$-bimodule maps $T: \BO \to \N$: since the $q_j$ are central in $\N$, we have $T(x) = \sum_j T(x_j)$; for $b \in \BO[\mathcal{V}_j]$ the element $T(\IdentityMatrix_{\mathcal{L}_j} \otimes b) \in \N q_j$ commutes with $\N$ and is therefore of the form $\Tr(τ_j b)\, q_j$ with a unique $τ_j \in \BO[\mathcal{V}_j]_+$, and the bimodule property gives \eqref{eq:finite-dim-cond-exp} on the elementary tensors $a \otimes b$ with $a \in \BO[\mathcal{L}_j]$. The only difference is that $T(\IdentityMatrix) = \sum_j (\Tr τ_j)\, q_j$ need not be the identity. The normal semifinite operator-valued weights $T: \BO_+ \to \widehat{\N_+}$ are maps of this kind: the set $\Set{x \in \BO | T(x^*x) \in \N_+}$ is a linear subspace (as $(x+y)^*(x+y) \leq 2x^*x + 2y^*y$), which is weakly dense and hence all of $\BO$, so that $T(\IdentityMatrix) \in \N_+$ and \Autoref{lem:finite_index_bounded_inverse_weight}\ref{item:fi-forward} applies. Concerning faithfulness, if $τ_j v = 0$ for a unit vector $v$, then $T$ vanishes on the nonzero projection $\IdentityMatrix_{\mathcal{L}_j} \otimes \ketbra{v}{v}$; if instead $τ_j \geq ε \IdentityMatrix$ for all $j$, then $T(x^*x) \geq ε \bigoplus_j \Tr_{\mathcal{V}_j}[(x^*x)_j] \otimes \IdentityMatrix_{\mathcal{V}_j}$, which vanishes only if $x q_j = 0$ for all $j$, i.e.\ $x = 0$. Finally, \ref{item:fdce-predual} and \ref{item:fdce-trace} follow by inserting \eqref{eq:finite-dim-cond-exp} into $σ \circ T$ and $\Tr \circ T$: by \eqref{eq:finite-dim-density-matrices}, $σ(T(x)) = \sum_j \Tr\big[(\hat{σ}_j \otimes τ_j)\, x_j\big]$, and $\Tr(T(x)) = \sum_j \dim \mathcal{V}_j \, \Tr\big[(\IdentityMatrix \otimes τ_j)\, x_j\big]$ agrees with $\Tr(x) = \sum_j \Tr(x_j)$ for all $x$ iff $τ_j = \IdentityMatrix_{\mathcal{V}_j} / \dim \mathcal{V}_j$ for all $j$.
\end{proof}

\begin{proof}[Proof of Proposition~\ref{prop:finite-dim-inverse-weight}]
	For a single block, i.e.\ if $\N$ is a factor, the form of $E^{-1}$ and the formula $\Ind E = \Tr τ^{-1}$ for the index can be found in \cite[Prop.~2.2, Thm.~2.5 and Cor.~2.6]{tsui_index_1991} (for $\M$ acting on $\HS$ instead of $\HS \otimes \HS$), and the dual conditional expectation $E'$ is discussed in \cite[App.~C]{magan_quantum_2021}; for several blocks the index is the sum of the indices of the blocks, see e.g.\ \cite[Prop.~2.3]{teruya_index_1992}. The proof below derives the explicit form of $E^{-1}$ for a general subalgebra $\N \subset \BO$ directly from its characterization through Haagerup's identity \eqref{eq:inverse_weight}.

	As in the proof of \Autoref{prop:finite-dim-cond-exp}, semifiniteness of $E^{-1}$ gives $E^{-1}(\IdentityMatrix) \in \M'_+$, so $E^{-1}$ is a positive linear $\M'$-bimodule map $\N' \to \M'$ by \Autoref{lem:finite_index_bounded_inverse_weight}\ref{item:fi-dual}. For $c = \IdentityMatrix_{\mathcal{L}_j} \otimes b \otimes \IdentityMatrix_{\HS} \in \N' \cap \M$, the element $E^{-1}(c) \in \M'$ commutes with $\M'$ (because $c$ does and $E^{-1}$ is an $\M'$-bimodule map), hence lies in $\M' \cap \M = Z(\M) = \complex \IdentityMatrix$, and by positivity $E^{-1}(c) = \Tr(τ'_j b)\, \IdentityMatrix$ for a unique $τ'_j \in \BO[\mathcal{V}_j]_+$. As $\N'$ is spanned by the products $c\, (\IdentityMatrix_{\HS} \otimes z)$ with such $c$ and $z \in \BO$, and $E^{-1}(c\, (\IdentityMatrix_{\HS} \otimes z)) = E^{-1}(c)\, (\IdentityMatrix_{\HS} \otimes z)$, we obtain $E^{-1}(y) = \IdentityMatrix_{\HS} \otimes \sum_j \Tr_{\mathcal{V}_j}[(τ'_j \otimes \IdentityMatrix_{\HS})\, y_j]$, and consequently $E^{-1}_*(ω')(y) = \sum_j \Tr[(τ'_j \otimes \hat{ω}')\, y_j]$, i.e.\ $E^{-1}_*(ω')$ has density matrices $τ'_j \otimes \hat{ω}'$. To identify the $τ'_j$, we evaluate Haagerup's identity \eqref{eq:inverse_weight} for faithful $σ$ and $ω'$ using \eqref{eq:finite-dim-spatial-derivative}: the pair $(\M, \M')$ consists of a single block, with density matrices \eqref{eq:finite-dim-cond-exp-predual} and $\hat{ω}'$, while the pair $(\N, \N')$ has the blocks $\mathcal{L}_j \otimes (\mathcal{V}_j \otimes \HS)$ of \eqref{eq:finite-dim-inclusion-algebras}, with density matrices $\hat{σ}_j$ and $τ'_j \otimes \hat{ω}'$. Hence
	\begin{equation}
		\bigoplus_{j} \hat{σ}_j \otimes τ_j \otimes (\hat{ω}')^{-1} = \Delta(E_*(σ)/ω') = \Delta(σ/E^{-1}_*(ω')) = \bigoplus_{j} \hat{σ}_j \otimes (τ'_j)^{-1} \otimes (\hat{ω}')^{-1}\,,
	\end{equation}
	and taking $\hat{σ}_j = \IdentityMatrix$ and $\hat{ω}' = \IdentityMatrix$ gives $τ'_j = τ_j^{-1}$. This proves \ref{item:fdiw-form} and \ref{item:fdiw-predual}; note that the same computation shows directly that the map \eqref{eq:finite-dim-inverse-weight} satisfies \eqref{eq:inverse_weight}, so that in finite dimensions existence and uniqueness of $E^{-1}$ are elementary. For \ref{item:fdiw-index}, taking $y = \IdentityMatrix$ in \eqref{eq:finite-dim-inverse-weight} gives $E^{-1}(\IdentityMatrix) = (\sum_j \Tr τ_j^{-1})\, \IdentityMatrix$, and since $\Tr τ_j = 1$, the Cauchy--Schwarz inequality $(\dim \mathcal{V}_j)^2 = \big(\Tr\, τ_j^{1/2} τ_j^{-1/2}\big)^2 \leq \Tr τ_j \cdot \Tr τ_j^{-1}$ holds with equality iff $τ_j = \IdentityMatrix_{\mathcal{V}_j} / \dim \mathcal{V}_j$, i.e.\ iff $E$ is trace-preserving by \Autoref{prop:finite-dim-cond-exp}\ref{item:fdce-trace}. Finally, $E' = (\Ind E)^{-1} E^{-1}$ is a unital positive $\M'$-bimodule map, hence a conditional expectation ($E'(b) = E'(\IdentityMatrix)\, b = b$ for $b \in \M'$), and it is faithful because $E^{-1}$ is.
\end{proof}

\subsection{The Pythagorean identity for operator-valued weights}\label[subappendix]{app:pythagorean-proof}

For a normal faithful conditional expectation, the Pythagorean identity \eqref{eq:pythagorean-identity} is a theorem of Petz \cite[Thm.~5.15]{ohya_quantum_1993}. \Autoref{prop:pythagorean_identity} asserts it for a general operator-valued weight $T$, and this generality is needed even if one is only interested in conditional expectations $E$: on the side of the commutants the identity is applied to $T = E^{-1}$, and since $E^{-1}(\IdentityMatrix) = \Ind E$, the map $E^{-1}$ is not a conditional expectation; it is a scalar multiple of one only if $\Ind E$ is a scalar, which need not be the case if $\M$ is not a factor (\Autoref{sec:duality-additive-form}). We do not reprove the identity for operator-valued weights but reduce it to the case of a conditional expectation, by writing $T = hF$ with a conditional expectation $F$ and the central element $h = T(\IdentityMatrix)$ of $\N$. The only analytic ingredient is the behaviour of the relative entropy under the central perturbation $σ \mapsto σ(h\,\cdot\,)$ of its second argument (Step~3 below), which is a very special case of the perturbation theory of the relative entropy, cf.\ \cite[Thm.~12.6, Cor.~12.8]{ohya_quantum_1993}.

\begin{proof}[Proof of Proposition~\ref{prop:pythagorean_identity}]
    Let $T: \M_+ \to \widehat{\N_+}$ be a normal faithful semifinite operator-valued weight with $h \coloneqq T(\IdentityMatrix)$ and $\Ind T$ bounded, so that $T: \M \to \N$ and $T^{-1}: \N' \to \M'$ are bounded positive normal maps (\Autoref{lem:finite_index_bounded_inverse_weight}) and \Autoref{lem:haagerup-identity-finite-index} holds with $T$ in place of $E$ (see the standing assumptions of \Autoref{sec:duality-relations}). We show that \eqref{eq:pythagorean-identity} follows from its special case of a conditional expectation.

    \emph{Step 1: $h$ is central in $\N$ and bounded below.} The argument for the centrality of $h$ is the same as the one for the index in \Autoref{sec:operator_valued_weights}, and the lower bound $h \geq \norm{\Ind T}^{-1}\IdentityMatrix$ proved below generalizes the well-known fact that the index of a conditional expectation between factors, where $h = \IdentityMatrix$, is at least one \cite{kosaki_extension_1986}. For every unitary $u \in \N$ the bimodule property gives $u^* h u = T(u^* u) = h$, hence $h \in \N \cap \N' = Z(\N)$. To bound $h$ from below, let $ξ \in \HS$ be a unit vector and let $ρ \in \S(\M)$ be its vector state, with vector representative $\v{ρ} \coloneqq ξ$ and induced functionals $ρ|_{\mathcal{A}}$, $\mathcal{A} \in \set{\M, \N, \M', \N'}$, as in the standing assumptions of \Autoref{sec:duality-relations}; put $τ' \coloneqq T^{-1}_*(ρ') \in (\N')_*^+$. By positivity of $T$, the first-moment identity \eqref{eq:first-moment} applied to $T_*(ρ|_{\N}) \in \M_*^+$ (note that $T_*(ρ|_{\N})(s(ρ)) = \bra{\v{ρ}} T(s(ρ)) \ket{\v{ρ}}$), and Haagerup's identity in the form of \Autoref{lem:haagerup-identity-finite-index},
    \begin{equation}\label{eq:ovw-h-lower-bound-1}
        \bra{\v{ρ}} h \ket{\v{ρ}} \geq \bra{\v{ρ}} T(s(ρ)) \ket{\v{ρ}} = \bra{\v{ρ}} \Delta(T_*(ρ|_{\N})/ρ') \ket{\v{ρ}} = \bra{\v{ρ}} \Delta(ρ|_{\N}/τ') \ket{\v{ρ}}\,.
    \end{equation}
    To bound the right hand side from below, we apply \Autoref{lem:first-moment} with the roles of the algebra and its commutant interchanged, namely to the algebra $\N'$, its vector state $ρ|_{\N'}$ with representative $\v{ρ}$ (whose induced functional on the commutant $\N$ is $ρ|_{\N}$) and the second argument $τ'$: if $μ$ denotes the spectral measure of $\v{ρ}$ with respect to $\Delta(ρ|_{\N}/τ')$, a probability measure on $[0, \infty]$ with $μ(\set{0}) = 0$, then \eqref{eq:first-moment} states that $\int λ^{-1} \, \dd{μ}(λ) = τ'(s(ρ|_{\N'})) \leq τ'(\IdentityMatrix) = ρ'(\Ind T) = \bra{\v{ρ}} \Ind T \ket{\v{ρ}}$. If $μ(\set{\infty}) > 0$, the right hand side of \eqref{eq:ovw-h-lower-bound-1} is $+\infty$; otherwise Jensen's inequality for the convex function $λ \mapsto λ^{-1}$ on $(0, \infty)$ gives
    \begin{equation}\label{eq:ovw-h-lower-bound-2}
        \bra{\v{ρ}} \Delta(ρ|_{\N}/τ') \ket{\v{ρ}} = \int λ \, \dd{μ}(λ) \geq \Big( \int λ^{-1} \, \dd{μ}(λ) \Big)^{-1} \geq \big( \bra{\v{ρ}} \Ind T \ket{\v{ρ}} \big)^{-1} \geq \norm{\Ind T}^{-1}\,.
    \end{equation}
    Since the unit vector $ξ = \v{ρ}$ was arbitrary, $h \geq \norm{\Ind T}^{-1} \IdentityMatrix$. In particular $h$ is invertible and $\log h \in Z(\N)$ is bounded.

    \emph{Step 2: Reduction to a conditional expectation.} The map $F \coloneqq h^{-1} T: \M \to \N$ is positive, normal, faithful and unital, and it satisfies $F(axb) = aF(x)b$ for $a, b \in \N$ because $h^{-1} \in Z(\N)$; hence $F$ is a normal faithful conditional expectation of $\M$ onto $\N$, and $T = hF$. For $σ \in \N_*^+$ let $σ_h \coloneqq h^{1/2} σ h^{1/2}$ in the notation of \Autoref{prop:spatial-derivative-properties}\ref{item:sd-covariance}, i.e.\ $σ_h(x) = σ(hx)$ for $x \in \N$. Then $σ_h \in \N_*^+$, $σ_h(\IdentityMatrix) = σ(h) \geq \norm{\Ind T}^{-1} σ(\IdentityMatrix)$, and
    \begin{equation}\label{eq:ovw-reduction}
        T_*(σ) = σ \circ (hF) = σ_h \circ F = F_*(σ_h)\,.
    \end{equation}

    \emph{Step 3: Central scaling of the second argument.} The following identity extends the scaling rule of \Autoref{prop:divergence-properties}\ref{item:dp-scaling} from scalars to central elements. We claim that for every $ω \in \S(\N)$ with a vector representative $\v{ω} \in \HS$ and every $σ \in \N_*^+$,
    \begin{equation}\label{eq:central-scaling}
        D(ω\|σ_h) = D(ω\|σ) - ω(\log h)\,.
    \end{equation}
    Let $ω' \coloneqq ω_{\v{ω}}|_{\N'}$ and $\mathcal{K} \coloneqq s(ω')\HS \ni \v{ω}$. By \Autoref{prop:spatial-derivative-properties}\ref{item:sd-support}, $\mathcal{K}$ reduces $\Delta(σ/ω')$ and $\Delta(σ_h/ω')$; let $\Delta_{\mathcal{K}}$ and $\Delta_{\mathcal{K}}^h$ be the restrictions of these operators to $\mathcal{K}$. By \Autoref{lem:first-moment}, $D(ω\|σ) = -\int \log λ \, \dd{ν}(λ)$ with $ν$ the spectral measure of $\v{ω}$ with respect to $\Delta_{\mathcal{K}}$, and likewise $D(ω\|σ_h) = -\int \log λ \, \dd{ν_h}(λ)$ with $ν_h$ the spectral measure of $\v{ω}$ with respect to $\Delta_{\mathcal{K}}^h$. Since $h \in \N$ commutes with $s(ω') \in \N'$, it leaves $\mathcal{K}$ invariant, and \Autoref{prop:spatial-derivative-properties}\ref{item:sd-covariance} with $a = h^{1/2}$ gives $\Delta_{\mathcal{K}}^h = h_{\mathcal{K}}^{1/2} \Delta_{\mathcal{K}} h_{\mathcal{K}}^{1/2}$, where $h_{\mathcal{K}} \coloneqq h|_{\mathcal{K}}$. Moreover, $h_{\mathcal{K}}$ commutes with $\Delta_{\mathcal{K}}$: a unitary $u \in Z(\N)$ satisfies $uσu^* = σ$, so the same lemma gives $u_{\mathcal{K}} \Delta_{\mathcal{K}} u_{\mathcal{K}}^* = \Delta_{\mathcal{K}}$ with the unitary $u_{\mathcal{K}} \coloneqq u|_{\mathcal{K}}$, i.e.\ $u_{\mathcal{K}}$ commutes with all spectral projections of $\Delta_{\mathcal{K}}$, and $h$ is a linear combination of unitaries of $Z(\N)$. Consequently $\Delta_{\mathcal{K}}^h = h_{\mathcal{K}} \Delta_{\mathcal{K}}$ is the product of two commuting positive self-adjoint operators, the first of which is bounded and invertible, and the functional calculus for the commuting pair $(h_{\mathcal{K}}, \Delta_{\mathcal{K}})$ gives $\log \Delta_{\mathcal{K}}^h = \log h_{\mathcal{K}} + \log \Delta_{\mathcal{K}}$ (with $\log 0 = -\infty$ on $\ker \Delta_{\mathcal{K}}$). Taking the expectation value in $\v{ω}$ is legitimate because $\log h_{\mathcal{K}}$ is bounded while $-\bra{\v{ω}} \log \Delta_{\mathcal{K}} \ket{\v{ω}} = D(ω\|σ) \in (-\infty, +\infty]$, and it yields \eqref{eq:central-scaling}, as $\bra{\v{ω}} \log h \ket{\v{ω}} = ω(\log h)$.

    \emph{Step 4: Conclusion.} Let $σ \in \S(\N)$. Applying \eqref{eq:pythagorean-identity} for the conditional expectation $F$ to the state $σ_h/σ(h)$, and then the scaling rule of \Autoref{prop:divergence-properties}\ref{item:dp-scaling} on both sides (the two terms $\log σ(h)$ cancel), we obtain from \eqref{eq:ovw-reduction} and \eqref{eq:central-scaling} with $ω = ρ|_{\N}$ that
    \begin{align}
        D(ρ\|T_*(σ)) = D(ρ\|F_*(σ_h)) &= D(ρ\|F_*(ρ|_{\N})) + D(ρ|_{\N}\|σ_h) \nonumber\\
        &= D(ρ\|F_*(ρ|_{\N})) - ρ(\log h) + D(ρ|_{\N}\|σ)\,. \label{eq:ovw-final}
    \end{align}
    For $σ = ρ|_{\N}$ this reads $D(ρ\|T_*(ρ|_{\N})) = D(ρ\|F_*(ρ|_{\N})) - ρ(\log h)$, because $D(ω\|ω) = 0$ for every state $ω \in \S(\N)$ with vector representative $\v{ω}$: on the one hand $D(ω\|ω) \geq -\log ω(s(ω)) = 0$ by \Autoref{lem:relative-entropy-well-defined}, and on the other hand, if $μ$ denotes the spectral measure of $\v{ω}$ with respect to $\Delta(ω'/ω)$, $ω' \coloneqq ω_{\v{ω}}|_{\N'}$, then $μ(\set{\infty}) = ω(\IdentityMatrix - s(ω)) = 0$, and Jensen's inequality together with \eqref{eq:first-moment}, applied with the roles of the algebra and its commutant interchanged, gives $D(ω\|ω) = \int \log λ \, \dd{μ}(λ) \leq \log \int λ \, \dd{μ}(λ) = \log ω'(s(ω')) = 0$. Substituting this into \eqref{eq:ovw-final} gives \eqref{eq:pythagorean-identity}; all terms lie in $(-\infty, +\infty]$ and $ρ(\log h)$ is finite, so no undefined differences occur.
\end{proof}

\printbibliography[heading=bibintoc]

\end{document}